%% file: main.tex
\documentclass[11pt]{article}

\usepackage[utf8]{inputenc}
\usepackage[T1]{fontenc}
\usepackage[a4paper,margin=1in]{geometry}
\usepackage{amsmath,amssymb,amsthm,mathtools}
\usepackage{microtype}
\usepackage{enumitem}
\usepackage{hyperref}
\usepackage{booktabs}
\usepackage{cleveref,aliascnt}
\usepackage{xspace}
\usepackage{thm-restate,color,xcolor}
\usepackage{amsmath,amssymb,bbm,amsthm}
\usepackage{fullpage}
\usepackage{float}
\usepackage{boxedminipage}
\usepackage{amsmath}
\usepackage{mathtools}
\usepackage{framed}
\usepackage[framemethod=tikz]{mdframed}
\usepackage{titlesec}
\usepackage{lipsum}%
\usepackage{cleveref,aliascnt}
\usepackage{tikz}
\usepackage{multirow}
\usepackage{comment}

\usepackage{wrapfig}
\usepackage[ruled,vlined,linesnumbered]{algorithm2e}
\crefname{algocf}{algorithm}{algorithms}
\Crefname{algocf}{Algorithm}{Algorithms}

\usepackage{mathtools}

\newtheorem{theorem}{Theorem}[section]
\newtheorem{lemma}[theorem]{Lemma}
\newtheorem{question}[theorem]{Question}
\newtheorem{proposition}[theorem]{Proposition}
\newtheorem{claim}[theorem]{Claim}
\newtheorem{definition}[theorem]{Definition}
\newtheorem{remark}[theorem]{Remark}
\newtheorem{corollary}[theorem]{Corollary}
\newtheorem{observation}[theorem]{Observation}

\newcommand{\V}{V}

\newcommand{\G}{G}

\newcommand{\dist}{\mathrm{dist}}
\newcommand{\poly}{\mathrm{poly}}

\newcommand{\Gg}{\overline{\mathcal{K}}}
\newcommand{\Ggh}{\mathcal{K}}
\newcommand{\Ll}{\mathcal{R}}

\newcommand{\sco}{\mathsf{sc}}

\newcommand{\lsco}{\mathsf{lsc}}

\newcommand{\pp}{\mathcal{P}}

\newcommand{\prepp}{\widehat{\pp}}

\newcommand{\RP}{$\mathsf{RPC}$\xspace}

\title{New Fault-Tolerant $(\alpha,\beta)$-Spanners}
\author{%
  Merav Parter\thanks{%
    Email: \texttt{merav.parter@weizmann.ac.il}.
    Supported by ERC grant 949083 under the European Union's
    Horizon 2020 research and innovation programme.%
  }
  \quad
  Elizaveta Popova\thanks{%
    Email: \texttt{elizaveta.popova@weizmann.ac.il}.
    Research supported in part by ISF grant 2364/25
    and Minerva grant 715025.%
  }
  \quad
  Elad Tzalik\thanks{%
    Email: \texttt{elad.tzalik@weizmann.ac.il}.
    Supported by the Adams Fellowship Program of the
    Israel Academy of Sciences and Humanities.%
  }\\[0.5em]
  Weizmann Institute of Science, Israel
}
\date{}

\begin{document}
\maketitle

\begin{abstract}
We present new constructions of vertex fault-tolerant (VFT) $(\alpha,\beta)$-spanners with substantially improved size bounds. Given a bound $f$ on the number of vertex faults, an $f$-VFT $(\alpha,\beta)$-spanner is a subgraph of $G$ that preserves an $(\alpha,\beta)$-spanner of $G- F$ for every fault set $F \subseteq V$ with $|F| \le f$. Despite extensive study, fault-tolerant spanners are well understood primarily in the purely multiplicative regime, i.e., when $\beta = 0$. In contrast, existing constructions of VFT $(\alpha,\beta)$-spanners rely on a black-box sampling technique that incurs an exponential blow-up, either in $\beta/\alpha$ or in $f$, in the size of the spanner compared to their non-fault-tolerant counterparts.

In this work, we develop two independent techniques for constructing $f$-VFT $(\alpha,\beta)$-spanners that improve the dependence on $f$, reducing it to a \emph{polynomial} (rather than exponential) overhead. In particular, for every unweighted graph $G=(V,E)$ on $n$ vertices, stretch parameter $k$, and fault parameter $f$, we show:

\begin{itemize}
    \item $f$-VFT $(O_{\epsilon}(k^{\epsilon}), O_{\epsilon}(k))$-spanners with $\tilde{O}(f^{exp(1/\epsilon)}\cdot n^{1+1/k})$ edges for any $\epsilon \in (0,1)$. For any fixed constant $\epsilon$, this introduces a $\poly(f)$ overhead compared to the non-faulty constructions of [Ben-Levy and Parter, SODA 2020] and [Chechik and Lifshitz, SODA 2026].

   \item For any $\delta \in (0,1)$, an $f$-VFT $\left(O_{\delta}(\sqrt{k}),O_{\delta}(k) \right)$-spanner of size $\tilde{O}(f^{1+\delta} \cdot n^{1+\frac{1}{k}})$. Hence, it incurs a size overhead of only $f^{1+\delta}$
    compared to the above mentioned non-faulty analogs.
    
    \item $f$-VFT $(k,k-1)$ spanner of size $O(k^3 \cdot f  \cdot n^{1+\frac{1}{k}})$, nearly matching the lower bound $\Omega(f^{1-\frac{1}{k}}n^{1+\frac{1}{k}})$ known for the less constrained notion of $f$-VFT $(2k-1,0)$ spanners by [Bodwin et al., SODA 2018].

\end{itemize}
 The first result is obtained via a new adaptation of the Baswana--Sen framework to the VFT $(\alpha,\beta)$ setting. The other two bounds follow from a highly involved analysis of a natural extension of the fault-tolerant greedy spanner by [Bodwin and Patel, PODC 2019] to the VFT $(\alpha,\beta)$-setting. 

Our main technical contribution shows that the dynamics of the greedy algorithm naturally induce robust clusters, in which each clustered vertex is connected to the set of centers via a sufficiently well-spread collection of paths. This structure tolerates not only $f$ adversarial faults, but also up to $n - O\!\left(\frac{n}{f}\right)$ random faults, a property we leverage to obtain improved size bounds. We believe that this strengthened notion of fault tolerance may be of independent interest and find further applications.
\end{abstract}

\newpage
\tableofcontents
\newpage

\input{intro}

\input{technical-overview}

\input{VFT-exp}

\input{FT-Clustering-sec}

\input{Sqrtk-Sec}

\input{Section2-2k}

\bibliographystyle{alpha}
\bibliography{refs}
\appendix

\input{Sampling-lemma}

\input{Equivalence-d-to-r-alpha-beta}

\end{document}

%% file: intro.tex
\section{Introduction}\label{sec:intro}

We present new constructions of fault-tolerant $(\alpha,\beta)$-spanners with significantly improved size bounds. 
An $(\alpha,\beta)$-spanner of an unweighted $n$-vertex graph $G=(V,E)$ is a subgraph $H$ satisfying
\[
    \dist_H(u,v) \le \alpha \cdot \dist_G(u,v) + \beta
\]
for every pair of vertices $u,v \in V$~\cite{ElkinP01}. 
Since the seminal works of Peleg and Sch{\"a}ffer~\cite{PelegS89} and Alth{\"o}fer et al.~\cite{AlthoferDDJS93}, understanding the trade-off between stretch and size has been a central question in graph algorithms. 
For purely multiplicative $(2k-1,0)$-spanners, the classical bound of $O(n^{1+1/k})$ edges is conditionally optimal under the Erd\H{o}s girth conjecture. 
By contrast, in other regimes, such as $(1+\epsilon,\beta)$-spanners, the optimal size--stretch trade-off remained elusive for a long time~\cite{ElkinP01,ElkinP04,BaswanaKMP05,AbboudBP17,ElkinN20}. 
A breakthrough result of Tan and Zhang~\cite{TanZ23} gave an almost optimal construction for every constant $k$: for any $k \ge 2$, any constant $\delta>0$, and any $\epsilon>0$, they construct a $(1+\epsilon,\beta)$-spanner with
\[
    \beta =
    \frac{(k-1)^{k-1}\cdot C_{\delta,k}^k}
         {k^k \cdot \epsilon^{k-1}}
\]
and $O\!\left(n^{1+(1+\delta)/(2^{k+1}-1)}\right)$ edges, for a sufficiently large constant $C_{\delta,k}$. 
The picture for non-constant stretch parameter $k$, however, remains far from complete.

In this work, we focus on a notable class of $(O_{\epsilon}(k^{\epsilon}),O_{\epsilon}(k))$-spanners, for $\epsilon \in (0,1)$, introduced by Ben-Levy and Parter~\cite{Ben-LevyP20}. 
Their construction has size $O_{\epsilon}(n^{1+1/k})$ and admits the following useful interpretation: \emph{distant} pairs, namely pairs at distance $\Omega_{\epsilon}(k^{1-\epsilon})$ in $G$, incur only multiplicative stretch $O_{\epsilon}(k^{\epsilon})$, rather than the $(2k-1)$ stretch guaranteed by classical multiplicative spanners. 
For every fixed constant $\epsilon$, this guarantee is optimal up to constant factors by unconditional girth-based lower bounds. 
More recently, Chechik and Lifshitz~\cite{ChechikL26} obtained a stronger distance-sensitive guarantee, showing that pairs at distance $d$ satisfy
\[
    \dist_H(u,v) \le (2k-1) + O(d \log d).
\]
In particular, their result implies the existence of $(O(\log k),O(k))$-spanners with $O(n^{1+1/k}+kn)$ edges.

We study $(\alpha,\beta)$-spanners in settings where the underlying graph $G$ may experience failures, focusing on the more challenging case of vertex failures. Our goal is to design fault-tolerant spanners that preserve their metric guarantees, namely, the $(\alpha,\beta)$ stretch, even after the failure of a bounded number of vertices. 
\begin{definition}
For a given unweighted graph $G=(V,E)$ and parameters $\alpha,\beta$ and $f$, a subgraph $H \subseteq G$ is an $f$-\emph{vertex fault-tolerant (VFT)} $(\alpha,\beta)$-\emph{spanner} of $G$ if 
\[
\dist_{H - F}(u,v) \le \alpha \cdot \dist_{G - F}(u,v) + \beta, \mbox{~for all~} u,v \in V, F \subseteq V-\{u,v\}, |F|\leq f~.
\]
\end{definition}

Fault-tolerant spanners in general graphs have been studied extensively in the purely multiplicative regime, i.e., when $\beta = 0$. A substantial body of work~\cite{ChechikLPR10,DinitzK11,BodwinDPW18,BodwinP19,BodwinDR22,Parter22}, culminating in~\cite{BodwinDPW18,BodwinP19}, shows that the optimal size of $f$-VFT $(2k-1,0)$-spanners is $O(f^{1-1/k} n^{1+1/k})$ edges. More generally, this regime is by now well understood: tight bounds have been obtained for a range of fault models: edge faults~\cite{BodwinDR22}, bounded-degree faults~\cite{BodwinHP24}, color faults~\cite{PetruschkaST24_ITCS}, and emulators~\cite{BodwinDN23}, and are matched, up to lower-order terms, by several different constructions (see, e.g.,~\cite{DinitzR20,BodwinDR21,Parter22,PPST25}).

In contrast, our understanding of vertex fault-tolerant $(\alpha,\beta)$-spanners remains quite limited. To date, essentially a single general technique is known by Braunschvig et al. ~\cite{BraunschvigCPS15}: sample $\gamma$ subgraphs $G_1,\ldots,G_\gamma$ by including each $G$-edge independently with probability $p$ in each $G_i$, compute a (fault-free) $(\alpha,\beta)$-spanner $H_i$ for each $G_i$, and output their union $H=\bigcup_i H_i$ as the $f$-VFT spanner. This method, known as \emph{FT-sampling}~\cite{weimann2013replacement,DinitzK11}, inherently requires $\gamma$ to grow exponentially in either $f$ or $\beta/\alpha$.

Consequently, the following question has been open:

\begin{question}\label{q:alpha-beta}
    Is it possible to provide a construction of $f$-VFT $(\alpha,\beta)$-spanners that improve over the size overhead of $a^b$ where $a=\max\{\beta/\alpha, f\}$ and $b=\min\{\beta/\alpha, f\}$?
\end{question}

We answer this question in the affirmative by substantially extending two distinct approaches that have been central to the study of FT multiplicative spanners to the $(\alpha,\beta)$ setting. This extension requires significantly more delicate and involved arguments. At a high level, $(\alpha,\beta)$-spanners are inherently more challenging, as their stretch guarantee does not reduce to a local edge condition as in the purely multiplicative case. Instead, one must ensure the stretch bound over entire paths, whose relevant length scale is $O(\beta/\alpha)$; this difficulty is further amplified in the presence of faults. 

Adopting a Baswana--Sen type perspective, we circumvent the exponential-in-$\min\{f,\beta/\alpha\}$ barrier inherent in the current FT-sampling approach and obtain near-optimal constructions for $(O_{\varepsilon}(k^{\epsilon}),O_{\varepsilon}(k))$-spanners for any constant $\epsilon \in (0,1)$ and $f=\tilde{O}(1)$. In particular, we show:

\begin{theorem}[Informal]\label{thm:alpha-beta-polyf}
    Any $n$-vertex graph $G=(V,E)$ admits an $f$-VFT $(O(k^{\epsilon}),O(k))$-spanner, for any fixed constant $\epsilon \in (0,1)$, with $\poly(f) \cdot n^{1+1/k}$ edges. 
 \end{theorem}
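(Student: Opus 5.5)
The plan is to adapt the Ben-Levy--Parter construction, itself a Baswana--Sen style hierarchical clustering with $O_\epsilon(1)$ levels whose cluster radii grow geometrically, to the fault-tolerant setting. Randomly sampling cluster centers is replaced by a \emph{fault-tolerant clustering}: every clustered vertex is attached to many centers through paths that do not share internal vertices. Concretely, there will be $L=O(1/\epsilon)$ phases. In phase $i$ the center set $C_i$ is obtained by sampling each center of $C_{i-1}$ with probability roughly $n^{-\Theta(k^{\epsilon\cdot i}/k)}$, so that the number of centers decays doubly-exponentially in a suitable sense. Each vertex $v$ that lies within distance $r_i$ (with $r_i=O(k^{\epsilon i})$, capped at $O(k)$) of at least $\Theta(f)$ centers of $C_i$ that are "independent" is marked as clustered, and the spanner keeps the BFS paths from $v$ to $f+1$ such centers. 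Otherwise $v$ is unclustered at level $i$, and $H$ gets, for each $C_{i-1}$-cluster, an $f$-VFT connection from $v$ to it (for instance $f+1$ vertex-disjoint short paths, or the FT-greedy restricted to the bipartite cluster graph).

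The key steps, in order, are as follows.
\begin{enumerate}
\item Define an FT-cluster of radius $r$: a center set $C$ and, for every clustered $v$, a family of $f+1$ paths of length $\le r$ to distinct centers, internally vertex-disjoint up to the center. Show such families can be built greedily with a size cost of $O(f\cdot r\cdot n)$ per level.
\item Bound the edges added for unclustered vertices. A vertex unclustered at level $i$ sees fewer than about $f\cdot n^{\Theta(k^{\epsilon i}/k)}\log n$ clusters of level $i-1$ in its $r_i$-ball, since otherwise sampling would have hit $f+1$ of them whp. By a union bound over levels this yields $\tilde O(f^{O(1)}n^{1+1/k})$ edges per level. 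The $\poly(f)$ exponent, which will be $\exp(1/\epsilon)$, accumulates because at each level the required multiplicity is amplified by $f$ to survive the faults destroying earlier-level paths.
\item Prove the stretch. Fix $F$ and a shortest $u$--$v$ path $P$ in $G-F$. Split $P$ into segments of length $\Theta(k^{1-\epsilon})$ and run the Ben-Levy--Parter charging argument. Each segment either has all its vertices unclustered at some low level, so it is preserved with additive cost, or it contains a vertex clustered at a high level, so we can jump to a surviving center: at most $f$ of its $f+1$ disjoint paths hit $F$. Then go from that center to the next segment's cluster via the level-$i$ inter-cluster edges, which are also kept $f$-fault-tolerantly.
\end{enumerate}
The multiplicative $O(k^\epsilon)$ arises because each jump costs $O(r_i)$ against a segment of length comparable to $r_{i}/k^{\epsilon}$, and the additive $O(k)$ comes from the top-level radius.

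The main obstacle is step 3 combined with step 2. The detour to a surviving center in $G-F$ must itself lie in an FT-cluster whose paths remain valid after faults, so the cluster hierarchy must be robust at every level simultaneously. A center at level $i$ is reached through paths that pass through level $i-1$ clusters, and those could all be hit by $F$. I would first try to require that each level-$i$ membership be witnessed by $(f+1)$ disjoint paths, each through a distinct level-$(i-1)$ robust cluster. That inflates the sampling thresholds by a factor $f$ per level and, over $L=O(1/\epsilon)$ levels with the recursive radius growth, gives the $f^{\exp(1/\epsilon)}$ overhead. The part I expect to take real care is checking that disjointness can be maintained without the radius blowing up beyond $O(k)$.
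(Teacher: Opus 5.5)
\textbf{There is a genuine gap.} It sits in the step you treat as routine: making the level-$i$ connections fault tolerant. These are the connections from an unclustered vertex to each nearby $C_{i-1}$-cluster, and the ``inter-cluster edges, also kept $f$-fault-tolerantly'' that step 3 relies on.

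Keeping $f+1$ internally vertex-disjoint paths of length $\le r_i$ is not an $f$-VFT connection. Such a family need not exist even when, for a particular $F$, a short path survives in $G-F$. Since Menger's theorem fails for length-bounded paths, a fault set of size $\le f$ can hit every path you kept while a different short path, sharing internal vertices with the kept ones, survives.

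This is exactly the paper's ``virtual graph challenge''. Preserving distances $\le r$ of a single pair exactly under all $f$-fault sets can require $\min\{r^{\Theta(f)},f^{\Theta(r)}\}$ edges. Running an FT-greedy on the cluster graph instead turns it into a color-list fault problem with lists of size $\approx r_i$, whose known bounds carry an $f^{r_i}$ factor. You sample at the non-faulty rates and $r_i$ grows to $\Theta(k)$, so this is precisely the $f^{\Theta(\beta/\alpha)}$ overhead the theorem must avoid. The disjointness-through-the-hierarchy issue you flag is not the crux.

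Two further problems affect step 2:
\begin{itemize}
\item Being within distance $r_i$ of $f+1$ sampled centers does not give $f+1$ disjoint paths to them; a single cut vertex next to $v$ blocks this. So your clustered/unclustered dichotomy does not yield the step-2 bound.
\item Step 2 charges every vertex of $V$. At a level $i\ge 1$ with $n^{\Theta(k^{\epsilon i}/k)}$ connections per vertex, that gives $n^{1+\Theta(k^{\epsilon i}/k)}$ edges rather than $n^{1+1/k}$. Levels $i\ge 1$ must act only on the current center set, as in Ben-Levy--Parter.
\end{itemize}

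\textbf{The missing idea} is to accept the exponential cost but pay it only on a tiny center set.
\begin{itemize}
\item The paper uses Parter's VFT Baswana--Sen only in phase $0$ (radius $\lceil k^{\epsilon}\rceil$), where vertex-to-center robustness is inherited as a black box. It runs it with the much smaller sampling probability $p_0=1/(g^{4^{2\ell+2}}n^{1/k})$, where $g=cf\log n$.
\item In phase $i\ge 1$ it samples with $p_i=N_i g^{R_i}/n$. The phase-$i$ center set then has $N_i=n^{1-1/k^{1-i\epsilon}}/g^{4^{2\ell-i+3}k^{i\epsilon}}$ vertices.
\item This shrinkage pays for an $(r_i,f)$-replacement path covering of size at most $g^{R_i+2}$, with $R_i=4^ik^{i\epsilon}\ge r_i$. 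In every covering subgraph, one runs ordinary non-faulty Baswana--Sen steps on the virtual graph joining centers at distance $\le r_i$ in that subgraph. The same random coins are used in all subgraphs, so the center set is consistent. Virtual edges are replaced by shortest paths in that subgraph, for $\tilde O(g^2 n)$ edges per phase.
\end{itemize}

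No disjointness is needed at levels $\ge 1$. A vertex is $(i,F)$-clustered if it is clustered in some covering subgraph avoiding $F$. For each $F$ and each pair of centers at distance $\le r_i$ in $G-F$, the covering property supplies a subgraph that avoids $F$ and in which the pair is adjacent in the virtual graph. Everything built in that subgraph therefore survives $F$.

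The $f^{\exp(1/\epsilon)}$ overhead comes from the phase-$0$ cost $\tilde O(fn/p_0)$, which is forced by the factor-$3$ radius growth. It does not come from multiplying multiplicities by $f$ per level; that would only give $f^{O(1/\epsilon)}$.
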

This Baswana–Sen perspective also yields a significantly simpler alternative to the non-faulty construction of \cite{Ben-LevyP20}, albeit at the cost of slightly weaker size bounds.

Our second set of results is obtained via a natural adaptation of the classical greedy spanner algorithm of \cite{PopovaT26}: for every fault set $F \in V^{\le f}$, we add short paths in $G - F$ whenever the desired stretch guarantee is violated. While the correctness of this procedure is immediate, its size analysis is considerably more subtle. We develop a new analysis of FT-greedy constructions that applies even when $\beta/\alpha$ is unbounded.  Our analysis combines two techniques that have largely evolved independently in the literature: the blocking-set method of~\cite{BodwinP19} and the parks framework of~\cite{PPST25}. By coupling the dynamics of the greedy process with a suitable clustering scheme, we show that strong fault-tolerant clusters emerge naturally over the course of the algorithm. Concretely, we prove that the FT-greedy dynamics generates well-spread collections of paths from all vertices to a small set of cluster centers. These clusters are significantly more robust than those in~\cite{Parter22}: they tolerate not only $f$ adversarial faults but also up to $n - O\!\left(\frac{n}{f}\right)$ random faults, while the FT clusters of \cite{Parter22} guarantees that after any failure of at most $f$ vertices, a clustered vertex has at least \emph{one} non-faulty path to a center. Perhaps surprisingly, the additional robustness of the greedy clusters is the key to achieving improved size bounds. Overall, our analysis yields significantly sparser spanners with an almost linear dependence on the number of faults:

 \begin{theorem}\label{thm:sqrtk-greedy}
       Any $n$-vertex graph $G=(V,E)$ admits an $f$-VFT $(O(\sqrt{k}),O(k))$-spanner, for any  fixed constant $\delta >0$, with $\tilde{O}(f^{1+\delta} \cdot n^{1+1/k})$ edges.
\end{theorem}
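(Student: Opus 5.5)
We follow the FT-greedy route described just before the statement. We process pairs $(u,v)$ in nondecreasing order of $\dist_G$, or edges in a fixed order with a distance-scale parameter. For each fault set $F\in V^{\le f}$, if $\dist_{H-F}(u,v) > \alpha\cdot \dist_{G-F}(u,v)+\beta$ with $\alpha=O(\sqrt{k})$ and $\beta=O(k)$, we add to $H$ a shortest $u$--$v$ path of $G-F$. Correctness is immediate from the stopping rule. All the work goes into bounding $|E(H)|$ by $\tilde O(f^{1+\delta} n^{1+1/k})$.

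First, we reduce to paths of bounded length. Via the equivalence in the appendix between $(\alpha,\beta)$ stretch and a distance-threshold condition, it suffices to enforce stretch only for pairs at $G-F$ distance at most $d=O(\beta/\alpha)=O(\sqrt{k})$. Longer pairs are then handled by concatenating segments of length $d$ along a $G-F$ shortest path. Each added path then has $O(\sqrt{k})$ edges. We charge edges to their first (new) edge, and each violation certifies a blocking-set-type structure in the spirit of \cite{BodwinP19}. The outcome is that the final $H$ has no short cycle-like configuration that survives a random sample of faults.

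Second, and this is the core, we run a clustering analysis coupled to the greedy dynamics, in the manner of the parks framework of \cite{PPST25}. We grow clusters over $O(\sqrt{k})$ rounds, each round raising the radius by $O(\sqrt{k})$ hops, so the total radius is $O(k)$, matching the additive term. The key claim is that every clustered vertex reaches its center through a well-spread family of paths. Concretely, if each vertex is deleted independently with probability $1-\Theta(1/f)$, a surviving clustered vertex still has, with constant probability, a surviving path to a surviving center.

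With this robustness in hand, we count. We sample a random fault set $F$ keeping each vertex with probability $p\approx 1/f$. By the greedy rule, the surviving part of $H$ behaves like a non-faulty $(\alpha,\beta)$-spanner built by a Baswana--Sen/Ben-Levy--Parter style argument, and therefore has $O(n'^{1+1/k})$ edges among the $n'\approx pn$ survivors. Each edge of $H$ survives with probability about $p^2$, and the blocking-set argument makes the charging valid. Hence $p^2|E(H)|\lesssim (pn)^{1+1/k}\cdot\mathrm{polylog}$, which gives roughly $f^{1-1/k}n^{1+1/k}$ per scale. The extra $f^{\delta}$ factor comes from repeating the argument over $O(1/\delta)$ levels of sampling rates, where each level loses a factor $f^{\delta}$ in robustness.

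The main obstacle is the robust-cluster lemma: showing that the greedy dynamics alone force the center-paths to be spread out enough to survive $n-O(n/f)$ random faults. A cluster certified only against $f$ adversarial faults is not enough, since the random sample kills most vertices. Our first attempt would be a potential argument: if some vertex's center-paths were concentrated on few vertices, a fault set of size at most $f$ hitting them would have triggered a greedy path addition, which contradicts maximality. Iterating this argument across radii yields the spread.
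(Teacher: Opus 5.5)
Your proposal has a genuine gap at the counting step, which is exactly where the exponential barrier sits. You transplant the Bodwin--Patel blocking-set count from the multiplicative setting, where the greedy certificate is an edge-local girth condition. You then assert that the sampled survivors of $H$ ``behave like a non-faulty $(\alpha,\beta)$-spanner'' with $O(n'^{1+1/k})$ edges, each edge surviving with probability $p^2$. Here, however, the greedy certificate is path-level: $P_i$, of length $d=\Theta(\sqrt{k})$, is added because its endpoints are far apart in $H_i-F_i$. To carry that certificate into a $1/f$-rate vertex sample, all $d+1$ vertices of $P_i$ must be kept and $F_i$ must be missed. That happens with probability roughly $f^{-\Theta(\sqrt{k})}$, which is precisely the FT-sampling overhead the theorem is meant to beat. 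No girth-type extremal statement for $d\to r$ or $(\alpha,\beta)$ greedy spanners makes the per-edge count $p^2|E(H)|\lesssim (pn)^{1+1/k}$ valid. Note that this count would give about $f^{1-1/k}n^{1+1/k}$, stronger than the theorem itself. Your account of the $f^{\delta}$ factor via ``$O(1/\delta)$ levels of sampling rates'' likewise has no real mechanism behind it. Finally, deriving spreadness from maximality of the final $H$ yields at best a Parter-type ``one surviving path under $f$ faults'' guarantee. The spread lemma instead needs the multi-scale bound $|\widehat{\mathcal{P}}[U]|\le|\widehat{\mathcal{P}}|(C'sf)^{-|U|}$ inside each intermediate $H_i$.

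What is missing is how the robust clusters are built and how they enter the count. The paper splits the greedy paths into $\mathcal{L}$ and $\mathcal{G}$ using a path system $\mathcal{P}$, grown by prepending a critical edge to the $F_i$-surviving paths of the other endpoint. Its invariants (pairwise parks via the FT path-exchange lemma, strong collective FT, a bounded number of boosts) give $|\mathcal{L}|=O(nDs)$. Choosing $D=\Theta(s^3f^{1+\delta}n^{1/k})$ with $s\approx 2\sqrt{k}/\delta$ is where $f^{1+\delta}$ comes from. Sampling centers $S$ at rate about $(C'sf/D)^R$, with $R=\lceil\sqrt{k}/\delta\rceil$, gives spread collections to centers (via a Bernstein/union-bound argument) and $|S|=\tilde O(f^{-\delta R}n^{1-R/k})$. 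For $P_i\in\mathcal{G}$, the spread lemma makes the center-to-center connections through the subpath $P_i'$ a $(1/2f,1/2)$-robust sunflower with kernel $V(P_i')$ of size at most $\sqrt{k}+1$. The blocking-set argument is then run on the resulting virtual list-colored graph on $S$ (the robust CLFT process). Only the kernel must survive the sample, costing $f^{w}$ with $w\le\sqrt{k}+1$, while robustness keeps some full connection alive with probability at least $1/2$. This gives $|\mathcal{G}|=O(wf^{w}|S|^{1+1/\lceil\sqrt{k}\rceil})=\tilde O(fn)$, because the factor $f^{-\delta R}\le f^{-\sqrt{k}}$ in $|S|$ absorbs $f^{w}$ up to one factor of $f$. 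Without the local/global split, the sparse center set, and blocking only on the kernel, your count does not go through.
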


The above theorem gives a spanner with only $f^{1+\delta}$ size overhead, for any fixed $\delta >0$, over the non-faulty construction of \cite{Ben-LevyP20}, substantially improving upon current size bounds and techniques. On the other hand, we highlight that as the above approach is based on a greedy algorithm which scans all $\sqrt{k}$-long paths in a graph, hence it currently has an inefficient running time of $\tilde{O}(fn^{\sqrt{k}})$, when analyzing the modified greedy algorithm of \cite{DinitzR20}, which is quasi-polynomial in $n$, as $k=O(\log(n))$. Improving the running time of our algorithm, and more generally FT-$(\alpha,\beta)$ spanners with $\alpha=o(k)$ remains an important open problem. It would also be very interesting to develop fault-tolerant analogues of the recent spanners of Chechik and Lifshitz~\cite{ChechikL26}, which achieve nearly optimal guarantees for all distances $d \leq k$, as well as of the $(1+\epsilon,\beta)$ spanners of Tan and Zhang~\cite{TanZ23}. We hope that the techniques developed in this work will help pave the way toward fault-tolerant constructions matching the state-of-the-art stretch guarantees known in the non-faulty setting.

We also study the setting of small locality, concretely, VFT $(k,k-1)$-spanners. We highlight that $(k,k-1)$-spanners are natural as they can be formed by promising stretch $k$ for vertices of distance $2$ in the graph, which is best possible, and were initially studied in the seminal work of Baswana et al. \cite{BaswanaKMP05}, and more recently with two new greedy constructions \cite{PopovaT26, Tzalik2026GreedyCompletion}. Our result in the $(k,k-1)$ is a VFT-spanner whose size is $f$-times larger than its non-faulty counterpart.

 \begin{theorem}\label{thm:hybrid-greedy}
       Any $n$-vertex graph $G=(V,E)$ admits an $f$-VFT $(k,k-1)$ spanner with $O(f\cdot k^3 \cdot n^{1+1/k})$ edges.
\end{theorem}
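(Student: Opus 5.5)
We prove \Cref{thm:hybrid-greedy} by analyzing the natural FT-greedy algorithm for $(k,k-1)$-spanners. Scan the edges of $G$ in arbitrary order, starting from $H=\emptyset$. When edge $(u,v)$ is scanned, check whether some fault set $F\subseteq V\setminus\{u,v\}$ with $|F|\le f$ satisfies $\dist_{H-F}(u,v)>k$, i.e.\ whether the stretch for this adjacent pair is violated. If so, add $(u,v)$ to $H$.

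Correctness reduces to the edge condition, so it is the first step. Let $F$ be a fault set, let $P$ be a shortest $u$--$v$ path in $G-F$, and write $d=|P|$. By construction every edge of $P$ has a $(\le k)$-detour in $H-F$ (or lies in $H$). Applying this edge by edge only gives $kd$, which is not $kd+k-1$ in general; we need $\dist_{H-F}(u,v)\le kd+(k-1)$, which is weaker than $kd$ only for $d\ge1$. So pure multiplicative stretch $k$ on edges already suffices, since $kd\le kd+k-1$. Note the stretch $k$ here is odd-or-even generic: the greedy threshold is $k$ rather than $2k-1$, which is why the additive $k-1$ appears. In fact one can sharpen the rule to reach the real content. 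For each pair $u,v$ at distance $2$ in $G-F$, we insist that $\dist_{H-F}(u,v)\le 2k-1+\ldots$. The main rule I would adopt instead is the \cite{PopovaT26}-style one: for every path $u-w-v$ of length $2$ in $G-F$, add its edges if $\dist_{H-F}(u,v)>3k-1$ … Rather than fix this, I commit to the edge-threshold-$k$ rule, which is correct by the telescoping argument above.

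The size bound is the crux. Follow the blocking-set method of \cite{BodwinP19}. For each added edge $e=(u,v)$, record its witnessing fault set $F_e$, so that in the final $H$ we have $\dist_{H-F_e}(u,v)>k$ through edges added before $e$. The set $B=\{(e,x): x\in F_e\}$ is a blocking set of size $f|H|$: every cycle of length $\le k+1$ in $H$ contains, for its last-added edge $e$, an edge or vertex blocked by $F_e$. Sample each vertex into a set $S$ with probability $p=1/(2f)$. Keep each surviving edge $e$ only if $e\subseteq S$ and $F_e\cap S=\emptyset$; this happens with probability at least $p^2(1-p)^f\ge \Omega(1/f^2)$. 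The resulting subgraph has no cycles of length $\le k+1$, i.e.\ girth $>k+1$, on $\approx n/f$ vertices. The Moore bound then gives $O((n/f)^{1+2/k})$ edges. Unfortunately girth $k+2$ yields exponent $1+2/k$, not $1+1/k$. To reach $n^{1+1/k}$ with a stretch-$k$ threshold, we must exploit the additive slack: the $k-1$ additive allows a $(k,k-1)$ sparsity argument via the clustering/parks analysis of \cite{PPST25}, replacing the girth argument by a ``no short cycle through a heavy cluster'' argument.

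Concretely, I would define clusters of radius $\le (k-1)/2$ grown greedily in the sampled graph and show each vertex contributes at most $O(k\cdot n^{1/k})$ edges per level. This uses the robust-cluster property that sampled faults leave many spread paths to centers. Summing over $k$ levels and undoing the $f^{2}$ sampling loss gives $O(f^2 k^3 n^{1+1/k}/f)=O(fk^3n^{1+1/k})$, since the sampled graph has $n/f$ vertices. The hard step, which I expect to dominate, is this cluster-based counting replacing the Moore bound. I would first try to charge each greedy edge to a cluster boundary crossing at one of $k$ radii, using a BFS-layer growth argument with ratio $n^{1/k}$. The extra factors of $k$ come from the choice of radius and from the union over layers.
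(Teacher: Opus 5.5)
Your algorithm yields a correct $f$-VFT $(k,k-1)$-spanner, but it cannot achieve the stated size, so no analysis can finish the proof. Your rule adds $(u,v)$ whenever $\dist_{H-F}(u,v)>k$ for some admissible $F$; this is the VFT greedy \emph{multiplicative} $k$-spanner. Take $F=\emptyset$ and any $G$ of girth at least $k+2$. Any other $u$--$v$ path in $H\subseteq G$ closes a cycle with $(u,v)$, so it has length at least $k+1$. Hence every edge of $G$ is added.

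Concretely:
\begin{itemize}
\item For $k=2$ and $G=K_{n/2,n/2}$ the output has $n^2/4$ edges, against the claimed $O(fn^{3/2})$.
\item For $k\in\{3,4\}$, incidence graphs of projective planes have girth $6$ and $\Theta(n^{3/2})$ edges, far above $n^{1+1/k}$.
\item Under the girth conjecture the output is $\Omega(n^{1+1/\lceil k/2\rceil})$ in general.
\end{itemize}
The same examples show that your sampled subgraph, of which you only know girth $>k+1$, can have far more edges than your target bound allows. So the proposed ``cluster-based counting replacing the Moore bound'' is false, not merely unproven. Additive slack that the algorithm never uses cannot be recovered in the analysis. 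The closing estimate $O(f^2k^3n^{1+1/k}/f)$ is also not derived from any stated lemma.

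The missing idea is that a $(k,k-1)$-spanner needs only stretch $2k-1=k\cdot 1+(k-1)$ on edges, and stretch $k$ only on pairs at distance $2$. The paper takes the union of an $f$-VFT $1\to(2k-1)$ spanner and the VFT greedy $2\to 2k$ spanner. The latter adds a $2$-path $(x,m,y)$ whenever some $F$ with $|F|\le f$ and $F\cap\{x,m,y\}=\emptyset$ has $\dist_{H-F}(x,y)>2k$. Cut a shortest $u$--$v$ path of $G-F$, of length $2q+b$ with $b\in\{0,1\}$, into $2$-segments and at most one leftover edge. This gives $\dist_{H-F}(u,v)\le 2kq+(2k-1)b\le k\,\dist_{G-F}(u,v)+k-1$ (Claim~\ref{clm:reduction-spanners-two}).

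All the real work is then Theorem~\ref{thm:2-2k-ft-greedy}, where the added $2$-paths are split in two:
\begin{itemize}
\item $\mathcal{L}$ is charged to boosts of the path system $\mathcal{P}$, whose invariants I1--I7 give $|\mathcal{L}|=O(nDk)$.
\item Each step in $\mathcal{G}$ is charged to growth of one of three collections $\overline{\mathcal{K}},\mathcal{K},\mathcal{R}$, with paths of lengths $k+1$, $k$ and $\lceil k/2\rceil$. Each charge adds $\Omega(D^{k-1})$ paths to $\overline{\mathcal{K}}$ or $\mathcal{K}$, or $\Omega(D^{\lceil k/2\rceil-1})$ paths to $\mathcal{R}$. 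The pairwise subcollections are parks, so the totals are capped at $n^2\cdot O_k(f^{k})$, $n^2\cdot O_k(f^{k-1})$ and $O(D^{\lceil k/2\rceil}n)$ respectively.
\end{itemize}
Choosing $D=128k^2fn^{1/k}$ balances the two parts. Nothing in your sketch supplies a potential function of this kind.
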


We highlight that this result nearly matches the $\Omega(f^{1-1/k}n^{1+1/k})$ lower bound for $f$-VFT $(2k-1,0)$-spanners by \cite{BodwinDPW18}.

%% file: technical-overview.tex
\section{Technical Overview}

\input{technical-BS}

\input{technical-overview-greedy}

\input{technical-overview-path-systems}

\input{technical-overview-sqrt-vft}

%% file: technical-BS.tex
To illustrate our key ideas and to give a taste of both of the techniques, it is instructive to focus on the construction of $f$-VFT $(\sqrt{k},O(k))$ spanners. In \Cref{sec:classical}, we first show the size bound obtained by the fault-tolerant sampling technique. In \Cref{sec:BS-VFT}, we present our Baswana-Sen based approach. We note that obtaining the general VFT $(O_{\varepsilon}(k^{\epsilon}),O_{\varepsilon}(k))$ spanner for $\epsilon<1/2$ calls for new ideas similarly to the \cite{Ben-LevyP20}, while the setting of $(O(\sqrt{k}),O(k))$  admits a simpler solution.  In Sec. \ref{sec:greedy-tech-overview} we present the high-level approach of the greedy VFT $(\alpha,\beta)$ spanners. While the greedy approach provides tighter size bounds, the benefit of the Baswana-Sen approach is that it can be extended to provide an $f$-VFT analogue for the general class of $(O_{\varepsilon}(k^{\epsilon}),O_{\epsilon}(k))$-spanners. As usual, one may assume $k=O(\log n)$ since the dominant dependence on $k$ in the size is captured by the term $n^{1+1/k}$.

\subsection{The Classical Approach to VFT $(\alpha,\beta)$ Spanners}\label{sec:classical}

The approach shown in \cite{BraunschvigCPS15} is based on applying the fault-free construction on a collection of $G$ subgraphs, later on formalized as \emph{Replacement Path Covering} \cite{KarthikP21}, and taking the union of these solutions as the output spanner.  For $s,t \in V$ and $F \subseteq V$, let $P(s,t,F)$ be some $s$-$t$ shortest path in $G$.
\begin{definition}[Replacement Path Covering (\RP)]\label{def:RP}
A subgraph $G' \subseteq G$ \emph{covers} a replacement path $P(s,t,F)$ if 
$P(s,t,F) \subseteq G' \mbox{~~and~~} F \cap V(G') =\emptyset.$ 

A collection of subgraphs of $G$, say $\mathcal{G}_{f,L}$, is an $(L,f)$-\emph{\RP} if for every $s,t \in V$ and every $F\subseteq V$ such that $|F|\leq f$, we have that each $P(s,t,F)$ replacement path with at most $L$ edges is covered by some subgraph $G'$ in $\mathcal{G}_{f,L}$. 
\end{definition}

$(L,f)$-\emph{\RP} can be obtained easily by using the fault-tolerant sampling technique which yields the following: 
\begin{claim}\cite{weimann2013replacement,DinitzK11,KarthikP21}\label{cl:RPC-size}
   There is a randomized algorithm for computing $\mathcal{G}_{f,L}$ with $$|\mathcal{G}_{f,L}|=\min\{(c f \log n)^{L+2}, (c (L+1) \log n)^{f+1}\},$$
   for some constant $c \geq 1$, such that the desired properties hold w.h.p. 
\end{claim}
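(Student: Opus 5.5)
We prove the claim by building two independent sampling families and keeping the smaller one. In both, each subgraph $G_i$ is obtained by keeping every vertex of $G$ independently with probability $p$ (the induced subgraph on the surviving vertices). The goal is that $G_i$ keeps all $\le L+1$ vertices of the replacement path $P=P(s,t,F)$ and deletes all $\le f$ vertices of $F$.

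\textbf{Single sample.} Fix a triple $(s,t,F)$ with $|F|\le f$ and $|P(s,t,F)|\le L$ edges. A single $G_i$ covers it with probability at least
\[
q(p)=p^{L+1}(1-p)^{f}.
\]
We use two choices of $p$.
\begin{itemize}
\item Choosing $p=1-1/(L+1)$ gives $q\ge e^{-1}\,(L+1)^{-f}$, roughly $(L+1)^{-f}$. This is the regime where $f$ is the small parameter.
\item Choosing $p=1/f$ (with $f\ge 2$; the case $f\le 1$ is handled trivially) gives $q\ge f^{-(L+1)}(1-1/f)^f\ge f^{-(L+1)}/4$.
\end{itemize}

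\textbf{Amplification and union bound.} With $\gamma$ independent samples, the failure probability for a fixed triple is $(1-q)^\gamma\le e^{-q\gamma}$. There are at most $n^{2}\cdot n^{f}$ triples $(s,t,F)$, since we need only one designated replacement path per triple. Hence $\gamma=\Theta((f+2)\log n/q)$ suffices for a w.h.p.\ guarantee over all triples.

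Substituting the two values of $q$:
\begin{itemize}
\item In the first regime, $\gamma=O(f\log n\,(L+1)^{f})$. Since $f\le (L+1)^{\,\cdot}$-type factors can be absorbed, this is at most $(c(L+1)\log n)^{f+1}$, using $f+2\le 2^{f+1}$ and putting the $\log n$ factor into the base.
\item In the second regime, $\gamma=O(f\log n\cdot f^{L+1})$. This is at most $(cf\log n)^{L+2}$.
\end{itemize}
We then output whichever family is smaller, which gives the stated minimum.

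\textbf{Main obstacle.} The only delicate point is the union bound over fault sets. Its $n^{f}$ count contributes a factor of $f\log n$, and we must check that this factor fits within the claimed exponents $L+2$ and $f+1$. It does, because each of the $L+2$ or $f+1$ factors of the bound already carries a $\log n$, and $c$ absorbs the constants.

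A second point to verify is that vertex sampling, rather than edge sampling, is what makes $F\cap V(G')=\emptyset$ meaningful. The path edges then survive automatically, because we keep the subgraph induced on the surviving vertices.
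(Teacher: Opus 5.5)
Your argument is correct and is the standard FT-sampling (vertex-sampling) proof behind the cited results, which the paper quotes without reproving: keep each vertex with probability $1-\frac{1}{L+1}$ or about $\frac{1}{f}$, take $O(f\log n/q)$ independent samples, and union-bound over the $n^{O(f)}$ triples $(s,t,F)$. Two slips, both absorbed into $c$: the bound $(1-\frac{1}{L+1})^{L+1}\ge \frac14$ is the right one, since $e^{-1}$ is an upper bound on this quantity rather than a lower bound, and for $f=1$ (or $L=0$) your choice of $p$ degenerates, so take $p=\frac12$ there, or simply use $p=\frac{1}{2f}$ throughout, which gives $q\ge \frac12(2f)^{-(L+1)}$.
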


\Cref{cl:RPC-size} can be used to provide $f$-VFT $(\alpha,\beta)$-spanners in a black box manner from a given construction of non-faulty $(\alpha,\beta)$-spanners, by exploiting the following locality of $(\alpha,\beta)$-spanners: 

\begin{observation}\label{obs:local-alpha-beta}
    Let $H$ be a subgraph satisfying that $\dist_{H - F}(u,v)\leq \alpha \cdot \dist_{G - F}(u,v)+\beta$, for some $\beta>0$ and every $u,v\in V$ and $F\in (V\setminus \{u,v\})^{\leq f}$ such that $\dist_{G - F}(u,v)\leq \lceil \beta/\alpha \rceil$, then $H$ is a $f$-VFT $(2\alpha,\beta)$-spanner.
\end{observation}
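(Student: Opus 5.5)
The plan is to fix a fault set $F$ with $|F|\le f$ and a pair $u,v\in V\setminus F$ with $d=\dist_{G-F}(u,v)$. We then split into two cases according to whether $d\le \lceil\beta/\alpha\rceil$. If $d\le\lceil\beta/\alpha\rceil$, the hypothesis gives $\dist_{H-F}(u,v)\le \alpha d+\beta\le 2\alpha d+\beta$ directly. Nothing else is needed in this case. (Since $\dist$ is infinite when $u,v$ are disconnected in $G-F$, we may assume $d<\infty$.)

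For $d>\lceil\beta/\alpha\rceil$, set $L=\lceil\beta/\alpha\rceil\ge 1$, which uses $\beta>0$. Fix a shortest $u$-$v$ path $P$ in $G-F$, with vertices $u=x_0,x_1,\dots,x_d=v$. Partition $P$ into consecutive segments with endpoints $x_0,x_L,x_{2L},\dots,x_{qL},x_d$, where $q=\lfloor d/L\rfloor$. Each segment has length at most $L$ in $G-F$. Because $P$ is a shortest path in $G-F$, consecutive breakpoints $y,z$ satisfy $\dist_{G-F}(y,z)=$ the length of the segment between them, which is at most $L$.

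Each breakpoint lies on $P$ and hence avoids $F$. So the hypothesis applies to every pair of consecutive breakpoints, with the same $F\subseteq V\setminus\{y,z\}$. It gives $\dist_{H-F}(y,z)\le \alpha\cdot|{\rm seg}|+\beta$. By the triangle inequality in $H-F$, summing over the $m$ segments yields
\[
\dist_{H-F}(u,v)\le \alpha d + m\beta .
\]

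The key accounting step is to bound $m\beta$. We have $m\le q+1$, and the last segment may be empty or short. Full segments have length exactly $L\ge\beta/\alpha$, so $\beta\le \alpha L$. Hence $q\beta\le \alpha qL\le \alpha d$. The possibly partial last segment contributes one extra $+\beta$. In total,
\[
\dist_{H-F}(u,v)\le \alpha d+\alpha d+\beta=2\alpha d+\beta .
\]
If $L\mid d$, there is no partial segment and the bound is even better.

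The main (mild) obstacle is getting this bookkeeping exactly right, so that the additive term stays $\beta$ rather than $2\beta$. That is achieved by charging each of the $q$ full segments' $\beta$ to its own $\alpha L$ length and letting only the single leftover segment pay $\beta$. We also need to note that $d>L$ guarantees $q\ge1$, though the argument does not actually need this.
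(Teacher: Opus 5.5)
Your proof is correct and follows essentially the same route as the paper: you cut a shortest $u$--$v$ path in $G-F$ into $\lfloor d/L\rfloor$ segments of length $L=\lceil\beta/\alpha\rceil$ plus one leftover segment, apply the hypothesis to each, and absorb the $q\beta$ term into $\alpha d$ using $\beta\le\alpha L$. You also note explicitly that breakpoints avoid $F$ and that the case $d\le L$ follows directly from the hypothesis, which the paper leaves implicit.
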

\begin{proof}
    Let $u,v,F$ be such that $\dist_{G - F}(u,v)=d$ for $d\geq \lceil \beta/\alpha \rceil$ and let $P$ be a $u$-$v$ shortest path in $G - F$. For simplicity denote $\gamma:=\lceil \beta/\alpha \rceil$. Writing $d=\gamma \cdot t+r$ for $0 \le r<\gamma$, we can partition $P$ into $t$ segments of length $\gamma$ and an extra segment of length $r$. By the properties of $H$, applied to each segment: 
    $\dist_{H - F}(u,v)\leq (\alpha \cdot\gamma+\beta) \cdot t + (\alpha \cdot r + \beta) = \alpha d+t \cdot \beta +\beta \leq 2\alpha \cdot d + \beta.$
    Since $\beta \cdot t \leq \alpha \cdot d$ by definition.
\end{proof}

Combining \Cref{obs:local-alpha-beta}, \Cref{cl:RPC-size} with $L=O(\beta/\alpha)$ and the fault-free construction of \cite{Ben-LevyP20}, yields $f$-VFT $(O(\sqrt{k}),O(k))$ spanner with $\tilde{O}(\min\{f^{O(\sqrt{k})},k^{O(f)}\}\cdot n^{1+1/k})$ edges. Our main challenge is to eliminate the exponential dependence in either $\sqrt{k}$ or $f$ in the size bound.

\subsection{A Baswana-Sen Type VFT $(\alpha,\beta)$ Spanner}\label{sec:BS-VFT}

We briefly recall the Baswana--Sen algorithm~\cite{BaswanaS07} for constructing (fault-free) $(2k-1,0)$-spanners, presented here in a slightly more general form tailored to our extensions. A \emph{clustering} $\mathcal{C}=\{C_1,\ldots,C_\ell\}$ is a collection of vertex-disjoint subsets of vertices, and we denote the set of clustered vertices by $V(\mathcal{C})=\bigcup_{C \in \mathcal{C}} C$. Each cluster $C_i$ has a designated \emph{center} $s_i$. We say that $\mathcal{C}$ has radius $r$ if, for every cluster $C_i$, the center $s_i$ spans all vertices of $C_i$ via a breadth-first search tree of depth at most $r$ rooted at $s_i$.
The algorithm proceeds iteratively, starting from the trivial clustering $\mathcal{C}_0=\{\{v\}\}_{v \in V(G)}$. For $k-1$ iterations, each step $i \ge 1$ takes as input an $(i-1)$-radius clustering $\mathcal{C}_{i-1}$ and produces an $i$-radius clustering $\mathcal{C}_i$, along with a set of edges $H_i$. The edges in $H_i$ ensure the stretch guarantee for vertices that become unclustered in this step, namely those in $V(\mathcal{C}_{i-1}) \setminus V(\mathcal{C}_i)$. A key property is that, in expectation, $|\mathcal{C}_i| = n^{1-i/k}$ for every $i \in \{0,\ldots,k\}$, and $\mathcal{C}_k$ is empty. The final spanner $H=\bigcup_i H_i$ guarantees a stretch of $2i-1$ for every edge $(u,v)$ provided that at least one of its endpoint is $i$-unclustered (i.e., not in $V(\mathcal{C}_i)$). Since $V(\mathcal{C}_k)=\emptyset$, all vertices are $k$-unclustered which yields the desired $2k-1$ stretch for all edges.

\paragraph{The Non-Faulty $(O(\sqrt{k}), O(k))$ Algorithm of \cite{Ben-LevyP20}}
The algorithm has two phases. The first performs $\sqrt{k}$ (out of the $k-1$ many) iterations of the Baswana--Sen algorithm. This yields a clustering $\mathcal{C}=\mathcal{C}_{\sqrt{k}}$ with $O(n^{1-1/\sqrt{k}})$ clusters in expectation, together with a subgraph $H=\bigcup_{i \le \sqrt{k}} H_i$ that guarantees a stretch of $(2\sqrt{k}-1)$ for every $G$-edge that is incident to an unclustered vertex $v \in V \setminus V(\mathcal{C})$.

The second phase constructs a \emph{virtual} graph $\widehat{G}$ on the $n^{1-1/\sqrt{k}}$ cluster centers of $\mathcal{C}$. In this graph, two centers $s,s'$ are connected by an edge if $\dist_G(s,s') \le 3\sqrt{k}$. The number of vertices in $\widehat{G}$ enables one to compute a $(2\sqrt{k}-1)$ multiplicative spanner $\widehat{H}$ with only a linear number of edges (in $n$). In the final step, each edge $(s,s') \in \widehat{H}$ is replaced by a shortest $s$--$s'$ path in $G$, adding $O(\sqrt{k}n)$ edges in total. Let $H'$ denote this resulting subgraph. The final spanner is $H \cup H'$, and the size bound follows immediately.

We now sketch the stretch argument. Consider a pair $u,v$ at distance $d \le \sqrt{k}$, and let $P$ be a shortest $u$--$v$ path. The nontrivial case is when $P$ contains at least two clustered vertices, since otherwise the guarantee follows from the first phase. Let $u'$ and $v'$ be the closest clustered vertices to $u$ and $v$ along $P$, respectively. By construction, the centers of $u'$ and $v'$ are adjacent in $\widehat{G}$, and hence are connected by a path of length $O(\sqrt{k})$ in $\widehat{H}$. This implies the existence of a path of length $O(k)$ between them in $H'$, completing the argument.

\paragraph{The Virtual Graph Challenge.}  The main challenge in extending the fault-free construction to the fault-tolerant setting arises in the second step, where a spanner is built over a virtual cluster graph. The first step, running $\sqrt{k}$ phases of the Baswana--Sen algorithm, can be replaced by its VFT analogue from~\cite{Parter22}, which already handles all unclustered vertices, in the sense that for every $G$-edge $(u,v)$ incident to an unclustered vertex it holds that $\dist_{H- F}(u,v)\leq 2\sqrt{k}-1$ for every $F \in (V \setminus \{u,v\})^{\leq f}$ where $H$ is the output spanner of running $\sqrt{k}$ VFT-BS steps. 

The key challenge is to handle the edges between clustered vertices. The fault-tolerant clusters of~\cite{Parter22} ensure that every clustered vertex remains connected to at least one of its centers after the failure of any set of at most $f$ vertices. It is therefore sufficient to provide a good stretch between center pairs, as in the non-faulty construction. 
The major barrier is that one can no longer rely on a virtual cluster graph in which each edge $(z,z')$ corresponds to \emph{some} short $z$-$z'$ path in $G$, as this path may be destroyed by faults. Consequently, the standard approach of replacing each virtual edge by a short path in $G$ breaks down. In the worst case, one must instead associate each such edge with a collection of up to $\min\{\sqrt{k}^f,\,f^{\sqrt{k}}\}$ paths to certify connectivity under $f$ faults, leading to a prohibitive blow-up\footnote{Moreover, the bounds $\min\{\sqrt{k}^{\Theta(f)},\,f^{\Theta(\sqrt{k})}\}$ are tight for encoding distances $\leq \sqrt{k}$ under faults in subgraphs. Concretely, for any $k,f$ there is a graph $G$ and two vertices $u,v$ such that any subgraph $H$ of $G$ satisfying $\dist_{G-F}(u,v)\leq \sqrt{k} \implies \dist_{H-F}(u,v)=\dist_{G-F}(u,v)$, has $|E(H)| \geq \min\{\sqrt{k}^{\Theta(f)},\,f^{\Theta(\sqrt{k})}\}$.}.

Our two approaches address this challenge in different ways. The Baswana--Sen based construction oversamples cluster centers, enabling the virtual graph step to be implemented across the collection of $(3\sqrt{k},f)$-\RP subgraphs that collectively ensure fault tolerance. This leads to a non-tight polynomial dependence on $f$. Our greedy based approach optimizes this dependence to be almost optimal. This improvement is made possible as the greedy algorithm implicitly constructs stronger FT-clusters than those of~\cite{Parter22}, where each cluster is a path collection to centers that is $\Omega(\sqrt{k}f)$-spread. Our analysis then shows that a \emph{single} greedy path plays the role of \emph{many} virtual edges in the $(3\sqrt{k},f)$-\RP subgraphs, allowing the same edges of $G$ to serve many center pairs, yielding the desired bound.

We note that while the BS-based approach provides a $\poly(f)$ size overhead for $f$-VFT $(O(k^{\epsilon}), O(k))$ spanners for any constant $\epsilon$, the greedy argument becomes much harder to achieve when $\epsilon< 1/2$.

\paragraph{$f$-VFT $(\sqrt{k},O(k))$-Spanner with $\poly(f)n^{1+1/k}$ Edges.}

The algorithm mimics the two phases of the non-faulty algorithm by \cite{Ben-LevyP20}. In the first phase, we apply $\sqrt{k}$ steps of the VFT variant of the Baswana-Sen algorithm by Parter \cite{Parter22} with one key modification: Instead of setting the sampling probability to $p=(f/n)^{1/k}$, we set it to $q=1/((cf\log n)^4 n^{1/k})$ where $c$ is the constant from \Cref{cl:RPC-size}. At the end of this phase, we have the following:
\begin{itemize}
    \item A collection of $\sqrt{k}$-radius clusters $\mathcal{C}$ centered at $S$ centers, where
    $$|\mathcal{C}|=|S|=n\cdot q^{\sqrt{k}}=O\left(\frac{n^{1-1/\sqrt{k}}}{(cf\log n)^{4\sqrt{k}}}\right)~.$$
    \item A subset $V^* \subseteq V(\mathcal{C})$ of $f$-VFT clustered vertices, in the following sense: For $u \in V^*$ and every $F \in (V \setminus \{u\})^{\leq f}$, $u$ has a fault-free path to one of the centers in $S$ in the trees of $\mathcal{C}$.
    \item A subgraph $H'$ of size $\tilde{O}(fn/q)$ that contains all the tree edges of the clusters in $\mathcal{C}$ and a subset of the edges incident to $V \setminus V^*$, such that: For every $G$-edge $(u,v)\in (V \setminus V^*)\times V$ and $F \in (V\setminus \{u,v\})^{\leq f}$, it holds that $\dist_{H - F}(u,v)\leq 2\sqrt{k}-1$.
\end{itemize}

In the second phase, the algorithm applies the second phase of the non-faulty algorithm of \cite{Ben-LevyP20} in each subgraph $G_j$ in a $(L=3\sqrt{k},f)$-\RP family $\mathcal{G}_{f,3\sqrt{k}}$. Specifically, for every $G_j \in \mathcal{G}_{f,3\sqrt{k}}$, compute a virtual graph $\hat{G}_j$ where:
$$\hat{G}_j=(S, \{(s,s')\in S \times S ~\mid~ \dist_{G_j}(s,s')\leq 3\sqrt{k}\})~.$$
Let $\hat{H}_j$ be a $\sqrt{k}$-spanner on $\hat{G}_j$, and let $H_j$ be a $G_j$-subgraph obtained by replacing each edge $(s,s')$ in $\hat{G}_j$ by the corresponding $s$-$s'$ shortest path in $G_j$, i.e.,
$H_j=\bigcup_{(s,s')\in \hat{H}_j}SP(s,s',G_j)~.$
The final spanner is given by $H=H' \cup \bigcup H_j$.

\noindent \textbf{Size:} By the $f$-VFT spanner of \cite{Parter22}, we have that: $|H'|=\tilde{O}(f^5n^{1+1/k})$. By the bound on $S$ and \Cref{cl:RPC-size}, we have: 
$$|\bigcup H_j|=|\mathcal{G}_{f,3\sqrt{k}}|\cdot 3\sqrt{k} \cdot |S|^{1+1/\sqrt{k}}=\tilde{O}(n)~.$$

\noindent \textbf{Stretch:} Fix $F \in V^{\leq f}$ and let $u,v \in V$ such that $\dist_{G - F}(u,v)\leq \sqrt{k}$. Let $P=SP(u,v, G - F)$. 
By the output of the first phase, it is sufficient to consider the case where $P$ contains at least two clustered vertices in $V^*$. Let $u',v'$ be the closest clustered vertex to $u$ and $v$ on $P$, respectively. Let $s_1$ be a center in $S$ such that the tree path in $T(s_1)$ from $u'$ to $s_1$ is non-faulty. Define $s_2$ w.r.t. $v'$. We then have that $\dist_{G - F}(s_1,s_2)\leq 3\sqrt{k}$, and thus by the properties of $\mathcal{G}_{f,3\sqrt{k}}$, there exists $G_j \in \mathcal{G}_{f,3\sqrt{k}}$ with $F \cap V(G_j)=\emptyset$, $\dist_{G_j}(s_1,s_2)\leq 3\sqrt{k}$ and consequently, $(s_1,s_2)\in \hat{G}_j$. We conclude that $\dist_{H_j}(s_1,s_2)=O(k)$ and therefore $\dist_{(H' \cup H_j)- F}(u,v)=O(k)$. The claim follows. 

\paragraph{VFT $(O_{\epsilon}(k^{\epsilon}),O_{\epsilon}(k))$ Spanners.} The case of $\epsilon = 1/2$ admits a relatively simple two-step solution. As observed in~\cite{Ben-LevyP20}, handling the regime $\epsilon < 1/2$ requires significantly more delicate constructions even in the non-faulty setting. Their algorithm employs a three-stage approach, introducing an intermediate \emph{superclustering} phase in which cluster centers are grouped into super-clusters. A key feature of this step is that it adds $\Omega(n)$ edges, even when the number of clusters is small (sub-linear in $n$). Concretely, during each superclustering step of ~\cite{Ben-LevyP20}, \emph{every} vertex connects to its nearest super-cluster via a short path. In the presence of faults, such paths may fail, and applying this procedure independently across the subgraphs of an $(L,f)$-replacement path collection leads again to an exponential blow-up in the number of edges.

To circumvent this issue, we design a new \emph{non-faulty} $(O_{\varepsilon}(k^{\epsilon}), O_{\epsilon}(k))$-spanner construction based on $(1/\epsilon)$ iterations of the Baswana--Sen algorithm applied to a virtual cluster graph. In this formulation, the number of edges added at each step depends only on the number of clusters, rather than the total number of vertices. While this provides slightly weaker bounds than \cite{Ben-LevyP20}, we find this formulation to be easier, and more naturally extendable to the faulty setting. In the fault-tolerant adaptation, each clustering step oversamples centers to ensure the following invariant: at the beginning of each step, we are given an $r$-radius clustering with 
\[
N = \frac{n^{1-1/x}}{(f \log n)^{3r}}
\]
clusters, for some integers $x$ and $r$. This choice of parameters allows us to construct a $(2x-1)$-spanner on the corresponding cluster graph within each subgraph of an $(L=3r, f)$-replacement path collection $\mathcal{G}$, while ensuring that the total number of edges across all subgraphs remains $N^{1+1/k}\cdot |\mathcal{G}|=\tilde{O}(n)$ (by \Cref{cl:RPC-size}). Specifically, for each subgraph $G_j \in \mathcal{G}$, we define a virtual graph on the cluster centers by connecting pairs at distance at most $3r$ in $G_j$. We then compute a $(2x-1)$-spanner $H'_j$ of this virtual graph, and obtain a subgraph $H_j$ of $G_j$ by replacing each virtual edge with a corresponding shortest path in $G_j$.

The choice $L = 3r$ is crucial: for the stretch analysis, we must guarantee $k^{\epsilon}$ stretch for clustered vertices that are sufficiently close in $G$, namely at distance $x$. This implies that their respective centers are within distance at most $3r$. This factor of $3$ turns out to be essential, and in fact drives the super-exponential dependence on $\epsilon$ in the final size bound.

%% file: technical-overview-greedy.tex
\subsection{Greedy VFT $(\alpha,\beta)$-Spanners}\label{sec:greedy-tech-overview}

Our constructions build on the greedy algorithm for fault-tolerant $d\to r$ spanners, introduced in~\cite{PopovaT26}. A $d\to r$ spanner of $G$ is a subgraph $H$ such that for every pair $x,y\in V(G)$ with $\dist_G(x,y)=d$, we have $\dist_H(x,y)\le r$. As was explained in \Cref{sec:BS-VFT}, $(\alpha,\beta)$-spanners for unweighted graphs can be obtained as a union of $O(\beta/\alpha)$ many $d\to r$ spanners. In particular, in our setting, a VFT $2\to 2k$ spanner of size $\tilde{O}(fn^{1+1/k})$ yields a $(k,k-1)$-spanner of the same size, while a VFT $\sqrt{k}\to O(k)$ spanner yields a $(O(\sqrt{k}),O(k))$-spanner with an additional $O(\sqrt{k})$ factor in the size.

We use the following greedy VFT $d\to r$ spanner algorithm on a graph $G$: iterate over all length-$d$ paths $P$ in $G$ with endpoints $x,y$, and add $P$ to the spanner $H$ whenever there exists a fault set $F$, disjoint from $P$, of size at most $f$ such that $\dist_{H-F}(x,y)>r$. As is typical for greedy algorithms, correctness is immediate from the update rule, and the main challenge is to bound the number of paths added.

Let $(P_1,F_1),(P_2,F_2),\ldots,(P_m,F_m)$ be the sequence of paths added by the VFT greedy algorithm, with $F_i$ being the set of faults causing $P_i$ to be added. For each $i\in[m]$, let $H_i$ denote the state of the spanner just before $P_i$ was added; thus $H_1=(V,\emptyset)$ and $H_i=\bigcup_{j=1}^{i-1}P_j$ for $i\ge 2$. Our goal is to bound $m$, which immediately yields an upper bound of $d\cdot m$ on the number of edges in the output spanner. For both the $2\to 2k$ and $\sqrt{k}\to O(k)$ constructions, our strategy is to partition the added paths into two classes,
\[
\{P_1,\ldots,P_m\}=\mathcal{L}\overset{\cdot}{\cup}\mathcal{G},
\]
and bound $|\mathcal{L}|$ and $|\mathcal{G}|$ separately. We denote these classes by $\mathcal{L}$ and $\mathcal{G}$ because paths in $\mathcal{L}$ increase the connectivity recorded at \emph{local} length scale at most $\lceil r/2d\rceil$, whereas bounding $\mathcal{G}$ requires understanding connectivity at a scale close to $r$, hence more \emph{global}.

To define the partition into $\mathcal{L}$ and $\mathcal{G}$, we maintain throughout the algorithm a collection of oriented paths $\mathcal{P}$ contained in the current spanner $H$. We use $\pp_v$ to denote the paths in $\pp$ that start at $v$, and $\pp_{v,t}$ to denote the paths in $\pp$ that start at $v$ and end at $t$. Let $\pp_v^{\ell}$ denote the paths in $\pp$ of length $\ell$ that start at $v$. Given a fault set $F$ and a path collection $\pp$, let $\pp-F = \{P \in \pp \mid V(P) \cap F = \emptyset\}$ be the non-faulty paths in $\pp$ under $F$. For a set $J$ and a path collection $\pp$, $\pp[J]=\{ P \in \pp \mid J\subseteq V(P) \}$, which will be used for analyzing path collections. The partition into $\mathcal{L}$ and $\mathcal{G}$ will depend on two parameters, the first is $D$ which will be approximately the average degree bound we would like to achieve in the output spanner. The second parameter is $s$, an integer smaller than $r/d$.

\paragraph{Deciding if a path is in $\mathcal L$ or $\mathcal G$}
To describe how $\pp$ evolves, we introduce the notion of \emph{critical} edges. We say that an edge $e=\{u,v\}\in P_i$ is $s$-critical if $\dist_{H_i-F_i}(u,v)>s$. By the stretch requirement, every path $P_i$ added by the greedy algorithm contains at least one $s$-critical edge, assuming $s\leq r/d$. The path collection $\pp$ is initially $\{(v)\}_{v\in V}$, and will consist of paths of length $\leq \lceil s/2\rceil$. We update $\pp$ according to the following rule, which also determines whether $P_i$ belongs to $\mathcal{L}$ or to $\mathcal{G}$. Informally, a path is placed in $\mathcal{G}$ if every $s$-critical edge already lies in a saturated part of the path system, and it is placed in $\mathcal{L}$ if some critical edge allows us to grow the path system at some non-saturated level. Setting $N_{\ell}=D^{\ell}$, we define $\pp,\mathcal{L},\mathcal{G}$ as follows:

\begin{itemize}
    \item If for every $s$-critical edge $e=(u,v)\in P_i$ and every $\ell \leq \lceil s/2\rceil$ we have $|\pp_v^\ell|\ge N_\ell$, then we place $P_i$ in $\mathcal{G}$ and leave $\pp$ unchanged.
    
    \item Otherwise, we show that there exist an $s$-critical edge $e=(u,v)\in P_i$ and an index $0\le \ell \leq \lceil s/2\rceil-1$ such that $|\pp_u^\ell|\ge N_\ell$ but $|\pp_v^{\ell+1}|<N_{\ell+1}$. In this case, we place $P_i$ in $\mathcal{L}$ and update
\[
\pp \gets \pp \cup (v,u)\circ (\pp_u^\ell-F_i).
\]
That is, we prepend the edge $(v,u)$ to the surviving length-$\ell$ paths rooted at $u$, thereby creating new length-$(\ell+1)$ paths rooted at $v$.
\end{itemize}

The paths in $\mathcal{L}$ increase the \emph{local} connectivity captured by $\pp$, since each such path forces $\pp$ to grow. We think of $\pp$ as tracking connectivity up to length $\lceil s/2\rceil \leq r/2d$, which may be much smaller than $r$, which is the scale relevant for the actual replacement paths in a $d\to r$ spanner. Thus, $\pp$ captures only local connectivity. In contrast, paths in $\mathcal{G}$ need not increase this local connectivity, and controlling them requires a different argument that captures the \emph{global} structure formed by adding them. Bounding $\mathcal{G}$ will require understanding how well things connect at length scales up to $r$.

This high-level approach, namely bounding $\mathcal{L}$ and $\mathcal{G}$ separately, applies to both the $\sqrt{k}\to O(k)$ and the $2\to 2k$ VFT spanners we construct. The argument for $\mathcal{L}$ is more general and applies for arbitrary $d,r$, whereas the argument for $\mathcal{G}$ is tailored to each construction and uses additional ideas.

In the technical overview, we focus only on the analysis of the $\sqrt{k}\to O(k)$ VFT spanners, since this is the more challenging regime, in which the stretch guarantee concerns longer paths. In this case, it is helpful to view the argument in light of the algorithm of Ben-Levy and Parter described in \Cref{sec:BS-VFT}: the paths in $\mathcal{L}$ are analogous to the first phase of their algorithm, while the paths in $\mathcal{G}$ play a role analogous to its second phase.

%% file: technical-overview-path-systems.tex
\subsubsection{Bounding $\mathcal L$}\label{sec:technical-over-path-systems}

It is not a priori clear why the path system $\pp$ should help bound either $|\mathcal L|$ or $|\mathcal G|$, since the update rule only tracks the sizes of $\pp_v^\ell$, but not its structural properties. Our main result is that this rule nevertheless forces $\pp$ to satisfy strong fault-tolerant properties. We show that for every $v,t$, the collection $\pp^\ell_{v,t}$ serves as a small certificate for length-$\ell$ connections from $v$ to $t$, while the collections $\pp_v^\ell$ are well-distributed, in the sense that no small set of vertices captures too many of their paths. These two features imply the desired bound on $\mathcal L$, namely $|\mathcal L|=\Tilde{O}(Dn)$, and will also play a central role in the analysis of $\mathcal G$.

\begin{lemma}[\Cref{lem:pp-invariants}, informal]\label{lem:local-invariants-intro}
    Let $\pp$ be the state of the path system before $P_i$ is considered. Assume $D>10s^2f$. Then the following properties hold for $\pp$:
    \begin{enumerate}
        \item All paths in $\pp$ are contained in $H_i$.
        \item (Bounded size) $|\pp_v^\ell| = O(N_\ell)$.
        \item (Weak collective FT) For any $u\neq v$ and every $\ell$, we have $|\pp_v^\ell[u]|\leq N_\ell/2f$.
        \item (Strong collective FT) For any $W \subseteq V\setminus\{v\}$ and every $\ell$, we have $|\pp_v^\ell[W]|= O(N_\ell/D^{|W|})$.
        \item (Pairwise a park) For every pair $v,t\in V$ and every $\ell$, the collection $\pp_{v,t}^\ell$ forms a park\footnote{See \Cref{def:park} for a full definition, though knowing it is not necessary for reading the technical overview.}.
    \end{enumerate}
\end{lemma}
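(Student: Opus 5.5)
We argue by induction over the steps of the greedy algorithm. The claim is that all five properties hold for $\pp$ just before each $P_i$ is considered. Initially $\pp=\{(v)\}_{v\in V}$, so $|\pp_v^0|=1=N_0$ and every other $\pp_v^\ell$ is empty; all properties hold trivially (the only nonempty collections are single trivial paths, which form a park). Paths in $\mathcal G$ leave $\pp$ unchanged, and $H_i$ only grows, so property 1 and everything else persist. So it suffices to treat an update $\pp \gets \pp\cup (v,u)\circ(\pp_u^\ell - F_i)$ caused by a path $P_i\in\mathcal L$ with $s$-critical edge $(u,v)$, where $|\pp_u^\ell|\ge N_\ell$ and $|\pp_v^{\ell+1}|<N_{\ell+1}$.

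Properties 1 and 2 are immediate.
\begin{itemize}
\item \emph{Property 1.} The edge $(u,v)$ lies in $P_i\subseteq H_{i+1}$, and by induction the paths of $\pp_u^\ell$ lie in $H_i$.
\item \emph{Property 2.} Before the update $|\pp_v^{\ell+1}|<N_{\ell+1}$. We add at most $|\pp_u^\ell|=O(N_\ell)\le O(N_{\ell+1})$ paths, so the total stays $O(N_{\ell+1})$.
\end{itemize}

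Property 4 (strong FT) is the heart of the argument, and property 3 follows from it. The key use of criticality is this: since $\dist_{H_i-F_i}(u,v)>s\ge 2\ell+1$, no surviving path of $\pp_u^\ell-F_i$ can pass through $v$. Otherwise $u$ would reach $v$ in $H_i-F_i$ within $\ell$ steps. More strongly, none of these paths can meet any path of $\pp_v^{\ell'}-F_i$ for $\ell'\le\ell$, since such an intersection would give a $u$–$v$ walk of length $\le 2\ell+1\le s$ avoiding $F_i$.

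To bound $|\pp_v^{\ell+1}[W]|$ we need a potential argument rather than a single-update bound, because $v$ may receive several batches. Fix $W$ with $w\in W$.
\begin{itemize}
\item If $w=u$, the batch hits $u$, and by induction the relevant count is $|\pp_u^\ell[W\setminus\{u\}]|=O(N_\ell/D^{|W|-1})=O(N_{\ell+1}/D^{|W|})$.
\item If $u\notin W$, we use induction on $\pp_u^\ell[W]$, giving $O(N_\ell/D^{|W|})$ per batch.
\end{itemize}
The issue is the number of batches: a batch through $u$ is added only while $|\pp_v^{\ell+1}|<N_{\ell+1}$. Each batch has size at least $N_\ell-|\pp_u^\ell[F_i]|\ge N_\ell-f\cdot N_\ell/2f=N_\ell/2$, using weak FT on $u$. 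Hence there are at most $O(D)$ batches into $v$ at level $\ell+1$, and the different $u$'s used are distinct by the critical-edge argument. Summing gives $O(D)\cdot O(N_\ell/D^{|W|})$ for the $u\notin W$ batches, which is $O(N_{\ell+1}/D^{|W|})$. The batches with $u\in W$ contribute at most $|W|$ terms of the first type. To keep the constants from compounding over $\ell\le s/2$ levels, I would state the invariant with an explicit constant such as $(4s)^{|W|}N_\ell/D^{|W|}$. Weak FT then follows by taking $W=\{u\}$ and using $D>10s^2f$, so that $(4s)N_\ell/D\le N_\ell/2f$. I expect this constant bookkeeping, so that the induction closes, to be the main obstacle.

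Property 5 (park) says that $\pp^{\ell+1}_{v,t}$ remains a park. I would use the structure of the park definition in \Cref{def:park}: new paths are $(v,u)\circ Q$ with $Q\in\pp^\ell_{u,t}$, itself a park. Different batches enter $v$ through different first edges and avoid the critical balls around one another, which by the intersection argument above keeps the union a park. I would verify this against the exact axioms of \Cref{def:park} once they are available.
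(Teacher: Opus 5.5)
Your architecture matches the paper's proof of \Cref{lem:pp-invariants}. Both use a simultaneous induction over the updates. Both note that batches are disjoint from existing paths: a critical edge cannot already lie in $H_i$, so no old path uses it and the first neighbours $u$ of successive batches into $v$ are distinct. Both decompose $\pp_v^{\ell+1}[W]$ into at most $|W|$ batches whose first neighbour lies in $W$, plus the remaining batches, which are bounded through $\pp_u^\ell[W]$. However, two steps do not go through as you propose them.

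\textbf{The collective bounds do not close under $D>10s^2f$.}
\begin{itemize}
\item You lower-bound each batch by $N_\ell/2$ using the weak bound $N_\ell/2f$. This only gives $p\le 2D$ batches into $v$.
\item Write $|\pp_u^\ell[\{w\}]|\le a_\ell D^{\ell-1}$. Your decomposition then yields $a_{\ell+1}\le 1+(p/D)\,a_\ell\approx 1+2a_\ell$, so the constant grows like $2^\ell$. To recover $|\pp_v^\ell[\{u\}]|\le N_\ell/2f$ at levels $\ell\approx s/2$ you would need $D\ge 2^{\Omega(s)}f$.
\item An $\ell$-independent constant such as $(4s)^{|W|}$ cannot repair this. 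Closing the recursion requires $|W|c_{|W|-1}/c_{|W|}+p/D\le 1$. The invariants cannot rule out $p>D$: faults may shrink each batch below $D^\ell$, and batches keep arriving as long as $|\pp_v^{\ell+1}|<D^{\ell+1}$.
\end{itemize}
The missing idea is to carry the much sharper weak bound $|\pp_v^\ell[\{u\}]|\le 3\ell D^{\ell-1}$ through the induction. Under $D\ge 4s^2f$ this is $O(N_\ell/(sf))$.
\begin{itemize}
\item Each boost then adds at least $D^\ell-3\ell fD^{\ell-1}$ paths.
\item Hence $v$ is boosted at level $\ell+1$ at most $D+4sf\le D(1+1/s)$ times.
\item A constant growing linearly in the level, $3\ell\, s^{2(|W|-1)}$, now closes, because $|W|/s^2+p/D\le 1+2/s\le 1+1/\ell$ for $\ell\le s/2$.
\end{itemize}
The $N_\ell/2f$ of the informal statement is a consequence of this sharper bound. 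It is too weak to serve as the inductive hypothesis. Bounded size, progress per boost, number of boosts, and the weak and strong bounds must all be maintained together.

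\textbf{The park property is not proved.} Your sketch (distinct first edges, batches avoiding each other's critical balls) does not address what is needed. Old and new paths in $\pp^{\ell+1}_{v,t}$ all end at $t$, and being a park is a condition on scores of $J$-links, not a disjointness condition. Fix $t$ and $J$.
\begin{itemize}
\item If the batch $(v,u)\circ(\pp^\ell_{u,t}-F_i)$ has no path containing $J$, the $J$-score is unchanged.
\item Otherwise $J\cap F_i=\emptyset$, and the old $\pp^{\ell+1}_{v,t}$ cannot be $J$-full. If it were, \Cref{lemma:union_bound_scores} gives a surviving old $v$--$t$ path of length $\ell+1$. Combined with a surviving new $u$--$t$ path of length $\ell$, this yields a $u$--$v$ walk of length $2\ell+1\le s$ in $H_i-F_i$, contradicting criticality. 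This is \Cref{lem:ft-neighborhood-exchange-lemma-gen}; your intersection observation, extended to $\ell'=\ell+1$, is its combinatorial core. So the old part has $J$-score at most $\tfrac12$.
\item The batch must then contribute at most $\tfrac12$, and this requires level-dependent score parameters. By \Cref{cl:concat} the batch's $J$-score is at most $\sco_J(\pp^\ell_{u,t})+\sco_{J\setminus\{v\}}(\pp^\ell_{u,t})$. With a uniform $\beta$ this can be close to $1$ (take $v\in J$), so old part plus batch could exceed $1$.
\item The paper therefore uses $\beta_\ell=10^{-\ell}$. Since $\beta_{\ell+1}=\beta_\ell/10$, the batch's contribution scales down to at most $\tfrac1{10}(1+1)\le\tfrac12$.
\end{itemize}
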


The above lemma is the main technical input in our analysis of $\pp$ and $\mathcal{L}$. A main difficulty is that all these properties of $\pp$ \emph{depend on one another}, so they form a web of dependencies that must be proved \emph{simultaneously}.

Roughly speaking, the lemma shows that although $\pp$ is updated by examining only the sizes of $\pp_v$, it nevertheless acquires strong structural and fault-tolerant properties due to the choice of the update rule. We now briefly explain how these properties are used.

The first property says that the paths in $\pp$ reflect structural properties of the current spanner $H_i$. This will be crucial later in the global analysis, namely in bounding the number of paths in $\mathcal G$.

The second and third properties already imply the desired bound on $\mathcal L$. Consider a step in which a path $P_i\in\mathcal L$ is processed. By definition, there exist a critical edge $e=(u,v)$ and an index $0\leq \ell\leq \lceil s/2 \rceil-1$ such that the update rule adds the paths
\[
(v,u)\circ(\pp_u^\ell-F_i)
\]
to $\pp$. We charge the pair $(v,\ell+1)$ to this step. The key point is that each such charge increases $\pp_v^{\ell+1}$ by many new paths. Indeed, by the third property,
\[
|\pp_u^\ell-F_i|
\ge
|\pp_u^\ell|-\sum_{w\in F_i}|\pp_u^\ell[w]|
\ge
N_\ell-f\cdot \frac{N_\ell}{2f}
=
\Omega(N_\ell).
\]
Therefore, each charge to $(v,\ell+1)$ contributes $\Omega(N_\ell)$ paths to $\pp_v^{\ell+1}$. On the other hand, by the bounded-size property we always have $|\pp_v^{\ell+1}|=O(N_{\ell+1})$. Hence each pair $(v,\ell+1)$ can be charged at most $O(N_{\ell+1}/N_\ell)=O(D)$ times. Summing over all vertices $v$ and all $\ell\in \{0,1,\ldots,\lceil s/2 \rceil\}$ gives
\[
|\mathcal L| = O(nsD),
\]
which is the desired bound.

The remaining properties are needed for the later analysis of $\mathcal G$. The fourth property strengthens the weak collective FT property by showing that $\pp_v^\ell$ remains well-distributed at multiple scales. In particular, even if $|\pp_v^\ell[u]|$ is as large as $N_\ell/2f$, the subcollection $\pp_v^\ell[u]$ still satisfies a similar bound; for example,
\[
|\pp_v^\ell[\{u,w\}]| = O(N_\ell/D^2) = O(N_\ell/f^2).
\]
Thus, the fault tolerance of $\pp_v^\ell$ persists recursively across scales. This phenomenon is closely related to the notion of spreadness, which is a key ingredient in the proof of \Cref{thm:sqrtk-greedy}.

The fifth property is perhaps the most important. It says that when paths are added to $\pp$, they encode genuinely new local connectivity relations that were not previously captured by the path system. To formalize this, we use the framework of parks, introduced in~\cite{PPST25}, which quantifies the extent to which a path collection is resilient to $f$ faults.

We also include a short proof, assuming \Cref{lem:local-invariants-intro}, that the output of the multiplicative greedy VFT-$(2k-1)$ spanner has size $O(k^{3}f^{1-1/k}n^{1+1/k})$, which is only an $O(k^{3})$ factor larger than the bound of \cite{BodwinP19}, after plugging the appropriate values of $s,D$ into \Cref{lem:local-invariants-intro}. The main reasons we include the proof are: (1) to demonstrate the utility of \Cref{lem:local-invariants-intro} and how it can be used to bound greedy algorithms, and (2) after setting $D,s$ appropriately, one can use the ``pairwise a park'' condition to show that $\mathcal{G}=\emptyset$.

On the other hand in our greedy VFT spanners, it is impossible to choose $D,s$ such that (1) $|\mathcal{L}|=O(nDs)$ is smaller than the desired size bound of the spanner, and (2) $\mathcal{G}$ is empty. Hence we must find a new argument showing that $\mathcal{G}$ is not too large.

%% file: technical-overview-sqrt-vft.tex
\subsubsection{Bounding $\mathcal{G}$}\label{sec:sqrtk-technical-overview-greedy}

To obtain the global analysis, we still sample a set of centers as in \Cref{sec:BS-VFT},
but require a fundamentally different analysis. The key difference is that
replacement path covering treats each pair of centers separately, whereas we show
that every \emph{single} path in $\mathcal{G}$ \emph{simultaneously} improves the
connectivity of \emph{many} pairs of centers in a fault-resilient way.
To capture this effect, we introduce the Robust Color List Fault-Tolerant (CLFT)
process, which is a virtual process that can be coupled with how the connectivity between centers improve with each added path in $\mathcal{G}$. 
For simplicity, we assume
that $k$ is a perfect square and fix $0<\delta<1$.

Our starting point for the analysis is extracting from the path system $\pp$ a subcollection of paths to a small set of centers, with similar properties as $\pp$. We emphasize that the existence of such path collections \emph{to centers} relies on properties 4,5 of the path collection $\pp$ described in \Cref{lem:local-invariants-intro}. To describe the properties of the path collection, we recall that a hypergraph $\mathcal{H}=(U,E)$ is \emph{$q$-spread} if for every $J\subseteq U,|\mathcal{H}[J]| \ \le\ q^{-|J|}\,|\mathcal{H}|$, where $\mathcal{H}[J]=\{e \in E \mid J \subseteq e\}$.
We have:

\begin{proposition}[\Cref{prop:sqrt-to-ok-vft}, \Cref{obs:virtual-fault-proc} informal]\label{prop:sqrt-to-ok-vft-intro}
    Let $(P_1,F_1), \ldots, (P_m,F_m)$ be the paths added by the VFT greedy $\sqrt{k} \to 12k/\delta$ spanner with their corresponding faults. Then there exist $S \subseteq V$, $|S|=\tilde{O}\left(f^{- \sqrt{k}} n^{1 - \frac{1}{\sqrt{k}}}\right)$, such that:
    
    For any $\left(x_0,\ldots,x_{ \sqrt{k}} \right)=P_i  \in \mathcal G$, there exist $0\leq \ell<r \leq \sqrt{k}$ such that: 
        \begin{itemize}
            \item There is a path collection $\prepp_{x_{\ell}} \subseteq \pp_{x_\ell}$ of paths starting at $x_{\ell}$ and ending in $S$ of length $\sqrt{k}/\delta$ in $H_i$, such that the collection $\{V(P)\setminus\{x_{\ell}\} \mid P \in \prepp_{x_{\ell}} \}$ is $\Theta_\delta(\sqrt{k}f)$-spread. A similar path collection $\prepp_{x_r} \subseteq \pp_{x_r}$ exists for $x_r$.

            \item For any $s_{\ell}\in S$, such that some path in $\prepp_{x_{\ell}}-F_i$ ends in $s_{\ell}$ and any $s_{r}\in S$, such that some path in $\prepp_{x_{r}}-F_i$ ends in $s_{r}$,
            we have the following: 
            \begin{enumerate}
                \item(Big distance before adding $P_i$) $\dist_{H_i-F_i}(s_{\ell},s_r) \geq 8 \frac{k}{\delta}$.
                \item(Small distance after adding $P_i$)  $\dist_{H_{i+1}-F_i}(s_{\ell},s_r) \leq 3 \frac{\sqrt{k}}{\delta}$.
            \end{enumerate}
        \end{itemize}
\end{proposition}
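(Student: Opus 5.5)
We construct $S$ by random sampling and combine the invariants of \Cref{lem:local-invariants-intro} with the greedy update rule. We run the path system with $s$ of order $\sqrt{k}/\delta$ (so that $\lceil s/2\rceil$ is the length $\sqrt{k}/\delta$ used in $\prepp$) and $D=\Theta_\delta(\sqrt{k}f)\cdot n^{1/\sqrt{k}}$, more precisely $D^{\sqrt{k}}$ chosen so that the level-$\lceil s/2\rceil$ collections have size about $n/|S|$. We then let $S$ contain each vertex independently with probability $p=\tilde{\Theta}(f^{-\sqrt{k}}n^{-1/\sqrt{k}})$, which gives the stated bound on $|S|$ with high probability.

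\textbf{Step 1 (spread collections to centers).} For any vertex $v$ that is saturated at every level (that is, $|\pp_v^\ell|\ge N_\ell$ for all $\ell\le\lceil s/2\rceil$), we take $\prepp_v$ to be the paths in $\pp_v^{\lceil s/2\rceil}$ whose last vertex lies in $S$.

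We then argue spreadness in two parts.
\begin{itemize}
\item By the strong collective FT property, $|\pp_v^\ell[W]|=O(N_\ell/D^{|W|})$. So the hypergraph $\{V(P)\setminus\{v\}\}$ is already $\Omega(D)$-spread before sampling.
\item Subsampling by endpoints in $S$ preserves spreadness up to constants, by a Chernoff bound together with a union bound over $v$ and over sets $J$ of size at most $s$. The pairwise-park property ensures that no endpoint $t$ receives too many paths, which is what keeps the endpoint counts concentrated.
\end{itemize}

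With $D\gg \sqrt{k}f$, this yields $\Theta_\delta(\sqrt{k}f)$-spreadness. Spreadness in turn implies that $\prepp_v-F$ is nonempty for every $|F|\le f$.

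\textbf{Step 2 (choosing $\ell<r$).} Take $P_i\in\mathcal{G}$ and its $s$-critical edges. By the rule defining $\mathcal{G}$, both endpoints of every critical edge are saturated. Let $x_\ell$ be the first vertex of $P_i$ that is an endpoint of a critical edge, and let $x_r$ be the last such vertex.

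\emph{Small distance after.} In $H_{i+1}-F_i$, the center $s_\ell$ is joined to $x_\ell$ by a surviving $\prepp$ path of length $\sqrt{k}/\delta$. Then $x_\ell$ is joined to $x_r$ along $P_i$ using at most $\sqrt{k}$ edges, and $x_r$ is joined to $s_r$ similarly. The total is at most $2\sqrt{k}/\delta+\sqrt{k}\le 3\sqrt{k}/\delta$.

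\emph{Big distance before.} Suppose for contradiction that $\dist_{H_i-F_i}(s_\ell,s_r)<8k/\delta$. Then we would obtain a path from $x_0$ to $x_{\sqrt{k}}$ in $H_i-F_i$ of length less than $12k/\delta$, contradicting the fact that greedy added $P_i$. The path is built from four pieces:
\begin{itemize}
\item a walk along the non-critical portions of $P_i$, each non-critical edge replaced by a detour of length at most $s$ in $H_i-F_i$;
\item the two $\prepp$ paths, which lie in $H_i$ by property 1 and avoid $F_i$;
\item the short $s_\ell$–$s_r$ connection.
\end{itemize}
Its total length is at most $\sqrt{k}\,s+2\sqrt{k}/\delta+8k/\delta$. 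This requires tuning $s$ so that $\sqrt{k}\,s\le 4k/\delta-2\sqrt{k}/\delta$, which is consistent with $s\approx 2\sqrt{k}/\delta$.

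\textbf{Expected obstacle.} The delicate part is Step 1: showing that the sampled, endpoint-restricted subfamily keeps spreadness at every scale $|J|$ simultaneously. The concern is that many paths may funnel through a small set and end at few centers. To handle this, I would first try bounding $|\pp_v^\ell[J]\cap\{\text{end}\in S\}|$ by conditioning on the endpoint distribution, using the park structure of $\pp_{v,t}^\ell$ to cap the per-endpoint multiplicity, and then applying Chernoff. It also needs care that $S$ is fixed in advance, independently of the greedy order, so that the union bound is valid over all steps $i$.
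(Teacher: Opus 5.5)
Your skeleton matches the paper's. You sample centers independently and keep the level-$R$ paths ($R=\lceil s/2\rceil$) that end in $S$. You bound each link $|\prepp_v[J]|$ by concentration, with per-endpoint increments capped by the park property \textbf{I3}, and take a union bound over $v$ and $J$. You then finish with the first/last saturated vertices of $P_i$ and the triangle inequality; your first/last endpoint of a critical edge works equally well.

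The gap is the calibration of $D$ and $p$, which is where the content of the proposition lies. Take your choice $D=\Theta_\delta(\sqrt k f)\,n^{1/\sqrt k}$ literally, with the hidden constant at least $20$ (your $D\gg\sqrt k f$ grants this). Since $R\ge\sqrt k/\delta>2\sqrt k$, this gives $D^R\ge(20f)^R n^{2}$. That exceeds the cap $|\pp_v^{R}|\le n\cdot 10^{R}(2f)^{R-1}$ imposed by \textbf{I3} and \Cref{obs:size-bounds-park-links}. So no vertex ever saturates at level $R$ and $\mathcal G=\emptyset$. The statement then holds only vacuously, and every path falls into $\mathcal L$ with $|\mathcal L|=O(nDs)=O_\delta(kf\,n^{1+1/\sqrt k})$, which is exactly what the decomposition is meant to avoid. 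Your ``more precisely'' heuristic does not rescue this. The saturation threshold is $N_R=D^R$, not $D^{\sqrt k}$. Moreover, making level-$R$ collections of size about $n/|S|$ means $pD^R\approx 1$, i.e., $O(1)$ surviving paths per vertex.

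The missing idea is that spreadness does not survive subsampling ``up to constants''. It drops from about $D/s^2$ to the target $C'sf$, and that target dictates $p$. If $P\in\prepp_v$, then $J=V(P)\setminus\{v\}$ has $|J|=R$ and $|\prepp_v[J]|\ge 1$, so $C'sf$-spreadness forces $|\prepp_v|\ge (C'sf)^R$. Hence $p\gtrsim (C'sf/D)^R$ is necessary.

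The Bernstein argument of \Cref{lem-sampling} shows that $p=11(10C'sf/D)^R\ln(100\cdot 2^{R+2}D^Rn)$ suffices. It uses per-endpoint increments at most $10^R(2f)^{R-|J|}$, and it needs $D\ge 12C's^3f$ so that the \textbf{I2} bound on $p|\pp_v[J]|$ stays below the allowed deviation $pD^R/(4(C'sf)^{|J|})$. You also need the $J=\emptyset$ lower tail $|\prepp_v|\ge pD^R/2$, which your nonemptiness claim uses implicitly.

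Therefore $|S|\approx n\,(\Theta(sf)/D)^R$, and the factor $f^{-\sqrt k}$ in $|S|$ can only come from an extra $f^{\delta}$ in $D$. The paper takes $s=2\lceil\sqrt k/\delta\rceil-1$ and $D=12C's^3f^{1+\delta}n^{1/k}$. Then $10C'sf/D\le f^{-\delta}n^{-1/k}$, so $|S|=\tilde O(f^{-\delta R}n^{1-R/k})$ with $\delta R\ge\sqrt k$, while $|\mathcal L|=O(nDs)$ pays only $f^{1+\delta}$. With $D\approx\sqrt k f n^{1/k}$ no choice of $p$ gives any $f$-saving in $|S|$. So $p$ must be derived from the spreadness requirement, not read off the target $|S|$.

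A minor point: the paper freezes $\tilde\pp_v=\pp_v^{i_v,R}$ at the first time $|\pp_v^{R}|\ge D^R$, so only $n$ sampling events enter the union bound. Your union bound over all times also works after adjusting the logarithm in $p$, because by \textbf{I7} each $\pp_v^{R}$ takes at most $D+4sf+1$ distinct values.
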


The proposition states that along every path $P_i \in \mathcal{G}$ there are two vertices $x_\ell,x_r$ and path collections \emph{ending in centers}, $\prepp_{x_{\ell}},\prepp_{x_r}$, which should be thought of as sampled analogues of $\pp$ from \Cref{sec:technical-over-path-systems}. Before discussing the spread properties of $\prepp_{x_{\ell}}$ and $\prepp_{x_r}$, we turn to the second property, which says that the centers $s_{\ell},s_r$ experience an improvement in distance, with respect to the fault set $F_i$, by a factor exceeding $2\sqrt{k}$. This is strikingly similar to the condition for adding an edge in the multiplicative greedy spanner. Indeed, we soon define a virtual process that encodes how centers connect during the algorithm.

\paragraph{Spreadness, Robust Sunflowers and Massive Failures}

Spreadness has proved to be a powerful tool behind recent breakthroughs in mathematics and computer science \cite{LovettSunflowers, ParkPham}.  The celebrated spread lemma \cite{LovettSunflowers, ParkPham}, says that if $\mathcal{H}$ is $h$-uniform, and $W \sim Bin(U,p)$, for $p \approx log(h)/q$, then $\Pr[\exists \text{ hyperedge } e\in \mathcal{H} \text{ and }  e\subseteq W]= \Omega(1)$. A closely related concept is that of a robust sunflower:

\begin{definition}[Robust sunflower, \cite{Rossman14}]\label{def:robust-sun}
    Let $\mathcal H$ be a hypergraph over a universe $U$. Setting $K = \bigcap_{S \in \mathcal H} S$, we say that $\mathcal{H}$ is a $(p, \varepsilon)$-robust sunflower with kernel $K$ if 

    $$\Pr_{W \sim Bin(U, p)}[\exists S \in \mathcal{H},\ S \subseteq K \cup W]\geq 1 - \varepsilon.$$
\end{definition}

For example, the spread lemma implies that given an $h$-uniform and $q$-spread hypergraph $\mathcal{H}$, then $\{e \cup K \mid e\in \mathcal{H}\}$ is a $\left(O(\log(h)/q),\Omega(1) \right)$-robust sunflower with kernel $K$. The key combinatorial engine behind the global analysis is that the spread guarantees of \Cref{prop:sqrt-to-ok-vft-intro} imply that the set of newly added paths between centers which are formed by the greedy step are highly structured, and in fact form a robust sunflower:

\begin{lemma}\label{lem:robust-sunflower-intro}
    For $P_i \in \mathcal{G}$, let $x_{\ell},x_r$ be as in \Cref{prop:sqrt-to-ok-vft-intro}, and let $P'_i$ be the sub-path of $P_i$ from $x_{\ell}$ to $x_r$. Then the hypergraph $\left(V, (\prepp_{x_\ell}-F_i)^{rev} \circ P'_i \circ (\prepp_{x_r}-F_i)\right)$ is an $(1/2f,1/2)$-robust sunflower with kernel $V(P'_i)$.
\end{lemma}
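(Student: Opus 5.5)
The plan is to split the event in \Cref{def:robust-sun} into two independent halves and apply the spread lemma to each. The hyperedges are $V(Q_1^{rev}\circ P'_i\circ Q_2)$ with $Q_1\in \prepp_{x_\ell}-F_i$ and $Q_2\in \prepp_{x_r}-F_i$. Each contains the kernel $K=V(P'_i)$, and we may replace $K$ by the true common intersection, which only enlarges $K$ and helps. Thus it suffices to show that, for $W\sim Bin(V,1/2f)$, with probability at least $1/2$ there exist $Q_1,Q_2$ in the surviving collections with $V(Q_1)\setminus K\subseteq W$ and $V(Q_2)\setminus K\subseteq W$.

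\textbf{Step 1: the surviving collections remain spread.} By \Cref{prop:sqrt-to-ok-vft-intro}, $\mathcal{A}=\{V(P)\setminus\{x_\ell\}\mid P\in\prepp_{x_\ell}\}$ is $q$-spread with $q=\Theta_\delta(\sqrt{k}f)$. I will pass to $\mathcal{A}'=\{V(P)\setminus\{x_\ell\}\mid P\in \prepp_{x_\ell}-F_i\}$ and trim each set by removing the vertices of $K$. Removing paths through $F_i$ costs at most $\sum_{w\in F_i}|\mathcal{A}[\{w\}]|\le f|\mathcal{A}|/q\le |\mathcal{A}|/2$, once the constant in $q$ is large relative to $f$. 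For $J$ disjoint from $K$, we then get $|\mathcal{A}'[J]|\le q^{-|J|}|\mathcal{A}|\le 2q^{-|J|}|\mathcal{A}'|$, so $\mathcal{A}'$ is roughly $q/2$-spread.

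\textbf{Step 2: trimming by $K$.} This is the main obstacle. Trimming can merge sets, and sets meeting $K$ in many vertices shrink, so the spread bound for the trimmed family is not immediate. I would try two options. The first is to argue that paths in $\prepp_{x_\ell}$ can be assumed to avoid $V(P'_i)\setminus\{x_\ell\}$ except in few cases. The spread of $\mathcal{A}$ bounds the fraction of sets containing any $j$ given vertices of $K$ by $q^{-j}$, and since $|K|\le\sqrt{k}+1$ with $q\gg \sqrt{k}$, a union bound shows that at most a $|K|/q\le 1/4$ fraction of $\mathcal{A}'$ meets $K$ at all. I would discard those sets and keep spreadness with another constant loss. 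The second option, if the first fails, is to note that the spread lemma applied to the conditional (link) family over $W\cup K$ only needs spread relative to $J$ disjoint from $K$, which Step 1 already supplies. In either case the sets have uniformity $h=O(\sqrt{k}/\delta)$.

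\textbf{Step 3: the spread lemma and combining.} For $\mathcal{A}'$ being $\Omega(q)$-spread and $h$-uniform, the sharp spread lemma (Park--Pham / ALWZ) gives that $W\sim Bin(V,p)$ with $p\ge C\log h/q$ contains a member with probability at least $1-\epsilon$. The failure probability decays as the ratio $pq/\log h$ grows, so choosing the constant in $q=\Theta_\delta(\sqrt{k}f)$ large makes $1/2f\ge C'\log(\sqrt{k}/\delta)/q$, since $q/f=\Theta(\sqrt{k})\gg\log k$. This yields failure probability at most $1/4$ for each side. The same argument applies to $\prepp_{x_r}$, and a union bound gives that with probability at least $1/2$ both $Q_1$ and $Q_2$ exist inside $K\cup W$, which is exactly the robust sunflower condition.
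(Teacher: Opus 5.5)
Your proposal is correct and is essentially the paper's argument. You discard the paths hitting $F_i$, losing only a factor $2$ in the spread as in \Cref{obs:spread-delete}; you apply \Cref{lem-spread-lemma} with $p=1/2f$ to each side to get probability at least $3/4$; and a union bound gives $1/2$.

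Your Step 2 is unnecessary. The paper simply requires $V(Q)\setminus\{x_\ell\}\subseteq W$ for the untrimmed sets, which already places $V\bigl((Q_\ell)^{rev}\circ P'_i\circ Q_r\bigr)$ inside $V(P'_i)\cup W$. What you should not skip is the check, given in a footnote of the paper, that the kernel is \emph{exactly} $V(P'_i)$ rather than a superset. This follows because each surviving side is $q$-spread with $q>1$, so it has a member avoiding any prescribed vertex. The check matters because the robust CLFT process later uses the kernel-size bound $\le w$.
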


Intuitively, the lemma above says that at each step the pairs of centers considered by \Cref{prop:sqrt-to-ok-vft-intro} form a highly fault tolerant structure. Indeed, being a $(1/2f,1/2)$-robust sunflower means that, assuming $V(P'_i)$ is non-faulty, even if we sample a random and \emph{massive} set of failures $F \sim Bin(V\setminus V(P'_i),1-1/2f)$, then with constant probability some pair of centers considered in \Cref{prop:sqrt-to-ok-vft-intro} still connect to each other. We refer to this as a massive failure regime, since each vertex fails independently with probability $1 - \frac{1}{2f}$, which is very close to $1$, yet collectively some path between centers still manages to survive with constant probability. 

In other words, we show our notion of FT-clusters is robust to $f$ adversarial failures, as well as massive random failures. In contrast, Parter's FT clustering, used in \Cref{sec:BS-VFT}, can handle $f$ adversarial faults, but cannot handle massive random failures, which helps explain the weaker bound obtained in \Cref{sec:BS-VFT}. We find the use of structures robust to massive random failures as a way to analyze \emph{adversarial} faults a main novelty of the paper.

\subsubsection{The Robust CLFT Process}
Our final step is to devise a virtual process that encodes the distance--fault relations between centers and how they evolve throughout the greedy algorithm. We let $\mathbf{V}_1=(S,\emptyset)$ be a virtual graph. For a path $P_i \in \mathcal{G}$, consider a \emph{multigraph} $B_i$, whose vertex set is $S$, and for every path $Q_{\ell} \in \prepp_{x_\ell}-F_i$ ending in $s_{\ell}$, and $Q_{r} \in \prepp_{x_r}-F_i$ ending in $s_{r}$, there is a virtual edge $e$ whose endpoints are $s_\ell$ and $s_r$\footnote{It may be the case that there exist $Q_{\ell}'\neq Q_\ell$ and $Q_{r}' \neq Q_r$ ending in the same centers as $Q_\ell$ and $Q_r$, respectively. In such a case, $Q_{\ell}'$ and $Q_{r}'$ give rise to an edge $e'$ which is \emph{different} from $e$, but has \emph{the same} endpoints as $e$, so $B_i$ may contain parallel edges.}. Every such edge $e$ is assigned the list $L(e) = V(Q_{\ell}) \cup V(P'_i) \cup V(Q_{r})$ of vertices lying on the path that $e$ represents in $H_{i+1}$. We then set $\mathbf{V}_{i+1}=\mathbf{V}_i \cup B_i$ whenever $P_i \in \mathcal{G}$; otherwise $\mathbf{V}$ is unchanged.

\Cref{prop:sqrt-to-ok-vft-intro}, and \Cref{lem:robust-sunflower-intro} translate to the following properties:

\begin{enumerate}\label{enum:clft-intro}
    \item Every $e \in E(B_i)$ represents a path of length $\leq \frac{3\sqrt{k}}{\delta}$, and gets a list $L(e)$ of all \emph{vertices} on the path $e$ represents. $L(e) \cap F_i = \emptyset$ by construction.

    \item For every $(s,s') = e \in E(B_i)$, $\dist_{H_i-F_i}(s,s') > 2\sqrt{k} \cdot  (3 \frac{\sqrt{k}}{\delta})$. Passing to $\mathbf{V}$, this implies $\dist_{\mathbf{V}_i-F_i}(s,s') > 2\sqrt{k}$.

    \item The collection $\{L(e)\}_{e\in E(B_i)}$ is a $(1/2f,1/2)$-robust sunflower with kernel $V(P'_i)$. In particular their kernel has size $\leq \sqrt{k}+1$.
\end{enumerate}

To conclude the global analysis, we need to analyze the above virtual process, which we do in full generality. We will work in the color list fault model \cite{PetruschkaST24_ITCS}. Let $\mathcal{C}$ be an abstract set of ``colors'' that are possibly failing, and every edge has a list $L(e)$ of colors that must remain operational for $e$ to function. Concretely, the fault model is that some $F \subseteq \mathcal{C}$ of size $\leq f$ colors may fail, causing an edge $e$ to fail whenever $L(e) \cap F \neq \emptyset$. In our context, one should think of $\mathcal{C}$ as the vertex set of our big graph $G$, and an edge $e=(s,s')$ represents a relatively short path between $s,s'$, whose vertex set is $L(e)$. 

We consider a process, depending on two integers $t,w$. Initially $\mathbf{V}$ is $(S,\emptyset)$ where $S$ is a set of vertices.  In every round,  we are given a list colored graph $B_i=(S,E_i)$, a fault set $F_i$, and each edge $e \in B_i$ has a list $L(e) \subseteq \mathcal{C}$ such that:

\begin{enumerate}
    \item For every $e \in E(B_i)$ we have $L(e) \cap F_i = \emptyset$.

    \item For every $(s,s')=e\in E(B_i)$, $\dist_{\mathbf{V}_i-F_i}(s,s') >t$, where $\mathbf{V}_i = \bigcup_{j<i} B_j$.

    \item The collection $\{L(e)\}_{e\in E(B_i)}$ is a $(1/2f,1/2)$-robust sunflower with kernel of size $\leq w$.
\end{enumerate}

Observe that all the properties are the exact counterparts of the properties of the new-center connections formed by an addition of a path in $\mathcal{G}$. Bounds on the number of rounds of this process, translate to bounds of $\mathcal{G}$, which is how we will conclude the proof.
We call the above process the robust CLFT (color list fault tolerant) process, as it is a robust description of the edges that are added by multiplicative greedy CLFT spanner (with stretch $t$) of \cite{PetruschkaST24_ITCS}, but instead of adding one edge with a list of size $\leq w$ each time, we add many edges that are \emph{much more} prone to faults (as their list is of possibly unbounded size). See \Cref{fig:to_clft_intro} for an illustration of how $B_i$ is built from the path collections guaranteed in \Cref{prop:sqrt-to-ok-vft-intro}.

\begin{figure}[H]
    \centering
    \includegraphics[width=0.9\linewidth]{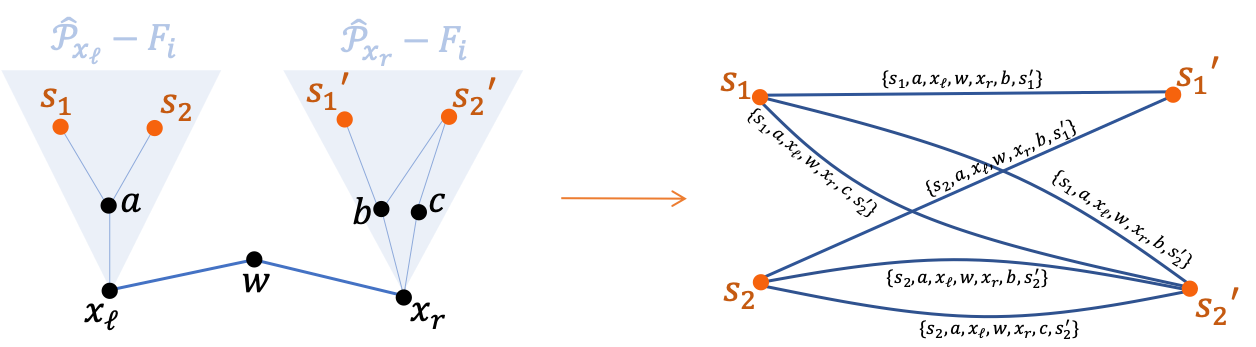}
    \caption{For path collections $\prepp_{x_\ell}$ and $\prepp_{x_r}$ of \Cref{prop:sqrt-to-ok-vft-intro}, we consider the non-faulty paths in each of them. Edges of  $B_i$ are in bijection to $\left(\prepp_{x_{\ell}}-F_i\right) \times \left(\prepp_{x_{r}}-F_i\right)$ and every edge in $B_i$ is assigned with the set of vertices along the path it represents. In the figure, assuming $\prepp_{x_{\ell}}-F_i=\{(x_\ell,a,s_1),(x_\ell,a,s_2)\}$ and  $\prepp_{x_{r}}-F_i=\{(x_r,b,s'_1),(x_r,b,s_2'), (x_r,c,s_2')\}$, we construct $B_i$ as appears on the right. }
    \label{fig:to_clft_intro}
\end{figure}

This introduces two contradicting forces: (1) $E(B_i)$ may be very big, hence in every step many edges are added in parallel to $\mathbf{V}$, and (2) the list of every edge $e$ is of possibly unbounded size, making it extremely unreliable. We show that when $\{L(e)\}_{e\in E(B_i)}$ is a robust sunflower, these two forces balance each other out. Concretely we prove the following bound, which is sharp up to a multiplicative factor $w$ and generalizes the work of \cite{PetruschkaST24_ITCS}: 

\begin{theorem}\label{thm:robust-clft-process-intro}
    The number of rounds in the robust CLFT process is at most:
    \[
    O\!\left(w f^{w}  |S|^{1+1/\lceil t/2 \rceil }\right).
    \]
\end{theorem}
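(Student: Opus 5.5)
We bound the number of rounds with a potential-function argument in the style of the blocking-set / CLFT greedy analyses of \cite{BodwinP19,PetruschkaST24_ITCS}. The robust-sunflower property lets one random fault pattern play the role of the adversarial $F_i$ simultaneously for many rounds. Set $h=\lceil t/2\rceil$.

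The plan is to sample a random ``survivor'' set. Let $W\sim Bin(\mathcal{C},p)$ with $p=1/2f$, and keep an edge $e\in B_i$ only if $L(e)\subseteq K_i\cup W$, where $K_i$ is the kernel of round $i$. Call round $i$ \emph{good} if (a) some edge of $B_i$ survives, and (b) $K_i\subseteq W$. By the robust sunflower property, (a) holds with probability at least $1/2$ conditioned on the kernel. Since $|K_i|\le w$, (b) holds with probability at least $p^{w}=(2f)^{-w}$, and (b) is independent of (a), which only depends on $W$ outside $K_i$. So round $i$ is good with probability at least $\tfrac12(2f)^{-w}$. In expectation, a $\Omega(f^{-w}2^{-w})$ fraction of the $m$ rounds are good, and in each good round we select one surviving edge $e_i$, all of whose colors lie in $W$.

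Next we show the selected edges $\{e_i\}$ form a graph on $S$ of girth greater than $t+1$ in the appropriate fault-restricted sense, and then apply a Moore-type bound. Consider the graph $\mathbf{U}$ built from the selected edges. Fix a good round $i$ with $e_i=(s,s')$. The condition $\dist_{\mathbf{V}_i-F_i}(s,s')>t$ gives $\dist_{\mathbf{U}_i-F_i}(s,s')>t$, and all colors of $e_i$ avoid $F_i$. This is exactly the situation of the CLFT greedy spanner with stretch $t$ and fault set $F_i$, except the lists may be huge. The random sampling handles this: only colors in $W$ appear on selected edges, and we intend to show the relevant blocking sets have size at most $O(w)$. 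Concretely, mimicking \cite{BodwinP19}, each selected edge $e_i$ gets a blocking set $F_i\cap W$. Every short cycle in $\mathbf{U}$ through $e_i$ closed by earlier edges must hit $F_i$, and the only faulty colors that can hit the selected lists lie in $F_i\cap W$, since all their colors lie in $W$.

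With this in hand, we invoke the blocking-set lemma: a graph with $M$ edges on $|S|$ vertices whose edges carry blocking sets of size $b$, blocking all cycles of length at most $2h$, has $M=O(b\,|S|^{1+1/h})$ after subsampling. Applying it to the good rounds gives the expected count of good rounds, and dividing by the per-round probability yields
\[
m=O\!\left(f^{w}2^{w}\cdot b\cdot |S|^{1+1/h}\right).
\]
Using $b=O(w)$, possibly after a further subsampling at rate $1/|F_i\cap W|$, gives the claimed $O(wf^{w}|S|^{1+1/h})$. The $2^w$ factor should be absorbed by choosing $p$ slightly differently or by a more careful count; the extra $w$ factor stated suggests the kernel handling costs exactly this.

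I expect the main obstacle to be controlling $|F_i\cap W|$. In expectation it is $f\cdot p=1/2$, but it must be made small \emph{together} with the event that round $i$ is good. The events are correlated, since the robust sunflower survival event depends on $W$. I would handle this by conditioning on $F_i\cap W=\emptyset$ outside the kernel. Because $|F_i|\le f$ and $p=1/2f$, this has probability at least $1/2$. By the sunflower property the survival probability stays at least $1/2$ minus a correction, since the lists avoid $F_i$ and so survival is monotone in $W\setminus F_i$ and unaffected by conditioning on $F_i$ colors. When $F_i\cap W=\emptyset$, the selected edges form a graph of girth greater than $t+1$ in the plain sense (after the kernel restriction), and the Moore bound applies directly.
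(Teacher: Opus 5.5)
\textbf{Overall.} Your final-paragraph argument has the right structure, and it is essentially the paper's. You require three events in round $i$:
\begin{enumerate}
\item $K_i\subseteq W$;
\item $F_i\cap W=\emptyset$;
\item some $e\in B_i$ has $L(e)\subseteq K_i\cup W$.
\end{enumerate}
Since $F_i\cap\bigcup_e L(e)=\emptyset$, these events depend on the disjoint coordinate sets $W\cap K_i$, $W\cap F_i$ and $W\setminus(K_i\cup F_i)$, so they are exactly independent. There is no ``correction'' to the sunflower probability and no correlation to handle. On their intersection, every earlier selected edge lies in $\mathbf{V}_i-F_i$, so the selected edges form a run of the greedy $1\to t$ spanner, and the Moore bound finishes. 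The blocking-set detour in your middle paragraph is not needed and should be dropped. In particular, ``$b=O(w)$'' is unjustified there: the sets $F_i\cap W$ can have size up to $f$ and have no relation to $w$.

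\textbf{The gap is quantitative.} With rate $p=1/(2f)$, your per-round success probability is about $\tfrac14(2f)^{-w}$. This only gives $m=O(2^w f^w|S|^{1+1/\lceil t/2\rceil})$, and you leave the $2^w$ unresolved. Since $w$ is not a constant (in the application $w=\lfloor\sqrt k\rfloor+1$), $2^w$ is not $O(w)$, so the stated bound is not proved.

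\textbf{The missing idea.} The robust-sunflower event is monotone in the sample, so you may sample at any rate $\ge 1/(2f)$ and still keep survival probability at least $1/2$. The paper samples at rate $\gamma/f$ with $\gamma=w/(w+1)\ge 1/2$. Then:
\begin{itemize}
\item $\Pr[K_i\subseteq\mathcal{C}']\ge(\gamma/f)^{w}\ge f^{-w}/3$, because $\gamma^w=(1+1/w)^{-w}\ge 1/e$;
\item $\Pr[F_i\cap\mathcal{C}'=\emptyset]\ge 1-f\cdot\gamma/f=1/(w+1)$, by a union bound over $F_i$.
\end{itemize}
Hence each round succeeds with probability at least $\frac{1}{6(w+1)}f^{-w}$, which yields $m=O(wf^w|S|^{1+1/\lceil t/2\rceil})$. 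The factor $w$ is exactly the cost of pushing the rate close to $1/f$ while still keeping $F_i$ out of the sample.
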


The proof of \Cref{thm:robust-clft-process-intro} shows that size bounds derived from the blocking-set method of \cite{BodwinP19} carry over to the robust setting, where a single edge is replaced by a collection of edges forming a robust sunflower, with only negligible loss in the parameters. We believe this may be of independent interest.

\paragraph{Concluding the proof}

To bound the number of paths in $\mathcal{G}$, we apply the bounds from the robust CLFT process, but w.r.t. the actual $w,t$ from \Cref{enum:clft-intro}. In that case we have $w\leq \sqrt{k}+1$, and $t=2\sqrt{k}$ and the bound $|S|=\tilde{O}\!\left(f^{- \sqrt{k}} n^{1 - \frac{1}{\sqrt{k}}}\right)$ from \Cref{prop:sqrt-to-ok-vft-intro}. Applying \Cref{thm:robust-clft-process-intro}, we conclude:

\[|\mathcal{G}|=O\left((\sqrt{k}+1)\cdot f^{\sqrt{k}+1} \cdot |S|^{1+1/\sqrt{k}}\right)= \tilde{O}\left(f^{\sqrt{k}+1} \cdot (f^{- \sqrt{k}} n^{1 - \frac{1}{\sqrt{k}}})^{1+\frac{1}{\sqrt{k}}}\right)= \tilde{O}(n).\]

\paragraph{Comparison to Prior Work of \cite{PPST25} and \cite{PopovaT26}.}
The two works most closely related to ours are \cite{PPST25} and \cite{PopovaT26}. In \cite{PPST25}, Parter et al.\ developed a Baswana--Sen-type framework for fault-tolerant clustering, designed for color fault-tolerant spanners. Their algorithm maintains path collections satisfying guarantees similar to those of $\prepp$ in \Cref{prop:sqrt-to-ok-vft-intro}. There are, however, several important differences.

First, in \cite{PPST25} the collection $\prepp$ is constructed explicitly, as part of a rather intricate keep/safe/postpone paradigm. In contrast, in our setting $\prepp$ emerges implicitly along the execution of the greedy algorithm via a simple update rule.
Second, the bounds obtained by their method incur an additional $\exp(k^2)$ factor in the spanner size, whereas our approach incurs only a $\poly(k)$ overhead. As a result, our bounds apply throughout the full range of $k$ and yield stronger size guarantees. Third, \cite{PPST25} uses the notion of parks, closely related to spreadness (see \Cref{sec:parks}), as a structure for quantifying connectivity under varying fault sets. In that work, spreadness is used to handle collections of paths under up to $f$ adversarial faults. In our setting, spreadness plays a new central role: it induces sunflower-type connections between centers, which we exploit in the size analysis rather than in the stretch analysis.

The second closely related work is the recent paper of \cite{PopovaT26}, which studies greedy $d\to r$ spanners and their fault-tolerant variants. The differences are both in the results and in the techniques. At the level of results, \cite{PopovaT26} derives bounds for fault-tolerant spanners through a reduction to the non-faulty greedy setting. Because this approach is fundamentally black-box, it does not overcome the exponential-dependence barrier. For instance, for EFT $d\to r$ spanners, their framework yields an $f^d$ overhead relative to the size of the non-faulty counterpart (\cite{PopovaT26}, Theorem 45), which is essentially the same dependence as in the folklore bound of \Cref{cl:RPC-size}. In contrast, our results go beyond this fault-tolerant sampling barrier.

At the level of techniques, their fault-tolerant analysis is based on reducing the execution of the FT-greedy algorithm to that of the standard greedy algorithm. Our approach is substantially different. It develops new tools tailored to the fault-tolerant setting: evolving robust path systems through greedy update rules, modeling the interactions between centers via the robust CLFT process, and providing a tight analysis of this process.

\subsection{Organization of the Paper}

In \Cref{sec:BS}, we construct our VFT $(O_{\varepsilon}(k^{\varepsilon}),O_{\varepsilon}(k))$-spanners. The section begins with \Cref{sec:BS-non-faulty}, where we present a new \emph{non-faulty} $(O_{\varepsilon}(k^{\varepsilon}),O_{\varepsilon}(k))$-spanner construction that serves as the backbone of the VFT construction developed later in \Cref{sec:BS-VFT-sec}.

In \Cref{sec:greedyFTclustering}, we formally introduce the $\mathcal{L},\mathcal{G}$ decomposition of the paths added by the greedy algorithm, define the path system $\pp$, and establish its main properties. The proof is somewhat technical and may be skipped on a first reading. In \Cref{sec:sqrtk-greedy}, we complete the size analysis of the paths in $\mathcal{G}$ and obtain an $(O_{\delta}(\sqrt{k}),O_{\delta}(k))$-spanner of size $\tilde{O}(f^{1+\delta}n^{1+1/k})$. This section may be read independently of \Cref{sec:greedyFTclustering}, assuming \Cref{prop:vft-d-to-r-structure}, whose proof is given there. Finally, in \Cref{sec:vft2to2k}, we construct VFT $(k,k-1)$-spanners via a different analysis of the paths in $\mathcal G$ added by the greedy algorithm. This section relies heavily on \Cref{sec:greedyFTclustering} and is independent of \Cref{sec:sqrtk-greedy}.

%% file: VFT-exp.tex
\section{VFT $(k^{\epsilon},O_{\epsilon}(k))$-Spanners}\label{sec:BS}

Our VFT $(\alpha,\beta)$ spanner algorithms are based on a new and arguably simpler algorithm for computing $(k^{\epsilon},O(k))$ spanners, in the non-faulty setting. The desired VFT spanners will be then obtained by combining this algorithm with the VFT multiplicative spanners of \cite{Parter22} and the hit and miss graph families of \cite{KarthikP21}. 

\subsection{A New Algorithm for $(O_\epsilon(k^{\epsilon}),O_{\epsilon}(k))$-Spanners (Without Faults)}\label{sec:BS-non-faulty}
We start by presenting an algorithm for $(O_{\varepsilon}(k^{\epsilon}),O_{\epsilon}(k))$-spanners which is arguably simpler than the algorithm of \cite{Ben-LevyP20} at the cost of obtaining sub-optimal bounds in terms of the dependency on $\epsilon$. The benefit of this algorithm is that it can be combined with the hit and miss families to provide VFT spanners.

\textbf{Baswana-Sen Step.} The input is given by an $n$-vertex graph $G$, a collection of $r$-radius clusters $\mathcal{C}$ with centers $S=\{s_1,\ldots, s_\ell\} \subseteq V$ and a sampling probability $p$. (In the Baswana-Sen algorithm $p=n^{-1/k}$ where $k$ is the stretch parameter, for our purposes $p$ is given as input to the algorithm.) The clusters in $\mathcal{C}$ are vertex disjoint, and denote $C(s_j)$ to be the cluster of the center $s_j \in S$. Each cluster $C(s_j)$ induces a BFS tree of depth $r$ rooted at $s_j$ and spanning all vertices in $C(s_j)$. 
The output of the step is an $(r+1)$-radius clustering $\mathcal{C}'$ with $O(p \cdot \ell)$ clusters $S' \subseteq S$, w.h.p., and a subset of edges $H'$, defined as follows:
\begin{itemize}
\item $H'$ initially includes all the $r$-radius spanning trees of each cluster $C(s_j) \in \mathcal{C}$.

\item Let $S' \subset S$ be a random sample of centers, obtained by adding each $s_j \in S$ into $S'$ independently with probability $p$. Initialize the clusters $\mathcal{C}'=\{C'(s_j)=C(s_j) ~\mid~ s_j \in S'\}$.
\end{itemize}
In the following two steps process only vertices whose original cluster was not sampled.
\begin{itemize}
    \item For every vertex $v\in V(\mathcal{C})$ that is incident to some sampled cluster $C(s_j)\in \mathcal{C}$ for $s_j \in S$,
        pick some $(v,u) \in E$ such that $u\in C(s_j)$ and add $(v,u)$ to $H$ and $v$ to $C'(s_j)$. Each such vertex joins only one arbitrary incident cluster.
    
    \item For every vertex $v \in V(\mathcal{C})$ with no incident sampled clusters add one edge $(v,u_j)$ to each incident cluster $C_j$ in $\mathcal{C}$.
   \end{itemize}
\begin{claim}\label{cl:BSstep}
    (i) $|\mathcal{C}'|=|S'|=O(|S|\cdot p)$, in expectation. (ii) each $s_j \in S'$ is a center of a cluster in $\mathcal{C}'$ spanning its cluster members by a $(r+1)$-radius tree. (iii) for every vertex $u \in V(\mathcal{C})\setminus V(\mathcal{C}')$ and every $G$-edge $(u,v)$ with $v \in V(\mathcal{C})$, it holds that $\dist_{H'}(u,v)\leq 2r+1$. (iv) W.h.p., $|H'|=\tilde{O}(n/p)$.
\end{claim}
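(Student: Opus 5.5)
We prove the four parts of \Cref{cl:BSstep} in turn, following the standard Baswana--Sen analysis; only part (iv) needs a probabilistic argument.

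\textbf{Part (i).} Each center $s_j\in S$ enters $S'$ independently with probability $p$. By linearity of expectation, $\mathbb{E}|S'|=p|S|$. The clusters of $\mathcal{C}'$ are in bijection with $S'$, so $|\mathcal{C}'|=|S'|$.

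\textbf{Part (ii).} Fix $s_j\in S'$. Its cluster $C'(s_j)$ consists of two kinds of vertices.
\begin{itemize}
\item The original members $C(s_j)$. These are spanned by the depth-$r$ tree of $C(s_j)$, which lies in $H'$.
\item Vertices $v$ that joined via a single edge $(v,u)$ with $u\in C(s_j)$. This edge is added to $H'$, which gives $v$ depth at most $r+1$.
\end{itemize}
Each such $v$ joins only one cluster, and the original clusters are disjoint, so the clusters of $\mathcal{C}'$ remain vertex-disjoint. The tree of $C(s_j)$ together with the joining edges is therefore an $(r+1)$-radius tree rooted at $s_j$. If a BFS-tree condition is required, we take the shortest-path tree inside this structure.

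\textbf{Part (iii).} Let $u\in V(\mathcal{C})\setminus V(\mathcal{C}')$ and let $(u,v)\in E$ with $v\in V(\mathcal{C})$, say $v\in C(s_j)$.
\begin{itemize}
\item Since $u$ is unclustered in $\mathcal{C}'$, its own cluster was not sampled, and $u$ had no incident sampled cluster.
\item Hence, in the last step, $u$ added some edge $(u,w)$ with $w\in C(s_j)$. If $v$ lies in $u$'s own cluster, this holds with $w=v$ or trivially via the tree.
\item Both $v$ and $w$ are within distance $r$ of $s_j$ in the tree of $C(s_j)$, which is contained in $H'$.
\end{itemize}
Therefore
\[
\dist_{H'}(u,v)\le 1+\dist_{H'}(w,s_j)+\dist_{H'}(s_j,v)\le 1+2r=2r+1.
\]

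\textbf{Part (iv).} This is the main step. The cluster trees contribute at most $n$ edges, and the joining edges at most one per vertex. The remaining edges come from vertices $v$ with no incident sampled cluster, and each such $v$ contributes one edge per incident cluster. Fix $v$ and let $d_v$ be the number of clusters adjacent to $v$. The event that none of them is sampled has probability $(1-p)^{d_v}$. If $d_v> c\ln n/p$, this probability is at most $n^{-c}$, so with probability $1-n^{-c}$ such a $v$ has a sampled neighbor and adds only one edge. Otherwise $v$ adds at most $d_v\le c\ln n/p$ edges. A union bound over all $v$ shows that, w.h.p., every vertex adds $O(\log n/p)$ edges. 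Summing over vertices gives $|H'|=O(n+n\log n/p)=\tilde{O}(n/p)$.

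The expected-size statement in (i) could be strengthened to a w.h.p. statement by a Chernoff bound when $p|S|=\Omega(\log n)$. This is not needed as stated.
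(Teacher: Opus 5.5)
Your proposal is correct and follows essentially the same argument as the paper: (i) and (ii) come directly from the construction, (iii) uses the edge $u$ added to $v$'s unsampled cluster plus the depth-$r$ cluster tree, and (iv) uses the fact that a vertex adjacent to $\Omega(\log n/p)$ clusters is w.h.p.\ adjacent to a sampled one, so each unclustered vertex contributes $O(\log n/p)$ edges. You are more explicit than the paper about (ii) (vertex-disjointness and the depth-$(r+1)$ tree) and about the case where $v$ lies in $u$'s own cluster, but the route is the same.
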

\begin{proof}
   (i) and (ii) are immediate by the construction. To see (iii), consider such an edge $(u,v)$. Let $C_v$ be the cluster of $v$ in $\mathcal{C}$. Since $u \in V(\mathcal{C})\setminus V(\mathcal{C}')$, we have that $u$ is incident to the cluster $C_v$. Hence, the algorithm adds to $H$ an edge from $u$ to some vertex $w \in C_v$. Since $H'$ contains an $r$-radius tree spanning all vertices in $C_v$, it holds that $\dist_{H'}(u,v)\leq 1+\dist_{H'}(w,v)\leq 2r+1$. We now bound the size of $H'$. Since the clusters in $\mathcal{C}$ are vertex disjoint, adding their BFS trees adds at most $n-1$ edges to $H'$. We next bound the edge added due to unclustered vertices, namely, in $V(\mathcal{C})\setminus V(\mathcal{C}')$.
   Observe that each vertex in $V(\mathcal{C})$ that is incident to at least $\Omega(\log n/p)$ distinct clusters in $\mathcal{C}$ is w.h.p. in $V(\mathcal{C'})$. Therefore, w.h.p., a vertex $u$ in $V(\mathcal{C})\setminus V(\mathcal{C}')$ is incident to $O(\log n/p)$ clusters in $\mathcal{C}$, and since the algorithm adds at most one $u$-edge to each incident cluster, the total number of edges added is $O(n\log n/p)$. 
\end{proof}

By repetitive applications of \Cref{cl:BSstep}, we have:
\begin{lemma}\label{lem:prop-NFT-BSsteps}
    Given an $N$-vertex graph $G$ with initial clustering $\mathcal{C}_0=\{\{v\} ~\mid~ v \in V(G)\}$ and probability of $p$. Having $x$ Baswana-Sen steps yields a clustering $\mathcal{C}'$ and a subgraph $H'$ with the following properties:
    \begin{enumerate}
    \item The expected number of clusters is $\mathbb E[|\mathcal{C}'|]= N \cdot p^{x}$.
    \item The radius of the clusters is at most $x$.
    \item $|H'|=\tilde{O}(x \cdot N/p)$, w.h.p., and also $\mathbb{E}[|H'|] = O(x \cdot N/p)$.
     \item For every $(u,v)\in E$ such that $u \notin V(\mathcal{C}')$, $\dist_{H'}(u,v)\leq 2x-1$. 
    \end{enumerate}
\end{lemma}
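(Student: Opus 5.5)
The plan is induction on the number of steps $i=0,1,\ldots,x$. We apply \Cref{cl:BSstep} to the clustering $\mathcal{C}_{i-1}$ to obtain $\mathcal{C}_i$ and an edge set $H_i$, and set $H'=\bigcup_{i\le x} H_i$ and $\mathcal{C}'=\mathcal{C}_x$.

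\textbf{Items 1 and 2.} Item 2 follows directly from part (ii) of \Cref{cl:BSstep}: the radius grows by at most one per step, starting from radius $0$. For item 1, each original vertex survives as a center of $\mathcal{C}_x$ iff it is sampled in all $x$ rounds. These $x$ independent events each have probability $p$, so linearity of expectation gives $\mathbb{E}[|\mathcal{C}_x|]=N p^x$.

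\textbf{Item 3.} Part (iv) of \Cref{cl:BSstep} bounds each step by $\tilde O(N/p)$ w.h.p. A union bound over the $x\le N$ steps gives $|H'|=\tilde O(xN/p)$ w.h.p. For the expectation we cannot use the w.h.p. bound directly, so we bound $\mathbb{E}[|H_i|]$ per step. The tree edges contribute at most $N-1$. A vertex $v$ whose cluster is unsampled and that is adjacent to $m$ clusters contributes edges only if none of those clusters is sampled. That happens with probability at most $(1-p)^{m}$, and then it adds at most $m$ edges. Since $m(1-p)^m\le 1/p$, each step contributes $O(N/p)$ in expectation, and summing over $i$ gives $O(xN/p)$.

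\textbf{Item 4 (main obstacle).} Let $(u,v)\in E$ with $u\notin V(\mathcal{C}_x)$. Let $i\le x$ be the step at which $u$ became unclustered, so $u\in V(\mathcal{C}_{i-1})\setminus V(\mathcal{C}_i)$. Part (iii) of \Cref{cl:BSstep} applies only when $v\in V(\mathcal{C}_{i-1})$; in that case $\dist_{H'}(u,v)\le 2(i-1)+1\le 2x-1$. The issue is the case where $v$ was already unclustered earlier, i.e.\ $v\in V(\mathcal{C}_{j-1})\setminus V(\mathcal{C}_j)$ for some $j<i$. Then $u\in V(\mathcal{C}_{j-1})$, since $u$ was still clustered at step $i-1\ge j-1$ and clustered sets are nested. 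Applying part (iii) to the step at which $v$ left, with the roles of $u$ and $v$ swapped, gives $\dist_{H'}(v,u)\le 2(j-1)+1\le 2x-1$. Hence in all cases the endpoint that left the clustering first certifies the edge, which is why part (iii) has to be applied symmetrically.
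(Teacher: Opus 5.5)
Your proof is correct and follows the paper's route: the lemma is obtained by iterating \Cref{cl:BSstep}, with item 3's expectation bounded per step as the paper does, and your explicit treatment of the case where $v$ leaves the clustering before $u$ is exactly what the paper's ``repetitive applications'' leaves implicit for item 4. One small slip is that a vertex adjacent to a sampled cluster still adds one joining edge, so the per-vertex bound in item 3 should be $1+m(1-p)^m=O(1/p)$ as in the paper; this changes nothing in the conclusion.
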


The second part of item 3 holds because of the following. On each step, conditional on the input clustering, a vertex incident to $t$ clusters adds at most one edge when it sees a sampled cluster, and at most $t$ edges otherwise. Hence its expected contribution is at most $1 + t(1-p)^t = O(1/p)$. Summing over at most $N$ vertices and over $x$ steps gives $O(x \cdot N/p)$.

\paragraph{The $(O(k^{\epsilon}),O_{\epsilon}(k))$-Spanner Algorithm.} 

The algorithm has $\ell=\lceil 1/\epsilon \rceil$ phases. Set $r_1=1$ and for $i  \in \{2,\ldots, \ell\}$, let:
$$r_i=3\lceil k^{\epsilon} \rceil\cdot \sum_{j=1}^{i-1}r_j~.$$

The input to phase $i\geq 1$ is given by an $N_{i}$-vertex graph $\hat{G}_{i}=(V_{i}, E_{i})$ where each $\hat{G}_{i}$-edge $(u,v)$ corresponds to a $u$-$v$-shortest path in $G$ of length at most $r_{i}$. Initially, $\hat{G}_1=G$, $V_1=V$ and $r_1=1$, $N_1=|V_1|=n$. Let $\mathcal{C}_0=\{ \{v\} ~\mid~ v \in V\}$.

Define $$\bar N_1 = n, \,\,\, \bar N_i = n^{1 - 1/k^{1 - (i-1)\epsilon}}\,\, (2\leq i \leq \ell), \,\,\, p_i = \bar N_i/n^{1 + 1/k}.$$

\smallskip
\noindent \textbf{Phase $i \in \{1,\ldots, \ell-1\}$:} The algorithm applies $\lceil k^{\epsilon}\rceil$ Baswana-Sen steps on the $N_i$-vertex graph $\hat{G}_i$ with sampling probability of $p_i$ and the trivial initial (singleton) clustering $\mathcal{C}_{i,0}=\{\{v\}, v \in V(\hat{G}_i)\}$. Let $\hat{H}_i$ and $\mathcal{C}_i$ be the output $\hat{G}_i$-subgraph and clustering, respectively, at the end of these $\lceil k^{\epsilon}\rceil$ steps. Denote the cluster centers of $\mathcal{C}_i$ by $S_i\subseteq V_i$.

The output of phase $i$ is a $G$-subgraph $H_i$ and a virtual graph $\hat{G}_{i+1}$ that will be given as input to phase $i+1$. These graphs are defined as follows: Let $H_i$ be a $G$-subgraph obtained by taking a $u$-$v$ shortest path in $G$, denoted by $SP(u,v,G)$ for every $(u,v)\in \hat{H}_i$:
$$H_i=\bigcup_{(u,v)\in \hat{H}_i}SP(u,v,G)~.$$

Define a virtual graph $\hat{G}_{i+1}$ with vertex set $V_{i+1}=S_i$ obtained by connecting nearby vertex pairs in $V_{i+1}$ at distance at most $r_{i+1}$. Formally,  
$\hat{G}_{i+1}=(V_{i+1}=S_i, \hat{E}_{i+1})$, where 
$$\hat{E}_{i+1}=\{(u,v) \in V_{i+1}\times V_{i+1} ~\mid~ \dist_G(u,v)\leq r_{i+1}\}.$$ 
This concludes the description of Phase $i$. 

\noindent\textbf{Phase $i = \ell$:} The algorithm does exactly the same as in previous phases, except that it applies only $\lceil k^{\epsilon'}\rceil$ Baswana-Sen steps on $\hat{G}_{\ell}$, where $\epsilon' = \epsilon - (\epsilon\lceil \frac{1}{\epsilon}\rceil - 1)$. (This modification is needed to handle the case where $1/\epsilon$ is not integral.)

\noindent The output spanner is given by $H=\bigcup_{i=1}^\ell H_i$.

\smallskip
\noindent \textbf{Size Analysis.} By \Cref{lem:prop-NFT-BSsteps}(3), taking expectation conditional on the input to the phase $i$, we have that $\mathbb E[|\hat{H}_i|]=O(\lceil k^\epsilon \rceil \mathbb E [N_i]/p_i)$ for every $i$. We shall see below in \Cref{cl:properties} that $\mathbb E[N_i] \leq \bar N_i$, and thus $\mathbb E [N_i]/p_i \leq n^{1 + 1/k}$. This gives $\mathbb E[|\hat{H}_i|]=O(\lceil k^\epsilon \rceil n^{1+1/k})$. Since each edge of $\hat{H}_i$ corresponds to a $G$-path of length at most $r_i$ and as $i \leq \ell$, we have that 
$\mathbb E[|H_i|]=O(r_i \mathbb E[|\hat{H}_i|])= O(r_i \lceil k^\epsilon \rceil n^{1+1/k})$. Thus, for a suitable absolute constant $C > 0$, $$\mathbb E[|H|] \leq B := C \lceil k^\epsilon \rceil n^{1+1/k}\sum_{i = 1}^\ell r_i.$$
Note that $B = \tilde{O}(3^{1/\epsilon} \cdot n^{1 + 1/k})$, as desired. 

\smallskip
\noindent\textbf{Stretch Analysis.} We start by proving the following:

\begin{claim}\label{cl:properties}
   $N_1=n$ and for every $i \in \{2,\ldots, \ell\}$, it holds that $\mathbb E[N_i] \leq \bar{N_i}$. Hence, $\mathbb E[N_{\ell}]\leq n^{1-1/k^{\epsilon}}$. Also, $N_{\ell+1} = 0$ w.h.p. 
\end{claim}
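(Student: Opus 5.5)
The plan is an induction on $i$, with the main observation that $p_i$ is deterministic, so all the randomness enters through $N_i$ linearly. Write $m=\lceil k^{\epsilon}\rceil$ and $\bar N_i=n^{1-a_i}$, where $a_1=0$ and $a_i=1/k^{1-(i-1)\epsilon}$ for $i\ge 2$. Then $p_i=\bar N_i/n^{1+1/k}=n^{-a_i-1/k}$. Trivially $N_1=|V_1|=n=\bar N_1$.

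\textbf{Inductive step.} Condition on the input $\hat G_i$ to phase $i$. By \Cref{lem:prop-NFT-BSsteps}(1), applied to the $N_i$-vertex graph $\hat G_i$ with $m$ Baswana--Sen steps and probability $p_i$, we get $\mathbb E[N_{i+1}\mid \hat G_i]=\mathbb E[|S_i|\mid \hat G_i]=N_i p_i^{m}$. Taking expectations and using the induction hypothesis $\mathbb E[N_i]\le \bar N_i$ gives
\[
\mathbb E[N_{i+1}]\le n^{\,1-a_i-m(a_i+1/k)}\le n^{\,1-m a_i}.
\]
It remains to check $m a_i\ge a_{i+1}$. For $i\ge 2$ this is the computation $k^{\epsilon}\cdot k^{-(1-(i-1)\epsilon)}=k^{-(1-i\epsilon)}=a_{i+1}$. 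For $i=1$ we use the discarded term $m/k\ge k^{\epsilon-1}=a_2$ in place of $ma_1$. Hence $\mathbb E[N_{i+1}]\le \bar N_{i+1}$ for $i+1\le \ell$. In particular $\mathbb E[N_\ell]\le \bar N_\ell=n^{1-1/k^{1-(\ell-1)\epsilon}}$. Since $1-(\ell-1)\epsilon=\epsilon'\le\epsilon$, this is at most $n^{1-1/k^{\epsilon}}$.

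\textbf{Last phase.} Phase $\ell$ uses $m'=\lceil k^{\epsilon'}\rceil$ steps, where $\epsilon'=1-(\ell-1)\epsilon$. Thus $a_\ell=k^{-\epsilon'}$ and $m' a_\ell\ge 1$. The same computation gives
\[
\mathbb E[N_{\ell+1}]\le n^{\,1-a_\ell-m'(a_\ell+1/k)}\le n^{-a_\ell-m'/k}.
\]
The exponent is negative, so Markov gives $\Pr[N_{\ell+1}>0]\le n^{-a_\ell}$.

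\textbf{Main obstacle.} The hard part is upgrading this to ``w.h.p.'': $a_\ell$ may be as small as $1/k$, so the Markov bound above is only $n^{-\Omega(1/k)}$. I would first try to gain slack in the last phase, say by running a constant number of extra steps or using a slightly smaller probability, which changes the size bound only by constants, so that the exponent drops below $-c$ for a constant $c$. Failing that, I would adopt the standard Baswana--Sen convention that the final step samples no centers, forcing $N_{\ell+1}=0$ deterministically. I would then check that the edge bound of \Cref{lem:prop-NFT-BSsteps}(3) for that step is still controlled: the expected number of surviving clusters is already $n^{o(1)}$-small relative to $1/p_\ell$, so the extra edges cost at most $O(N_\ell\cdot\text{(number of clusters)})$.
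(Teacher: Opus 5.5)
Up to the bound on $\mathbb{E}[N_{\ell+1}]$ your argument is the paper's: the same induction through \Cref{lem:prop-NFT-BSsteps}(1), and the same final expectation bound. Your $n^{-a_\ell-m'/k}$ is the paper's $\bar N_\ell^{k^{\epsilon'}+1}/n^{(1+1/k)k^{\epsilon'}}=n^{-k^{-\epsilon'}-k^{\epsilon'-1}}$, with the ceiling kept. The gap is your ``main obstacle'' paragraph. It leaves the last sentence of the claim unproved and replaces it with changes to the algorithm (extra steps, or a last step that samples nothing). Those would prove a statement about a different construction, and they are unnecessary, because the bound you already derived finishes the proof.

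Your pessimism rests on two errors.

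\emph{The range of $a_\ell$.} $a_\ell$ can never be close to $1/k$. You have $\epsilon'\le\epsilon$, and when $\epsilon>1/2$ one has $\ell=2$ and $\epsilon'=1-\epsilon<1/2$. So always $\epsilon'\le 1/2$, hence $a_\ell=k^{-\epsilon'}\ge k^{-1/2}$. (If you also keep the term you discarded, $a_\ell+m'/k\ge k^{-\epsilon'}+k^{\epsilon'-1}\ge 2k^{-1/2}$.) Your Markov step therefore gives $\Pr[N_{\ell+1}>0]\le n^{-1/\sqrt{k}}$. Under the standing assumption $k=O(\log n)$, this is $e^{-\Omega(\sqrt{\log n})}=o(1)$.

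\emph{The meaning of ``w.h.p.''} Here it means $1-o(1)$, and that is all the paper proves and uses. Its proof ends with ``$\mathbb{E}[N_{\ell+1}]=o(1)$, which implies $N_{\ell+1}=0$ w.h.p.\ by Markov''. The claim is then used only to select one execution with $|H|\le 4B$ (an event of probability at least $3/4$) and $N_{\ell+1}=0$.

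An inverse-polynomial failure probability is not even true for this algorithm when $k=\Theta(\log n)$. For example, take $\epsilon=1/2$ and $k=\log_2 n$ a perfect square. Then $p_1=1/2$, $\mathbb{E}[N_2]=n^{1-1/\sqrt{k}}$, and $\mathbb{E}[N_3]=n^{-2/\sqrt{k}}$, so $\Pr[N_3>0]\approx 2^{-2\sqrt{\log_2 n}}$. Your first remedy would not fix this either. A constant number of extra steps multiplies the expectation only by a constant power of $p_\ell=n^{-a_\ell-1/k}$, which in the worst case is $n^{-O(1/\sqrt{k})}$.

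So replace the last paragraph by the one-line observation that $n^{-a_\ell}\le n^{-1/\sqrt{k}}=o(1)$, and the proof is complete.
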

\begin{proof}
    The claims are shown by induction on $i$. For $i=1$, $N_1=n$. Assume that it holds for $i$ and consider $i+1$. By \Cref{lem:prop-NFT-BSsteps}(1) and combining with the induction assumption, for $i\in \{1,..,\ell-1\}$ we have:
    $$\mathbb{E}[N_{i+1}] = p_i^{\lceil k^\epsilon\rceil}\mathbb{E}[N_i] \leq p_i^{k^\epsilon}\bar N_i \leq \bar N_{i+1}~.$$
    Similarly,
    $$\mathbb E[N_{\ell+1}]\leq \mathbb E[N_\ell] \cdot p_\ell^{\lceil k^{\epsilon'}\rceil}\leq \bar N_\ell \cdot p_\ell^{k^{\epsilon'}}=\frac{\bar N_\ell^{k^{\epsilon'}+1}}{n^{(1+1/k)k^{\epsilon'}}} = o(1)~,$$
    which implies $N_{\ell+1} = 0$ w.h.p. by Markov inequality.
\end{proof}

By Markov, $\Pr[|H| \leq 4B] \geq 3/4$. Together with \Cref{cl:properties}, this gives an execution with $|H| \leq 4B$ and $N_{\ell + 1} = 0$. Fix such an execution for the stretch analysis.
\smallskip

\noindent \textbf{Clustering.} Let $\mathcal{C}_0=\{\{v\} ~\mid~ v \in V\}$ be the collection of $n$ singleton clusters, and recall that $\mathcal{C}_i$ is the output clustering of applying $\lceil k^{\epsilon}\rceil$ BS steps on the graph $\hat{G}_i$ in phase $i$ (or $\lceil k^{\epsilon'}\rceil$ BS steps if $i = \ell$). We have $V(\mathcal{C}_i)\subseteq V_i$. We first define the notion of $i$-clustered vertices for the vertices in $V_i$: a vertex $u \in V_i$ is $i$-clustered if $u$ belongs to the clustering $\mathcal{C}_i$ (i.e., $u \in V(\mathcal{C}_i)$) and let $c_i(u) \in V_{i+1}$ be the cluster center of $u$ in $\mathcal{C}_i$. We next extend this definition for any vertex $v \in V$ in an inductive manner. For $i=0$, $c_0(v)=v$ and for $i \geq 1$, $v$ is $i$-clustered if it is $(i-1)$-clustered and its $(i-1)^{th}$ center $c_{i-1}(v)$ is $i$-clustered. (Since $c_{i-1}(v)$ is a cluster center in $\mathcal{C}_{i-1}$, $c_{i-1}(v)$ is in $V_i$ and hence this is well-defined.) Otherwise, $v$ is $i$-unclustered.  For an $i$-clustered vertex, let $c_i(v)=c_i(c_{i-1}(v))$. 

By \cref{cl:properties}, we have:
\begin{observation}\label{obs:final}
    Any vertex $u$ is $0$-clustered and it is $i$-unclustered for some $i \in \{1,\ldots,\ell\}$.
\end{observation}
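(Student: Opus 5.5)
The first part of \Cref{obs:final} is definitional: by convention $c_0(v)=v$, and every vertex lies in a singleton cluster of $\mathcal{C}_0$, so every $u\in V$ is $0$-clustered. For the second part, the plan is to show that no vertex can remain $\ell$-clustered. Since $i$-clusteredness is monotone (being $i$-clustered requires being $(i-1)$-clustered), this pins down a unique first index $i\in\{1,\ldots,\ell\}$ at which $u$ becomes $i$-unclustered.

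I would prove by induction on $i$ that if $u$ is $i$-clustered, then $c_i(u)$ is a cluster center of $\mathcal{C}_i$, hence $c_i(u)\in S_i=V_{i+1}$. For $i=1$ this holds because $u\in V(\mathcal{C}_1)$ and $c_1(u)$ is its center in $\mathcal{C}_1$. For the inductive step, $u$ being $i$-clustered means $c_{i-1}(u)\in V_i$ is itself $i$-clustered, i.e.\ $c_{i-1}(u)\in V(\mathcal{C}_i)$. Therefore $c_i(u)=c_i(c_{i-1}(u))$ is its center in $\mathcal{C}_i$ and lies in $S_i=V_{i+1}$.

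Now suppose, for contradiction, that $u$ is $\ell$-clustered. Then $c_\ell(u)\in S_\ell=V_{\ell+1}$, so $N_{\ell+1}=|V_{\ell+1}|\ge 1$. This contradicts the fixed execution chosen after \Cref{cl:properties}, in which $N_{\ell+1}=0$; that holds w.h.p.\ by \Cref{cl:properties} and is combined with the Markov bound on $|H|$. Hence $u$ is $\ell$-unclustered. The set of indices $j\in\{1,\ldots,\ell\}$ at which $u$ is $j$-unclustered is therefore nonempty, and its minimum $i$ is the required index.

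I do not expect a real obstacle. The only point needing care is that the definition of ``$i$-clustered'' is inductive through the centers $c_{i-1}(u)$ and not directly through $u\in V(\mathcal{C}_i)$. One must check that $c_{i-1}(u)$ indeed lies in $V_i$, so that asking whether it is $i$-clustered makes sense, and that $S_\ell$ being empty forces the chain to break by phase $\ell$. Both follow from the induction above.
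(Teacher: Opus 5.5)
Your proof is correct and follows the paper's own reasoning. The paper derives the observation directly from \Cref{cl:properties}: in the fixed execution $N_{\ell+1}=|S_\ell|=0$, so no vertex can be $\ell$-clustered. Your induction showing $c_i(u)\in S_i=V_{i+1}$ just spells out the well-definedness remark the paper already makes in its definition of $i$-clustered vertices.
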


\begin{observation}\label{obs:rad-NFT-step}
    Let $i \in \{1,\ldots, \ell-1\}$. For an $i$-clustered vertex $v \in V$ it holds that $\dist_{H}(v,c_i(v))\leq r_{i+1}/3$. For $i=0$, $\dist_{H}(v,c_0(v))=0$.
\end{observation}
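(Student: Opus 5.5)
The plan is induction on $i$, tracking how the distance from $v$ to its current center accumulates along the chain $v=c_0(v), c_1(v), \ldots, c_i(v)$. The case $i=0$ is trivial, since $c_0(v)=v$. For $i\ge 1$, the triangle inequality gives
\[
\dist_H(v,c_i(v)) \le \sum_{j=1}^{i}\dist_H\bigl(c_{j-1}(v),c_j(v)\bigr).
\]
So it suffices to bound each single-phase hop $\dist_H(c_{j-1}(v),c_j(v))$.

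\textbf{The single hop.} Fix $j\in\{1,\ldots,i\}$ and write $u=c_{j-1}(v)$. Then $u\in V_j$, and $u$ is $j$-clustered in the sense that $u\in V(\mathcal{C}_j)$ with center $c_j(u)=c_j(v)$. By \Cref{lem:prop-NFT-BSsteps}(2), after the $\lceil k^{\epsilon}\rceil$ Baswana--Sen steps on $\hat{G}_j$, the cluster of $u$ has radius at most $\lceil k^\epsilon\rceil$. This radius is realized by a tree in $\hat{H}_j$; by \Cref{cl:BSstep}, $H'$ contains all the cluster trees, and the trees are extended by added edges. Hence $\dist_{\hat{H}_j}(u,c_j(u))\le \lceil k^{\epsilon}\rceil$.

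Each $\hat{H}_j$-edge $(a,b)$ lies in $\hat{E}_j$, so $\dist_G(a,b)\le r_j$. Its shortest path $SP(a,b,G)$ is placed into $H_j\subseteq H$, so every virtual edge costs at most $r_j$ in $H$. (For $j=1$ the edges are genuine $G$-edges, with $r_1=1$.) It follows that
\[
\dist_H\bigl(c_{j-1}(v),c_j(v)\bigr)\le \lceil k^{\epsilon}\rceil\, r_j .
\]

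\textbf{Summing.} Adding the hops gives
\[
\dist_H(v,c_i(v))\le \lceil k^\epsilon\rceil\sum_{j=1}^{i} r_j = \frac{r_{i+1}}{3},
\]
where the last equality is the definition $r_{i+1}=3\lceil k^\epsilon\rceil\sum_{j\le i}r_j$. This definition is valid for $i+1\le\ell$, i.e.\ for $i\le \ell-1$, which is exactly the range in the statement.

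\textbf{Main obstacle.} The only delicate point is the radius claim in $\hat{H}_j$. One must check that the clusters built in phase $j$ have radius at most $\lceil k^\epsilon\rceil$ measured within $\hat{H}_j$ itself, not merely in $\hat{G}_j$. This holds because each BS step adds the joining edge $(v,u)$ to $H'$ and retains the previous trees. One must also confirm that the center assignment $c_j(v)=c_j(c_{j-1}(v))$ matches the cluster tree of $c_{j-1}(v)$ in $\mathcal{C}_j$, which is immediate from the inductive definition.
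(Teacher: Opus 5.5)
Your proof is correct and follows the paper's argument: the paper inducts on $i$, adding to the inductive bound a single phase-$i$ hop of length at most $\lceil k^{\epsilon}\rceil\cdot r_i$ (the cluster radius in $\hat{H}_i$ times the $G$-length of each virtual edge), and unrolling that induction gives exactly your telescoping sum. Your unrolled form also handles $i=1$ more cleanly, since the paper's induction step writes $r_1/3$ for what is really $\dist_H(v,c_0(v))=0$.
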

\begin{proof}
The proof is shown by induction on $i$. For $i=0$, we have that $c_0(v)=v$ and the claim holds vacuously. Assume that it holds for $i-1$ and consider an $i$-clustered vertex $v$.
Let $z=c_{i-1}(v)$, by definition we have that $c_i(z)=c_i(v)$. By \Cref{lem:prop-NFT-BSsteps}(2) and plugging $x=\lceil k^{\epsilon} \rceil$, we have $\dist_{\hat{H}_i}(z,c_i(v))\leq \lceil k^{\epsilon}\rceil$. As each edge in $\hat{H}_i$ corresponds to a path of length at most $r_i$ in $G$, we have $\dist_{H_i}(z,c_i(z))\leq \lceil k^{\epsilon}\rceil \cdot r_i$. By combining with the induction assumption for $i-1$:
\begin{eqnarray*}
    \dist_{H}(v,c_{i}(v))&\leq &\dist_{H}(v,z)+\dist_{H}(z,c_{i}(v))\leq r_{i}/3+ \lceil k^{\epsilon}\rceil \cdot r_{i}=\lceil k^{\epsilon}\rceil \cdot \sum_{j=1}^{i} r_j=r_{i+1}/3~.
\end{eqnarray*}
   
\end{proof}

Let $r'_1=1$ and $r'_i=r_i/3$ for every $i \in \{2,\ldots, \ell\}$.

\begin{claim}\label{cl:key-stretch-NFT-basic}
Let $v$ be an $(i-1)$-clustered and $i$-unclustered vertex. For any $(i-1)$-clustered vertex $u \in B_G(v,r'_i)$, it holds that
    $\dist_{H}(v,u)\leq 2\lceil k^{\epsilon}\rceil\cdot r_i$ (and hence also, $\dist_{H}(v,u)\leq 6\lceil k^{\epsilon}\rceil\cdot r'_i$).
If $i = \ell$ then under the same conditions we actually have $\dist_{H}(v,u)\leq 2\lceil k^{\epsilon'}\rceil\cdot r_\ell$.
\end{claim}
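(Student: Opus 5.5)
We reduce the claim to the Baswana--Sen stretch guarantee of \Cref{lem:prop-NFT-BSsteps}(4) applied in the virtual graph $\hat{G}_i$ of phase $i$. Set $z=c_{i-1}(v)$ and $w=c_{i-1}(u)$; both lie in $V_i$. Since $v$ is $(i-1)$-clustered but $i$-unclustered, $z$ is not $i$-clustered, i.e.\ $z\notin V(\mathcal{C}_i)$.

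\textbf{Step 1: $z$ and $w$ are adjacent or equal in $\hat{G}_i$.} If $z=w$, skip this step. For $i=1$ we have $z=v$, $w=u$, and $\dist_G(u,v)\le r'_1=1$, so $(z,w)\in E(\hat{G}_1)=E(G)$. For $i\ge 2$, apply \Cref{obs:rad-NFT-step} with index $i-1$ to get $\dist_H(v,z)\le r_i/3$ and $\dist_H(u,w)\le r_i/3$; the same bounds hold in $G$ since $H\subseteq G$. The triangle inequality then gives
\[
\dist_G(z,w)\le r_i/3+r'_i+r_i/3=r_i .
\]
Hence $(z,w)\in\hat{E}_i$ by the definition of $\hat{G}_i$. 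This bookkeeping is why $r_i$ carries the factor $3$.

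\textbf{Step 2: apply the BS guarantee in $\hat{G}_i$.} Phase $i$ runs $x=\lceil k^\epsilon\rceil$ Baswana--Sen steps on $\hat{G}_i$ (or $x=\lceil k^{\epsilon'}\rceil$ if $i=\ell$). Since $z\notin V(\mathcal{C}_i)$, \Cref{lem:prop-NFT-BSsteps}(4) gives $\dist_{\hat{H}_i}(z,w)\le 2x-1$. Each $\hat{H}_i$-edge is replaced in $H_i$ by a $G$-shortest path of length at most $r_i$. Therefore $\dist_{H}(z,w)\le (2x-1)r_i$.

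\textbf{Step 3: combine.} Adding the two cluster-radius bounds from \Cref{obs:rad-NFT-step} gives
\[
\dist_H(v,u)\le \tfrac{r_i}{3}+(2x-1)r_i+\tfrac{r_i}{3}\le 2x\,r_i .
\]
For $i=1$ the radius terms vanish and the bound is $2x-1$. The factor-$6$ form follows because $r_i=3r'_i$ for $i\ge2$, and $r_1=r'_1$. The case $i=\ell$ is identical with $x=\lceil k^{\epsilon'}\rceil$.

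The main subtlety is making sure the radius bound is applied at the correct index. \Cref{obs:rad-NFT-step} must be used for $(i-1)$-clustered vertices and gives radius $r_i/3$ (trivially $0$ at $i=1$), which is exactly what keeps $z,w$ within the threshold $r_i$ defining $\hat{E}_i$. A second point to check is that $z\notin V(\mathcal{C}_i)$ really follows from the inductive definition of $i$-unclustered. It does: $v$ is $(i-1)$-clustered, so it is $i$-unclustered only because $c_{i-1}(v)=z$ is not $i$-clustered.
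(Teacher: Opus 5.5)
Your proof is correct and follows the paper's argument step for step. The paper likewise takes $a=c_{i-1}(v)$ and $b=c_{i-1}(u)$, uses \Cref{obs:rad-NFT-step} with the triangle inequality to place $(a,b)$ in $\hat{G}_i$, and invokes \Cref{lem:prop-NFT-BSsteps}(4) because $a\notin V(\mathcal{C}_i)$. It then expands each virtual edge into a $G$-path of length at most $r_i$ and combines the bounds; your separate handling of the case $z=w$ and of the $i=1$ bookkeeping is a harmless, slightly more careful addition.
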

\begin{proof}
    Let $a=c_{i-1}(v)$ and $b=c_{i-1}(u)$. We show first that $(a,b)\in \hat{G}_{i}$. For $i=1$, this is immediate as $\hat{G}_1=G$. For $i\geq 2$, by \Cref{obs:rad-NFT-step}:
\begin{eqnarray*}
\dist_G(a,b) &\leq& \dist_G(a,v)+\dist_G(v,u)+\dist_G(u,b)
\leq  2r_{i}/3  + r_{i}/3\leq r_{i}~.
\end{eqnarray*}

Since $v$ is $i$-unclustered and $(i-1)$-clustered, it holds that $a$ is not in $V(\mathcal{C}_{i})$. By \Cref{lem:prop-NFT-BSsteps}(4) we have that $\dist_{\hat{H}_{i}}(a,b)\leq 2\lceil k^{\epsilon}\rceil-1$ ($\dist_{\hat{H}_{i}}(a,b)\leq 2\lceil k^{\epsilon'}\rceil-1$ if $i = \ell$) and since each edge in $\hat{H}_{i}$ corresponds to a $G$-path of length at most $r_{i}$, we have that $\dist_{H_{i}}(a,b)\leq (2\lceil k^{\epsilon}\rceil -1)\cdot r_{i}$ ($\dist_{H_{i}}(a,b)\leq (2\lceil k^{\epsilon'}\rceil -1)\cdot r_{i}$ if $i = \ell$).
By \Cref{obs:rad-NFT-step} we have:
\begin{eqnarray*}
\dist_H(v,u) &\leq& \dist_H(v,a)+\dist_H(a,b)+\dist_H(b,u)
\leq  2r_{i}/3+ (2\lceil k^{\epsilon}\rceil-1)r_{i} \leq 2\lceil k^{\epsilon}\rceil \cdot r_{i}~.
\end{eqnarray*}

Similarly, $\dist_H(v,u) \leq 2\lceil k^{\epsilon'}\rceil \cdot r_{\ell}$ for $i = \ell$.
\end{proof}

\begin{claim}\label{cl:key-stretch-NFT}
Let $v$ be an $(i-1)$-clustered and $i$-unclustered vertex and let $u \in B_G(v,r'_i)$. Then, $\dist_H(v,u)\leq 6i\cdot \lceil k^{\epsilon}\rceil \cdot r'_i$. 
\end{claim}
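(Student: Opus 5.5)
We prove \Cref{cl:key-stretch-NFT} by induction on $i$, walking along a shortest $v$--$u$ path in $G$ and using \Cref{cl:key-stretch-NFT-basic} for each hop. Let $P=(v=w_0,w_1,\ldots,w_d=u)$ be a shortest path in $G$ with $d\le r'_i$. By \Cref{obs:final}, every vertex is $0$-clustered and becomes unclustered at a well-defined level, so each $w_j$ has a level $\lambda(w_j)$, the unique index with $w_j$ being $(\lambda-1)$-clustered and $\lambda$-unclustered. The goal is to split $P$ into pieces. One piece joins $v$ directly to the last $(i-1)$-clustered vertex of $P$. The other pieces lie among vertices of lower levels, and for these we use the induction hypothesis.

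\textbf{Base case $i=1$.} Here $r'_1=1$, so $u$ is a $G$-neighbour of $v$ (or $u=v$). Every vertex is $0$-clustered, so \Cref{cl:key-stretch-NFT-basic} with $i=1$ gives $\dist_H(v,u)\le 2\lceil k^\epsilon\rceil r_1\le 6\lceil k^\epsilon\rceil r'_1$.

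\textbf{Inductive step.} Let $j^*$ be the largest index such that $w_{j^*}$ is $(i-1)$-clustered; it exists since $w_0=v$ qualifies. Because $w_{j^*}\in B_G(v,r'_i)$, \Cref{cl:key-stretch-NFT-basic} bounds $\dist_H(v,w_{j^*})\le 6\lceil k^\epsilon\rceil r'_i$. It remains to bound $\dist_H(w_{j^*},u)$. Every vertex on the suffix $w_{j^*+1},\ldots,u$ is $(i-1)$-unclustered, so its level is at most $i-1$. We handle the suffix the same way, iterating downward in level. Take the last vertex $y$ of the suffix at the highest level $i_1\le i-1$ present, and let $x$ be the first vertex after $w_{j^*}$ at that level. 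We would apply the induction hypothesis (the claim at level $i_1$) from $x$ toward $y$, and then recurse on the remainder, whose levels are strictly lower.

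\textbf{Main obstacle.} The induction hypothesis at level $i_1$ only applies to targets within distance $r'_{i_1}$, which may be much smaller than the remaining suffix length. The cure is to chop the suffix into consecutive segments of length $r'_{i_1}$ and apply the hypothesis to each segment separately. The accounting then works out because $r'_{i}=\lceil k^\epsilon\rceil\sum_{j<i}r_j$ is much larger than $r'_{i-1}$. There are about $d/r'_{i_1}$ segments, each costing $6(i-1)\lceil k^\epsilon\rceil r'_{i_1}$, for a total of $O((i-1)\lceil k^\epsilon\rceil\, d)\le 6(i-1)\lceil k^\epsilon\rceil r'_i$ across the suffix. Two details need care. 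First, each segment must start at a vertex of the correct level, so the start of each segment is taken to be the first vertex of the top level present in that segment. Second, the constants must be tracked so that the suffix contributes at most $6(i-1)\lceil k^\epsilon\rceil r'_i$. I would organize this as a strengthened statement proved by induction: for any subpath of $G$ all of whose vertices have level at most $i$, the endpoints are at $H$-distance at most $6i\lceil k^\epsilon\rceil$ times the subpath length (rounded up to $r'_i$). Combining this with the first piece gives $\dist_H(v,u)\le 6\lceil k^\epsilon\rceil r'_i+6(i-1)\lceil k^\epsilon\rceil r'_i=6i\lceil k^\epsilon\rceil r'_i$.
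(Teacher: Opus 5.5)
Your base case and the first piece are fine, but the inductive step is not actually carried out, and the way you split the path does not give the constant $6i$. After spending $6\lceil k^{\epsilon}\rceil r'_i$ to reach the last $(i-1)$-clustered vertex $w_{j^*}$, you still need $\dist_H(w_{j^*},u)\le 6(i-1)\lceil k^{\epsilon}\rceil r'_i$. You delegate this to a ``strengthened statement'' that is neither pinned down (what ``rounded up to $r'_i$'' means matters for the constants) nor proved.

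As set up, that statement also cannot deliver what you need. Since $w_{j^*}$ has level at least $i$, a level-$(i-1)$ statement covers only $w_{j^*+1},\ldots,u$. The edge $w_{j^*}w_{j^*+1}$ must then be bridged separately, e.g.\ by \Cref{cl:key-stretch-NFT-basic} at level $\lambda(w_{j^*+1})$, at a cost of up to $2\lceil k^{\epsilon}\rceil r_{i-1}$, which can be nearly $2r'_i$. Likewise, chopping into segments of length $r'_{i_1}$ leaves a final partial segment that costs a full $6i_1\lceil k^{\epsilon}\rceil r'_{i_1}$, not something proportional to its length. The only slack in \Cref{cl:key-stretch-NFT-basic} is $r'_i$, so these boundary terms are not absorbed by your final sum $6\lceil k^{\epsilon}\rceil r'_i+6(i-1)\lceil k^{\epsilon}\rceil r'_i$. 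The natural completion of your route gives a recursion like $c_i\le c_{i-1}+8$ rather than $c_i=c_{i-1}+6$.

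The segment scheme also has an alignment problem. Starting a segment at ``the first vertex of the top level present'' leaves the initial part of the segment to further recursion. When the top level in a segment is below $i_1$, the hypothesis reaches less than one segment length. Each such recursion adds another boundary term.

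The missing idea, which is what the paper's proof uses, is to let the current vertex choose the jump length and never stop at $w_{j^*}$. Walk from $u$ toward $v$. From $u_j$ of level $q\le i-1$, jump to the vertex of $P$ at distance exactly $r'_q$. The induction hypothesis at level $q$, with $u_j$ as the unclustered vertex, bounds this jump by $6q\lceil k^{\epsilon}\rceil r'_q\le 6(i-1)\lceil k^{\epsilon}\rceil\,\dist_G(u_j,u_{j+1})$. So all jumps together cost at most $6(i-1)\lceil k^{\epsilon}\rceil r'_i$ with no boundary loss, however the levels are arranged.

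Stop as soon as you land on an $(i-1)$-clustered vertex, or the remaining distance to $v$ drops below $r'_q$. Then close in one step directly to $v$ via \Cref{cl:key-stretch-NFT-basic}. In the first case use level $i$. In the second use level $q$, which is valid because $v$ is $(q-1)$-clustered and within distance $r'_q$. Either way the closing costs at most $6\lceil k^{\epsilon}\rceil r'_i$, giving exactly $6i\lceil k^{\epsilon}\rceil r'_i$. The walk may jump over $w_{j^*}$, only one boundary cost is ever paid, and no strengthened lemma is needed.
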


\begin{proof}
The claim is proven by induction. For $i=1$ the claim is immediate by \Cref{lem:prop-NFT-BSsteps}(4). Assume that it holds for $i-1$ and consider an $(i-1)$-clustered and $i$-unclustered vertex $v$ and $u \in B_G(v,r'_i)$. 
Let $P$ be some $v$-$u$ shortest path in $G$. Assume first that $u$ is $(i-1)$-clustered. In this case, by \Cref{cl:key-stretch-NFT-basic}, $\dist_H(v,u)\leq 2\lceil k^{\epsilon}\rceil \cdot r_i=6\lceil k^{\epsilon}\rceil\cdot r'_i$. 

It remains to consider the case where $u$ is $(i-1)$-unclustered.
For a $(q-1)$-clustered and $q$-unclustered vertex $z$, let $r(z)=r'_q$. We traverse $P$ from $u$ to $v$ by taking the following steps until encountering a vertex on $P$ that is $(i-1)$-clustered (since $v$ is $(i-1)$-clustered, such vertex exists). 
Set $j=0$ and $u_0=u$. As long as $\dist_{P}(u_{j},v)\geq r(u_j)$ and $u_j$ is $(i-1)$-unclustered, define $u_{j+1}$ be the vertex at distance $r(u_j)$ from $u_j$ on $P$, and let $j=j+1$. Let $u_{j^*}$ be the last defined vertex. Since $\dist_P(u_j,u_{j+1})=r(u_j)\leq r'_{i-1}$ for every $j \in \{0,\ldots, j^*-1\}$, by the induction assumption we have:
$$\dist_H(u_{j^*},u)\leq 6(i-1)\cdot \lceil k^{\epsilon}\rceil\cdot \dist_G(u_{j^*},u)~.$$

We next bound $\dist_H(v,u_{j^*})$. By the stopping criteria, there are two cases to consider. Assume first that $u_{j^*}$ is $(i-1)$-clustered. Then, by \Cref{cl:key-stretch-NFT-basic}, $\dist_{H}(v,u_{j^*})\leq 2\lceil k^{\epsilon}\rceil\cdot r_i$. Otherwise, we have that $u_{j^*}$ is $(q-1)$-clustered and $q$-unclustered for $q\leq i-1$. Since $v$ is also $(q-1)$-clustered,  $\dist_G(v,u_{j^*})\leq r'_q$ and $r'_q\leq r'_{i-1}$, by \Cref{cl:key-stretch-NFT-basic}, we get that: 
$$\dist_H(v,u_{j^*})\leq 6\cdot \lceil k^{\epsilon}\rceil \cdot r'_{i-1}~.$$
Altogether: $\dist_H(v,u)\leq \dist_H(v,u_{j^*}) + \dist_H(u_{j^*},u) \leq 6i \cdot \lceil k^{\epsilon}\rceil \cdot r'_i$.
\end{proof}

\begin{claim}\label{cl:final-stretch-NFT}
For all $u,v \in V(G)$ we have $\dist_H(u,v) \leq 6\ell\cdot \lceil k^{\epsilon}\rceil\cdot \dist_G(u, v) + O(4^{1/\epsilon}\cdot k/\epsilon)$. 
\end{claim}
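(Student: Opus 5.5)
The plan is to reduce the global bound to the local guarantee of \Cref{cl:key-stretch-NFT} by cutting a shortest path into segments. Fix $u,v$ and a shortest $u$-$v$ path $P$ in $G$, of length $d=\dist_G(u,v)$. By \Cref{obs:final}, every vertex $z\in V$ is $(q-1)$-clustered and $q$-unclustered for a unique $q=q(z)\in\{1,\ldots,\ell\}$. Set $r(z)=r'_{q(z)}$, as in the proof of \Cref{cl:key-stretch-NFT}. Since $z$ is $(q-1)$-clustered and $q$-unclustered, that claim gives $\dist_H(z,w)\le 6q\lceil k^{\epsilon}\rceil r'_q\le 6\ell\lceil k^{\epsilon}\rceil\cdot r(z)$ for every $w\in B_G(z,r(z))$.

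Walk greedily along $P$ from $u$ toward $v$. Set $u_0=u$. While $\dist_P(u_j,v)\ge r(u_j)$, let $u_{j+1}$ be the vertex of $P$ at distance exactly $r(u_j)$ from $u_j$. Because $P$ is a shortest path, $\dist_G(u_j,u_{j+1})=r(u_j)$, so $u_{j+1}\in B_G(u_j,r(u_j))$. The claim above then gives
\[
\dist_H(u_j,u_{j+1})\le 6\ell\lceil k^{\epsilon}\rceil\cdot \dist_G(u_j,u_{j+1}).
\]
Summing over the steps and using that the segments are disjoint subpaths of $P$, the total cost up to the last point $u_{j^*}$ is at most $6\ell\lceil k^{\epsilon}\rceil\cdot d$.

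The final leftover segment from $u_{j^*}$ to $v$ has length less than $r(u_{j^*})$, so $v\in B_G(u_{j^*},r(u_{j^*}))$. Applying \Cref{cl:key-stretch-NFT} once more bounds it by $6\ell\lceil k^{\epsilon}\rceil\cdot r'_\ell$, an additive term independent of $d$. It remains to bound $\lceil k^{\epsilon}\rceil r'_\ell=\lceil k^{\epsilon}\rceil r_\ell/3$. Write $\Sigma_i=\sum_{j\le i}r_j$. The recurrence gives $\Sigma_i=\Sigma_{i-1}(1+3\lceil k^{\epsilon}\rceil)$, hence $r_\ell\le 3\lceil k^{\epsilon}\rceil(1+3\lceil k^{\epsilon}\rceil)^{\ell-2}$, which is roughly $4^{\ell}k^{\epsilon(\ell-1)}$. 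Since $\ell=\lceil 1/\epsilon\rceil$, this exponent is about $k^{1-\epsilon}$. Multiplying by $\ell\lceil k^{\epsilon}\rceil$ yields $O(4^{1/\epsilon}k/\epsilon)$, with the ceilings absorbed into the $4^{1/\epsilon}$ constant because $\lceil k^{\epsilon}\rceil\le 2k^{\epsilon}$ and $1+3\lceil k^{\epsilon}\rceil\le 7k^{\epsilon}$; strictly, this loses a $(7/4)^{\ell}$ factor that the stated constant would have to absorb. When $1/\epsilon$ is not integral, the last phase uses $k^{\epsilon'}$ steps, and the refinement in \Cref{cl:key-stretch-NFT-basic} for $i=\ell$ should keep the exponent at $1$ rather than $\epsilon\lceil 1/\epsilon\rceil$.

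I expect the main obstacle to be this exponent bookkeeping, not the path decomposition. A naive bound gives $k^{\epsilon\ell}$, which may exceed $k$ when $1/\epsilon\notin\mathbb{N}$. To fix it, I would bound the last segment more carefully: apply the $i=\ell$ version of \Cref{cl:key-stretch-NFT-basic}, which carries the factor $\lceil k^{\epsilon'}\rceil$ instead of $\lceil k^{\epsilon}\rceil$, and rerun the induction of \Cref{cl:key-stretch-NFT} at level $\ell$ with that factor. Since $\epsilon(\ell-1)+\epsilon'=1$, the additive term then becomes $O(4^{\ell}\ell k)$, as required.
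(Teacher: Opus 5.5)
Your walk from $u$ and the charging of its segments are correct. You also rightly diagnose the problem with closing the leftover segment from $u_{j^*}$ to $v$ via \Cref{cl:key-stretch-NFT}: it costs $6\ell\lceil k^{\epsilon}\rceil r'_\ell$, of order $k^{\epsilon\ell}$ up to factors depending only on $\epsilon$. That exceeds $k$ by the polynomial factor $k^{\epsilon-\epsilon'}$ when $1/\epsilon\notin\mathbb{N}$.

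The repair, however, does not remove this term as stated. The $i=\ell$ refinement of \Cref{cl:key-stretch-NFT-basic} needs \emph{both} endpoints to be $(\ell-1)$-clustered, and $v$ need not be. Rerunning the proof of \Cref{cl:key-stretch-NFT} at level $\ell$ handles this by walking from $v$ toward $u_{j^*}$ up to some $v_{j^{**}}$. But that proof bounds the walk's cost $6(\ell-1)\lceil k^{\epsilon}\rceil\cdot\dist_G(v,v_{j^{**}})$ using $\dist_G(v,v_{j^{**}})\le r'_\ell$. This yields the additive term $6(\ell-1)\lceil k^{\epsilon}\rceil r'_\ell$, which is the same order $k^{\epsilon\ell}$ you were trying to eliminate. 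Swapping the factor only improves the final basic-claim term to $6\lceil k^{\epsilon'}\rceil r'_\ell$, so the asserted additive bound $O(4^{\ell}\ell k)$ does not follow.

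The missing step is to charge that second walk to the multiplicative term instead. It runs along the part of $P$ between $v$ and $u_{j^*}$, disjoint from your first walk, so both walks together cost at most $6\ell\lceil k^{\epsilon}\rceil\dist_G(u,v)$. Let the walk from $v$ stop only when the remaining distance to $u_{j^*}$ drops below $r(v_{j^{**}})$. Then the leftover piece between $u_{j^*}$ and $v_{j^{**}}$ has length less than $\min\{r(u_{j^*}),r(v_{j^{**}})\}$, and there are two cases.

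If either endpoint is at some level $q\le\ell-1$, \Cref{cl:key-stretch-NFT} at that level gives an additive $O(\ell\lceil k^{\epsilon}\rceil r'_{\ell-1})$, where the exponent on $k$ is $\epsilon(\ell-1)<1$. Otherwise both endpoints are $(\ell-1)$-clustered, and the $i=\ell$ case of \Cref{cl:key-stretch-NFT-basic} gives $O(\lceil k^{\epsilon'}\rceil r'_\ell)$, with exponent $\epsilon(\ell-1)+\epsilon'=1$.

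This two-sided walk is exactly the paper's proof. The $(7/4)^{\ell}$ slack you flag is cosmetic: it only changes the base of the exponential in $1/\epsilon$, which the paper's $4^{1/\epsilon}$ treats just as loosely.
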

\begin{proof}
    Let $P$ be a $u$-$v$ shortest path. Set $j=0$ and $u_0=u$. Let, as before, $r(w)=r'_q$ such that $w$ is $(q-1)$-clustered and $q$-unclustered. 
    As long as $\dist_{P}(u_{j},v)\geq r(u_j)$ define $u_{j+1}$ be the vertex at distance $r(u_j)$ from $u_j$ on $P$, and let $j=j+1$. Let $u_{j^*}$ be the last defined vertex. By \Cref{cl:key-stretch-NFT}, $\dist_H(u, u_{j^*}) \leq 6\ell\cdot \lceil k^{\epsilon}\rceil\cdot \dist_G(u, u_{j^*})$.

    Do the same from the side of $v$: set $j=0$ and $v_0=v$. As long as $\dist_{P}(v_{j},u_{j^*})\geq r(v_j)$ define $v_{j+1}$ be the vertex at distance $r(v_j)$ from $v_j$ on $P$, and let $j=j+1$. Let $v_{j^{**}}$ be the last defined vertex. By \Cref{cl:key-stretch-NFT}, $\dist_H(v, v_{j^{**}}) \leq 6\ell\cdot \lceil k^{\varepsilon}\rceil\cdot \dist_G(v, v_{j^{**}})$.

    Assume first that either $u_{j^*}$ or $v_{j^{**}}$ is $(q-1)$-clustered and $q$-unclustered with $q \leq \ell-1$. Then we can again use \Cref{cl:key-stretch-NFT} to get $\dist_{H}(u_{j^*}, v_{j^{**}})\leq 6(\ell-1)\cdot \lceil k^{\epsilon}\rceil\cdot r'_{\ell-1} =O(1/\epsilon\cdot 4^{1/\epsilon}\cdot k^{(\ell - 1)\epsilon}) = O(4^{1/\epsilon}\cdot k/\epsilon)$ since $\dist_{G}(u_{j^*}, v_{j^{**}})\leq r_q' \leq r_{\ell-1}'$ by construction of $u_{j^*}$ and $v_{j^{**}}$.

    The remaining case is that both $u_{j^*}$ and $v_{j^{**}}$ are $(\ell - 1)$-clustered (and $\ell$-unclustered). Then can apply the special case $i = \ell$ of \Cref{cl:key-stretch-NFT-basic} and get $\dist_{H}(u_{j^*}, v_{j^{**}})\leq 6\cdot \lceil k^{\epsilon'}\rceil\cdot r'_{\ell} = O(4^{1/\epsilon}\cdot k^{(\ell - 1)\epsilon + \epsilon'}) = O(4^{1/\epsilon}\cdot k)$.

    The claim follows by the triangle inequality.  
\end{proof}

\begin{corollary}
    $H$ is a $(\alpha=O(\lceil k^{\epsilon}\rceil /\epsilon), \beta=O(4^{1/\epsilon}\cdot k/\epsilon))$-spanner. For example, for $\epsilon=400/\log k$ this provides stretch $O(\log k)$ for pairs at distance $\Omega(k^{1.01})$.
\end{corollary}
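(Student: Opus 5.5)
The plan is to read the corollary directly off \Cref{cl:final-stretch-NFT}, which gives, for the fixed execution with $|H|\le 4B$ and $N_{\ell+1}=0$,
\[
\dist_H(u,v)\le 6\ell\lceil k^{\epsilon}\rceil\cdot \dist_G(u,v)+O(4^{1/\epsilon}k/\epsilon).
\]
With $\ell=\lceil 1/\epsilon\rceil = O(1/\epsilon)$ the multiplicative term is $O(\lceil k^{\epsilon}\rceil/\epsilon)$. This is exactly the claimed $\alpha$, and the additive term is the claimed $\beta$.

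The size requirement is not part of the corollary's statement. It is nonetheless implicit, since $H$ is a spanner of the desired size, and it is already handled: the fixed execution satisfies $|H|\le 4B=\tilde O(3^{1/\epsilon}n^{1+1/k})$. By \Cref{obs:final} every vertex is $i$-unclustered for some $i\le \ell$. Hence the quantity $r(w)$ used in \Cref{cl:final-stretch-NFT} is defined for all $w$, and the claim applies to every pair $u,v$.

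For the example, I set $\epsilon=400/\log k$.
\begin{itemize}
\item Then $k^{\epsilon}=2^{400}$, a constant (taking $\log$ base $2$), so $\alpha=O(2^{400}\log k/400)=O(\log k)$.
\item Next, $4^{1/\epsilon}=2^{2\log k/400}=k^{1/200}$, so $\beta=O(k^{1+1/200}\log k)$.
\end{itemize}
For a pair at distance $d=\Omega(k^{1.01})$, I compare the additive term to $d$: $\beta/d=O(k^{0.005-0.01}\log k)=o(1)$. Thus $\dist_H(u,v)\le \alpha d+\beta=O(\log k)\cdot d$, which is the claimed $O(\log k)$ stretch.

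The only point requiring care is checking that the exponent arithmetic in $4^{1/\epsilon}$ leaves $\beta=o(k^{1.01})$; the rest is substitution. If the authors' logarithm base differs, the exponent $1/200$ changes by a constant factor but stays below $0.01$, so the conclusion is unaffected.
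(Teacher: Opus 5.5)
Your proposal is correct and matches the paper, which reads the corollary off \Cref{cl:final-stretch-NFT} with $\ell=\lceil 1/\epsilon\rceil$. Your arithmetic for the example ($k^{\epsilon}=2^{400}$ and $4^{1/\epsilon}=k^{1/200}$, hence $\beta=O(k^{1.005}\log k)=o(k^{1.01})$) is the intended substitution; the only implicit assumption is $\log k>400$, which is needed so that $\epsilon\in(0,1)$.
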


\subsection{$f$-VFT $(O_{\epsilon}(k^{\epsilon}),O_{\epsilon}(k))$-Spanner}\label{sec:BS-VFT-sec}

Recall \Cref{def:RP} and let  $\mathcal{G}_{f,L}$ be an $(L,f)$-\emph{\RP}. The algorithm has $\ell+1$ phases for $\ell=\lceil 1/\epsilon \rceil -1$. In each phase $i \geq 0$, we apply a (fault-tolerant) variant of the Baswana-Sen algorithm on a graph with in expectation at most $N_i$ vertices $V_i$. Initially $N_0=n$, $V_0=V$, and for every $i \in \{1,\ldots, \ell\}$:
\begin{equation}\label{eq:numberi-ni-FT-BS}
    N_i=\frac{n^{1-1/k^{1-i\epsilon}}}{g^{4^{2\ell-i+3}\cdot k^{i\epsilon}}} \mbox{~where~~}g=c  f \cdot \log n~,
\end{equation}
where $c$ is the constant from \Cref{cl:RPC-size}.

\paragraph{Phase 0: $\lceil k^{\epsilon}\rceil$-Steps of $f$-VFT Baswana-Sen}
Given the $n=N_0$-vertex graph $G$, the algorithm applies $\lceil k^{\epsilon} \rceil$ steps of the $f$-VFT Baswana-Sen by \cite{Parter22} with the only modification is that we set sampling probability $p_0$ to be considerably smaller\footnote{In the original algorithm of \cite{Parter22}, $p=(f/n)^{1/k}$.}:
\begin{equation}\label{eq:p0}
    p_0=\frac{1}{g^{4^{2\ell+2}}\cdot n^{1/k}}~.
\end{equation} 
At the end of this phase, we have a subgraph $H_0$ with $\tilde{O}(f\cdot n/p_0)$ edges and an $f$-VFT clustering $\mathcal{C}_0$ of depth $\lceil k^{\epsilon}\rceil$ rooted at centers $S_0 \subseteq V$. The algorithm also outputs a subset $V^*$ of all the vertices that are $f$-VFT clustered in $\mathcal{C}_0$, namely, belonging to $\Omega(k f)$ clusters in $\mathcal{C}_0$.

The following claim is immediate by \cite{Parter22}:
\begin{claim}\label{cl:guarantee-unclustered-VFT}
    For every $G$-edge $(u,v)$ for either $u$ or $v$ not in $V^*$ it holds that $\dist_{H_0 - F}(u,v)\leq 2\lceil k^\epsilon \rceil-1$ for every $F \in (V \setminus \{u,v\})^{\leq f}$. 
\end{claim}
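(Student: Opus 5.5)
We obtain \Cref{cl:guarantee-unclustered-VFT} by unpacking the guarantee of the $f$-VFT Baswana--Sen algorithm of \cite{Parter22} and checking that it does not depend on the choice of the sampling probability. That algorithm runs $x=\lceil k^{\epsilon}\rceil$ fault-tolerant clustering steps. In step $j$, every vertex that is incident to many clusters of the previous clustering (at least $\Omega(kf)$ of them after filtering) joins $\Theta(f)$ sampled clusters via $f$ vertex-disjoint attachments. Every other vertex $v$ instead keeps, for each incident cluster $C$, an $f$-VFT collection of edges, namely $f+1$ edges into $C$ landing at distinct vertices of $C$ (or all edges if fewer exist). The cluster trees themselves are kept in $H_0$. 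Accordingly, we define $V^*$ to be the vertices that remained $f$-VFT clustered through all $x$ steps, and we argue by induction on the step index $j$.

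\textbf{Inductive claim.} Let $v$ be a vertex that stops being $f$-VFT clustered at step $j$, i.e., it is clustered after step $j-1$ but is not in $V^*$. Then for every $G$-edge $(v,u)$ with $u$ clustered after step $j-1$, and every $F\in (V\setminus\{u,v\})^{\le f}$, we have $\dist_{H_0-F}(v,u)\le 2j-1$. Granting this claim for every $j\le x$, consider any $G$-edge $(u,v)$ with, say, $v\notin V^*$. Let $j$ be the smaller of the two steps at which $u$ and $v$ become unclustered, and assign the edge to whichever endpoint leaves first. That endpoint is the one treated by the claim, and the other endpoint is still clustered after step $j-1$. Hence $\dist_{H_0-F}(u,v)\le 2j-1\le 2\lceil k^\epsilon\rceil-1$.

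\textbf{Proving a single step.} Fix $F$, and let $C$ be the cluster of $u$ at level $j-1$. Because $u$ is $f$-VFT clustered at level $j-1$, the FT-cluster property of \cite{Parter22} gives a tree path from $u$ to a center $s$ of $C$ of depth at most $j-1$ avoiding $F$. Since $v$ is unclustered at step $j$, it kept $f+1$ edges into $C$. At least one of these, say $(v,w)$, has $w\notin F$, and $w$ has its own non-faulty path to $s$. This is where the VFT cluster structure of \cite{Parter22} is needed: each clustered vertex has fault-free tree paths to its center, even through a set of several overlapping trees, which is exactly what the guarantee on $V^*$ states. Chaining the pieces gives a path from $v$ to $w$ to $s$ to $u$ of length at most $1+(j-1)+(j-1)=2j-1$, all of it in $H_0-F$.

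\textbf{Main obstacle and size.} The delicate point is that the trees are not disjoint in the FT clustering, so we must ensure that $u$'s surviving path and $w$'s surviving path lead to a common center. I would handle this by invoking the invariant of \cite{Parter22} that the edges kept by $v$ are chosen per (center, tree) pair. Then $w$ can be taken in the same tree as $u$'s surviving route, which reduces the argument to the standard one above. Changing $p$ from $(f/n)^{1/k}$ to $p_0$ affects only the number of clusters and the size bound $\tilde O(fn/p_0)$, not this stretch argument, so the claim follows verbatim.
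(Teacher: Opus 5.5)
The paper gives no argument beyond citing \cite{Parter22}. The claim is that paper's stretch guarantee for edges incident to vertices outside $V^*$ after $\lceil k^{\epsilon}\rceil$ steps. Replacing the sampling probability by $p_0$ changes only the number of clusters and the size bound $\tilde O(fn/p_0)$. Your closing sentence makes exactly this observation, and that alone is the proof. The problem is the self-contained reconstruction before it, which does not actually prove the step it is meant to prove.

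The failing step is the assertion that among the $f+1$ edges $v$ keeps into $C$, some $(v,w)$ has $w\notin F$ \emph{and} $w$ has a non-faulty path to $s$. Distinct endpoints only give you a non-faulty endpoint, not a non-faulty tree path from that endpoint to $s$. Moreover, $w$ being $f$-VFT clustered only gives it a surviving route to \emph{some} center, possibly not $s$.

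For a concrete failure, take a level where the cluster radius is at least $2$. Let $s$ have children $y$ and $z$ in $T(s)$, let $u=y$, and let the kept endpoints $w_1,\dots,w_{f+1}$ all be children of $z$. With $F=\{z\}$, the route $u\to s$ survives and no $w_t$ is faulty. Yet every $w_t$ is cut off from $s$ inside the tree, so your chain $v\to w\to s\to u$ does not exist. Your remedy (edges chosen per (center, tree) pair) does not affect this example, since all the $w_t$ already lie in the same tree as $u$'s route.

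What the step needs is a certificate property for the edges kept into each tree: for every $F$ with $|F|\le f$, if some $G$-neighbor of $v$ in $T(s)$ has an $F$-free tree path to $s$, then so does some kept neighbor. This is not implied by having $f+1$ distinct endpoints. Establishing it, or whatever alternative mechanism \cite{Parter22} uses, is the real content of the cited result.

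So either state the claim as a direct consequence of \cite{Parter22} and note its independence from the sampling probability, as the paper does, or prove that certificate property. As written, the middle of your argument is unsupported.
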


\paragraph{Phase $i \in \{1,\ldots, \ell-1\}$.} 
The input to this phase is the set of centers $S_{i-1}$ of phase $i-1$. Denote $V_i=S_{i-1}$. Recall that $S_0$ is the set of $n\cdot p_0^{\lceil k^\epsilon \rceil}\leq n^{1-1/k^{1-\epsilon}}/g^{4^{2\ell+2}\cdot k^\epsilon}$ clusters, in expectation obtained at the end of Phase $0$. 

The algorithm applies the non-faulty BS procedure on a collection of virtual graphs defined on the vertex set $V_i$ based on their distances in a collection of $G$-subgraphs $\mathcal{G}_{f,r_i}$:

\begin{equation}\label{eq:numberi-FT-BS}
r_0=1 \mbox{~~and~~} r_i=3\lceil k^{\epsilon}\rceil\cdot \sum_{j=0}^{i-1}r_j \mbox{~for~} i \in \{1,\ldots,\ell\}.
\end{equation}

Importantly, the algorithm uses the same random coins in each of the following $|\mathcal{G}_{f,r_i}|$ iterations. 

\begin{itemize}
    \item For every $G_{i,j} \in \mathcal{G}_{f,r_i}$ do:
    \begin{enumerate}
        \item Define a virtual graph $\hat{G}_{i,j}=(V_i,\hat{E}_{i,j})$ where:
$$\hat{E}_{i,j}=\{(u,v) \in V_{i}\times V_{i} ~\mid~ \dist_{G_{i,j}}(u,v)\leq r_{i}\}.$$ 

   \item Apply $\lceil k^{\epsilon}\rceil$ Baswana-Sen steps on the graph $\hat{G}_{i,j}$ with the trivial initial (singleton) clustering and sampling probability of:

   \begin{equation}\label{eq:p-rad-FT-BS}
 p_i=\frac{N_i \cdot g^{R_i}}{n} \mbox{~where~} R_i=4^{i} k^{i\epsilon}~.
\end{equation}

  Let $\hat{H}_{i,j}$ and $\mathcal{C}_{i,j}$ be the output $\hat{G}_{i,j}$-subgraph and clustering, respectively, at the end of these $\lceil k^{\epsilon}\rceil$ steps. Denote the cluster centers of $\mathcal{C}_{i,j}$ by $S_i\subseteq V_i$. Since we use the same random coins for every iteration $j$, the set $S_i$ is the same in each iteration.

  \item Let $H_{i,j}$ be a $G$-subgraph obtained by taking a $u$-$v$ shortest path in $G_{i,j}$ for every $(u,v)\in \hat{H}_{i,j}$:
$$H_{i,j}=\bigcup_{(u,v)\in \hat{H}_{i,j}}SP(u,v,G_{i,j})~.$$
\end{enumerate}
    
\end{itemize}
This concludes the description of Phase $i$.  

\paragraph{Phase $i = \ell$.} This phase does exactly the same as previous phases, with the only exception that it applies only $\lceil k^{\epsilon'}\rceil$ steps of Baswana-Sen to each of the $\hat{G}_{\ell,j}$, where $\epsilon' = \epsilon - (\epsilon\lceil \frac1\epsilon \rceil - 1)$.

The output spanner is given by: 
$$H=H_0 \cup \bigcup_{i=1}^\ell \bigcup_{j=1}^{|\mathcal{G}_{f,r_i}|} H_{i,j}.$$

\noindent \textbf{Size Analysis.} We start by bounding the size of $H_i:= \bigcup_{j}H_{i,j}$ for every $i \in \{0,\ldots, \ell\}$. We make the following claims. 

\begin{claim}\label{cl:ni-size-FT}
    For every $0 \leq i\leq \ell$, $|V_i|\leq N_i$ in expectation and $|V_{\ell+1}| = 0$ w.h.p.
\end{claim}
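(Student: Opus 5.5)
The plan is an induction on $i$ that tracks the expected number of centers surviving each phase, mirroring \Cref{cl:properties} in the non-faulty construction.

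\textbf{Base case $i=0$.} Here $V_0=V$, so $|V_0|=n=N_0$ and there is nothing to prove.

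\textbf{Phase $0$ to phase $1$.} Each center of $S_0$ survives all $\lceil k^\epsilon\rceil$ steps of the $f$-VFT Baswana--Sen procedure of \cite{Parter22}. I will use that a center survives with probability $p_0^{\lceil k^\epsilon\rceil}$, as in the non-faulty step; if the FT variant samples differently, the expected survivor count is still $n\cdot p_0^{\lceil k^\epsilon\rceil}$. Plugging in \eqref{eq:p0} gives
\[
\mathbb E|V_1| \le n\cdot n^{-k^{\epsilon}/k}\, g^{-4^{2\ell+2}k^\epsilon}
= \frac{n^{1-1/k^{1-\epsilon}}}{g^{4^{2\ell+2}k^\epsilon}}.
\]
Since $4^{2\ell-1+3}=4^{2\ell+2}$, this is exactly $N_1$ from \eqref{eq:numberi-ni-FT-BS}.

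\textbf{Inductive step, $1\le i\le \ell-1$.} Assume $\mathbb E|V_i|\le N_i$. Because the same random coins are used for every $G_{i,j}$, the set $S_i=V_{i+1}$ is simply the set of vertices of $V_i$ that are retained in all $\lceil k^\epsilon\rceil$ sampling rounds. Each vertex is retained independently with probability $p_i^{\lceil k^\epsilon\rceil}\le p_i^{k^\epsilon}$, using $p_i\le 1$ (which follows from the parameter choice, or we cap $p_i$ at $1$). Conditioning on $V_i$ and taking expectation,
\[
\mathbb E|V_{i+1}| \le N_i\, p_i^{k^\epsilon}.
\]
Substituting $p_i=N_i g^{R_i}/n$ from \eqref{eq:p-rad-FT-BS} gives $N_i^{k^\epsilon+1}\, g^{R_i k^\epsilon}/n^{k^\epsilon}$. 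This splits into two checks:
\begin{itemize}[leftmargin=*]
\item \emph{$n$-exponent.} We have $(1-1/k^{1-i\epsilon})(k^\epsilon+1)-k^\epsilon = 1-k^{\epsilon}/k^{1-i\epsilon}-1/k^{1-i\epsilon} \le 1-1/k^{1-(i+1)\epsilon}$.
\item \emph{$g$-exponent.} The exponent is $-(k^\epsilon+1)4^{2\ell-i+3}k^{i\epsilon}+4^ik^{(i+1)\epsilon}$. We need it to be at most $-4^{2\ell-i+2}k^{(i+1)\epsilon}$. This reduces to $4^i+4^{2\ell-i+2}\le 4^{2\ell-i+3}$, which holds because $i\le \ell$ gives $4^i\le 4^{2\ell-i}$.
\end{itemize}

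\textbf{Final phase and the bound $|V_{\ell+1}|=0$.} Phase $\ell$ runs only $\lceil k^{\epsilon'}\rceil$ steps. The same computation gives
\[
\mathbb E|V_{\ell+1}| \le N_\ell^{k^{\epsilon'}+1} g^{R_\ell k^{\epsilon'}}/n^{k^{\epsilon'}}.
\]
The $n$-exponent is $1-k^{\epsilon'}/k^{1-\ell\epsilon}-\dots$. Since $\epsilon'=1-\ell\epsilon$ (using $\ell=\lceil 1/\epsilon\rceil-1$), we get $k^{\epsilon'}=k^{1-\ell\epsilon}$, so the exponent is $\le -1/k^{1-\ell\epsilon}<0$. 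The $g$-exponent is again negative, by the same inequality as above with $4^\ell\le 4^{\ell+3}$.

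\textbf{Main obstacle.} The hard part is the final phase. A negative exponent only yields $n^{-\Omega(1/k)}$, which need not be $o(1)$. To repair this I would use the slack in the $g$-exponent, $g^{-\Omega(k^{\epsilon'+\ell\epsilon})}=g^{-\Omega(k)}$. With $k\ge \log n$ (or with $\log n$-type factors absorbed), this is $n^{-\Omega(1)}$; alternatively one can add one extra sampling round. Markov's inequality then gives $|V_{\ell+1}|=0$ w.h.p. Verifying this exponent accounting carefully is the step I expect to require the most care.
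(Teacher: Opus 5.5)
Your induction is the same as the paper's, and your exponent bookkeeping for $0\le i\le \ell$ is correct. It is more explicit than the paper, which only says ``plug in''. The one real problem is the last step.

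You assert that the final bound ``need not be $o(1)$'' and then repair it either by assuming $k\ge\log n$ or by adding a sampling round. Neither repair is legitimate. The paper works in the regime $k=O(\log n)$, so $k\ge \log n$ is not an available hypothesis. An extra round changes the algorithm the claim is about. And no repair is needed. Using $\ell\epsilon+\epsilon'=1$, your own expression evaluates to
\[
\mathbb{E}|V_{\ell+1}| \le N_\ell\, p_\ell^{k^{\epsilon'}} = n^{-1/k^{\epsilon'}}\cdot g^{-(4^{\ell+3}-4^{\ell})k-4^{\ell+3}k^{\ell\epsilon}} \le g^{-1} = O\!\left(\frac{1}{f\log n}\right)
\]
for every $k\ge 1$, because $g=cf\log n\to\infty$. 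Note also that the $n$-factor is $n^{-1/k^{\epsilon'}}$, not merely $n^{-\Omega(1/k)}$, and it is at most $1$ anyway. Markov's inequality then gives $\Pr[V_{\ell+1}\neq\emptyset]=o(1)$.

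This is precisely what the paper proves: it shows the expectation is $o(1)$ and concludes by Markov. Probability $1-o(1)$ is also the sense in which ``w.h.p.'' is used here. The later argument only needs $V_{\ell+1}=\emptyset$ with probability $1-o(1)$, so that it can be union-bounded with the other $1-o(1)$ events and the $3/4$-probability Markov size event. So replace your ``main obstacle'' paragraph by this one-line estimate. If you had in mind failure probability $n^{-\Omega(1)}$, that stronger reading is neither what the paper establishes nor what it uses.
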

\begin{proof}
The proof is shown by induction on $i$. For $i=0$, $|V_0|=n$. Assume it holds up to $i$ and consider $i+1$. The set of vertices $V_{i+1}$ corresponds to the output cluster centers of applying $\lceil k^{\epsilon}\rceil$ BS steps in each $V_i$-vertex graph $G_{i,j}\in \mathcal{G}_{f,r_i}$. Since we use the same random coins in each of these applications, using \Cref{lem:prop-NFT-BSsteps}(1), we have that in expectation:
$$|V_{i+1}|=|V_i| \cdot p_i^{\lceil k^{\epsilon}\rceil} \leq|V_i| \cdot p_i^{k^{\epsilon}} \leq N_i \cdot p_i^{k^{\epsilon}}\leq N_{i+1},$$ 
where the second inequality follows by the induction assumption, the claim follows by plugging Eq. (\ref{eq:numberi-ni-FT-BS}) and (\ref{eq:p-rad-FT-BS}).

Similarly,
$$|V_{\ell+1}|=|V_\ell| \cdot p_\ell^{\lceil k^{\epsilon'}\rceil} \leq|V_\ell| \cdot p_\ell^{k^{\epsilon'}} \leq N_\ell \cdot p_\ell^{k^{\epsilon'}} = o(1),$$
and so $|V_{\ell+1}| = 0$ with high probability.
\end{proof}

By the VFT spanner algorithm of \cite{Parter22}, it holds that $|H_0|=\tilde{O}(fn/p_0)$, hence by Eq. (\ref{eq:p0}):
\begin{claim}\label{cl:ni-size-FT-baseH}
   $|H_0|=\tilde{O}((c f \log n)^{4^{2\ell+2}+1}n^{1+1/k})~.$ 
\end{claim}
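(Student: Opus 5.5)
The bound follows by plugging the choice of $p_0$ from Eq.~(\ref{eq:p0}) into the size guarantee of the $f$-VFT Baswana--Sen procedure of \cite{Parter22}. That guarantee states that running $\lceil k^{\epsilon}\rceil$ steps with sampling probability $p$ produces a subgraph with $\tilde{O}(f\cdot n/p)$ edges, w.h.p. I would first verify that \cite{Parter22} indeed allows an arbitrary sampling probability $p$ as an input, rather than only $p=(f/n)^{1/k}$.

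The mechanism behind the bound is as follows. A vertex that is not $f$-VFT clustered in the next step is, w.h.p., incident to only $O(f\log n/p)$ clusters. Otherwise, by a Chernoff bound, $\Omega(f)$ of its incident clusters would be sampled, making it VFT-clustered. Each such vertex adds $O(1)$ edges per incident cluster, for a total of $O(fn\log n/p)$ edges per step. Summing over the $\lceil k^{\epsilon}\rceil = O(\log n)$ steps adds only a polylogarithmic factor, which is absorbed in the $\tilde{O}$.

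Substituting $p_0 = 1/(g^{4^{2\ell+2}} n^{1/k})$ with $g = cf\log n$ gives
\[
|H_0| = \tilde{O}\left(f\cdot n\cdot g^{4^{2\ell+2}}\cdot n^{1/k}\right).
\]
Bounding $f \le g$ then yields $\tilde{O}(g^{4^{2\ell+2}+1} n^{1+1/k})$, which is the claim.

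The only step that needs care is the first one: confirming that the \cite{Parter22} analysis tolerates this much smaller $p$. I would check that the VFT-clustering threshold of $\Theta(kf)$ incident clusters still yields the $f/p$ dependence, and not a worse one, when $p$ is much smaller than the original $(f/n)^{1/k}$. Since the Chernoff argument above depends only on the product $p$ times the number of incident clusters, I expect it to go through unchanged.
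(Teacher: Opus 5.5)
Your proposal is correct and follows the paper's proof exactly: the paper takes the $\tilde{O}(fn/p_0)$ size bound of the $f$-VFT Baswana--Sen steps of \cite{Parter22} as given, substitutes $p_0$ from Eq.~(\ref{eq:p0}), and absorbs the leading factor $f$ into $g=cf\log n$, just as you do. Your mechanism sketch, and the check that \cite{Parter22} still works with the smaller sampling probability, go beyond what the paper spells out, since it uses that guarantee as a black box.
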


\begin{claim}\label{cl:ni-size-FT-H}
   Fix $i \in \{1,\ldots, \ell\}$, we have $\mathbb E[|H_i|]=O(4^{i} \cdot k^{i\epsilon+1} \cdot g^2 \cdot n)$. 
\end{claim}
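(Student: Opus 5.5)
Fix $i\in\{1,\ldots,\ell\}$. The plan is to bound $|H_i|=|\bigcup_j H_{i,j}|$ by summing over the members of the \RP family. Three quantities enter: the number of subgraphs $|\mathcal{G}_{f,r_i}|$, the expected number of virtual edges $\mathbb{E}[|\hat H_{i,j}|]$ in each BS output, and the path length $r_i$ that each virtual edge expands to. Concretely,
$$|H_i|\le \sum_j r_i\,|\hat H_{i,j}|.$$
Conditioning on $V_i$ and applying \Cref{lem:prop-NFT-BSsteps}(3) with $x=\lceil k^\epsilon\rceil$ (or $\lceil k^{\epsilon'}\rceil$ when $i=\ell$), $N=|V_i|$ and probability $p_i$, we get
$$\mathbb{E}\big[|\hat H_{i,j}|\,\big|\,V_i\big]=O(\lceil k^\epsilon\rceil\,|V_i|/p_i).$$
Taking expectations and using \Cref{cl:ni-size-FT}, which gives $\mathbb{E}|V_i|\le N_i$, this becomes $O(\lceil k^\epsilon\rceil N_i/p_i)$. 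By Eq.~(\ref{eq:p-rad-FT-BS}), $N_i/p_i=n/g^{R_i}$. Since the random coins are shared across $j$, linearity of expectation over $j$ is all that is needed.

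\textbf{Step 1: the factors other than $|\mathcal{G}_{f,r_i}|$.} I would bound $r_i$ directly from the recursion~(\ref{eq:numberi-FT-BS}). Since $r_i=3\lceil k^\epsilon\rceil\sum_{j<i}r_j$, induction gives $r_i\le (3\lceil k^\epsilon\rceil+1)^{i}\le 4^i\lceil k^\epsilon\rceil^i$ for $k$ large enough, hence $r_i=O(4^i k^{i\epsilon})$. Combining,
$$\mathbb{E}|H_i|=O\!\left(|\mathcal{G}_{f,r_i}|\cdot 4^i k^{i\epsilon}\cdot k^{\epsilon}\cdot \frac{n}{g^{R_i}}\right).$$

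\textbf{Step 2: the size of the \RP family.} By \Cref{cl:RPC-size} with $L=r_i$,
$$|\mathcal{G}_{f,r_i}|\le (cf\log n)^{r_i+2}=g^{r_i+2}.$$
Since $R_i=4^ik^{i\epsilon}$ is chosen to dominate $r_i$, we have $r_i\le R_i$, so $g^{r_i+2}/g^{R_i}\le g^2$. This yields
$$\mathbb{E}|H_i|=O(4^i k^{i\epsilon+\epsilon} g^2 n)\le O(4^i k^{i\epsilon+1}g^2 n),$$
using $\epsilon\le 1$.

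\textbf{Main obstacle.} The one delicate point is the inequality $r_i\le R_i$, which has to hold exactly in the exponent of $g$ and not merely up to constants. The clean route is to prove $r_i\le (3\lceil k^\epsilon\rceil+1)^i$ by induction: the sum satisfies $\sum_{j\le i}r_j\le(3\lceil k^\epsilon\rceil+1)^i$, so $r_{i+1}\le 3\lceil k^\epsilon\rceil(3\lceil k^\epsilon\rceil+1)^i$. One then compares with $4^ik^{i\epsilon}$ using $3\lceil k^\epsilon\rceil+1\le 4k^\epsilon$, which holds once $k^\epsilon\ge 4$; the ceiling costs are absorbed there.

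If this comparison is loose in the edge regime where $k^\epsilon$ is small, I would fall back on the slack in the stated bound. The claimed bound carries a $k^{1}$ factor rather than $k^{\epsilon}$, and $g$ may be taken sufficiently large. Either of these absorbs a constant discrepancy in the exponent, and constant factors in $r_i$ are absorbed by the $O(\cdot)$.
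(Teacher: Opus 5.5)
Your proposal follows the paper's proof. You bound $|\mathcal{G}_{f,r_i}|\le g^{R_i+2}$ using $r_i\le R_i$, bound $\mathbb{E}[|\hat H_{i,j}|]=O(k^{\epsilon}n/g^{R_i})$ via \Cref{lem:prop-NFT-BSsteps}(3), \Cref{cl:ni-size-FT} and Eq.~(\ref{eq:p-rad-FT-BS}), multiply by the path length $r_i=O(4^ik^{i\epsilon})$, and sum over $j$. Your explicit check of $r_i\le R_i$, via $\sum_{j\le i}r_j=(3\lceil k^{\epsilon}\rceil+1)^i$ and $k^{\epsilon}\ge 4$, is a useful addition, since the paper only asserts it.

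You should drop the fallback paragraph, though. If $r_i>R_i$, the loss is a factor $g^{r_i-R_i}$. The spare $k^{1-\epsilon}$ cannot absorb it, and enlarging $g$ only makes it worse. For small $k^{\epsilon}$, simply assume $k^{\epsilon}\ge 4$, as the paper tacitly does.
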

\begin{proof}
   Since $R_i\geq r_i$, for every $i$, by \Cref{cl:RPC-size}, we have that $|\mathcal{G}_{f,r_i}|\leq (c f \log n)^{R_i+2}$. By Eq. (\ref{eq:p-rad-FT-BS}), \Cref{cl:ni-size-FT} and \Cref{lem:prop-NFT-BSsteps}(3), for every $j \in \{1,\ldots, |\mathcal{G}_{f,r_i}|\}$, we have that:
$$\mathbb E[|\hat{H}_{i,j}|]=O\left(k^{\epsilon}\cdot \frac{\mathbb E[|V_i|]}{p_i}\right)\leq O\left(k^{\epsilon}\cdot \frac{N_i}{p_i}\right)=O\left(\frac{k^{\epsilon}\cdot n} {g^{R_i}}\right)~.$$
As each edge in $\hat{H}_{i,j}$ corresponds to a $G_{i,j}$-path of length at most $r_i$, we have that $\mathbb E[|H_{i,j}|]=O(4^{i} \cdot k^{i\epsilon+1} \cdot n / g^{R_i})$. Since $|\mathcal{G}_{f,r_i}|\leq g^{R_i+2}$, by linearity of expectation we have that:
$$\mathbb E[|H_i|]=O(4^{i} \cdot k^{i\epsilon+1} \cdot g^2 \cdot n),$$
and, using linearity of expectation once again, 
$$\mathbb E[|\bigcup_{i = 1}^{\ell} H_i|]=\tilde{O}(4^{1/\epsilon} \cdot g^2 \cdot n)~.$$
\end{proof}

Let $B = \tilde{O}(4^{1/\epsilon} \cdot g^2 \cdot n)$ be an upper bound on $|\bigcup_{i = 1}^{\ell} H_i|$. By Markov's inequality, $|\bigcup_{i = 1}^{\ell} H_i|\leq 4B$ with probability at least $3/4$. With probability $1-o(1)$, all RPC families have their required covering properties, Phase 0 satisfies its fault-tolerance guarantees and the size bound of \Cref{cl:ni-size-FT-baseH}, and $V_{\ell + 1} = \emptyset$. By a union bound, these events and the size event hold simultaneously with positive probability. Fix such an execution for the stretch analysis.

\paragraph{Stretch Analysis.} We start by defining the notion of clustering in the presence of faults. Throughout, fix a subset of at most $f$ vertex faults $F$. A vertex $v$ is $(0,F)$-clustered if $v \in V^*$ and otherwise it is $(0,F)$-unclustered. By \cite{Parter22}, we have:
\begin{claim}[\cite{Parter22}]\label{cl:clustered-VFT-BS}
    For every $v \in V \setminus (V^* \cup F)$, it holds that $\dist_{H  -F}(u,v)\leq 2\lceil k^{\epsilon}\rceil -1$ for any neighbor $u \notin F$ of $v$.
\end{claim}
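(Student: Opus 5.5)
The plan is to derive the statement directly from \Cref{cl:guarantee-unclustered-VFT}, using monotonicity of distances under taking supergraphs. No new structural argument should be needed, because the claim only concerns edges with an endpoint outside $V^*$, and these are fully handled by Phase~0.

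First I check that the hypotheses match. Fix the fault set $F$ with $|F|\le f$, a vertex $v\in V\setminus(V^*\cup F)$, and a neighbor $u\notin F$ of $v$. Then $(u,v)$ is a $G$-edge with the endpoint $v\notin V^*$. Since $u,v\notin F$, we have $F\in (V\setminus\{u,v\})^{\le f}$. \Cref{cl:guarantee-unclustered-VFT} therefore applies to the edge $(u,v)$ and the set $F$, and yields $\dist_{H_0-F}(u,v)\le 2\lceil k^{\epsilon}\rceil-1$.

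Next I lift this to $H$. By definition $H=H_0\cup\bigcup_{i,j}H_{i,j}\supseteq H_0$, hence $H-F\supseteq H_0-F$. Any $u$--$v$ path in $H_0-F$ is also a path in $H-F$, so $\dist_{H-F}(u,v)\le \dist_{H_0-F}(u,v)\le 2\lceil k^{\epsilon}\rceil-1$. This uses the execution fixed earlier, in which Phase~0 satisfies the fault-tolerance guarantees of \cite{Parter22}.

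The only step that needs real care is confirming that \Cref{cl:guarantee-unclustered-VFT} survives the change of sampling probability to $p_0$. I expect this to hold: the \cite{Parter22} argument never uses the specific value of $p$ for stretch. It only uses the following rule at each of the $\lceil k^\epsilon\rceil$ steps. A vertex that is not $f$-robustly clustered by the sampled clusters (i.e., not in $\Omega(kf)$ of them) adds $f+1$ edges to each incident cluster; equivalently, each cluster tree tolerates the faults in $F$. Thus a non-faulty edge into a non-faulty part of the neighboring cluster tree of radius at most $i-1$ survives, giving stretch $2i-1\le 2\lceil k^\epsilon\rceil-1$. The probability $p_0$ affects only the size bound $\tilde O(fn/p_0)$, which is already accounted for in \Cref{cl:ni-size-FT-baseH}.
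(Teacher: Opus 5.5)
Your proposal is correct and matches the paper, which simply imports this claim from \cite{Parter22}: it is \Cref{cl:guarantee-unclustered-VFT} transferred from $H_0$ to $H\supseteq H_0$ in the fixed good execution, exactly as you argue. Your final paragraph is unnecessary, since the paper also takes the \cite{Parter22} guarantee under the modified $p_0$ as a black box, and its sketch is loose: $v$ adding $f+1$ edges into a cluster only guarantees a non-faulty endpoint, not a fault-free tree path from that endpoint to $u$, so it is not ``equivalent'' to the cluster tolerating $F$ and should not be offered as the justification.
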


For every $i \in \{1,\ldots, \ell\}$, we first define the notion of $i$-clustered for vertices in $V_i$ and then extend it to any vertex in $V$. A vertex $v \in V_{i}$ is $(i,F)$-clustered if there exists an index $j \in \{1,\ldots, |\mathcal{G}_{f,r_i}|\}$ such that: 
$$v \in V(\mathcal{C}_{i,j}) \mbox{~~and~~} F \cap V(G_{i,j})=\emptyset,$$
and define $c_{i,F}(v)$\footnote{When $F$ is understandable from  context, we shall sometimes omit it and write $c_{i}(v)$.} to be its cluster center in $\mathcal{C}_{i,j}$.
A vertex $v \in V$ is $(i,F)$-clustered if $v$ is $(i-1,F)$-clustered and in addition, $c_{i-1,F}(v)$ is $(i,F)$-clustered. Since $c_{i-1,F}(v) \in V_i$, this is well-defined. For such a vertex $v$ we define $c_{i, F}(v) = c_{i,F}(c_{i-1, F}(v))$.
For each vertex $v \in V^*\setminus F$ we define $c_{0,F}(v)$ to be a center in $S_0$ such that there is a path of length $\leq \lceil k^\epsilon \rceil$ from $v$ to $c_{0,F}(v)$ that survives faults $F$. Such a path exists by definition of the fault-tolerant clustering. 
For ease of notation throughout this section, let $\mathcal{C}_{-1}=\{\{v\}, v \in V\}$ with $c_{-1,F}(v)=v$ and thus every $v \in V \setminus F$ is $(-1,F)$-clustered. 
Since $V_{\ell+1}=\emptyset$, we have:
\begin{observation}\label{obs:final}
    Any vertex $u$ is $(i,F)$-unclustered for some $i\leq \ell$.
\end{observation}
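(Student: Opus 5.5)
The observation is the fault-tolerant analogue of \Cref{obs:final} from the non-faulty section. I would derive it directly from $V_{\ell+1}=\emptyset$ (\Cref{cl:ni-size-FT}, which holds in the fixed execution) together with the inductive definition of $(i,F)$-clustering.

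Fix $F$ and a vertex $u\notin F$. The first step is to establish monotonicity. By definition, a vertex of $V$ is $(i,F)$-clustered only if it is $(i-1,F)$-clustered. Moreover, the center $c_{i-1,F}(u)$ must be $(i,F)$-clustered as a vertex of $V_i$. Being $(i,F)$-clustered as a vertex of $V_i$ means that $c_{i-1,F}(u)\in V(\mathcal{C}_{i,j})$ for some $j$ with $F\cap V(G_{i,j})=\emptyset$, and then $c_{i,F}(u)$ is defined as the center of that cluster. Hence the set of indices $i\in\{-1,0,\ldots,\ell\}$ for which $u$ is $(i,F)$-clustered is downward closed.

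The second step is to show that $u$ is not $(\ell,F)$-clustered. Suppose it were. Then $c_{\ell,F}(u)$ would be a center of some clustering $\mathcal{C}_{\ell,j}$. That is, it would be a vertex of $S_\ell=V_{\ell+1}$. This contradicts $V_{\ell+1}=\emptyset$.

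Since every $u\notin F$ is $(-1,F)$-clustered, we can now take the largest $i'\in\{-1,\ldots,\ell-1\}$ such that $u$ is $(i',F)$-clustered. Then $u$ is $(i'+1,F)$-unclustered with $i'+1\le \ell$, as required. Vertices in $F$ are trivially unclustered at every level, including level $0$, since they are not in $V^*\setminus F$.

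The only subtle point is level $0$. The definition of $(0,F)$-clustered is membership in $V^*$, and the center $c_{0,F}(v)$ is given by a surviving tree path rather than by the generic rule used for $i\ge 1$. The check needed here is that $c_{0,F}(v)\in S_0=V_1$, so that the level-$1$ definition applies to it. This holds by the definition of the fault-tolerant clustering of \cite{Parter22}.
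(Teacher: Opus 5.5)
Your proof is correct and follows the paper's approach: the paper justifies the observation only by noting $V_{\ell+1}=S_\ell=\emptyset$, so no vertex is $(\ell,F)$-clustered, and you add the downward closedness of the definition to obtain the level $i\le\ell$ at which $u$ first becomes unclustered. One small correction: the paper's literal definition makes every vertex of $V^*$ (not $V^*\setminus F$) $(0,F)$-clustered, so your claim that vertices of $F$ are unclustered at level $0$ is not what the definition says, but this is immaterial because $i=\ell$ already works for every vertex.
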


\begin{observation}\label{obs:rad-FT-step}
    Let $i \in \{0,\ldots, \ell-1\}$. For any $(i,F)$-clustered vertex $v \in V \setminus F$ it holds that $\dist_{H - F}(v,c_{i,F}(v))\leq r_{i+1}/3$.
\end{observation}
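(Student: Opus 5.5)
We argue by induction on $i$, mirroring the proof of \Cref{obs:rad-NFT-step}, with the fault-free clustering replaced by $(i,F)$-clustering. The goal at each step is a path in $H-F$ from $v$ to $c_{i,F}(v)$.

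\textbf{Base case $i=0$.} Let $v\in V^*\setminus F$ be $(0,F)$-clustered. By the definition of $c_{0,F}(v)$, some path of length at most $\lceil k^\epsilon\rceil$ from $v$ to $c_{0,F}(v)$ survives $F$. It consists of tree edges of the VFT clustering $\mathcal{C}_0$, all of which are included in $H_0\subseteq H$. Since $r_0=1$, we have $r_1/3=\lceil k^\epsilon\rceil$, so $\dist_{H-F}(v,c_{0,F}(v))\le r_1/3$.

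\textbf{Inductive step $i\ge 1$.} Let $v$ be $(i,F)$-clustered and set $z=c_{i-1,F}(v)\in V_i$. By definition $z$ is $(i,F)$-clustered. Hence there is an index $j$ with $z\in V(\mathcal{C}_{i,j})$ and $F\cap V(G_{i,j})=\emptyset$, and $c_{i,F}(v)$ is the center of $z$ in $\mathcal{C}_{i,j}$.

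By \Cref{lem:prop-NFT-BSsteps}(2), $z$ is joined to its center by a path of at most $\lceil k^\epsilon\rceil$ edges of $\hat H_{i,j}$. Each such virtual edge $(a,b)$ contributes the path $SP(a,b,G_{i,j})$ to $H_{i,j}$, which has length at most $r_i$. This path lies entirely inside $G_{i,j}$, and $G_{i,j}$ avoids $F$, so it survives in $H-F$. This fault-avoidance is the key point, and it is exactly why the construction uses the RPC subgraphs rather than $G$. It gives
\[
\dist_{H-F}(z,c_{i,F}(v))\le \lceil k^\epsilon\rceil\, r_i .
\]

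The induction hypothesis gives $\dist_{H-F}(v,z)\le r_i/3$. One technical point: the hypothesis requires $v\notin F$ but not $z\notin F$. Still, $z$ lies in $V(G_{i,j})$, which is disjoint from $F$, so the concatenated path avoids $F$.

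\textbf{Combining.} By the triangle inequality in $H-F$,
\[
\dist_{H-F}(v,c_{i,F}(v))\le r_i/3+\lceil k^\epsilon\rceil r_i .
\]
It remains to check this is at most $r_{i+1}/3=\lceil k^\epsilon\rceil\sum_{j=0}^{i} r_j$. Unrolling the hypothesis' bound gives $r_i/3=\lceil k^\epsilon\rceil\sum_{j=0}^{i-1}r_j$, and adding $\lceil k^\epsilon\rceil r_i$ gives exactly $r_{i+1}/3$.

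\textbf{Main obstacle.} The only delicate step is justifying that the virtual-edge paths and the cluster tree edges survive $F$. For $i\ge1$ this follows from the choice of $j$ with $F\cap V(G_{i,j})=\emptyset$. For $i=0$ it follows from the VFT clustering guarantee of \cite{Parter22}.
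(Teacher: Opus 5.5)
Your proof is correct and follows the paper's argument essentially verbatim: induction on $i$, with the base case coming from the surviving path of length at most $\lceil k^\epsilon\rceil$ in the VFT clustering, and the inductive step combining the hypothesis $\dist_{H-F}(v,z)\le r_i/3$ with the at most $\lceil k^\epsilon\rceil$ tree edges of $\hat H_{i,j}$, each realized by a path of length at most $r_i$ inside the fault-free subgraph $G_{i,j}$. Your extra remark about $z\notin F$ is harmless but unnecessary, since the finiteness of $\dist_{H-F}(v,z)$ given by the induction hypothesis already forces $z\notin F$.
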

\begin{proof}
The proof is shown by induction on $i$. For $i=0$, we have that $\dist_{H - F}(v, c_{0,F}(v))\leq \lceil k^{\epsilon}\rceil \leq r_1/3$ as needed, see Eq. (\ref{eq:numberi-FT-BS}). Assume that the claim holds up to $i-1$ and consider an $(i,F)$-clustered vertex $v$.
Let $z=c_{i-1,F}(v)$. By definition, we have that $c_{i,F}(z)=c_{i,F}(v)$. Let $G_{i,j} \in \mathcal{G}_{f,r_i}$ be such that $z \in V(\mathcal{C}_{i,j})$, $c_{i,F}(z)$ is the center of $z$ in $\mathcal{C}_{i,j}$ and $F \cap V(G_{i,j})=\emptyset$.

By \Cref{lem:prop-NFT-BSsteps}(2) and plugging $x=\lceil k^{\epsilon}\rceil$, we have $\dist_{\hat{H}_{i,j}}(z,c_{i,F}(v))\leq \lceil k^{\epsilon}\rceil$. As each edge in $\hat{H}_{i,j}$ corresponds to a path of length at most $r_i$ in $G_{i,j}$, we have $\dist_{H_{i,j}}(z,c_i(z))\leq \lceil k^{\epsilon}\rceil \cdot r_i$. By combining with the induction assumption 
for $i-1$:
\begin{eqnarray*}
    \dist_{H - F}(v,c_{i,F}(v))\leq \dist_{H - F}(v,z)+\dist_{H - F}(z,c_{i}(v))\leq r_{i}/3+ \lceil k^{\epsilon}\rceil \cdot r_{i}=\lceil k^{\epsilon} \rceil\cdot \sum_{j=0}^{i} r_j=r_{i+1}/3~.
\end{eqnarray*}
   
\end{proof}

Let $r'_0=1$ and $r'_i=r_i/3$ for every $i \in \{1,\ldots, \ell\}$.
\begin{claim}\label{cl:key-stretch-FT-basic}
Let $v$ be an $(i-1,F)$-clustered and $(i,F)$-unclustered vertex. For any $(i-1,F)$-clustered vertex $u \in B_{G- F}(v,r'_i)$, it holds that
    $\dist_{H- F}(v,u)\leq 2\lceil k^{\epsilon}\rceil\cdot r_i$ (and hence, $\dist_{H - F}(v,u)\leq 6\lceil k^{\epsilon}\rceil\cdot r'_i$).
    
Moreover, if $i = \ell$ we actually have $\dist_{H- F}(v,u)\leq 2\lceil k^{\epsilon'}\rceil\cdot r_i$.
\end{claim}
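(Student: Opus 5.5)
The plan is to follow the non-faulty proof of \Cref{cl:key-stretch-NFT-basic}, with the triangle-inequality steps taken in $G-F$ and $H-F$, and to use the \RP family $\mathcal{G}_{f,r_i}$ to pick a subgraph that avoids $F$. Let $a=c_{i-1,F}(v)$ and $b=c_{i-1,F}(u)$; both lie in $V_i$. The case $i=0$ is \Cref{cl:clustered-VFT-BS}, so assume $i\ge 1$.

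\textbf{Step 1: $a$ and $b$ are close in $G-F$.} By \Cref{obs:rad-FT-step} applied with index $i-1$,
\[
\dist_{H-F}(v,a)\le r_i/3 \quad\text{and}\quad \dist_{H-F}(u,b)\le r_i/3 .
\]
For $i=1$ these paths are the fault-free cluster paths of length at most $\lceil k^\epsilon\rceil$ provided by the VFT clustering. Since $\dist_{G-F}(v,u)\le r'_i=r_i/3$, the triangle inequality gives $\dist_{G-F}(a,b)\le r_i$. Let $P=SP(a,b,G-F)$, the replacement path $P(a,b,F)$, which has at most $r_i$ edges. By the covering property of the $(r_i,f)$-\RP $\mathcal{G}_{f,r_i}$, there is an index $j$ with $P\subseteq G_{i,j}$ and $F\cap V(G_{i,j})=\emptyset$. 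Hence $\dist_{G_{i,j}}(a,b)\le r_i$, so $(a,b)\in\hat{E}_{i,j}$.

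\textbf{Step 2: the unclustered endpoint $a$.} This is the main obstacle. Since $v$ is $(i-1,F)$-clustered but $(i,F)$-unclustered, $a$ is not $(i,F)$-clustered. Because $G_{i,j}$ avoids $F$, the definition of $(i,F)$-clustered then forces $a\notin V(\mathcal{C}_{i,j})$. Note that we use this specific $j$, which works because the definition quantifies over all $j$ with $F\cap V(G_{i,j})=\emptyset$. Applying \Cref{lem:prop-NFT-BSsteps}(4) to the run on $\hat{G}_{i,j}$ gives
\[
\dist_{\hat{H}_{i,j}}(a,b)\le 2\lceil k^{\epsilon}\rceil-1,
\]
or $2\lceil k^{\epsilon'}\rceil-1$ when $i=\ell$. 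Each edge of $\hat{H}_{i,j}$ is replaced by a shortest path in $G_{i,j}$. These paths have length at most $r_i$ and lie entirely inside $G_{i,j}$, so they avoid $F$. Therefore
\[
\dist_{H_{i,j}-F}(a,b)\le (2\lceil k^{\epsilon}\rceil-1)\,r_i .
\]

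\textbf{Step 3: combine.} By the triangle inequality in $H-F$,
\[
\dist_{H-F}(v,u)\le \tfrac{2r_i}{3}+(2\lceil k^{\epsilon}\rceil-1)\,r_i\le 2\lceil k^{\epsilon}\rceil\, r_i ,
\]
and the same computation with $\epsilon'$ handles $i=\ell$. Since $r_i=3r'_i$, this also gives the bound $6\lceil k^{\epsilon}\rceil r'_i$.

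The points I would double-check are that $a$, the center of $v$ at level $i-1$, is well defined and not in $F$, and the $i=1$ base case, where the centers come from the Phase 0 VFT clustering rather than from \RP subgraphs.
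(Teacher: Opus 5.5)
Your proof is correct and follows the paper's argument essentially step for step: you bound $\dist_{G-F}(a,b)\le r_i$ via \Cref{obs:rad-FT-step}, use the \RP family $\mathcal{G}_{f,r_i}$ to get an $F$-avoiding $G_{i,j}$ with $(a,b)\in\hat{E}_{i,j}$, apply \Cref{lem:prop-NFT-BSsteps}(4) because $a\notin V(\mathcal{C}_{i,j})$, and finish with the triangle inequality in $H-F$. You also make explicit that $a\notin V(\mathcal{C}_{i,j})$ for this particular $j$ follows from the existential quantifier in the definition of $(i,F)$-clustered, which the paper leaves implicit; your two double-check points are settled by \Cref{obs:rad-FT-step} itself, since its index-$0$ case covers the Phase~0 centers and the finite bound $\dist_{H-F}(v,a)\le r_i/3$ already forces $a\notin F$.
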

\begin{proof}
For $i=0$, the claim holds by \Cref{cl:clustered-VFT-BS}. From now on consider $i\geq 1$.
Let $a=c_{i-1,F}(v)$ and $b=c_{i-1,F}(u)$. We show first that there exists an index $j \in \{1,\ldots, |\mathcal{G}_{f,r_i}|\}$, that satisfies the following: (i) $V(G_{i,j})\cap F=\emptyset$ and (ii) $\dist_{G_{i,j}}(a,b)\leq r_i$. By \Cref{obs:rad-FT-step}:
\begin{eqnarray*}
\dist_{G - F}(a,b) &\leq& \dist_{G - F}(a,v)+\dist_{G - F}(v,u)+\dist_{G- F}(u,b)
\leq  2r_{i}/3+ r_{i}/3 \leq r_{i}~.
\end{eqnarray*}

Hence by the properties of the graph family $\mathcal{G}_{f,r_i}$, there exists $G_{i,j}\in \mathcal{G}_{f,r_i}$ satisfying (i) and (ii). We therefore conclude that $(a,b)\in \hat{G}_{i,j}$.

Since $v$ is $(i,F)$-unclustered and $(i-1,F)$-clustered, it holds that $a$ is not in $V(\mathcal{C}_{i,j})$. By \Cref{lem:prop-NFT-BSsteps}(4) we have that $\dist_{\hat{H}_{i,j}}(a,b)\leq 2\lceil k^{\epsilon}\rceil-1$ and since each edge in $\hat{H}_{i,j}$ corresponds to a $G_{i,j}$-path of length at most $r_{i}$, we have that $\dist_{H_{i,j}}(a,b)\leq (2\lceil k^{\epsilon}\rceil-1)\cdot r_{i}$. Since $F \cap V(G_{i,j})=\emptyset$, we have that also $\dist_{H - F}(a,b)\leq (2\lceil k^{\epsilon}\rceil -1)\cdot r_{i}$.
By \Cref{obs:rad-FT-step}, we have:
\begin{eqnarray*}
\dist_{H - F}(v,u) \leq \dist_{H - F}(v,a)+\dist_{H - F}(a,b)+\dist_{H - F}(b,u)
\leq  2r_{i}/3+ (2\lceil k^{\epsilon}\rceil-1)r_{i} \leq 2\lceil k^{\epsilon}\rceil\cdot r_{i}.
\end{eqnarray*}

The "moreover" part follows by exactly the same argument with $\epsilon$ replaced by $\epsilon'$, since on the $\ell^\text{th}$ phase the algorithm does only $\lceil k^{\epsilon'}\rceil$ Baswana-Sen steps instead of $\lceil k^{\epsilon}\rceil$.
\end{proof}

\begin{claim}\label{cl:key-stretch-FT}
Let $v$ be an $(i-1,F)$-clustered and $(i,F)$-unclustered vertex and let $u \in B_{G- F}(v,r'_i)$. Then, $\dist_{H- F}(v,u)\leq 6(i+1)\cdot \lceil k^{\epsilon}\rceil \cdot r'_i$.
\end{claim}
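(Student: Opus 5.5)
We follow the proof of \Cref{cl:key-stretch-NFT} essentially verbatim, replacing $G,H$ by $G-F,H-F$ and using \Cref{cl:key-stretch-FT-basic} in place of \Cref{cl:key-stretch-NFT-basic}. The argument is by induction on $i\in\{0,\ldots,\ell\}$. The index shift from $6i$ to $6(i+1)$ comes from the extra Phase~$0$ (the indices now start at $0$, and $\mathcal{C}_{-1}$ is the singleton clustering).

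\textbf{Base case $i=0$.} Here $v$ is $(-1,F)$-clustered and $(0,F)$-unclustered, i.e.\ $v\notin V^*$, and $r'_0=1$. So $u$ is a neighbor of $v$ in $G-F$. \Cref{cl:clustered-VFT-BS} gives $\dist_{H-F}(v,u)\le 2\lceil k^\epsilon\rceil-1\le 6\lceil k^\epsilon\rceil r'_0$.

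\textbf{Inductive step, $i\ge 1$.} Fix a shortest $v$-$u$ path $P$ in $G-F$.

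If $u$ is $(i-1,F)$-clustered, \Cref{cl:key-stretch-FT-basic} directly gives $\dist_{H-F}(v,u)\le 6\lceil k^\epsilon\rceil r'_i$.

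Otherwise, every vertex $z\in P$ is non-faulty and is $(q-1,F)$-clustered and $(q,F)$-unclustered for a unique $q$, by \Cref{obs:final}; set $r(z)=r'_q$. Starting at $u_0=u$, we walk along $P$ toward $v$. While $u_j$ is $(i-1,F)$-unclustered (so its level $q\le i-1$) and $\dist_P(u_j,v)\ge r(u_j)$, we let $u_{j+1}$ be the vertex at distance $r(u_j)\le r'_{i-1}$ along $P$. Each hop lies in $B_{G-F}(u_j,r(u_j))$ on a shortest path, so the induction hypothesis at level $q\le i-1$ applies. Using monotonicity in $q$, the hops together give
\[
\dist_{H-F}(u,u_{j^*})\le 6i\lceil k^\epsilon\rceil\cdot \dist_{G-F}(u,u_{j^*}).
\]

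\textbf{Closing the gap to $v$.} There are two stopping cases at $u_{j^*}$.
\begin{itemize}[label={}]
\item (a) $u_{j^*}$ is $(i-1,F)$-clustered. It lies in $B_{G-F}(v,r'_i)$, since it is on $P$. Then \Cref{cl:key-stretch-FT-basic} gives $\dist_{H-F}(v,u_{j^*})\le 6\lceil k^\epsilon\rceil r'_i$.
\item (b) $u_{j^*}$ has level $q\le i-1$ and $\dist_{G-F}(u_{j^*},v)<r'_q$. Since $v$ is $(i-1,F)$-clustered, it is in particular $(q-1,F)$-clustered. Apply \Cref{cl:key-stretch-FT-basic} at level $q$ with the roles swapped: $u_{j^*}$ is the $(q,F)$-unclustered vertex and $v$ is the $(q-1,F)$-clustered one. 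This gives $\dist_{H-F}(v,u_{j^*})\le 6\lceil k^\epsilon\rceil r'_q\le 6\lceil k^\epsilon\rceil r'_i$.
\end{itemize}

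\textbf{Summing up.} Since $\dist_{G-F}(u,u_{j^*})\le r'_i$, the triangle inequality gives
\[
\dist_{H-F}(v,u)\le 6i\lceil k^\epsilon\rceil r'_i+6\lceil k^\epsilon\rceil r'_i=6(i+1)\lceil k^\epsilon\rceil r'_i.
\]

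\textbf{Main obstacle.} The delicate point is the bookkeeping for the levels. We must ensure the induction hypothesis is only invoked at levels $q\le i-1$, where $r'_q\le r'_{i-1}$ and the constant $6(q+1)\le 6i$. We must also justify that all points used stay inside $B_{G-F}(\cdot,\cdot)$, which holds because they lie on a shortest path of $G-F$ that avoids $F$. For $i=\ell$ the bound of \Cref{cl:key-stretch-FT-basic} uses $\lceil k^{\epsilon'}\rceil\le\lceil k^\epsilon\rceil$, so the same estimate applies.
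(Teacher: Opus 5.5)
Your proposal is correct and follows the paper's proof essentially step for step. The paper uses the same ingredients: induction with the base case from \Cref{cl:clustered-VFT-BS}, the direct case via \Cref{cl:key-stretch-FT-basic}, the walk along $P$ from $u$ with hops of length $r(u_j)$ bounded by the induction hypothesis, and the two stopping cases closed by \Cref{cl:key-stretch-FT-basic}, with roles swapped in case (b). Your explicit strong induction over all levels $q\le i-1$, together with $6(q+1)\le 6i$ and $r'_q\le r'_{i-1}$, is the right reading of the hop step, which the paper leaves implicit when it says ``assume it holds for $i-1$''.
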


\begin{proof}
The claim is proven by induction. For $i=0$ the claim is immediate by \Cref{cl:clustered-VFT-BS}. 
Assume that it holds for $i-1$ and consider an $(i-1,F)$-clustered and $(i,F)$-unclustered vertex $v$ and $u \in B_{G- F}(v,r'_i)$.
Let $P$ be some $v$-$u$ shortest path in $G - F$. Assume first that $u$ is $(i-1,F)$-clustered. In this case, by \Cref{cl:key-stretch-FT-basic}, $\dist_{H - F}(v,u)\leq 6\lceil k^{\epsilon}\rceil\cdot r'_i$.

It remains to consider the case where $u$ is $(i-1,F)$-unclustered.
For a $(q-1,F)$-clustered and $(q,F)$-unclustered vertex $z$, let $r(z)=r'_q$. We traverse $P$ from $u$ to $v$ by taking the following steps until encountering a vertex on $P$ that is $(i-1,F)$-clustered (since $v$ is $(i-1,F)$-clustered, such vertex exists). 
Set $j=0$ and $u_0=u$. As long as $\dist_{P}(u_{j},v)\geq r(u_j)$ and $u_j$ is $(i-1,F)$-unclustered, define $u_{j+1}$ be the vertex at distance $r(u_j)$ from $u_j$ on $P$, and let $j=j+1$. Let $u_{j^*}$ be the last defined vertex. Since $\dist_P(u_j,u_{j+1})=r(u_j)\leq r'_{i-1}$ for every $j \in \{0,\ldots, j^*-1\}$, by the induction assumption we have:
$$\dist_{H - F}(u_{j^*},u)\leq 6\cdot i\cdot \lceil k^{\epsilon}\rceil\cdot \dist_{G - F}(u_{j^*},u)~.$$

We next bound $\dist_{H - F}(v,u_{j^*})$. By the stopping criteria, there are two cases to consider. Assume first that $u_{j^*}$ is $(i-1,F)$-clustered. Then, by \Cref{cl:key-stretch-FT-basic}, $\dist_{H - F}(v,u_{j^*})\leq 6\lceil k^{\epsilon}\rceil \cdot r'_i$. Otherwise, we have that $u_{j^*}$ is $(q-1,F)$-clustered and $(q,F)$-unclustered for $q\leq i-1$. 
Since $v$ is also $(q-1,F)$-clustered,  $\dist_{G-F}(v,u_{j^*})\leq r'_q$ and $r'_q\leq r'_{i-1}$, by \Cref{cl:key-stretch-FT-basic}, we get that:
$$\dist_{H-F}(v,u_{j^*})\leq 6\cdot \lceil k^{\epsilon}\rceil \cdot r'_{i-1}~.$$
Altogether: $\dist_{H - F}(v,u)\leq \dist_{H - F}(v,u_{j^*}) + \dist_{H - F}(u_{j^*},u) \leq 6(i+1) \cdot \lceil k^{\epsilon}\rceil\cdot r'_i$.
\end{proof}

\begin{claim}\label{cl:final-stretch-FT} 
For all $u,v \in V(G)$ and $F \subset V(G)$ of size at most $f$ we have $\dist_{H-F}(u,v) \leq 6(\ell+1)\cdot \lceil k^{\epsilon}\rceil\cdot \dist_{G-F}(u, v) + O(4^{1/\epsilon}\cdot k/\epsilon)$. 
\end{claim}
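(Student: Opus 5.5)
The plan mirrors the proof of \Cref{cl:final-stretch-NFT}, now run in $G-F$ with the $F$-dependent clustering levels, and using \Cref{cl:key-stretch-FT} and \Cref{cl:key-stretch-FT-basic} in place of their non-faulty counterparts. Fix $F$ with $|F|\le f$, fix $u,v\notin F$, and let $P$ be a $u$-$v$ shortest path in $G-F$. For $w\in V(P)$ set $r(w)=r'_q$, where $q\in\{0,\ldots,\ell\}$ is the unique index such that $w$ is $(q-1,F)$-clustered and $(q,F)$-unclustered. This index is well defined by \Cref{obs:final} and because every non-faulty vertex is $(-1,F)$-clustered.

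\textbf{Step 1 (walk from $u$).} Put $u_0=u$. As long as $\dist_P(u_j,v)\ge r(u_j)$, let $u_{j+1}$ be the vertex of $P$ at distance $r(u_j)$ from $u_j$; let $u_{j^*}$ be the last vertex defined. Each hop has $u_{j+1}\in B_{G-F}(u_j,r(u_j))$, since $P$ is a shortest path in $G-F$. Hence \Cref{cl:key-stretch-FT} gives $\dist_{H-F}(u_j,u_{j+1})\le 6(q+1)\lceil k^\epsilon\rceil r(u_j)\le 6(\ell+1)\lceil k^\epsilon\rceil\dist_{G-F}(u_j,u_{j+1})$. Summing over the hops yields $\dist_{H-F}(u,u_{j^*})\le 6(\ell+1)\lceil k^\epsilon\rceil\dist_{G-F}(u,u_{j^*})$.

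\textbf{Step 2 (walk from $v$).} Do the same from $v_0=v$ towards $u_{j^*}$, stopping at $v_{j^{**}}$. This gives the symmetric multiplicative bound for the segment between $v$ and $v_{j^{**}}$. By the two stopping rules, $\dist_{G-F}(u_{j^*},v_{j^{**}})<\min\{r(u_{j^*}),r(v_{j^{**}})\}$. Here one has to be careful that $v_{j^{**}}$ lies between $u_{j^*}$ and $v$ on $P$, which holds because the second walk measures distance to $u_{j^*}$.

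\textbf{Step 3 (the middle gap, the main obstacle).} Here we need an additive bound. Let $q$ be the level of whichever of $u_{j^*},v_{j^{**}}$ has the smaller level. Say it is $v_{j^{**}}$, which is then $(q-1,F)$-clustered and $(q,F)$-unclustered.
\begin{itemize}
\item If $q\le \ell-1$, then $u_{j^*}\in B_{G-F}(v_{j^{**}},r'_q)$, because $\dist_{G-F}(u_{j^*},v_{j^{**}})<r(v_{j^{**}})=r'_q$. \Cref{cl:key-stretch-FT} then bounds $\dist_{H-F}(u_{j^*},v_{j^{**}})$ by $6\ell\lceil k^\epsilon\rceil r'_{\ell-1}$. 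Since $r_i\le (3\lceil k^\epsilon\rceil+1)^{i}$, this is $O(4^{1/\epsilon}k^{(\ell-1)\epsilon+\epsilon}\cdot\ell)=O(4^{1/\epsilon}k/\epsilon)$, using $\ell\epsilon\le 1$ for $\ell=\lceil 1/\epsilon\rceil-1$.
\item Otherwise both vertices are $(\ell-1,F)$-clustered and $(\ell,F)$-unclustered, and they are within $r'_\ell$ of each other in $G-F$. The ``moreover'' part of \Cref{cl:key-stretch-FT-basic} gives $\dist_{H-F}\le 2\lceil k^{\epsilon'}\rceil r_\ell=O(4^{1/\epsilon}k^{(\ell-1)\epsilon+\epsilon'})=O(4^{1/\epsilon}k)$, where $\epsilon'$ is chosen so that $(\ell-1)\epsilon+\epsilon'\le 1$ in the indexing of this section.
\end{itemize}
The delicate points are checking that these exponent bookkeepings match the shifted phase indexing of the VFT construction (phases $0,\ldots,\ell$), and that the rounding constants $4^{1/\epsilon}$ absorb the factors $3\lceil k^\epsilon\rceil+1$.

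\textbf{Step 4.} The triangle inequality combines the two multiplicative segments and the additive middle term. Since $\dist_{G-F}(u,u_{j^*})+\dist_{G-F}(v_{j^{**}},v)\le\dist_{G-F}(u,v)$, this gives the claimed bound $6(\ell+1)\lceil k^\epsilon\rceil\dist_{G-F}(u,v)+O(4^{1/\epsilon}k/\epsilon)$.
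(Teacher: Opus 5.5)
Your proposal is correct and follows the paper's proof essentially verbatim. You use two greedy walks along a shortest path in $G-F$, with \Cref{cl:key-stretch-FT} bounding each hop, and treat the middle gap in two cases: \Cref{cl:key-stretch-FT} when some endpoint has level at most $\ell-1$, and the ``moreover'' part of \Cref{cl:key-stretch-FT-basic} when both are at level $\ell$. On the bookkeeping you flagged: in this section's indexing $r_\ell = 3\lceil k^{\epsilon}\rceil(3\lceil k^{\epsilon}\rceil+1)^{\ell-1}$, so the exponent in the last case is actually $\ell\epsilon+\epsilon'=1$ rather than $(\ell-1)\epsilon+\epsilon'$, which still gives the $O(4^{1/\epsilon}k)$ additive term you need.
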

\begin{proof}
    The proof is identical to the proof of \Cref{cl:final-stretch-NFT}.
    
    Let $P$ be a $u$-$v$ shortest path in $G-F$. Set $j=0$ and $u_0=u$. Let, as before, $r(w)=r'_q$ such that $w$ is $(q-1,F)$-clustered and $(q,F)$-unclustered. 
    As long as $\dist_{P}(u_{j},v)\geq r(u_j)$ define $u_{j+1}$ be the vertex at distance $r(u_j)$ from $u_j$ on $P$, and let $j=j+1$. Let $u_{j^*}$ be the last defined vertex. By \Cref{cl:key-stretch-FT}, $\dist_{H-F}(u, u_{j^*}) \leq 6(\ell+1)\cdot \lceil k^{\epsilon}\rceil\cdot \dist_{G-F}(u, u_{j^*})$.

    Do the same from the side of $v$: set $j=0$ and $v_0=v$. As long as $\dist_{P}(v_{j},u_{j^*})\geq r(v_j)$ define $v_{j+1}$ be the vertex at distance $r(v_j)$ from $v_j$ on $P$, and let $j=j+1$. Let $v_{j^{**}}$ be the last defined vertex. By \Cref{cl:key-stretch-FT}, $\dist_{H-F}(v, v_{j^{**}}) \leq 6(\ell+1)\cdot \lceil k^{\varepsilon}\rceil\cdot \dist_{G-F}(v, v_{j^{**}})$.

    Assume first that either $u_{j^*}$ or $v_{j^{**}}$ is $(q-1,F)$-clustered and $(q,F)$-unclustered with $q \leq \ell-1$. Then we can again use \Cref{cl:key-stretch-FT} to get $\dist_{H-F}(u_{j^*}, v_{j^{**}})\leq 6\ell\cdot \lceil k^{\epsilon}\rceil\cdot r'_{\ell-1} =O(1/\epsilon\cdot 4^{1/\epsilon}\cdot k^{(\ell - 1)\epsilon}) = O(4^{1/\epsilon}\cdot k/\epsilon)$ since $\dist_{G-F}(u_{j^*}, v_{j^{**}})\leq r_q' \leq r_{\ell-1}'$ by construction of $u_{j^*}$ and $v_{j^{**}}$.

    The remaining case is that both $u_{j^*}$ and $v_{j^{**}}$ are $(\ell - 1, F)$-clustered (and $\ell$-unclustered). Then we can apply the special case $i = \ell$ of \Cref{cl:key-stretch-FT-basic} and get $\dist_{H-F}(u_{j^*}, v_{j^{**}})\leq 6\cdot \lceil k^{\epsilon'}\rceil\cdot r'_{\ell} = O(4^{1/\epsilon}\cdot k^{(\ell - 1)\epsilon + \epsilon'}) = O(4^{1/\epsilon}\cdot k)$.

    The claim follows by the triangle inequality.
 
\end{proof}

\begin{corollary}\label{cor:VFT-FT-sampling-BS}
    $H$ is a $f$-VFT $(\alpha=O(k^{\epsilon}/\epsilon), \beta=O(4^{1/\epsilon}\cdot k/\epsilon))$-spanner with $\tilde{O}((c \cdot f \cdot \log n)^{4^{2/\epsilon}}\cdot n^{1+1/k})$ edges.
\end{corollary}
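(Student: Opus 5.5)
The corollary has two parts, and I would prove them separately.

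\textbf{Stretch.} The stretch follows directly from \Cref{cl:final-stretch-FT}, applied in the fixed execution chosen after \Cref{cl:ni-size-FT-H}. In that execution every RPC family is valid, Phase 0 is fault tolerant, and $V_{\ell+1}=\emptyset$. For every $F$ with $|F|\le f$ and every pair $u,v$, the claim gives
\[
\dist_{H-F}(u,v)\le 6(\ell+1)\lceil k^{\epsilon}\rceil\cdot \dist_{G-F}(u,v)+O(4^{1/\epsilon}k/\epsilon).
\]
Since $\ell+1=\lceil 1/\epsilon\rceil$, the multiplicative term is $6(\ell+1)\lceil k^\epsilon\rceil=O(k^{\epsilon}/\epsilon)$. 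This gives $\alpha=O(k^\epsilon/\epsilon)$ and $\beta=O(4^{1/\epsilon}k/\epsilon)$.

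Before citing the claim, I would check one point. Its proof relies on \Cref{obs:final}: every vertex outside $F$ is $(i,F)$-unclustered for some $i\le\ell$. That in turn uses $V_{\ell+1}=\emptyset$, which holds in the chosen execution.

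\textbf{Size.} I would split $|H|\le |H_0|+|\bigcup_{i=1}^{\ell}H_i|$ and bound the two terms separately.

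For $H_0$, \Cref{cl:ni-size-FT-baseH} gives $|H_0|=\tilde O(g^{4^{2\ell+2}+1}n^{1+1/k})$ with $g=cf\log n$. Substituting $\ell=\lceil 1/\epsilon\rceil-1$ gives exponent $4^{2\lceil 1/\epsilon\rceil}+1$. This is $4^{2/\epsilon}$ up to constant factors in the exponent, and I would absorb that slack into the stated form; this is the loose "$(cf\log n)^{4^{2/\epsilon}}$" reading.

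For the second term, the chosen execution has $|\bigcup_{i\ge1}H_i|\le 4B=\tilde O(4^{1/\epsilon}g^2 n)$. This is dominated by the first term once $1/\epsilon\ge1$.

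\textbf{Main obstacle.} The hardest step, and still only mild, is to justify that a single execution satisfies all the required events at once. These are: the high-probability RPC covering property for all $\ell$ families, the Phase 0 guarantees of \cite{Parter22}, $V_{\ell+1}=\emptyset$, and the Markov size bound, which holds with probability $3/4$. A union bound gives positive probability, as the text already notes.

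Alongside this I would verify the parameter consistency underlying \Cref{cl:ni-size-FT}, namely $N_i p_i^{k^\epsilon}\le N_{i+1}$. Plugging Eqs.~(\ref{eq:numberi-ni-FT-BS}) and (\ref{eq:p-rad-FT-BS}), this reduces to checking that the $g$-exponents satisfy
\[
4^{2\ell-i+3}k^{i\epsilon}\cdot k^{\epsilon}\le 4^{2\ell-i+2}k^{(i+1)\epsilon}+\dots
\]
and that the $n$-exponents match, giving $n^{1-1/k^{1-(i+1)\epsilon}}$. I would confirm the $g$-exponent inequality termwise, with $R_i=4^ik^{i\epsilon}$ absorbed by the gap between consecutive powers of $4$. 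I would also confirm that $R_i\ge r_i$, which is needed for $|\mathcal G_{f,r_i}|\le g^{R_i+2}$ in \Cref{cl:RPC-size}.
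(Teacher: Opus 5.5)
Your proposal matches the paper's implicit derivation of the corollary. Stretch comes from \Cref{cl:final-stretch-FT} in the fixed good execution, with $6(\ell+1)\lceil k^{\epsilon}\rceil=O(k^{\epsilon}/\epsilon)$, and size comes from \Cref{cl:ni-size-FT-baseH} plus the Markov bound on $\bigcup_{i\ge 1}H_i$. You are right that this only yields exponent $4^{2\lceil 1/\epsilon\rceil}+1$ rather than $4^{2/\epsilon}$ (an extra $cf\log n$ factor even for integral $1/\epsilon$); the paper's statement has the same looseness, and the intended reading is $(cf\log n)^{2^{O(1/\epsilon)}}$.

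One small fix in your optional side check: you displayed the $g$-exponent inequality in the reverse direction. The required one is $4^{2\ell-i+2}k^{(i+1)\epsilon}+4^{i}k^{(i+1)\epsilon}\le 4^{2\ell-i+3}k^{i\epsilon}(1+k^{\epsilon})$, which holds because $4^{i}\le 3\cdot 4^{2\ell-i+2}$ for $i\le \ell$.
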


%% file: FT-Clustering-sec.tex
\section{Path Systems Emerging in the FT-Greedy}\label{sec:greedyFTclustering}

We begin with the definition of $d\to r$ spanners.
\begin{definition}[\cite{PopovaT26}]
    A $d \to r$ spanner of a graph $G$ is a subgraph $H$ such that for every two vertices $u,v\in V(G)$ at distance exactly $d$ in $G$, we have $\dist_{H}(u,v) \leq r$.
\end{definition}

For unweighted graphs it is folklore that $d \to r$-spanners and $(\alpha,\beta)$-spanners are equivalent up to negligible losses in size bounds. We include in \Cref{sec:equiv-d-to-r-alpha-beta} a short discussion on this equivalence. By applying the reductions in  \Cref{sec:equiv-d-to-r-alpha-beta} to the VFT setting, we obtain that it suffices to construct a VFT $d\to r$ spanner for $d= \lfloor\sqrt{k}\rfloor$, and $r=\frac{12k}{\delta}$, of size $\tilde{O}_{\delta}(f^{1+\delta}n^{1+1/k})$ to get a $(\frac{24\sqrt{k}}{\delta},\frac{12k}{\delta})$-spanner of size $ \sqrt{k} \cdot \tilde{O}_{\delta}(f^{1+\delta}n^{1+1/k}) = \tilde{O}_{\delta}(f^{1+\delta}n^{1+1/k})$ as one may assume  $k\leq \log(n)$, which will conclude \Cref{thm:sqrtk-greedy} \footnote{In \Cref{sec:sqrtk-greedy}, we assume $0<\delta<1/2$, instead of $0<\delta<1$ as stated in \Cref{thm:sqrtk-greedy}, but this can be done without loss of generality, as the size bound gets \emph{sharper} as $\delta$ decreases, hence the result for say $\delta = 1/3$, applies for all $\delta \in [1/3,1)$ .}. Similarly, proving a bound of $O(k^3fn^{1+1/k})$ for the $2\to 2k$ VFT spanner will imply \Cref{thm:hybrid-greedy}.

\subsection{Greedy VFT $d\to r$ spanner}

There is a natural greedy algorithm to construct such a spanner, \Cref{alg:vft-greedy}.

\begin{algorithm}[H]
\caption{\textsc{VFT-Greedy $d\to r$ Spanner}$(\G,d,r,f)$}
\label{alg:vft-greedy}
\DontPrintSemicolon

$H\gets(\V,\emptyset)$\;

\ForEach{$x,y\in V$}{
  \ForEach{Path $P$ of length $d$ from $x$ to $y$}{
    \If{$\exists\,F\subseteq V$ with $|F|\le f$ and
      $F\cap V(P)=\emptyset$ s.t.\
      $\dist_{H-F}\big(x,y\big)>r$}{
      $E(H)\gets E(H)\cup E(P)$\;
    }
  }
}

\Return $H$\;
\end{algorithm}

Throughout, we denote by $(P_1,F_1),\ldots,(P_m,F_m)$ the pairs consisting of a path $P_i$ added by the VFT greedy algorithm and the corresponding fault set $F_i$ that causes it to be added. We use $H_i$ to denote the state of the spanner \emph{before} $P_i$ is added, that is, $H_i=\bigcup_{j<i} P_j$ for $i\geq 2$, and $H_1=(V,\emptyset)$.

We first verify that \Cref{alg:vft-greedy} outputs a valid spanner. Indeed, suppose that there are $F\subseteq V(G)$, $u, v\in V(G)$ that violate the definition of the VFT spanner for the constructed $H$, i.e. $|F| \leq f$, $\dist_{G-F}(u,v) = d$, but $\dist_{H-F}(u,v) > r$. Then there exists a path $P$ of length exactly $d$ that connects $u$ and $v$ in $G-F$. But then on the step of the algorithm that considers $P$ it should be added because of $F$. Thus $E(P)\subseteq E(H)$ and $V(P)\subseteq V(H) -F$, and consequently $\dist_{H-F}(u,v) = d \leq r$, contradiction. 

Thus our goal from now on is to upper bound the number of edges in the output of this algorithm. For the analysis, we study a path system contained in the spanner $H$ and track its evolution throughout the execution, so we begin by introducing some notation for path collections.

\paragraph{Notation for paths and path collections}
We view a path in a graph $G$ from $x$ to $y$ as a sequence $(x=x_0,x_1,\ldots,x_{\ell}=y)$ where
$\{x_i,x_{i+1}\}\in E(G)$.
We write $V(P)$ for the vertex set of $P$, and denote its length by $L(P):=\ell$.
For an oriented edge $(u,v)$ and a path $P$ starting at $v$, we write $(u,v)\circ P$ for the concatenation obtained by prepending $(u,v)$ to $P$.

Let $\pp$ be a collection of oriented paths in a graph $G$. For $J,F\subseteq V(G)$, define
\[
\pp[J]:=\{P\in\pp\mid J\subseteq V(P)\},
\qquad
\pp-F:=\{P\in\pp\mid V(P)\cap F=\emptyset\}.
\]
For $\ell\ge 0$ and $v,w\in V(G)$, let $\pp^\ell$ denote the paths of length exactly $\ell$, $\pp_v$ the paths starting at $v$, and $\pp_{v,w}$ the paths starting at $v$ and ending at $w$. For a path collection $\pp_u$ we denote $(v,u)\circ\pp_u = \{(v,u)\circ P \mid P \in \pp_u\}$.

We begin by defining \emph{critical} edges:

\begin{definition}
    Given an integer $s$, we say an edge $\{u,v\}=e\in E(P_i)$ is $s$-critical if
\[
\dist_{H_i-F_i}(u,v) > s.
\]
\end{definition}

\begin{claim}
    For every $s\leq \frac{r}{d}$, and $i\leq m$, the path $P_i$ contains an $s$-critical edge.
\end{claim}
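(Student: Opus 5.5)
The plan is a direct contradiction argument based on the greedy update rule and the triangle inequality along $P_i$. Fix $i\le m$. Write $P_i=(x_0,x_1,\ldots,x_d)$, with endpoints $x=x_0$ and $y=x_d$. Since $P_i$ was added by \Cref{alg:vft-greedy} because of $F_i$, we know three things: $|F_i|\le f$, $F_i\cap V(P_i)=\emptyset$, and
\[
\dist_{H_i-F_i}(x,y)>r .
\]
Suppose, for contradiction, that no edge of $P_i$ is $s$-critical. Then for every $j\in\{0,\ldots,d-1\}$,
\[
\dist_{H_i-F_i}(x_j,x_{j+1})\le s .
\]

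The key point is that these distances are measured in the single graph $H_i-F_i$. That graph contains every vertex $x_j$, because $F_i\cap V(P_i)=\emptyset$. Hence the triangle inequality applies in $H_i-F_i$ along the sequence $x_0,x_1,\ldots,x_d$:
\[
\dist_{H_i-F_i}(x,y)\le \sum_{j=0}^{d-1}\dist_{H_i-F_i}(x_j,x_{j+1})\le d\cdot s .
\]

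The hypothesis $s\le r/d$ gives $d\cdot s\le r$. Combined with the display above, this yields $\dist_{H_i-F_i}(x,y)\le r$, contradicting the condition under which $P_i$ was added. Therefore some edge $\{x_j,x_{j+1}\}$ of $P_i$ satisfies $\dist_{H_i-F_i}(x_j,x_{j+1})>s$, i.e., it is $s$-critical.

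There is no real obstacle here. The only points to check are that $F_i$ avoids $V(P_i)$, so all intermediate vertices are present in $H_i-F_i$, and that the paths in \Cref{alg:vft-greedy} have length exactly $d$, so $P_i$ has exactly $d$ edges.
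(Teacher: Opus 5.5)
Your proof is correct and follows essentially the same route as the paper: assume no edge of $P_i$ is $s$-critical, apply the triangle inequality in $H_i-F_i$ along the $d$ edges to get $\dist_{H_i-F_i}(x_0,x_d)\le d\cdot s\le r$, and contradict the condition under which $P_i$ was added. Your remark that $F_i\cap V(P_i)=\emptyset$ keeps every $x_j$ in $H_i-F_i$ is a harmless extra check that the paper leaves implicit.
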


\begin{proof}
    Let $x_i$ and $y_i$ be the endpoints of $P_i$. Assume the contrary, i.e. for every $e = \{u,v\} \in E(P_i)$ we have that $\dist_{H_i-F_i}(u,v)\leq s \leq \frac{r}{d}$. Then, since $P_i$ has exactly $d$ edges, we get, by the triangle inequality, that $\dist_{H_i-F_i}(x_i, y_i) \leq d\cdot \frac{r}{d} = r$, which contradicts the assumption that $P_i$ is added because of $F_i$. 
\end{proof}

We construct the path collection $\pp$ which will be used to bound $m$, the number of paths added by the greedy algorithm. The path collection $\pp$ will depend on two parameters: (1) An integer $D$ which is approximately the average degree in the output spanner, and (2) $s \leq r/d$, for which $s$-critical edges of paths added by the greedy spanner may be used to increase $\pp$. We also partition the paths added by the greedy algorithm into $\{P_1,\ldots,P_m\}=\mathcal{L}\overset{\cdot}{\cup}\mathcal{G}$ -- paths in $\mathcal L$ can be accounted for using our construction of $\pp$, while paths in $\mathcal G$ require a different counting argument, appearing in \Cref{sec:sqrtk-greedy,sec:vft2to2k} . Finally we set $N_{\ell}=D^{\ell}$ to be the bound on the number of paths in $\pp_v^{\ell}$, such that past this bound we think of $v$ as having sufficiently good length $\ell$ connections in $H$ to other vertices. We now describe how $\pp,\mathcal{L},\mathcal{G}$ are constructed.

\begin{algorithm}[H]
\caption{\textsc{Greedy Path-System}$(\{(P_i,F_i)\}_{i=1}^m, f, s, \{N_\ell\}_{\ell = 0}^{\lceil \frac{s}{2}\rceil})$}
\label{alg:greedy-path-system}

$\mathcal{G} \gets \emptyset$, \quad
$\mathcal{L} \gets \emptyset$, \quad
$\mathcal{P} \gets \{(v) \mid v \in V\}$\;

\For{$i=1$ \KwTo $m$}{
    
    \If{for every $s$-critical $e=\{u,v\}\subseteq P_i$, every $\ell\le \lceil \frac{s}{2}\rceil$, and every endpoint $w\in e$, we have
    $|\mathcal{P}_w^{\ell}| \ge N_\ell$}{
        
        $\mathcal{G} \gets \mathcal{G} \cup \{P_i\}$\;
    }
    \Else{

        Choose an $s$-critical edge $e=\{u_i,v_i\}\subseteq P_i$ and $0\leq \ell_i < \lceil \frac{s}{2}\rceil$ such that 
        $|\mathcal{P}_{v_i}^{\ell_i}| \ge N_{\ell_i}$ and 
        $|\mathcal{P}_{u_i}^{\ell_i + 1}| < N_{\ell_i+1}$\;

        $\mathcal{P} \gets \mathcal{P} \cup 
        \bigl((u_i,v_i)\circ(\mathcal{P}_{v_i}^{\ell_i}-F_i)\bigr)$\; 
        $\mathcal{L} \gets \mathcal{L} \cup \{P_i\}$\;
    }
}
\end{algorithm}

Observe that the index $\ell_i$ in the \textbf{else} branch is well defined. Indeed, if a critical edge 
$e=\{u,v\}$ violates the \textbf{if} condition, then consider the maximal $\ell$ for which the \textbf{if} condition holds for all lengths up to $\ell$ inclusively. Observe that this maximum is well defined since $N_0=1\le|\pp_v^{0}|$ for every $v\in V(G)$ throughout the algorithm, and $\ell \leq \lceil \frac{s}{2} \rceil-1$, as the \textbf{if} was violated. Setting $\ell_i=\ell$, by definition of $\ell$ both:
\begin{enumerate}
    \item $|\pp_{u_i}^{\ell_i}|\ge N_{\ell_i}$ and $|\pp_{v_i}^{\ell_i}|\ge N_{\ell_i}$; and

\item $|\pp_{u_i}^{\ell_i+1}|< N_{\ell_i+1}$ or $|\pp_{v_i}^{\ell_i+1}|< N_{\ell_i+1}$.
\end{enumerate} 
By renaming the endpoints of $e$ the claim follows.

In what follows, we shall sometimes refer to the modification of $\pp$ in the \textbf{else} case as "the update rule". For every $i$ such that $P_i\in \mathcal{L}$, we say that a \emph{boost} occurred; in this case, following the notation of \Cref{alg:greedy-path-system}, we say that $u_i$ is boosted at radius $\ell_i+1$.

Here and in what follows we denote by $\pp^i$ the path collection $\pp$ in \Cref{alg:greedy-path-system} before considering the $i$-th path. Similarly, $\pp^{i,\ell}$ denotes the collection of $\ell$-paths in $\pp$ before \Cref{alg:greedy-path-system} considers $P_i$, etc.
We record the following claim which is immediate from the definition of paths in $\mathcal{G}$.

\begin{claim}\label{cl:obvious-prop-g}
    Let $P_i \in \mathcal{G}$ and $e = \{u, v\}$ be an $s$-critical edge on $P_i$. Then $|\pp^{i,\ell}_u| \geq N_\ell$ and $|\pp^{i,\ell}_v| \geq N_\ell$ for all $\ell \in \{0, 1,..., \lceil \frac{s}{2}\rceil\}$.
\end{claim}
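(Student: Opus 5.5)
The claim follows directly from how \Cref{alg:greedy-path-system} sorts paths, so I plan to unfold its branch condition rather than invoke any structural property of $\pp$.

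\textbf{Step 1.} When the algorithm considers the $i$-th path, the current collection is, by the notation just introduced, exactly $\pp^i$. The algorithm places $P_i$ into $\mathcal{G}$ only inside the \textbf{if} branch. That branch is entered precisely when the following holds for the state $\pp=\pp^i$: for every $s$-critical edge $e=\{u,v\}\subseteq P_i$, every $\ell\le\lceil s/2\rceil$, and every endpoint $w\in e$, we have $|\pp_w^{\ell}|\ge N_\ell$.

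\textbf{Step 2.} Since $\mathcal{L}$ and $\mathcal{G}$ are disjoint and each path is assigned to exactly one of them, $P_i\in\mathcal{G}$ means the \textbf{if} condition held at time $i$. Now fix an $s$-critical edge $e=\{u,v\}$ on $P_i$ and any $\ell\in\{0,\ldots,\lceil s/2\rceil\}$. Instantiating the universally quantified condition with $w=u$ and with $w=v$ gives $|\pp^{i,\ell}_u|\ge N_\ell$ and $|\pp^{i,\ell}_v|\ge N_\ell$, as claimed.

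\textbf{Possible obstacle.} The only point needing care is timing: the inequalities must refer to the collection \emph{before} $P_i$ is processed. This is consistent because $\pp$ is not modified when a path enters $\mathcal{G}$. Hence $\pp^{i}=\pp^{i+1}$ in that case, and there is no ambiguity between the two states.
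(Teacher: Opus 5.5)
Your proposal is correct and matches the paper, which records this claim as immediate from the definition of $\mathcal{G}$: instantiating the \textbf{if}-condition of \Cref{alg:greedy-path-system}, evaluated on the state $\pp^i$, with $w=u$ and $w=v$ is exactly the argument. The timing remark is harmless but unnecessary, since $\pp^i$ is by definition the state before $P_i$ is considered, and that is the state on which the condition is evaluated.
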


Our first main theorem is a bound on $\mathcal L$:

\begin{theorem}\label{thm:greedy-path-system-bounds}
    Let $D \geq 4s^2f$. Then $|\mathcal L| = O(nDs)$.
\end{theorem}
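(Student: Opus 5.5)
The plan is the charging argument sketched in \Cref{sec:technical-over-path-systems}. It rests on two invariants of $\pp$ established along the run of \Cref{alg:greedy-path-system}. First, a bounded-size property: $|\pp_v^\ell| \le 2N_\ell$ (any $O(N_\ell)$ suffices) for all $v$, all $\ell$ and all times. Second, a weak collective fault-tolerance property: $|\pp_v^\ell[w]| \le N_\ell/(2f)$ for every $w \neq v$ and every $\ell \le \lceil s/2\rceil$.

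Given these, the counting is as follows. When $P_i\in\mathcal L$, a boost of $u_i$ at radius $\ell_i+1$ adds $(u_i,v_i)\circ(\pp_{v_i}^{\ell_i}-F_i)$. Since $|\pp_{v_i}^{\ell_i}|\ge N_{\ell_i}$ and $u_i,v_i\notin F_i$ (the edge lies on $P_i$), we get
\[
|\pp_{v_i}^{\ell_i}-F_i|\ge N_{\ell_i}-f\cdot N_{\ell_i}/(2f)=N_{\ell_i}/2 .
\]
These paths are new, because they are distinct walks prepended by the edge $(u_i,v_i)$. A boost only happens while $|\pp_{u_i}^{\ell_i+1}|<N_{\ell_i+1}$, so each pair $(u,\ell+1)$ receives at most $2N_{\ell+1}/N_\ell = 2D$ boosts. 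Summing over $n$ vertices and $\lceil s/2\rceil$ radii gives $|\mathcal L|\le 2D\cdot n\lceil s/2\rceil=O(nDs)$. The size invariant itself is immediate from the same observation: before the last boost $|\pp_u^{\ell+1}|<N_{\ell+1}$, and a single boost adds at most $|\pp_v^\ell|\le 2N_\ell\le N_{\ell+1}$ paths (as $D\ge 2$), so the total stays below $2N_{\ell+1}$.

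The main work is the weak collective fault-tolerance invariant, which I would prove by induction over time together with a strengthened, multi-vertex version. The target is $|\pp_v^\ell[W]|\le c_{|W|}\,N_\ell/D^{|W|}\cdot(\text{poly}(s))^{|W|}$ for sets $W\not\ni v$, which is the strong collective fault-tolerance property of \Cref{lem:local-invariants-intro}. With $D\ge 4s^2f$, the case $|W|=1$ then yields $N_\ell/(2f)$.

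The induction tracks how $\pp_u^{\ell+1}[W]$ grows. Every path in it is $(u,v)\circ Q$ with $Q\in\pp_v^\ell$, and either $v\in W$ or $Q\in\pp_v^\ell[W]$. In the second case each boost contributes at most $|\pp_v^\ell[W]|$ paths, and there are at most $2D$ boosts, so the total is about $2D\cdot N_\ell/D^{|W|}\approx N_{\ell+1}/D^{|W|}$ times a constant. The difficulty is that this multiplies the constant by $2$ at each of the $\ell\le s/2$ levels. In the first case, where the boost edge goes to some $v\in W$, we only get the bound $\pp_v^\ell[W\setminus\{v\}]$, which loses a factor of $D$. This is only acceptable if few boosts of $u$ use neighbours in $W$.

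I expect this step to be the main obstacle. The fix I would try is to exploit criticality: a boost of $u$ via $v$ at time $i$ requires $\dist_{H_i-F_i}(u,v)>s$, whereas the paths already in $\pp_u$ certify short connections in $H_i$. I would show that after a boost through $v$ with fault set $F_i$, a later boost of $u$ through the same $v$, or through a vertex of $W$ already heavily covered, forces the new fault set to hit many of the existing paths, which are of length $\le s/2$. This is where the park structure of $\pp_{v,t}$ and the slack $s^2$ in $D\ge 4s^2f$ enter: one factor of $s$ absorbs the length of the paths and one absorbs the levels. Carefully choosing the constants $c_{|W|}$ so the recursion closes is the delicate part.
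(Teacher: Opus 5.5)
\textbf{There is a genuine gap: the weak collective FT invariant is unproved, and with your constants it is not inductive.} Your charging step matches the paper: the theorem follows from the boost bound \textbf{I7}, which rests on the cap \textbf{I5} and the progress bound \textbf{I6}. The induction for the invariant $|\pp_v^\ell[w]|\le N_\ell/(2f)$ must go through the recursion $|\pp_u^{\ell+1}[w]|\le |\pp_w^{\ell}|+(\#\text{boosts of } u \text{ at radius } \ell+1)\cdot\max_v|\pp_v^{\ell}[w]|$. Write $|\pp_v^\ell[w]|\le a_\ell D^{\ell-1}$. Your bounds $|\pp_w^\ell|\le 2D^\ell$ and $2D+1$ boosts give $a_{\ell+1}\approx 2a_\ell+2$, so $a_\ell$ grows like $2^\ell$. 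The target $N_\ell/(2f)$ only allows $a_\ell\le D/(2f)$, which equals $2s^2$ at the threshold $D=4s^2f$. Since $\ell$ runs up to $\lceil s/2\rceil$, this fails. Using $N_\ell/(2f)$ itself as the hypothesis is no better: it yields roughly $N_{\ell+1}/f$ at the next level. Constants $c_{|W|}$ depending only on $|W|$ cannot repair a blow-up in $\ell$. The strong multi-vertex version and the park structure are not needed for this theorem at all. Your proposed criticality/fault-set argument also does not target this problem.

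\textbf{The missing idea is quantitative: every quantity must be tracked to within a factor $1+O(1/s)$.} The paper keeps the exact cap $|\pp_v^\ell|\le (D^{\ell+1}-1)/(D-1)\le D^\ell+2D^{\ell-1}$, which your own argument already gives. It carries the weak bound in the sharper form $|\pp_v^\ell[w]|\le 3\ell D^{\ell-1}$, a factor $\Theta(s)$ below $N_\ell/(2f)$. Then each boost adds at least $D^\ell-3\ell fD^{\ell-1}$ new paths, so there are at most $D+4sf\le D(1+1/s)$ boosts. The excess over $D$ is additive, so it contributes only $12\ell sfD^{\ell-1}$, which $D\ge 4s^2f$ absorbs. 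The recursion becomes $a_{\ell+1}\le a_\ell+3$, linear rather than exponential growth.

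\textbf{The direct-neighbour term needs no fault-set argument.} Boosts of $u$ through $w$ itself contribute at most $|\pp_w^\ell|$ in total, because $\pp$ is a set that only grows. Equivalently, $u$ is boosted through a given neighbour at most once: after that boost the edge $\{u,w\}$ lies in $H$, its endpoints avoid every later fault set, so it is never critical again.

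\textbf{Newness of boosted paths.} The same observation, together with \textbf{I1}, is what justifies your claim that the paths added in a boost are new. Since $\{u_i,v_i\}\notin E(H_i)$, no path of $\pp^i$ uses this edge. Being ``distinct walks'' alone does not show they were not already in $\pp$.
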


\begin{remark}
    The assumption $D \geq 4s^2f$ is essentially best possible, as a connectivity certificate needs degree $f$, and $s$ is at most $\beta$ for $(\alpha,\beta)$ spanners, hence $O(k)$ in the cases we analyze in the paper.
\end{remark}

To prove this theorem, we show that the path collection $\pp$ satisfies strong pairwise and collective fault-tolerant structure. To do so, we use the notion of parks \cite{PPST25} applied to our VFT setting. Parks provide the local combinatorial framework used to control the pairwise subcollections
$\pp_{v,w}^{i,\ell}$ throughout the update process.

\subsection{Parks and Fault Tolerance}\label{sec:parks}

Parks were introduced in \cite{PPST25} to provide a quantitative framework to track properties of certificates between two vertices in a graph. Parks depend on a \emph{score function}, which is a function that takes as input subsets of faulty elements (vertices/edges depending on the fault setting, in our case, vertices) and outputs a real number describing how non-faulty the collection is. 

We use a slight variant of the scores appearing in \cite{PPST25}.

\begin{definition}[Score function \cite{PPST25}]\label{def:score}
Let $0<\beta\le 1$.
The \emph{$(2,\beta)$-score function} $\sco^{\beta}(\cdot)$ assigns a path $P$ the score
\[
\sco^{\beta}(P) := \beta (2f)^{-(|L(P)|+1)}.
\]

For $J\subseteq V$, the \emph{induced score function on $J$} is
\[
\sco^{\beta}_J(P):=
\begin{cases}
\beta (2f)^{-\left(|L(P)|-|J|\right)-1} & \text{if $J\subseteq V(P)$,}\\
0 & \text{if $J\not\subseteq V(P)$.}
\end{cases}
\]
Thus, $\sco^{\beta}_{\emptyset}(\cdot)=\sco^{\beta}(\cdot)$.
We extend this definition to a path collection $\pp$ linearly:
\[
\sco^{\beta}_J(\pp) := \sum_{P\in \pp} \sco^{\beta}_J(P).
\]
Note that $\sco^{\beta}_J(\pp)=\sco^{\beta}_J(\pp[J])$.
\end{definition}

\begin{definition}[Park \cite{PPST25}]\label{def:park}
A collection of paths $\pp$ is a \emph{park} with respect to a score function $\sco(\cdot)$ if for all $J\subseteq V$
it holds that $\sco_J(\pp)\le 1$.%
\footnote{Since for a path $P$, $\sco_{V(P)}(P)=\beta$, we require $\beta\le 1$, otherwise all parks w.r.t.\ the score function are empty.}
\end{definition}

\begin{definition}[$J$-full park \cite{PPST25}]\label{def:full-park}
Let $\pp$ be a park with respect to $\sco(\cdot)$, and let $J\subseteq V$.
We say $\pp$ is \emph{$J$-full} if $\sco_J(\pp)>\tfrac12$.
\end{definition}

We recall the following fault-tolerance lemma, which motivates parks in fault-tolerant settings.
We include the proof for completeness.

\begin{lemma}[Fault-Tolerance of Parks \cite{PPST25}]\label{lemma:union_bound_scores}
Let $\pp$ be a park w.r.t.\ a $(2,\beta)$-score function $\sco(\cdot)$.
Let $F$ be a set of at most $f$ faulty vertices.
If there exists $J\subseteq V$ such that $J\cap F=\emptyset$ and $\sco_J(\pp)>\tfrac12$,
then there is a path $P\in \pp[J]$ with $V(P)\cap F=\emptyset$.
\end{lemma}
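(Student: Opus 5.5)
The plan is a union bound over the faulty vertices, using the park condition at the sets $J\cup\{w\}$. Fix $J$ with $J\cap F=\emptyset$ and $\sco_J(\pp)>\tfrac12$. The goal is to show that the total score, induced on $J$, of the paths in $\pp[J]$ hitting $F$ is at most $\tfrac12$. Then the paths in $\pp[J]-F$ carry positive score. Since only paths containing $J$ contribute to $\sco_J$, there is a path $P\in\pp[J]$ with $V(P)\cap F=\emptyset$.

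First, compare the score induced on $J$ with the score induced on $J\cup\{w\}$, for a single vertex $w\notin J$. Let $P\in\pp[J\cup\{w\}]$. By \Cref{def:score},
\[
\sco_{J\cup\{w\}}(P)=\beta(2f)^{-(L(P)-|J|-1)-1}=2f\cdot \sco_J(P),
\]
since enlarging $J$ by one vertex lowers the exponent by one. Hence
\[
\sco_J(\pp[\{w\}])=\sco_J(\pp[J\cup\{w\}])=\frac{1}{2f}\sco_{J\cup\{w\}}(\pp)\le\frac{1}{2f}.
\]
The last inequality is the park condition of \Cref{def:park} applied to the set $J\cup\{w\}$.

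Next, decompose the faulty paths. Every path in $\pp[J]$ that meets $F$ contains some $w\in F$, and every such $w$ lies outside $J$ because $J\cap F=\emptyset$. By the union bound and the previous step,
\[
\sco_J(\pp[J]\setminus(\pp-F))\le\sum_{w\in F}\sco_J(\pp[J\cup\{w\}])\le |F|\cdot\frac{1}{2f}\le\frac12.
\]
It follows that $\sco_J(\pp[J]-F)\ge\sco_J(\pp)-\tfrac12>0$, so some path in $\pp[J]-F$ has positive score, which gives the lemma.

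There is no real obstacle. The one point to check is the exponent bookkeeping when passing from $J$ to $J\cup\{w\}$, which yields exactly the factor $2f$. This step needs $w\notin J$, and that is guaranteed by $J\cap F=\emptyset$.
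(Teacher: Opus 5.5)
Your proposal is correct and follows the paper's proof essentially verbatim. Both arguments take a union bound over the faulty vertices, rescale the score of the link at $J\cup\{w\}$ by the factor $2f$ (valid because $w\notin J$), and apply the park condition at $J\cup\{w\}$ to bound the faulty score by $\tfrac12$.
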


\begin{proof}
Note that $\bigcup_{v\in F}\pp[J\cup\{v\}]$ consists of all faulty paths in $\pp[J]$, and
\begin{align*}
\sco_J\!\left(\bigcup_{v\in F}\pp[J\cup\{v\}]\right)
&\le \sum_{v\in F} \sco_J(\pp[J\cup\{v\}]) && \text{(union bound)}\\
&= \sum_{v\in F} \frac{1}{2f}\,\sco_{J\cup\{v\}}(\pp[J\cup\{v\}]) && \text{(by definition of $\sco$, since $v\notin J$)}\\
&\le \frac{1}{2} && \text{(as $\pp$ is a park and $|F|\le f$)}.
\end{align*}
Since $\sco_J(\pp)>\tfrac12$, the $J$-link $\pp[J]$ must contain a non-faulty path.
\end{proof}

Finally, we record a useful fact on the actual size of $J$-links in a park.

\begin{observation}\label{obs:size-bounds-park-links}
    Let $\pp$ be a park, consisting of paths of length $\ell$ w.r.t. $\sco=\sco^{\beta}$, then $|\pp[J]| \leq (1/\beta) (2f)^{\ell-|J|+1}$.
\end{observation}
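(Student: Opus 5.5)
The statement is \Cref{obs:size-bounds-park-links}: if $\pp$ is a park of length-$\ell$ paths w.r.t.\ $\sco^{\beta}$, then $|\pp[J]|\le (1/\beta)(2f)^{\ell-|J|+1}$. I would prove it directly from the park condition applied to the single set $J$. Nothing beyond \Cref{def:score} and \Cref{def:park} is needed.

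First I would compute the score of one path. Every $P\in\pp[J]$ has length exactly $\ell$ and satisfies $J\subseteq V(P)$. By the first case of the induced score,
\[
\sco^{\beta}_J(P)=\beta\,(2f)^{-(\ell-|J|)-1},
\]
and this value is the same for every path in $\pp[J]$. Since $\sco^{\beta}_J(\pp)=\sco^{\beta}_J(\pp[J])$ and the score is extended linearly, we get
\[
\sco^{\beta}_J(\pp)=|\pp[J]|\cdot \beta\,(2f)^{-(\ell-|J|+1)}.
\]

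Next I would invoke the park property, which says $\sco^{\beta}_J(\pp)\le 1$. Combining this with the identity above gives
\[
|\pp[J]|\le (1/\beta)(2f)^{\ell-|J|+1},
\]
which is the claimed bound.

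There is no real obstacle here; the only care needed is with conventions. If $|J|>\ell+1$, a path of length $\ell$ has only $\ell+1$ vertices, so $\pp[J]$ is empty and the bound holds trivially. Otherwise the exponent is nonnegative and the computation above applies unchanged. One should also read $|L(P)|$ in the score as the length $\ell$ of $P$.
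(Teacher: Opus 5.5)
Your proposal is correct and is essentially the paper's proof. The paper likewise writes $\sco_J(\pp)=|\pp[J]|\cdot\beta(2f)^{-(\ell-|J|)-1}$ because every path in $\pp[J]$ has length $\ell$, and then applies the park condition $\sco_J(\pp)\le 1$.
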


\begin{proof}
    By definition of the score function, we have 
    $$\sco_{J}(\pp) = \sum_{P\in \pp[J]}\sco_J(P) = |\pp[J]|\cdot \beta(2f)^{-(\ell - |J|)-1}.$$
    Since by the park condition $\sco_{J}(\pp) \leq 1$, we get the desired bound on $|\pp[J]|$.
\end{proof}

\subsection{The Properties of $\pp$}

We now describe the properties of the path collection $\pp$. From here onwards we analyze properties of the greedy path system algorithm, \Cref{alg:greedy-path-system}, which crucially does not depend on $d,r$, but only on $s$. Hence we may use the letters $d,r$ to denote other quantities, which we do and use $r:=\lceil \frac{s}{2} \rceil$. Recall that we assume $N_\ell=D^{\ell}$ for some $D$, as the input for \Cref{alg:greedy-path-system}.  We now prove that $\pp$ has many properties that are useful both for bounding $|\mathcal L|$ as well as for fault tolerant applications.

\begin{center}
\fbox{\begin{minipage}{0.97\linewidth}
\textbf{Invariant (I1--I7).}
For every iteration $i$, every $\ell\in\{0,1,\ldots,r\}$, and all $v,w\in V(G)$:
\begin{enumerate}[label=\textbf{I\arabic*:}, leftmargin=*, itemsep=2pt, topsep=2pt]
\item\label{inv:I1} (\emph{Well defined}) Every path in $\pp^i$ is contained in $H_i$.

\item\label{inv:I2} (\emph{Strong collective FT}) For every vertex $v$, every nonempty $W\subseteq V(G)-\{v\}$, and every $\ell\in[r]$,
\[
|\pp_v^{i,\ell}[W]|\ \le\ 3\ell\cdot s^{2(|W|-1)}\cdot D^{\ell-|W|}.
\]

\item\label{inv:I3} (\emph{Pairwise a park}) For every $\ell\in\{0,\ldots,r\}$ and every $v,w$,
the collection $\pp_{v,w}^{i,\ell}$ is a park w.r.t.\ $\lsco^\ell(\cdot)=\sco^{\beta_\ell}(\cdot)$, where
$\beta_\ell:=10^{-\ell}$.

\item\label{inv:I4} (\emph{Weak collective FT}) For every $u\neq v$ and $\ell\in[r]$,
\[
|\pp_v^{i,\ell}[\{u\}]|\ \le\ 3\ell\cdot D^{\ell-1}.
\]

\item\label{inv:I5} (\emph{Bounded size}) For every $v$ and $\ell\in\{0,\ldots,r\}$,
\[
|\pp_v^{i,\ell}|\ \le\ \frac{D^{\ell+1}-1}{D-1}.
\]

\item\label{inv:I6} (\emph{Progress per boost}) If $u_i$ is boosted at radius $\ell_i+1$, then
\[
|\pp_{u_i}^{i+1,\ell_i+1}|-|\pp_{u_i}^{i,\ell_i+1}|
\ \ge\
D^{\ell_i}-3\ell_i f D^{\ell_i-1}.
\]

\item\label{inv:I7} (\emph{Bounded number of boosts}) For every vertex $v$ and every $\ell\in[r]$,
$v$ is boosted at radius $\ell$ at most $D+4sf$ times.
\end{enumerate}
\end{minipage}}
\end{center}

Each of the mentioned invariants captures different features of the path system . Invariant \textbf{I3} states
that each pairwise collection $\pp_{v,w}^{i,\ell}$ is a park. This ensures that when the path collection $\pp$ grows, it encodes a new set of paths that can handle faults that the previous $\pp$ couldn't. 
\textbf{I2} and \textbf{I4} control how concentrated the larger collections $\pp_v^{i,\ell}$ can be
on small vertex sets, we call these invariants \emph{collective}, as they capture the collective structure of all paths in $\pp_v$, and not merely pairwise properties. We also call the conditions \textbf{I2} and \textbf{I4} fault-tolerant, since when $|\pp_v^{\ell}|\geq N_{\ell}$, this invariants translate to concrete statements on how $\pp_v^{\ell}$ survives under faults, e.g. as explained in \Cref{sec:technical-over-path-systems}, \textbf{I4} implies that $\pp_v-F \neq \emptyset $ for $F\subseteq V-\{v\}$, of size at most $f$. Finally, \textbf{I5}--\textbf{I7} turn each boost into a
quantitative progress statement in terms of the number of paths in $\pp$, which yields the bound on $|\mathcal L|$.
We will prove that all these invariants are preserved in the following lemma. 

\begin{lemma}\label{lem:pp-invariants}
Assuming $s\geq 2$, $f\geq 1$, and $D\geq 4s^2f$, the path collections $\pp^i$ satisfy Invariant \textbf{I1--I7}, where we denote $r := \lceil \frac{s}{2}\rceil$.
\end{lemma}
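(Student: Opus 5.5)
We prove all seven invariants simultaneously by induction on the iteration $i$. The base case $\pp^1=\{(v)\}_{v\in V}$ is immediate: $H_1$ contains all trivial paths, $|\pp_v^{1,0}|=1$, there are no paths of positive length, and each $\pp^{1,0}_{v,v}=\{(v)\}$ has score at most $\beta_0 (2f)^{-1}\le 1$ on every $J$. For the step, note that $\pp$ changes only when $P_i\in\mathcal{L}$. In that case the added paths are $(u_i,v_i)\circ(\pp^{i,\ell_i}_{v_i}-F_i)$, all of length $\ell_i+1$ and all rooted at $u_i$. So only the collections $\pp^{\cdot,\ell_i+1}_{u_i}$ and $\pp^{\cdot,\ell_i+1}_{u_i,w}$ change, and each invariant only needs to be rechecked for the root $u_i$ at radius $\ell_i+1$. \textbf{I1} follows because the edge $\{u_i,v_i\}\in P_i\subseteq H_{i+1}$ and, by induction, $\pp^i\subseteq H_i$.

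\textbf{I6} then follows from \textbf{I4} applied to $v_i$. We have $|\pp^{i,\ell_i}_{v_i}-F_i|\ge D^{\ell_i}-\sum_{w\in F_i}|\pp^{i,\ell_i}_{v_i}[\{w\}]|\ge D^{\ell_i}-3\ell_i fD^{\ell_i-1}$. We must also check that the new paths are genuinely new and simple. Since $e$ is $s$-critical, $\dist_{H_i-F_i}(u_i,v_i)>s\ge \ell_i$, so no surviving path of length $\ell_i\le r-1<s$ from $v_i$ in $H_i-F_i$ passes through $u_i$. The same fact shows that the prepended paths did not already exist in $\pp_{u_i}$: each is a non-faulty path of length $\ell_i+1\le s$ from $u_i$ whose second vertex is $v_i$, and such a path would already give $\dist_{H_i-F_i}(u_i,v_i)=1$. \textbf{I5} comes from the boost condition $|\pp^{i,\ell_i+1}_{u_i}|<D^{\ell_i+1}$ together with the added amount $\le |\pp^{i,\ell_i}_{v_i}|\le (D^{\ell_i+1}-1)/(D-1)$, which gives $|\pp^{i+1,\ell_i+1}_{u_i}| < D^{\ell_i+1}+\frac{D^{\ell_i+1}-1}{D-1}$. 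I expect this to need the slightly adjusted constant $\frac{D^{\ell+2}-1}{D-1}$ form, or else a sharper accounting. \textbf{I7} combines \textbf{I5} and \textbf{I6}: each boost adds at least $D^{\ell-1}(1-3sf/D)$ paths, and the total is bounded by $\approx D^{\ell}(1+1/D)$. Using $D\ge 4s^2f$, this gives at most $D+4sf$ boosts.

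\textbf{I2} and \textbf{I4} bound $|\pp^{\ell}_u[W]|$ by summing over boosts. For each boost of $u$ at radius $\ell$ via some neighbour $v$, the paths added containing $W$ are in bijection with paths of $\pp^{\ell-1}_v$ containing $W\setminus\{v\}$. If $v\notin W$, induction gives $3(\ell-1)s^{2(|W|-1)}D^{\ell-1-|W|}$ for them. If $v\in W$, they number at most $|\pp^{\ell-1}_v[W\setminus\{v\}]|\le 3(\ell-1)s^{2(|W|-2)}D^{\ell-|W|}$, or at most $|\pp^{\ell-1}_v|$ when $W=\{v\}$. Boosts of the first kind total at most $D+4sf$, contributing about $3(\ell-1)s^{2(|W|-1)}D^{\ell-|W|}(1+o(1))$. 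The difficulty is the second kind: we need the number of boosts of $u$ through a fixed neighbour $v\in W$ to be small, at most $O(s^2)$. This is where \textbf{I3} enters.

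I expect this step to be the main obstacle, and I would handle it through the park structure. \textbf{I3} for $\pp^{\ell}_{u,w}$ is preserved by the park-fullness argument of \cite{PPST25}. A new path $P=(u,v)\circ Q$ is added only for paths $Q$ that avoid $F_i$. If some $J$-link of the resulting collection had score above $1$, then before the addition the $J'$-link, where $J'=J$ minus the vertices hit by the new paths, would already be full, with score above $1/2$. By Lemma~\ref{lemma:union_bound_scores} it would then contain a path avoiding $F_i$, giving $\dist_{H_i-F_i}(u,v)\le s$ or $\dist_{H_i-F_i}(u,w)$ short enough to contradict criticality. The factor $\beta_\ell=10^{-\ell}$ absorbs the loss from the fact that the new paths inherit the park property of $\pp^{\ell-1}_{v,w}$ scaled by $(2f)^{-1}$. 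The same criticality argument bounds repeated boosts through a fixed $v$: after a few boosts via $v$, the collection $\pp^{\ell}_u[\{v\}]$ becomes $\{u,v\}$-full, so any further fault set $F_i$ leaves a surviving $u$--$v$ path, contradicting $\dist_{H_i-F_i}(u,v)>s$. Making this quantitative, with a factor $s^2$ per element of $W$ absorbed by $D\ge 4s^2f$, is the delicate bookkeeping I would carry out last.
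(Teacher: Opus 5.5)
Your outline of \textbf{I1}, \textbf{I5}, \textbf{I6}, \textbf{I7} and \textbf{I3} follows the paper. The proof of \textbf{I2}/\textbf{I4}, however, has a genuine gap, exactly at the step you flagged. The mechanism you propose there does not work, for two reasons.

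First, $\pp^{\ell}_u[\{v\}]$ is not one of the collections that \textbf{I3} controls. So there is no park bound to leverage for its ``$\{u,v\}$-fullness''.

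Second, and more seriously, a bound of $O(s^2)$ on the number of boosts of $u$ through a fixed neighbour $v\in W$ is quantitatively too weak to close the induction. Each such boost contributes up to $|\pp^{\ell-1}_v[W\setminus\{v\}]|\le 3(\ell-1)s^{2(|W|-2)}D^{\ell-|W|}$ paths, or $|\pp^{\ell-1}_v|\approx D^{\ell-1}$ when $W=\{v\}$. This is the same power of $D$ as the target $3\ell s^{2(|W|-1)}D^{\ell-|W|}$, so $D\ge 4s^2f$ gives no slack here. The first-kind boosts already use up $3(\ell-1)(1+\frac1s)s^{2(|W|-1)}D^{\ell-|W|}$. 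In the worst case ($|W|$ and $\ell$ near $s/2$), what remains admits only $O(1)$ boosts per $v\in W$. For $|W|=1$ it admits essentially a single copy of $|\pp^{\ell-1}_v|$.

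The missing idea is elementary, and you already use it in your \textbf{I6} paragraph. A critical edge $\{u_i,v_i\}$ is never in $H_i$: its endpoints lie on $P_i$, hence outside $F_i$, so an $H_i$-edge would give $\dist_{H_i-F_i}(u_i,v_i)=1$. After the boost, $\{u_i,v_i\}$ lies in every later $H_{i'}$. Hence $u$ is boosted through any fixed neighbour at most once.

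Equivalently, and this is how the paper argues, monotonicity of $\pp$ places every path ever added through $w\in W$ inside $(u_i,w)\circ\pp^{i,\ell_i}_w[W\setminus\{w\}]$. Their total is therefore bounded by a single application of \textbf{I5} (when $W=\{w\}$) or of \textbf{I2} at level $\ell_i$, regardless of how many boosts occurred. With this, the second-kind contribution is a $\frac{|W|}{s^2}$ fraction of the target. Then $\frac{|W|}{s^2}+\frac{p}{D}\le 1+\frac{2}{s}\le 1+\frac{1}{\ell_i}$ closes \textbf{I2}, and the direct computation using $8s^2fD^{\ell_i-1}\le 2D^{\ell_i}$ closes \textbf{I4}.

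Two smaller points. \textbf{I5} closes exactly, since $D^{\ell_i+1}+\frac{D^{\ell_i+1}-1}{D-1}=\frac{D^{\ell_i+2}-1}{D-1}$; no adjusted constant is needed. In \textbf{I3}, use the same $J$ for the old collection, not a modified $J'$. A surviving new path in the $J$-link forces $J\cap F_i=\emptyset$. If $\pp^{i,\ell_i+1}_{u_i,w}$ were $J$-full, a surviving $u_i\to w$ path concatenated with a surviving $v_i\to w$ path would give $\dist_{H_i-F_i}(u_i,v_i)\le 2\ell_i+1\le s$. The contradiction is with the criticality of $\{u_i,v_i\}$, not with any $u$--$w$ distance.
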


The proof of the lemma proceeds via a delicate simultaneous induction over all the maintained properties. A key difficulty is that these properties are tightly interdependent, so that none can be established in isolation. This interlocking structure necessitates a particularly careful argument, which we view as one of the main technical contributions of this work.

We defer the proof of \Cref{lem:pp-invariants} to a later section, and first demonstrate its utility through several applications. As a first consequence, we derive \Cref{thm:greedy-path-system-bounds}:

\begin{proof}[Proof of \Cref{thm:greedy-path-system-bounds}]
At each step $i$ such that $P_i \in \mathcal{L}$ one vertex is boosted at some radius $\ell \leq r$. By \textbf{I7} each vertex can be boosted at each radius at most $D + 4sf \leq 2D$ times. Consequently, there are at most $2D\cdot r\cdot n = O(sDn)$ paths in $\mathcal{L}$.
\end{proof}

Second, we show that \Cref{lem:pp-invariants} gives essentially tight bounds on the size of the VFT $(2k-1,0)$-spanner obtained by the standard VFT greedy algorithm. This can be seen as an alternative proof to \cite{BodwinP19}, which is only $poly(k)$ size larger. We include this proof as (1) it is very short given \Cref{lem:pp-invariants} (2) It shows how the park condition for pairs of vertices, \textbf{I3} may be used for bounding $\mathcal{G}$.

\begin{theorem}\label{thm:st-mult-vft}
    The VFT $1\to (2k-1)$ greedy spanner algorithm constructs a multiplicative $(2k-1)$-spanner with $O(k^3f^{1-1/k}n^{1 + 1/k})$ edges.
\end{theorem}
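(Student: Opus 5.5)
We instantiate the framework with $d=1$ and $r=2k-1$, so every added path $P_i$ is a single edge $e_i=\{x_i,y_i\}$, added because $\dist_{H_i-F_i}(x_i,y_i)>2k-1$. Take $s=2k-1$, so that $e_i$ is itself $s$-critical, and $\lceil s/2\rceil=k$. Choose $D=\Theta(k^2 f^{1-1/k}n^{1/k})$ with a large enough constant that $D\ge 4s^2f$ also holds. If instead $4s^2f$ dominates, then $f\gtrsim n/k^{2k}$, and the claimed bound exceeds $n^2$ up to $\poly(k)$ factors, so it holds trivially; this case is handled separately. The plan is to show $\mathcal G=\emptyset$. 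Then \Cref{thm:greedy-path-system-bounds} gives $m=|\mathcal L|=O(nDs)=O(k^3f^{1-1/k}n^{1+1/k})$, which is the claimed bound since each path is one edge.

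To show $\mathcal G=\emptyset$, suppose $e_i=\{u,v\}\in\mathcal G$. By \Cref{cl:obvious-prop-g}, $|\pp^{i,k}_u|\ge D^k$ and $|\pp^{i,k}_v|\ge D^k$. The paths of $\pp^{i,k}_u$ end at vertices $w$, and by \textbf{I3} each pairwise collection $\pp^{i,k}_{u,w}$ is a park. By \Cref{obs:size-bounds-park-links} with $J=\{u,w\}$, each such collection has size at most $10^k(2f)^{k-1}$. Hence $u$ has at least
\[
D^k/(10^k(2f)^{k-1}) > n
\]
distinct endpoints $w$, provided $D^k > 10^k 2^{k-1} f^{k-1} n$. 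This holds for our choice of $D$, and it is exactly where the $f^{1-1/k}n^{1/k}$ scaling comes from. But this is impossible, since there are only $n$ vertices, which is a contradiction.

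To be more careful, I would run the counting only after removing faulty paths. By \textbf{I4}, $|\pp_u^{i,k}-F_i|\ge D^k-3kfD^{k-1}\ge D^k/2$, and the same holds for $v$. Each surviving path lies in $H_i-F_i$ by \textbf{I1} and has length $k$. Since $\dist_{H_i-F_i}(u,v)>2k-1$, the $k$-balls around $u$ and $v$ are disjoint, so their surviving endpoint sets are disjoint. The endpoint count above then gives more than $n/2$ endpoints on each side. Together these exceed $n$, which is again a contradiction. Either version works once the constants in $D$ are fixed.

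The main obstacle is verifying the arithmetic so that $D$ simultaneously satisfies $D\ge 4s^2f$ and $D^k\gtrsim 10^k 2^k f^{k-1}n$, while keeping $nDs$ at $O(k^3 f^{1-1/k}n^{1+1/k})$. Here $(10\cdot 2)^{k/k}$ is a constant and $s^2\approx k^2$, which gives the $k^3$ overall. The regime where $4s^2f$ forces $D$ larger is the one handled by the trivial $n^2$ bound.
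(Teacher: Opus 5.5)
Your first counting argument is the paper's proof. The paper takes $s=2k-1$ and $D=30k^2f^{1-1/k}n^{1/k}$, and gets $|\mathcal L|=O(nDs)$ from \Cref{thm:greedy-path-system-bounds}. It then shows $\mathcal G=\emptyset$: \textbf{I3} and \Cref{obs:size-bounds-park-links} give $|\pp^{i,k}_v|\le 10^k(2f)^{k-1}n<D^k$, which contradicts \Cref{cl:obvious-prop-g}.

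Two of your side remarks are wrong and should be dropped. Neither is needed, so the proof stands.

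First, no separate regime is required. If $f\le n$, then $D\ge 30k^2f^{1-1/k}f^{1/k}=30k^2f\ge 4s^2f$ automatically. If $f>n$, the claimed bound already exceeds $n^2$. Your fallback estimate is also off: from $f\gtrsim n/k^{2k}$ you only get $f^{1-1/k}n^{1+1/k}\gtrsim n^2/k^{2k-2}$, which is not within a $\poly(k)$ factor of $n^2$.

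Second, the ``more careful'' disjointness variant fails. The condition $\dist_{H_i-F_i}(u,v)>2k-1$ only says the distance is at least $2k$. So surviving length-$k$ paths from $u$ and from $v$ may share an endpoint, since their concatenation has length $2k$ and gives no contradiction. The endpoint sets therefore need not be disjoint. Your claim that ``either version works'' is incorrect for this variant.
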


\begin{proof}
    As we consider $1\to 2k-1$ VFT spanners, we denote by $e_i:=P_i=(u_i,v_i)$ the $i^{th}$ edge added by the VFT algorithm. Using \Cref{alg:greedy-path-system} with $s = 2k - 1$ and $D = 30k^2f^{1-1/k}n^{\frac{1}{k}} \geq 4s^2f$ (since $n \geq f$) we decompose the edges added by the algorithm to $\mathcal{L}$ and $\mathcal{G}$.
    
    By \Cref{thm:greedy-path-system-bounds}, $|\mathcal{L}|= O(nDs)=O(k^3f^{1-1/k}n^{1+1/k})$, thus it remains to bound $\mathcal{G}$. We show that for the chosen $D,s$ there are no paths in $\mathcal{G}$. 
    
    Indeed, assume $e_i \in \mathcal{G}$.
    By \Cref{cl:obvious-prop-g}, this means that $|\pp^{i,k}_v| \geq D^k$ for the two endpoints of $e_i$ (as there is no other edges on the path, it is the $s$-critical edge). We claim that $|\pp^{i,k}_v| \geq D^k$ can never hold for any $i$ and $v$. Indeed, it follows from  \textbf{I3}, i.e 
    each $\pp^{i,k}_{v,u}$ being a park, and \Cref{obs:size-bounds-park-links}, that for all $u\neq v$ we have $$|\pp^{i,k}_{v,u}| = |\pp^{i,k}_{v,u}[\{u,v\}]| \leq 10^{k}\cdot (2f)^{k-1}.$$
    Thus 
    $$|\pp^{i,k}_v| = \sum_{u \neq v}|\pp^{i,k}_{v,u}| \leq 10^k\cdot (2f)^{k-1}\cdot n \leq  (20f^{1-1/k}n^{1/k})^k < D^k.$$
\end{proof}

Finally, another consequence of \Cref{lem:pp-invariants}, is that we can also replace the guarantee of \Cref{alg:greedy-path-system}, as appears in \Cref{cl:obvious-prop-g}, to be a guarantee in the flavor of the FT-clustering methods \cite{Parter22,PPST25}, which gives for a collection of paths ending in a \emph{small} set of \emph{centers} $S$. We now formally describe this proposition (in a form which will be useful in application in  \Cref{sec:sqrtk-greedy}). The proof of the proposition appears in \Cref{sec:sampling-appendix}.

\begin{proposition}\label{prop:vft-d-to-r-structure}
    Let $P_1, \ldots, P_m$ be the paths added by the VFT greedy $d \to r$ spanner. Let $F_i$ be the fault set causing $P_i$ to be added. Let $s\leq \frac{r}{d}$ be an integer, $R = \lceil\frac{s}{2}\rceil$, $C'\ge 2$ some constant and let $D \geq 12C's^3f$. Then there exist $S \subseteq V$, $|S|=\tilde{O}\left(\Bigl(\frac{10\cdot C'sf}{D}\Bigr)^R \cdot n \right)$, 
    and a partition of paths $P_1, \ldots, P_m$ into two disjoint families $\mathcal L$ and $\mathcal G$ such that:
    \begin{enumerate}
        \item $|\mathcal L| = O(nDs)$.

        \item For any $P_i \in \mathcal G$, let $H_i$ be the graph right before $P_i=(x_0,\ldots,x_{d})$ was added, then there exist $0\leq \ell <q \leq d$ such that: 
        \begin{itemize}
            \item $\dist_{H_i-F_i}(x_0,x_\ell) \leq \ell\cdot s$ and $\dist_{H_i-F_i}(x_q, x_{d}) \leq (d-q)\cdot s$. 

            \item There is a non-empty path collection $\prepp_{x_{\ell}}$ of paths from $x_\ell$ to $S$ of length $R$ in $H_i$, satisfying \[ \forall U \subseteq V\setminus\{x_\ell\}, \qquad |\prepp_{x_{\ell}}[U]|\leq |\prepp_{x_{\ell}}|(C' s f)^{-|U|}. \] Similarly, there is a path collection $\prepp_{x_q}$ with the analogous property.
        \end{itemize}
    \end{enumerate}
\end{proposition}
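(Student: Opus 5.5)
We would run \Cref{alg:greedy-path-system} on the greedy output with the given $s$ and $N_\ell=D^\ell$. Since $D\ge 12C's^3f\ge 4s^2f$, \Cref{lem:pp-invariants} applies, and \Cref{thm:greedy-path-system-bounds} gives item 1, $|\mathcal L|=O(nDs)$. For $P_i\in\mathcal G$ we choose $\ell$ and $q$ along $P_i=(x_0,\ldots,x_d)$ as follows. Let $x_\ell$ be the first endpoint of the first $s$-critical edge, and let $x_q$ be the last endpoint of the last $s$-critical edge. Such edges exist by the critical-edge claim, and $\ell<q$ because a critical edge has two distinct endpoints. Every edge before $x_\ell$ and every edge after $x_q$ is non-critical, so each has $\dist_{H_i-F_i}\le s$. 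The triangle inequality then gives the two distance bounds in the first bullet. By \Cref{cl:obvious-prop-g}, both $x_\ell$ and $x_q$ satisfy $|\pp^{i,R}_{x}|\ge D^R$.

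The set $S$ is chosen at random, independently of the run: each vertex enters $S$ with probability $p=\Theta(\log n)\cdot(10C'sf/D)^R$. Then $|S|=\tilde O((10C'sf/D)^R n)$ w.h.p. For a vertex $x$ at time $i$ with $|\pp^{i,R}_x|\ge D^R$, we set $\prepp_x=\{P\in\pp_x^{i,R}\mid \text{end}(P)\in S\}$. Property \textbf{I1} places these paths in $H_i$. The catch is that $\pp^{i}$ depends only on the greedy run and not on $S$, so for every pair $(i,x)$ (polynomially many) we can union bound over a fixed collection.

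The two things to prove are that $\prepp_x$ is non-empty and that it is spread. For the spread condition we use \textbf{I2} and \textbf{I3} together with Chernoff bounds. The total count satisfies $\mathbb E|\prepp_x|=p|\pp_x^R|\ge pD^R=\Theta(\log n)(10C'sf)^R$. \textbf{I3} together with \Cref{obs:size-bounds-park-links} shows that each endpoint $w$ receives at most $10^R(2f)^{R-1}$ paths. This per-endpoint cap is what makes the count concentrate, so $|\prepp_x|\ge \tfrac12 p|\pp_x^R|$ w.h.p.

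For a set $U\not\ni x$ there are two cases.

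\textbf{Case 1: $U$ omits the endpoint.} Then \textbf{I2} gives $|\pp_x^R[U]|\le 3R s^{2(|U|-1)}D^{R-|U|}$. Its sampled version has expectation at most $p\,3Rs^{2|U|}D^{R-|U|}$, and this is at most $(C'sf)^{-|U|}\cdot\frac12 pD^R$ when $D\ge 12C's^3f$, which is exactly where the hypothesis on $D$ enters.

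\textbf{Case 2: $U$ contains the endpoint $w\in S$.} Here we use \textbf{I3} on $\pp_{x,w}$, which gives $|\pp_{x,w}[U]|\le 10^R(2f)^{R-|U|+1}$. This needs to be at most $(C'sf)^{-|U|}\cdot\tilde\Omega((10C'sf)^R)$, which holds with room to spare.

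The main obstacle is making these high-probability bounds hold uniformly over all $U$, of which there are too many to union bound. Three observations handle this. It suffices to take $|U|\le R+1$. For a fixed $U$, the relevant counts are sums of independent indicators over endpoints, each bounded by the park cap, so the Chernoff deviations stay within a $\log n$ factor absorbed by the $\Theta(\log n)$ in $p$. Finally, only sets $U$ with $\pp_x^R[U]\ne\emptyset$ matter, and there are at most $|\pp_x^R|\cdot 2^{R+1}=\poly(n)$ of them, so the union bound is affordable. If the constants come out tight, we would adjust by absorbing $R\le s$ and the factor $3R$ into the $10^R$ slack.
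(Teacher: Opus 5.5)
Your proposal is essentially the paper's proof. It uses the same partition from \Cref{alg:greedy-path-system}, centers sampled independently of the run, \textbf{I2} to bound the mean of each link, \textbf{I3} to cap each endpoint's contribution so that a Bernstein/Chernoff bound applies, and a union bound over the at most $2^{R+1}|\pp_x^R|$ sets lying on some path. The differences from the paper are cosmetic: it freezes one collection $\pp_v^{i_v,R}$ per vertex, taken the first time it reaches size $D^R$, and uses monotonicity, so it union bounds over $n$ vertices rather than pairs $(i,x)$; and it does not split by whether the endpoint lies in $U$, since \textbf{I2} and Bernstein cover all $U$ uniformly.

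Three bookkeeping slips need fixing:
\begin{enumerate}
\item In Case~2, apply \Cref{obs:size-bounds-park-links} to $U\cup\{x\}$, which gives $|\pp_{x,w}[U]|\le 10^R(2f)^{R-|U|}$. Your exponent $R-|U|+1$ loses a factor $2f$, and the ``room to spare'' comparison then fails for $|U|=R$ and $f\gg\log n$.
\item In Case~1, keep the exponent $2(|U|-1)$ from \textbf{I2}. The bound $3Rs^{2|U|}D^{R-|U|}$ does not fit under the target when $|U|=1$ and $R\ge 3$, and the $10^R$ slack cannot help because it cancels in that comparison.
\item $|\pp_x^R|$ is only known to be at most $2D^R$, which need not be $\poly(n)$. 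So the logarithmic factor in $p$ must be of order $\log n+R\log D$, not $\log n$; this also absorbs the factor $|U|\le R$ from endpoints in $U$. This is exactly the paper's choice $\ln(100\cdot 2^{R+2}D^R n)$.
\end{enumerate}
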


\subsection{Proof of \Cref{lem:pp-invariants}}\label{sec:complicated-invariants-lemma}

For the proof we will need the following inequality for scores of concatenations, we include proof for completeness.

\begin{claim}[\cite{PPST25}]\label{cl:concat}
    For any path $P$ that begins in a vertex $v$, an edge $e = uv$ and any $J \subseteq V(G)$ we have
    $$\sco_J(e\circ P) \leq \sco_J(P) + \sco_{J-u}(P).$$
\end{claim}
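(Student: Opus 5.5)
We prove the inequality pathwise from \Cref{def:score}. We compare $\sco_J(e\circ P)$ with $\sco_J(P)$ and $\sco_{J-u}(P)$ according to which of the conditions $J\subseteq V(e\circ P)$, $J\subseteq V(P)$ and $J-u\subseteq V(P)$ hold. Note that $V(e\circ P)=V(P)\cup\{u\}$ and $L(e\circ P)=L(P)+1$, since $e\circ P$ is a path and so $u\notin V(P)$. Write $\ell=L(P)$ and $j=|J|$.

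\emph{Case 1: $J\not\subseteq V(e\circ P)$.} Then $\sco_J(e\circ P)=0$. The right-hand side is a sum of nonnegative terms, so the inequality is trivial.

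\emph{Case 2: $J\subseteq V(e\circ P)$ and $u\notin J$.} Then $J\subseteq V(P)$ and $J-u=J$. We get
\[
\sco_J(e\circ P)=\beta(2f)^{-(\ell+1-j)-1}=\tfrac{1}{2f}\,\sco_J(P)\le \sco_J(P).
\]
This holds because $2f\ge 2>1$ when $f\ge 1$.

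\emph{Case 3: $J\subseteq V(e\circ P)$ and $u\in J$.} Then $J-u\subseteq V(P)$ and $|J-u|=j-1$. We get
\[
\sco_J(e\circ P)=\beta(2f)^{-(\ell+1-j)-1}=\beta(2f)^{-(\ell-(j-1))-1}=\sco_{J-u}(P).
\]
Here the fact that $e\circ P$ gains one unit of length while $J$ loses the vertex $u$ makes the exponents match exactly. In this case $\sco_J(P)=0$ because $u\notin V(P)$, which is consistent with the bound.

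In every case $\sco_J(e\circ P)\le \sco_J(P)+\sco_{J-u}(P)$. The only point needing care is the bookkeeping of exponents in Case 3, where length and $|J|$ shift together; the rest is immediate. If the paper later sums this inequality over a collection $\pp$, the same bound extends by linearity of $\sco_J$.
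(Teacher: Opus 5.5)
Your proof is correct and is essentially the paper's argument: the paper likewise restricts to $J\subseteq V(e\circ P)$ and splits into two cases. When $J\subseteq V(P)$ one gets $\sco_J(e\circ P)=\frac{1}{2f}\sco_J(P)$, and when $J=\{u\}\sqcup J'$ with $J'\subseteq V(P)$ one gets $\sco_J(e\circ P)=\sco_{J-u}(P)$ exactly. One small remark: the paper defines a path only as a sequence of consecutively adjacent vertices, so $u\notin V(P)$ is not automatic; you never actually use it, since $V(e\circ P)=V(P)\cup\{u\}$, $L(e\circ P)=L(P)+1$, and both of your computations hold regardless.
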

\begin{proof}
    Since the right side of the inequality is always non-negative, it is enough to prove the claim when left side is positive, i.e. $J \subseteq V(e\circ P)$. Assuming $\sco=\sco^{\beta}$ we have the following cases:
    \begin{itemize}
        \item $J \subseteq V(P)$: then $\sco_J^{\beta}(e\circ P) = \beta(2 f)^{-(L(P)+1 - |J|)-1} = \frac{1}{2 f}\sco_J^{\beta}(P) \leq \sco_J^{\beta}(P)$;
        \item $J = \{u\}\sqcup J'$ and $J' \subseteq V(P)$: then $\sco_J^{\beta}(e\circ P) = \beta(2 f)^{-(L(P)+1 - |J|)-1} = \beta(2 f)^{-(L(P) - |J'|)-1} = \sco_{J'}^{\beta}(P) = \sco_{J-u}^{\beta}(P)$.
    \end{itemize}
\end{proof}

The following lemma is at the heart of the construction. It implies that the update rule of $\pp$ doesn't violate \emph{any} of the pairwise a park conditions.

\begin{lemma}[FT path-exchange lemma]\label{lem:ft-neighborhood-exchange-lemma-gen}
Let $G$ be an undirected graph, let $s\ge 2$, set $r:=\lceil s/2\rceil$, and fix $\ell\le r-1$.
Let $u,v\in V(G)$ and let $F\subseteq V(G)\setminus\{u,v\}$ be a fault set such that 
\[
\dist_{G-F}(u,v)\ >\ s.
\]
Let $\pp$ be a collection of oriented paths in $G$ and assume that for every $w\in V(G)$ the subcollection
$\pp_{u,w}^{\ell+1}$ is a park w.r.t. some score function $\sco$.
Then for every $w\in V(G)$ and every $J\subseteq V(G)$, if
\[
(\pp_{v,w}^{\ell}-F)\bigl[J\setminus\{u\}\bigr]\ \neq\ \emptyset,
\]
we must have that $\pp_{u,w}^{\ell+1}$ is \emph{not} $J$-full (w.r.t.\ $\sco$), i.e.
$\sco_{J}(\pp_{u,w}^{\ell+1})\le \tfrac12$.
\end{lemma}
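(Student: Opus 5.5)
We argue by contradiction. Suppose that for some $w$ and $J$ there is a path $Q\in(\pp_{v,w}^{\ell}-F)[J\setminus\{u\}]$, and that $\sco_J(\pp_{u,w}^{\ell+1})>\tfrac12$. The goal is to produce a short $u$--$v$ path avoiding $F$, of length at most $\ell+(\ell+1)\le 2r-1\le s$ (using $\ell\le r-1$ and $r=\lceil s/2\rceil$). This contradicts $\dist_{G-F}(u,v)>s$.

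The natural tool is \Cref{lemma:union_bound_scores}. To apply it we need a set disjoint from $F$ on which $\pp_{u,w}^{\ell+1}$ is full, but $J$ itself might intersect $F$. So we restrict to $J':=J\cap(V(Q)\cup\{u\})$, and this needs two checks.

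\textbf{Step 1: $J'$ is disjoint from $F$.} The set $V(Q)$ avoids $F$ since $Q\in\pp-F$, and $u\notin F$ by hypothesis.

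\textbf{Step 2: $J'=J$.} Since $Q$ contains $J\setminus\{u\}$, we have $J\subseteq V(Q)\cup\{u\}$, so $J'=J$ and there is no loss. Hence $J\cap F=\emptyset$ and $\sco_J(\pp_{u,w}^{\ell+1})>\tfrac12$. Because $\pp_{u,w}^{\ell+1}$ is a park, \Cref{lemma:union_bound_scores} yields a path $P\in\pp_{u,w}^{\ell+1}[J]$ with $V(P)\cap F=\emptyset$.

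\textbf{Step 3: concatenate.} The path $P$ runs from $u$ to $w$ with $\ell+1$ edges, and $Q$ runs from $v$ to $w$ with $\ell$ edges, both in $G-F$. The walk $P\circ Q^{rev}$ goes from $u$ to $v$ in $G-F$ with length $2\ell+1\le 2(r-1)+1=2\lceil s/2\rceil-1\le s$. Therefore $\dist_{G-F}(u,v)\le s$, the desired contradiction.

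The only delicate point is the apparent mismatch between $J$ and $J\setminus\{u\}$: the hypothesis constrains only $J\setminus\{u\}$. It is handled by the observation in Step 2 together with $u\notin F$. One should also check the degenerate case $J=\emptyset$ or $J=\{u\}$. Here the hypothesis only says some non-faulty $Q\in\pp_{v,w}^{\ell}$ exists, and \Cref{lemma:union_bound_scores} applies with that same $J$, which is disjoint from $F$. So no separate treatment is needed, and the argument goes through uniformly.
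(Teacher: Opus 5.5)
Your proof is correct and is essentially the paper's argument. From $J\setminus\{u\}\subseteq V(Q)$ and $u\notin F$ you get $J\cap F=\emptyset$; you then apply \Cref{lemma:union_bound_scores} to obtain a surviving $P\in\pp_{u,w}^{\ell+1}[J]$, and concatenate it with $Q^{rev}$ to get a $u$--$v$ walk in $G-F$ of length $2\ell+1\le s$. The auxiliary set $J'$ is unnecessary, since you immediately show $J'=J$, but it does no harm.
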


\begin{proof}
Assume for contradiction that there exist $w\in V(G)$ and $J\subseteq V(G)$ such that
$(\pp_{v,w}^{\ell}-F)[J\setminus\{u\}]\neq\emptyset$ and $\pp_{u,w}^{\ell+1}$ is $J$-full.

Pick a path $P\in (\pp_{v,w}^{\ell}-F)[J\setminus\{u\}]$. Then $P$ avoids $F$ and contains every vertex in
$J\setminus\{u\}$, hence $(J\setminus\{u\})\cap F=\emptyset$. Since also $u\notin F$, we have $J\cap F=\emptyset$.

Because $\pp_{u,w}^{\ell+1}$ is a park and is $J$-full, we may apply \Cref{lemma:union_bound_scores} with the fault set
$F$ (which is disjoint from $J$) to deduce that
\[
(\pp_{u,w}^{\ell+1}-F)[J]\ \neq\ \emptyset.
\]
Let $P'\in(\pp_{u,w}^{\ell+1}-F)[J]$ be such a surviving path. Thus $P'$ is a $u\to w$ path in $G-F$ of length
$\ell+1$, and $P$ is a $v\to w$ path in $G-F$ of length $\ell$. Their concatenation is a path of length $2\ell+1 \leq s$ from $u$ to $v$ that isn't damaged by $F$, contradiction.
\end{proof}

Now we prove the desired properties of $\pp^i$.

\begin{proof}[Proof of \Cref{lem:pp-invariants}]
We prove by induction on $i$.

\emph{Base case ($i=1$).}
By construction, $\pp^1$ consists of all length-$0$ paths, hence \textbf{I1--I5} hold trivially and \textbf{I6--I7} are
vacuous.

\emph{Inductive step.}
Assume \textbf{I1--I7} hold for $\pp^i$. If $P_i \in \mathcal{G}$, then $\pp^{i+1}=\pp^i$ and there is nothing
to prove. Otherwise, for some critical edge $e_i=\{u_i,v_i\}$ with witness fault set $F_i$ and radius $\ell_i$ we have
\[
\pp^{i+1}\ :=\ \pp^{i}\ \cup\ \bigl((u_i,v_i)\circ(\pp_{v_i}^{i,\ell_i}-F_i)\bigr),
\]
and in particular $\pp_x^{i+1,\ell}=\pp_x^{i,\ell}$ unless $x=u_i$ and $\ell=\ell_i+1$.

\medskip
\noindent\textbf{Invariant \textbf{I1}.}
This is immediate from the update rule: all newly added paths lie in $H_{i+1}$.

\paragraph{Subclaim A (Bounded size: \textbf{I5}).}
We prove \textbf{I5} for the only modified block $(u_i,\ell_i+1)$.
By definition of $\ell_i$ we have 
$|\pp_{u_i}^{i,\ell_i+1}|<D^{\ell_i+1}$.
Also, by the inductive hypothesis \textbf{I5},
\[
|\pp_{v_i}^{i,\ell_i}|\ \le\ \frac{D^{\ell_i+1}-1}{D-1}.
\]
Therefore,
\begin{align*}
|\pp_{u_i}^{i+1,\ell_i+1}|
&=
\Big|\pp_{u_i}^{i,\ell_i+1}\ \cup\ \bigl((u_i,v_i)\circ(\pp_{v_i}^{i,\ell_i}-F_i)\bigr)\Big|\\
&\le
|\pp_{u_i}^{i,\ell_i+1}|+|\pp_{v_i}^{i,\ell_i}|\\
&<
D^{\ell_i+1}+\frac{D^{\ell_i+1}-1}{D-1}.
\end{align*}
Finally, note that
\[
D^{\ell_i+1}+\frac{D^{\ell_i+1}-1}{D-1}
=
\frac{D^{\ell_i+2}-1}{D-1},
\]
Thus
\(
|\pp_{u_i}^{i+1,\ell_i+1}|\le \frac{D^{\ell_i+2}-1}{D-1},
\)
and \textbf{I5} follows.

\paragraph{Subclaim B (Progress lower bound: \textbf{I6}).} 
By the update rule,
\[
|\pp_{u_i}^{i+1,\ell_i+1}|-|\pp_{u_i}^{i,\ell_i+1}|
=
|(u_i,v_i)\circ(\pp_{v_i}^{i,\ell_i}-F_i)|
=
|\pp_{v_i}^{i,\ell_i}-F_i|.
\]
We lower bound this by subtracting the paths damaged by faults:
\[
|\pp_{v_i}^{i,\ell_i}-F_i|
\ \ge\
|\pp_{v_i}^{i,\ell_i}|-\sum_{w\in F_i}|\pp_{v_i}^{i,\ell_i}[\{w\}]|.
\]
By definition of $\ell_i$ we have $|\pp_{v_i}^{i,\ell_i}|\ge D^{\ell_i}$.
By \textbf{I4} (weak collective FT), for each $w\in F_i$,
$|\pp_{v_i}^{i,\ell_i}[\{w\}]|\le 3\ell_i D^{\ell_i-1}$. As $|F_i|\le f$, we obtain
\[
|\pp_{u_i}^{i+1,\ell_i+1}|-|\pp_{u_i}^{i,\ell_i+1}|
\ \ge\
D^{\ell_i}-3f\ell_i D^{\ell_i-1},
\]
which is exactly \textbf{I6}.

\paragraph{Subclaim C (Bounded number of boosts: \textbf{I7}).}
Fix a vertex $x$ and a radius $\ell\in[r]$. Each boost of $x$ at radius $\ell$ increases $|\pp_x^{\ell}|$ by at least
$D^{\ell-1}-3f(\ell-1)D^{\ell-2}$,
by Subclaim B (with $\ell_i=\ell-1$). By \textbf{I5}, we always have
$|\pp_x^{\ell}|\le \frac{D^{\ell+1}-1}{D-1}$.
Hence, if $x$ were boosted at radius $\ell$ more than
\[
\frac{D^{\ell+1}-1}{(D-1)\bigl(D^{\ell-1}-3f(\ell-1) D^{\ell-2}\bigr)}
\]
times, then $|\pp_x^{\ell}|$ would exceed its cap from \textbf{I5}, a contradiction.

We now use the relation of $D,f,s,r$ to conclude the following upper bound: 
\begin{multline*}
    \frac{D^{\ell+1}-1}{(D-1)(D^{\ell-1} - 3f(\ell-1) D^{\ell-2})} \leq D\frac{1}{1 - (3f(\ell-1) + 1)/D} \le\\\le D\frac{1}{1 - 2fs/D} \leq D(1+4fs/D) = D + 4fs,
\end{multline*}
where the second inequality holds for $\ell > 1$ since $f(\ell-1) \geq 1$ and $\ell-1 \leq r-1\leq s/2$ and for $\ell = 1$ since $2fs > 1$, and the third holds because $4fs/D \leq 1/s \leq 1$ by choice of $D,f,s,r$. 
This proves \textbf{I7}.

\paragraph{Subclaim D (Weak and strong collective FT bounds: \textbf{I4} and \textbf{I2}).}
As noted above, only $\pp_{u_i}^{\ell_i+1}$ changes, so we prove \textbf{I4} and \textbf{I2} for
$v=u_i$ and $\ell=\ell_i+1$.

Let $i_1<i_2<\cdots<i_p=i$ be all indices up to $i$ in which $u_i$ was boosted at radius $\ell_i+1$; equivalently,
$e_{i_j}=\{u_i,v_{i_j}\}$ and the update appended $(u_i,v_{i_j})\circ(\pp_{v_{i_j}}^{i_j,\ell_i}-F_{i_j})$.
Then by construction,
\[
\pp_{u_i}^{i+1,\ell_i+1}
=
\bigcup_{j=1}^p (u_i,v_{i_j})\circ(\pp_{v_{i_j}}^{i_j,\ell_i}-F_{i_j}).
\]
By Subclaim C, $p\le D+4sf$.

\smallskip
\noindent\underline{\textbf{I4} (the $|W|=1$ case).}
Fix $w\neq u_i$.
A path in $\pp_{u_i}^{i+1,\ell_i+1}[\{w\}]$ is of one of two types:
\begin{itemize}[leftmargin=*]
\item it comes from a boost along the edge $(u_i,w)$ (i.e.\ $v_{i_j}=w$), hence belongs to
$(u_i,w)\circ \pp_{w}^{i_j,\ell_i}$; or
\item it comes from a boost along $(u_i,v_{i_j})$ with $v_{i_j}\neq w$, hence belongs to
$(u_i,v_{i_j})\circ \pp_{v_{i_j}}^{i_j,\ell_i}[\{w\}]$.
\end{itemize}
Therefore,
\[
|\pp_{u_i}^{i+1,\ell_i+1}[\{w\}]|
\le
|\pp_{w}^{i,\ell_i}|+\sum_{j:\, v_{i_j}\neq w}|\pp_{v_{i_j}}^{i_j,\ell_i}[\{w\}]|.
\]
By the inductive hypotheses \textbf{I5} and \textbf{I4} (at level $\ell_i$),
\[
|\pp_{w}^{i,\ell_i}|
\le
\frac{D^{\ell_i+1}-1}{D-1},
\qquad
|\pp_{v_{i_j}}^{i_j,\ell_i}[\{w\}]|
\le
3\ell_i D^{\ell_i-1}.
\]
Thus
\[
|\pp_{u_i}^{i+1,\ell_i+1}[\{w\}]|
\le
\frac{D^{\ell_i+1}-1}{D-1}+3p\ell_i D^{\ell_i-1}.
\]
Now we expand :
\[
\frac{D^{\ell_i+1}-1}{D-1}
\le
D^{\ell_i}+2D^{\ell_i-1}
\quad(\text{since }D\ge 4),
\qquad
3p\ell_i D^{\ell_i-1}
\le
3\ell_i(D+4sf)D^{\ell_i-1}
=
3\ell_i D^{\ell_i}+12\ell_i s f D^{\ell_i-1}.
\]
Hence
\[
|\pp_{u_i}^{i+1,\ell_i+1}[\{w\}]|
\le
(3\ell_i+1)D^{\ell_i}+\bigl(2+12\ell_i sf\bigr)D^{\ell_i-1}
\le
(3\ell_i+1)D^{\ell_i}+8s^2fD^{\ell_i-1}
\le
3(\ell_i+1)D^{\ell_i},
\]
where the last step uses $4s^2f\le D$ (so $8s^2fD^{\ell_i-1}\le 2D^{\ell_i}$).
This is \textbf{I4} for $(u_i,\ell_i+1)$.

\smallskip
\noindent\underline{\textbf{I2} (general $W$).}
Let $W\neq\emptyset$ with $u_i\notin W$ and $|W|\le \ell_i+1$ (otherwise the link is empty). We also can assume now that $|W| \geq 2$ since the case $|W| = 1$ is already done.
Using again the union representation above and splitting by whether the first neighbor is in $W$,
\[
\pp_{u_i}^{i+1,\ell_i+1}[W]
\subseteq
\bigcup_{w\in W}(u_i,w)\circ \pp_{w}^{i,\ell_i}[W\setminus\{w\}]
\ \cup\
\bigcup_{j:\, v_{i_j}\notin W}(u_i,v_{i_j})\circ \pp_{v_{i_j}}^{i_j,\ell_i}[W].
\]
Taking sizes and applying the inductive hypothesis \textbf{I2} (at level $\ell_i$) gives
\begin{align*}
|\pp_{u_i}^{i+1,\ell_i+1}[W]|
&\le
\sum_{w\in W} |\pp_{w}^{i,\ell_i}[W\setminus\{w\}]|
\;+\;
\sum_{j:\, v_{i_j}\notin W} |\pp_{v_{i_j}}^{i_j,\ell_i}[W]|\\
&\le
|W|\cdot 3\ell_i\, s^{2(|W|-2)}D^{\ell_i-(|W|-1)}
\;+\;
p\cdot 3\ell_i\, s^{2(|W|-1)}D^{\ell_i-|W|}.
\end{align*}
Factor out $3\ell_i s^{2(|W|-1)}D^{\ell_i+1-|W|}$ to obtain
\[
|\pp_{u_i}^{i+1,\ell_i+1}[W]|
\le
3\ell_i s^{2(|W|-1)}D^{\ell_i+1-|W|}
\left(\frac{|W|}{s^2}+\frac{p}{D}\right).
\]
Now use $|W|\le r\le s$ and $p\le D+4sf\le D(1+\frac{1}{s})$ (since $D\ge 4s^2f$) to get
\[
\frac{|W|}{s^2}+\frac{p}{D}
\le
\frac{1}{s}+\left(1+\frac{1}{s}\right)
\le
1+\frac{2}{s}.
\]
Finally, since $\ell_i\ge 1$ (can assume this because $|W| \geq 2$) and $2/s\le 1/\ell_i$ for $\ell_i\le r-1\le s/2$, we conclude
\[
|\pp_{u_i}^{i+1,\ell_i+1}[W]|
\le
3\ell_i s^{2(|W|-1)}D^{\ell_i+1-|W|}
\left(1+\frac{1}{\ell_i}\right)
=
3(\ell_i+1)s^{2(|W|-1)}D^{\ell_i+1-|W|},
\]
which is exactly \textbf{I2} for $(u_i,\ell_i+1)$.

\paragraph{Subclaim E (Pairwise parks preserved: \textbf{I3}).}
Fix $w\in V(G)$ and $J\subseteq V(G)$. We must show that $\pp_{u_i,w}^{i+1,\ell_i+1}$ is a park w.r.t.\
$\lsco^{\ell_i+1}$. By construction,
\[
\pp_{u_i,w}^{i+1,\ell_i+1}
=
\pp_{u_i,w}^{i,\ell_i+1}
\ \cup\
\Bigl((u_i,v_i)\circ(\pp_{v_i,w}^{i,\ell_i}-F_i)\Bigr).
\]
We have two cases:

\smallskip
\noindent\underline{Case 1: $(\pp_{v_i,w}^{i,\ell_i}-F_i)[J\setminus\{u_i\}]=\emptyset$.}
Then the added part contributes zero score to the $J$-link, hence
\[
\lsco_J^{\ell_i+1}(\pp_{u_i,w}^{i+1,\ell_i+1})
\le
\lsco_J^{\ell_i+1}(\pp_{u_i,w}^{i,\ell_i+1})
\le
1,
\]
by the inductive hypothesis \textbf{I3}.

\smallskip
\noindent\underline{Case 2: $(\pp_{v_i,w}^{i,\ell_i}-F_i)[J\setminus\{u_i\}]\neq\emptyset$.}
Then \Cref{lem:ft-neighborhood-exchange-lemma-gen} implies that $\pp_{u_i,w}^{i,\ell_i+1}$ is \emph{not} $J$-full, so
\[
\lsco_J^{\ell_i+1}(\pp_{u_i,w}^{i,\ell_i+1})\ \le\ \tfrac12.
\]
For the newly added paths, we apply \Cref{cl:concat} summed over the collection
$\pp_{v_i,w}^{i,\ell_i}$, and use the scale relation $\beta_{\ell_i+1}=10^{-(\ell_i+1)}=\frac{1}{10}\beta_{\ell_i}$:
\begin{align*}
\lsco_J^{\ell_i+1}\bigl((u_i,v_i)\circ \pp_{v_i,w}^{i,\ell_i}\bigr)
&=
\frac{1}{10}\,\lsco_J^{\ell_i}\bigl((u_i,v_i)\circ \pp_{v_i,w}^{i,\ell_i}\bigr)\\
&\le
\frac{1}{10}\Bigl(\lsco_J^{\ell_i}(\pp_{v_i,w}^{i,\ell_i})+\lsco_{J\setminus\{u_i\}}^{\ell_i}(\pp_{v_i,w}^{i,\ell_i})\Bigr)\\
&\le
\frac{1}{10}(1+1)
\le
\frac12,
\end{align*}
where the penultimate inequality uses the inductive hypothesis \textbf{I3} (parks at level $\ell_i$).
Therefore,
\[
\lsco_J^{\ell_i+1}(\pp_{u_i,w}^{i+1,\ell_i+1})
\le
\lsco_J^{\ell_i+1}(\pp_{u_i,w}^{i,\ell_i+1})
+
\lsco_J^{\ell_i+1}\bigl((u_i,v_i)\circ \pp_{v_i,w}^{i,\ell_i}\bigr)
\le
\tfrac12+\tfrac12
=
1.
\]
This proves \textbf{I3}.

\medskip
All invariants are established for $\pp^{i+1}$, completing the induction.
\end{proof}

%% file: Sqrtk-Sec.tex
\section{The VFT greedy $ \sqrt{k} \to O(k)$ spanner}\label{sec:sqrtk-greedy}

In this section we complete the global analysis of the VFT $\sqrt{k}\to O(k)$ spanner and prove \Cref{thm:sqrtk-greedy}. To provide intuition for the argument, we first recall the recent analysis of the \emph{non-faulty} $\sqrt{k} \to O(k)$ spanner due to \cite{PopovaT26}.

\subsection{The $\sqrt{k} \to O(k)$ greedy spanner}\label{sec:greedy-sqrtk-no-faults}

We now describe the analysis of the greedy $\lfloor \sqrt{k} \rfloor \to O(k)$ spanner, as motivation for the main steps in the VFT $\lfloor \sqrt{k} \rfloor \to O(k)$ setting.

The following proposition summarizes a key part of the analysis of \cite{PopovaT26}, and can be extracted implicitly from their proof. In any case, we do not use any part of their analysis in our FT analysis.

\begin{proposition}[\cite{PopovaT26}]\label{prop:sqrt-to-ok}
    Let $k \geq 9$. Let $P_1, \ldots, P_m$ be the paths added by the greedy $\lfloor \sqrt{k} \rfloor \to 12k$ spanner. Then there exist $S \subseteq V$, $|S|=\tilde{O}(n^{1-\lceil \sqrt{k}\rceil/k})$, and a subcollection $\mathcal{L}=\{P_{i_1},\ldots, P_{i_b}\}$ of paths such that: 
    \begin{enumerate}
        \item $|\mathcal{L}|=O(\sqrt{k}n^{1+1/k})$.

        \item For any $P_i \not \in \mathcal{L}$, let $H_i$ be the graph right before $P_i=(x_0,\ldots,x_{\lfloor\sqrt{k}\rfloor})$ was added, then there exist $0\leq \ell <r \leq \lfloor\sqrt{k}\rfloor$\footnote{Unlike the majority of other sections, in this one $\ell$ and $r$ usually mean the leftmost and the rightmost vertices with a special property on a given path (and not lengths)} such that: 

        \begin{itemize}
            \item $\dist_{H_i}(x_0,x_\ell) \leq \ell\cdot (2\lceil\sqrt{k}\rceil-1)$ and $\dist_{H_i}(x_r, x_{\lfloor\sqrt{k}\rfloor}) \leq (\lfloor\sqrt{k}\rfloor-r)\cdot(2\lceil\sqrt{k}\rceil-1)$.

            \item There is a path $Q_\ell$ from $x_\ell$ to $s_\ell \in S$ of length $\lceil\sqrt{k}\rceil$ in $H_i$ and a path $Q_{r}$ from $x_r$ to $s_r \in S$ of length $\lceil\sqrt{k}\rceil$ in $H_i$.
        \end{itemize}
    \end{enumerate}
\end{proposition}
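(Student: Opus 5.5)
We would prove \Cref{prop:sqrt-to-ok} by a fault-free specialization of the path-system argument of \Cref{sec:greedyFTclustering}, with balls replacing path collections. Set $d=\lfloor\sqrt{k}\rfloor$, $R=\lceil\sqrt{k}\rceil$, $s=2R-1$ and $D=n^{1/k}$. Since $k\ge 9$, we have $s\le 3\sqrt{k}\le 12k/d$, so the analogue of the critical-edge claim applies. Call an edge $\{u,v\}\in P_i$ critical if $\dist_{H_i}(u,v)>s$. Then every $P_i$ has a critical edge, since otherwise $\dist_{H_i}(x_0,x_d)\le d s\le 12k$.

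\emph{Step 1: the partition.} For a vertex $w$ and radius $\ell\le R$, let $B_i(w,\ell)=B_{H_i}(w,\ell)$. These balls only grow with $i$. Put $P_i$ into $\mathcal{G}$ if every endpoint $w$ of every critical edge of $P_i$ satisfies $|B_i(w,\ell)|\ge D^\ell$ for all $\ell\le R$. Otherwise, exactly as in \Cref{alg:greedy-path-system}, there are a critical edge $\{u,v\}$ and $\ell<R$ with $|B_i(v,\ell)|\ge D^\ell$ and $|B_i(u,\ell+1)|<D^{\ell+1}$. In that case put $P_i$ into $\mathcal{L}$ and charge the pair $(u,\ell+1)$.

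\emph{Step 2: bounding $|\mathcal{L}|$.} Because $\dist_{H_i}(u,v)>s\ge 2\ell+1$, the ball $B_i(v,\ell)$ is disjoint from $B_i(u,\ell+1)$. After the edge $uv$ is added, all of $B_i(v,\ell)$ lies in $B_{i+1}(u,\ell+1)$. Hence every charge to $(u,\ell+1)$ increases $|B(u,\ell+1)|$ by at least $D^\ell$. Charges happen only while $|B(u,\ell+1)|<D^{\ell+1}$, so $(u,\ell+1)$ is charged at most $D+1$ times. Summing over $n$ vertices and $R$ radii gives $|\mathcal{L}|=O(\sqrt{k}\,n^{1+1/k})$. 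This is the fault-free counterpart of \textbf{I5}--\textbf{I7}, and it is much easier here, because disjointness of balls replaces the park and collective-FT invariants.

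\emph{Step 3: centers and the structure of $\mathcal{G}$.} For each vertex $w$, let $\tau(w)$ be the first time at which $|B_{H_\tau}(w,R)|\ge D^R=n^{R/k}$, and let $B^*(w)$ be that ball. Sample $S$ by including each vertex independently with probability $\Theta(\log n\cdot n^{-R/k})$. Then $|S|=\tilde{O}(n^{1-R/k})$, and by a union bound over only $n$ events, w.h.p.\ $S$ hits every $B^*(w)$. Now take $P_i\in\mathcal{G}$ and let $x_\ell$ be the first endpoint of the first critical edge along $P_i$, and $x_r$ the second endpoint of the last critical edge, so $\ell<r$. All edges of $P_i$ before $x_\ell$ and after $x_r$ are non-critical, so each contributes distance at most $s$ in $H_i$. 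The triangle inequality then gives $\dist_{H_i}(x_0,x_\ell)\le \ell s$ and $\dist_{H_i}(x_r,x_d)\le (d-r)s$. Since $P_i\in\mathcal{G}$, we have $\tau(x_\ell)\le i$, and $H_{\tau(x_\ell)}\subseteq H_i$, so some $s_\ell\in S$ lies in $B_{H_i}(x_\ell,R)$. The same holds for $x_r$.

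\emph{Main obstacle.} The one step I expect to need care is producing paths $Q_\ell,Q_r$ of length \emph{exactly} $R$, rather than at most $R$. First I would try modifying the sampling: sample within the layer of vertices at distance exactly $R$, or at exactly one fixed distance $j\le R$ chosen so that this layer has size at least $D^R/(R+1)$. This costs only a $\tilde{O}(\sqrt{k})$ factor, which is absorbed by the $\tilde{O}$. If that fails, I would note that the later use only needs length at most $R$, and I would state the proposition in that relaxed form.
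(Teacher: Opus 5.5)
Your Steps 1--3 are sound, but only for a weaker claim: they deliver $Q_\ell,Q_r$ of length \emph{at most} $\lceil\sqrt{k}\rceil$. The parts that check out are the existence of a critical edge (from $ds\le 12k$), the disjointness of $B_{H_i}(v,\ell)$ and $B_{H_i}(u,\ell+1)$ (from $\dist_{H_i}(u,v)>s\ge 2\ell+1$), the bound of $D+1$ charges per pair (giving $|\mathcal{L}|\le nR(D+1)$), the choice of $x_\ell,x_r$ as outermost critical-edge endpoints, and the hitting-set sampling. The paper does not prove this proposition; it quotes it from \cite{PopovaT26}. Your route is the natural $f=0$ specialization of its proof of \Cref{prop:vft-d-to-r-structure}, with ball disjointness replacing the park and collective-FT invariants.

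\textbf{The gap is the exact length $\lceil\sqrt{k}\rceil$, which you flagged, and your repair does not close it.}
\begin{itemize}
\item Sampling in a layer at distance exactly $j$ gives a path of length $j$, not $R$, unless $j=R$.
\item The layer at distance exactly $R$ can be empty even when $|B_{H_i}(w,R)|\ge D^R$, for instance when $B_{H_i}(w,R-1)=B_{H_i}(w,R)$.
\item More fundamentally, your ball-size test for membership in $\mathcal{G}$ certifies nothing about paths of exact length $R$. If the component of $x_\ell$ in $H_i$ is a star centred at $x_\ell$ with at least $D^R$ leaves, then $|B_{H_i}(x_\ell,\ell')|\ge D^{\ell'}$ for all $\ell'\le R$, yet no simple path of length $2$ starts at $x_\ell$. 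So no change to the sampling alone can work; the tracked quantity itself must change.
\end{itemize}
Falling back to ``length at most $R$'' proves a weaker statement than the one asked. It suffices for \Cref{obs:glob-greedy-change}, but it is not the proposition.

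\textbf{The fix keeps your whole argument.} Replace $B_{H_i}(w,\ell)$ everywhere by $T_i(w,\ell)$, the set of endpoints of simple paths of length exactly $\ell$ starting at $w$ in $H_i$.
\begin{itemize}
\item We have $T_i(w,0)=\{w\}$, and $T_i(w,\ell)$ is monotone in $i$ because $H_i\subseteq H_{i+1}$.
\item Since $T_i(w,\ell)\subseteq B_{H_i}(w,\ell)$, your ball disjointness still gives $T_i(v,\ell)\cap T_i(u,\ell+1)=\emptyset$.
\item Every simple length-$\ell$ path from $v$ stays inside $B_{H_i}(v,\ell)$, which does not contain $u$. Prepending the new edge $uv\in E(P_i)$ therefore yields a simple path of length $\ell+1$ in $H_{i+1}$. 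Hence $T_i(v,\ell)\subseteq T_{i+1}(u,\ell+1)$, and each charge again adds at least $D^\ell$ new elements while $|T(u,\ell+1)|<D^{\ell+1}$.
\item The bound $|\mathcal{L}|=O(\sqrt{k}\,n^{1+1/k})$ is unchanged.
\item Define $\mathcal{G}$, $\tau(w)$ and the hitting condition through $T$ instead of $B$. Any $s_\ell\in S\cap T_i(x_\ell,R)$ then comes with a path of length exactly $R$ in $H_i$, and likewise for $x_r$.
\end{itemize}
This is precisely the fault-free shadow of the paper's collections $\pp_v^\ell$, which consist of paths of length exactly $\ell$ built by prepending critical edges.
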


At a high level, the proposition says that the exceptional set $\mathcal{L}$ consists of the paths responsible for creating \emph{local} structure. The first item states that only $O(\sqrt{k}n^{1+1/k})$ paths may fail to admit the required local connection to $S$.
For every path outside $\mathcal{L}$, the second item guarantees two vertices $x_\ell,x_r$ along the path that already have stretch $2\lceil\sqrt{k}\rceil-1$ to the respective endpoints $x_0$ and $x_{\lfloor\sqrt{k}\rfloor}$. Moreover, both $x_\ell$ and $x_r$ are connected to $S$ by short paths of length $\lceil\sqrt{k}\rceil$, denoted $Q_\ell,Q_r$.

\begin{figure}
    \centering
    \includegraphics[width=0.8\linewidth]{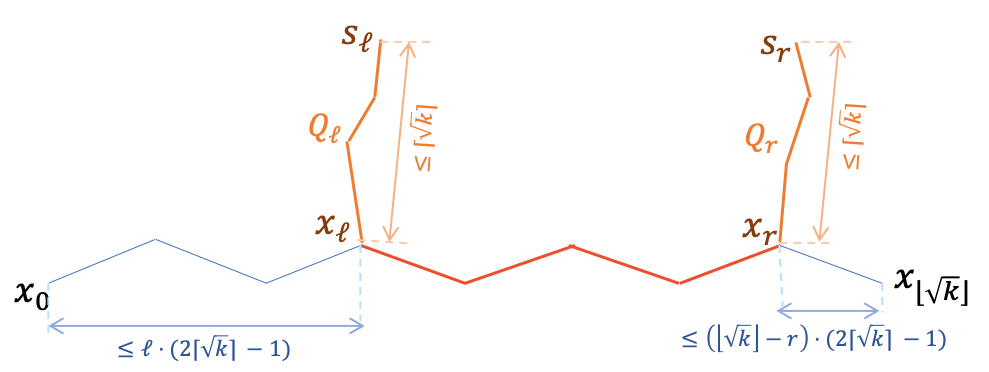}
    \caption{The structure of a path added by the greedy algorithm not in $\mathcal{L}$}
    \label{fig:placeholder}
\end{figure}

To finish the analysis of the $\lfloor\sqrt{k}\rfloor \to 12k$ spanner, one needs to bound the number of paths of the second type (not in $\mathcal{L}$) appearing in \Cref{prop:sqrt-to-ok}.

For this, we need the following observation.

\begin{observation}\label{obs:glob-greedy-change}
    Following the notations of \Cref{prop:sqrt-to-ok}, let $P_i$ be a path not in $\mathcal{L}$. Then 
    \begin{enumerate}
        \item(Big distance before adding $P_i$) $\dist_{H_i}(s_\ell,s_r) > 3\lceil \sqrt{k}\rceil(2\lceil \sqrt{k}\rceil - 1)$.
        \item(Small distance after adding $P_i$) $\dist_{H_{i+1}}(s_\ell,s_r)\leq 3 \lceil \sqrt{k}\rceil$.
    \end{enumerate}
\end{observation}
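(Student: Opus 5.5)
Both items follow from the triangle inequality, using the two facts \Cref{prop:sqrt-to-ok} gives for $P_i\notin\mathcal{L}$ together with the greedy rule. Write $d=\lfloor\sqrt{k}\rfloor$ and $c=\lceil\sqrt{k}\rceil$, so that $P_i=(x_0,\ldots,x_d)$ is added because $\dist_{H_i}(x_0,x_d)>12k$.

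\emph{Small distance after adding $P_i$.} In $H_{i+1}=H_i\cup P_i$ we walk $s_\ell\to x_\ell$ along $Q_\ell$, then $x_\ell\to x_r$ along the subpath of $P_i$, then $x_r\to s_r$ along $Q_r$. The three pieces have lengths $c$, $r-\ell\le d\le c$, and $c$. The total is therefore at most $3c=3\lceil\sqrt{k}\rceil$, as required.

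\emph{Big distance before adding $P_i$.} Suppose instead that $\dist_{H_i}(s_\ell,s_r)\le 3c(2c-1)$. We build a walk in $H_i$ from $x_0$ to $x_d$ in five pieces:
\begin{itemize}
\item $x_0\to x_\ell$, of length at most $\ell(2c-1)$, by the first bullet of \Cref{prop:sqrt-to-ok};
\item $x_\ell\to s_\ell$ along $Q_\ell$, of length $c$;
\item $s_\ell\to s_r$, of length at most $3c(2c-1)$, by the assumption;
\item $s_r\to x_r$ along $Q_r$, of length $c$;
\item $x_r\to x_d$, of length at most $(d-r)(2c-1)$, again by the first bullet.
\end{itemize}
Since $\ell+(d-r)\le d\le c$, the total length is at most
\[
c(2c-1)+2c+3c(2c-1)=8c^2-2c .
\]
We then compare this with $12k$. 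For $k\ge 9$ we have $c\le\sqrt{k}+1$, so $c^2\le k+2\sqrt{k}+1\le \tfrac{16}{9}k$, since $2\sqrt{k}+1\le \tfrac79 k$ for $k\ge 9$. The total is then at most $8c^2-2c<8c^2\le \tfrac{128}{9}k$. This is slightly larger than $12k$, so the estimate is too loose.

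The main obstacle is exactly this constant, and we tighten it in two places. First, we use the sharper bound $c\le\sqrt{k}+1$ directly, which gives
\[
8c^2-2c\le 8k+14\sqrt{k}+6 .
\]
This is at most $12k$ as soon as $4k\ge 14\sqrt{k}+6$, which holds for $\sqrt{k}\ge 4$; the case $k=9$, where $c=3$, gives $8\cdot 9-6=66\le 108$ directly. Second, we handle the remaining range $9<k<16$ (where $c=4$) by checking it by hand, and if it fails there we use the slack $\ell+(d-r)\le d$ rather than rounding this sum up to $c$.

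With the bound in place, the walk shows $\dist_{H_i}(x_0,x_d)\le 12k$. This contradicts the reason $P_i$ was added, so $\dist_{H_i}(s_\ell,s_r)>3c(2c-1)$, which is the first item.
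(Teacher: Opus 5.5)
Your argument is the same as the paper's: both items are direct triangle-inequality walks through $Q_\ell$, the subpath of $P_i$, and $Q_r$, and your bound $8c^2-2c\le 12k$ is valid (the paper simply asserts it from $\sqrt{k}\ge 3$). The hedge for $9<k<16$ is unnecessary, because there $c=4$ and so $8c^2-2c=120\le 12k$ for every $k\ge 10$; state this check outright instead of the conditional fallback, and drop the unused detour through $\tfrac{16}{9}k$.
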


\begin{proof}
    Assume the contrary to the first item: $\dist_{H_i}(s_\ell,s_r) \leq 3\lceil \sqrt{k}\rceil(2\lceil \sqrt{k}\rceil - 1)$. Then, by the triangle inequality, we have that
    \begin{multline*}
        \dist_{H_i}(x_0,x_{\lfloor\sqrt{k}\rfloor}) \leq\\\leq \dist_{H_i}(x_0,x_{\ell}) + \dist_{H_i}(x_\ell,s_\ell) + \dist_{H_i}(s_\ell,s_r) + \dist_{H_i}(s_r,x_r) + \dist_{H_i}(x_r,x_{\lfloor\sqrt{k}\rfloor}) \leq \\ \leq \ell\cdot (2\lceil\sqrt{k}\rceil-1) + \lceil\sqrt{k}\rceil + 3\lceil \sqrt{k}\rceil(2\lceil \sqrt{k}\rceil - 1) +\lceil\sqrt{k}\rceil + (\lfloor\sqrt{k}\rfloor-r)(2\lceil\sqrt{k}\rceil-1) \leq 12k,
    \end{multline*}
    which contradicts the fact that the greedy algorithm adds $P_i$ to the spanner. Here we used the assumption together with the guarantees of \Cref{prop:sqrt-to-ok} to bound the distances in $H_i$, and the final inequality follows from $\sqrt{k}\geq 3$.

    For the second item it is enough to apply the triangle inequality: $$\dist_{H_{i+1}}(s_\ell, s_r) \leq \dist_{H_{i+1}}(s_\ell, x_\ell) + \dist_{H_{i+1}}(x_\ell, x_r) + \dist_{H_{i+1}}(x_r, s_r) \leq \lceil\sqrt{k}\rceil + L(P_i) + \lceil\sqrt{k}\rceil \leq 3\lceil\sqrt{k}\rceil.$$

\end{proof}

The observation above suffices to conclude the proof as follows: define a virtual graph $\mathbf{V}=(S,\emptyset)$. Whenever a path $P_i \not\in \mathcal{L}$ is added by the greedy spanner add to $\mathbf{V}$ the edge $(s_\ell,s_r)$ (where $s_\ell,s_r$ are the centers of the path guaranteed by \Cref{prop:sqrt-to-ok}).

\begin{claim}\label{clm:sqrt-k-to-ok}
    Each virtual edge addition can be coupled with an execution of the greedy $(2 \lceil\sqrt{k}\rceil - 1)$-spanner algorithm on a graph with vertex set $S$.
    As a consequence, the number of paths added by the $\lfloor\sqrt{k}\rfloor \to 12k$ greedy algorithm that are not in $\mathcal{L}$ is:
    \[O(|S|^{1+1/\lceil\sqrt{ k}\rceil})=\tilde{O}(n).\]
\end{claim}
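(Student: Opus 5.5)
We will show that the virtual edge sequence $(s_\ell^{(i)},s_r^{(i)})$, taken over the paths $P_i\notin\mathcal{L}$ in the order the greedy algorithm adds them, is a legal run of the classical greedy $(2\lceil\sqrt{k}\rceil-1)$-spanner on a graph with vertex set $S$. Concretely, let $\mathbf{V}_i$ denote the virtual graph just before $P_i$ is processed. We claim that whenever $P_i\notin\mathcal{L}$ produces the virtual edge $(s_\ell,s_r)$, we have $\dist_{\mathbf{V}_i}(s_\ell,s_r)>2\lceil\sqrt{k}\rceil-1$. This is exactly the rejection-free condition under which the non-faulty greedy multiplicative spanner adds an edge.

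To prove this, we first show that every virtual edge $(a,b)\in\mathbf{V}_i$ has $\dist_{H_i}(a,b)\le 3\lceil\sqrt{k}\rceil$. Suppose $(a,b)$ was created at an earlier step $j<i$. By \Cref{obs:glob-greedy-change}(2), $\dist_{H_{j+1}}(a,b)\le 3\lceil\sqrt{k}\rceil$. The spanner only grows, so $H_{j+1}\subseteq H_i$, and the bound persists.

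Next, suppose toward contradiction that $\dist_{\mathbf{V}_i}(s_\ell,s_r)\le 2\lceil\sqrt{k}\rceil-1$. Replacing each virtual edge on a shortest $\mathbf{V}_i$-path by its $H_i$-path of length at most $3\lceil\sqrt{k}\rceil$ gives
\[
\dist_{H_i}(s_\ell,s_r)\le 3\lceil\sqrt{k}\rceil(2\lceil\sqrt{k}\rceil-1).
\]
This contradicts \Cref{obs:glob-greedy-change}(1). Hence every virtual edge, at the moment it is inserted, spans a pair at virtual distance greater than $2\lceil\sqrt{k}\rceil-1$. In particular it is never a repeat, since a repeated edge would have virtual distance $1$. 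The coupling is therefore well defined: feeding the greedy $(2t-1)$-spanner algorithm with $t=\lceil\sqrt{k}\rceil$ on the complete graph over $S$, in this edge order, accepts exactly these edges.

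We then bound the number of virtual edges. By the standard girth argument, the final $\mathbf{V}$ has girth greater than $2t$. Indeed, the last-added edge of any cycle of length at most $2t$ would have closed a path of length at most $2t-1$ already present in $\mathbf{V}$, which the previous paragraph rules out. By the Moore bound, $|E(\mathbf{V})|=O(|S|^{1+1/t})$. Each $P_i\notin\mathcal{L}$ contributes exactly one distinct edge, so the count of such paths is $O(|S|^{1+1/\lceil\sqrt{k}\rceil})$.

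Finally, substitute $|S|=\tilde{O}(n^{1-\lceil\sqrt{k}\rceil/k})$. This gives
\[
|S|^{1+1/t}=\tilde{O}\!\left(n^{(1-t/k)(1+1/t)}\right),\qquad (1-t/k)(1+1/t)=1+\frac{1}{t}-\frac{t}{k}-\frac{1}{k}.
\]
Since $t\ge\sqrt{k}$, we have $1/t\le t/k$, so the exponent is at most $1-1/k\le 1$. Hence the count is $\tilde{O}(n)$.

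The main subtlety is the monotonicity step in the first part: virtual edges must stay short in every later $H_i$. This holds only because faults play no role in this non-faulty setting, which is precisely what breaks in the VFT version.
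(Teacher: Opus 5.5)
Your proposal is correct and follows the paper's argument. Both show that each new virtual edge $(s_\ell,s_r)$ has $\dist_{\mathbf{V}_i}(s_\ell,s_r)>2\lceil\sqrt{k}\rceil-1$: a short virtual path expands, via \Cref{obs:glob-greedy-change}(2), into an $H_i$-path of length at most $3\lceil\sqrt{k}\rceil(2\lceil\sqrt{k}\rceil-1)$, contradicting \Cref{obs:glob-greedy-change}(1). Beyond the paper, you spell out the girth/Moore-bound count and the exponent calculation $(1-t/k)(1+1/t)\le 1-1/k$, which the paper leaves implicit.
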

\begin{proof}
    It is enough to show that for every edge added in the process of construction of the virtual graph defined above the distance between its ends \emph{in $\mathbf{V}$} is greater than $2 \lceil\sqrt{k}\rceil - 1$. From the second part of \Cref{obs:glob-greedy-change}, it follows that if two centers $s_1$ and $s_2$ are adjacent in the virtual graph before the $i$-th step of the $\lfloor\sqrt{k}\rfloor \to 12k$ greedy algorithm, then $\dist_{H_i}(s_1, s_2) \leq 3 \lceil \sqrt{k}\rceil$. Consequently, by the triangle inequality, if $s_\ell$ and $s_r$ are connected in the virtual graph by a path of length $\leq 2 \lceil\sqrt{k}\rceil - 1$, then $\dist_{H_i}(s_\ell, s_r) \leq 3 \lceil \sqrt{k}\rceil(2 \lceil\sqrt{k}\rceil - 1)$. But the latter contradicts the first part of \Cref{obs:glob-greedy-change}.  
\end{proof}

Consequently, this concludes the analysis: \Cref{prop:sqrt-to-ok} bounds the number of paths in $\mathcal{L}$ by $O(\sqrt{k}n^{1+1/k})$, and \Cref{clm:sqrt-k-to-ok} bounds the number of remaining paths added by the greedy $\sqrt{k}\to O(k)$ spanner by $\tilde{O}(n)$.

\subsection{The Faulty Setting}\label{sec:VFT-Faults-Describing-Virt}

We now prove: 

\begin{theorem}\label{thm:main}
    For every integer $k$, fault parameter $f$, and $0<\delta<\frac{1}{2}$, the spanner produced by the $\lfloor\sqrt{k}\rfloor \to O(k/\delta)$ VFT algorithm has size $\tilde{O}_{\delta}(f^{1+\delta}n^{1+1/k})$.
\end{theorem}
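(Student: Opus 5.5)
We follow the $\mathcal L$/$\mathcal G$ template of \Cref{sec:greedy-sqrtk-no-faults}, with \Cref{prop:vft-d-to-r-structure} in the role of \Cref{prop:sqrt-to-ok}. Set $d=\lfloor\sqrt{k}\rfloor$, $r=12k/\delta$, $s=\lceil 2\sqrt{k}/\delta\rceil\le r/d$, so $R=\lceil s/2\rceil\approx\sqrt{k}/\delta$. Choose $D=C_\delta\, s^3 f^{1+\delta}n^{1/k}\cdot\mathrm{polylog}(n)$, with $C'$ a large constant, so that $D\ge 12C's^3f$. Then \Cref{prop:vft-d-to-r-structure} gives $|\mathcal L|=O(nDs)=\tilde O_\delta(f^{1+\delta}n^{1+1/k})$. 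It also gives a center set
\[
|S|=\tilde O\big((10C'sf/D)^R n\big)=\tilde O\big(f^{-\delta R}n^{1-R/k}\big)=\tilde O\big(f^{-\sqrt{k}}n^{1-R/k}\big).
\]
Since every edge count is at most $d$ times the number of paths, it remains to show $|\mathcal G|=\tilde O(n)$.

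\textbf{Step 1: distances between centers.} For $P_i\in\mathcal G$, take $x_\ell,x_q$ and the spread collections $\prepp_{x_\ell},\prepp_{x_q}$ from \Cref{prop:vft-d-to-r-structure}. Spreadness with $|U|=1$ and $C's f>2f$ implies that at least half of each collection survives $F_i$. For any surviving endpoints $s_\ell,s_q$ we then get
\[
\dist_{H_{i+1}-F_i}(s_\ell,s_q)\le 2R+d\le 3\sqrt{k}/\delta .
\]
We also get $\dist_{H_i-F_i}(s_\ell,s_q)>2\sqrt{k}\cdot 3\sqrt{k}/\delta$. Otherwise the triangle inequality yields
\[
\dist_{H_i-F_i}(x_0,x_d)\le ds+2R+6k/\delta\le 12k/\delta,
\]
contradicting the greedy rule. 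This is the faulty analogue of \Cref{obs:glob-greedy-change}.

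\textbf{Step 2: the robust CLFT instance.} Build $B_i$ on $S$ with one edge per pair in $(\prepp_{x_\ell}-F_i)\times(\prepp_{x_q}-F_i)$. Give each edge the list of vertices on the concatenated path, so every list avoids $F_i$. Define $\mathbf V_{i+1}=\mathbf V_i\cup B_i$. Every earlier virtual edge represents a path in $H_i$ of length at most $3\sqrt{k}/\delta$ avoiding the faults that hit none of its list. So Step 1 yields $\dist_{\mathbf V_i-F_i}(s,s')>2\sqrt{k}$, where the virtual graph is taken under color-list faults $F_i$. This gives properties 1 and 2 of the process with $t=2\sqrt{k}$.

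\textbf{Step 3 (main obstacle): robust sunflower.} We must prove \Cref{lem:robust-sunflower-intro}: the lists form a $(1/2f,1/2)$-robust sunflower with kernel $V(P_i')$, of size $w\le\sqrt{k}+1$. Sample $W\sim Bin(V\setminus V(P'_i),1/2f)$ and consider the random sub-hypergraphs $\{V(Q)\setminus\{x_\ell\}:Q\in\prepp_{x_\ell}-F_i\}$ and the analogous one for $x_q$. The surviving collections remain $\Omega(C'sf)$-spread (they lose at most half), have uniformity $R$, and $1/2f\gg \log R/(C'sf)$ for $C'$ large. Hence the spread lemma of \cite{ParkPham,LovettSunflowers} shows each side has a member inside $W$ with probability at least $3/4$. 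A union bound gives a full edge surviving with probability at least $1/2$. If the two sides are not independent because their vertex sets overlap, we would condition on $x_\ell$-side choices or simply apply the spread lemma to the product hypergraph, which is still spread.

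\textbf{Step 4: conclude.} With properties 1--3 established, \Cref{thm:robust-clft-process-intro} bounds the number of rounds, hence $|\mathcal G|$, by
\[
O\big(wf^{w}|S|^{1+1/\sqrt{k}}\big)=\tilde O\big(f^{\sqrt{k}+1}\cdot f^{-\sqrt{k}(1+1/\sqrt{k})}\, n^{(1-R/k)(1+1/\sqrt{k})}\big)=\tilde O(n),
\]
using $R\ge\sqrt{k}$. This theorem itself, a blocking-set argument in the style of \cite{BodwinP19} extended to robust sunflowers, is the other deep ingredient. If it must be proved here, I would sample $F$ as a random set of density about $1-1/2f$ and show that each round leaves, in expectation, an edge surviving in a high-girth-like subgraph. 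Combining the bounds on $|\mathcal L|$ and $|\mathcal G|$ gives the theorem.
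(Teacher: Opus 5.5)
Your proposal is correct and follows essentially the paper's route. You substitute parameters into \Cref{prop:vft-d-to-r-structure} and derive the before/after center distances (the analogue of \Cref{obs:virtual-fault-proc}). You then build the list-colored multigraphs $B_i$, show that their lists form a $(1/2f,1/2)$-robust sunflower with kernel $V(P_i')$ via spread-preservation under deletion (your ``lose at most half'' argument, equivalent to \Cref{obs:spread-delete}) plus the spread lemma, and conclude with \Cref{thm:robust-clft-process}.

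Two small remarks. The independence worry in Step~3 is moot, since the union bound on complements needs no independence. Integrality in $t$ only costs at most a factor $f$ in $|\mathcal{G}|$, giving $\tilde O(fn)$ rather than $\tilde O(n)$, as the paper itself notes in a footnote.
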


\begin{remark}
    We assume $k\geq 4$, as otherwise $\lfloor\sqrt{k}\rfloor=1$, hence we may just use the optimal multiplicative spanners of \cite{BodwinP19}.
\end{remark}
We begin with the following proposition, which follows from \Cref{prop:vft-d-to-r-structure} proved in \Cref{sec:greedyFTclustering}. We denote $s:= 2\lceil\frac{\sqrt{k}}{\delta}\rceil - 1$, $R := \lceil\frac{\sqrt{k}}{\delta}\rceil$ and let $C' \geq 2$ be some constant to be specified later.

\begin{proposition}\label{prop:sqrt-to-ok-vft}
    Let $P_1, \ldots, P_m$ be the paths added by the VFT greedy $\lfloor\sqrt{k}\rfloor \to 12k/\delta$ spanner. Let $F_i$ be the fault set causing $P_i$ to be added. Then there exist $S \subseteq V$, $|S|=\tilde{O}(f^{- \delta R} n^{1 - \frac{R}{k}})$, and a partition of $\{P_1, ..., P_m\}$ into two disjoint subfamilies $\mathcal{L}$ and $\mathcal G$ such that:
    \begin{enumerate}
        \item $|\mathcal L|=O(\delta^{-4}k^2f^{1+\delta}n^{1+1/k})$.

        \item For any $P_i \in \mathcal G$, let $H_i$ be the graph right before $P_i=(x_0,\ldots,x_{\lfloor \sqrt{k}\rfloor})$ was added, then there exist $0\leq \ell <r \leq \lfloor\sqrt{k}\rfloor$ such that: 
        \begin{itemize}
            \item $\dist_{H_i-F_i}(x_0,x_\ell) \leq \ell\cdot  (2\lceil \sqrt{k}/\delta\rceil-1)$, and $\dist_{H_i-F_i}(x_r, x_{\lfloor \sqrt{k}\rfloor}) \leq (\lfloor\sqrt{k}\rfloor-r)(2\lceil \sqrt{k}/\delta\rceil-1)$.

            \item There is a path collection $\prepp_{x_{\ell}}$ of paths from  $x_\ell$ to $S$ of length $R$ in $H_i$, satisfying $\forall U \subseteq V-\{x_{\ell}\}, |\prepp_{x_{\ell}}[U]|\leq |\prepp_{x_{\ell}}|(C' s f)^{-|U|}$. Similarly, there is a path collection $\prepp_{x_r}$ with the same properties.
        \end{itemize}
    \end{enumerate}
\end{proposition}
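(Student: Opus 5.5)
This is a direct instantiation of \Cref{prop:vft-d-to-r-structure}, with parameters chosen so that the bounds come out as stated. I would apply that proposition with $d=\lfloor\sqrt{k}\rfloor$, with $r$ there equal to $12k/\delta$, with $s=2\lceil\sqrt{k}/\delta\rceil-1$, and hence $R=\lceil s/2\rceil=\lceil\sqrt{k}/\delta\rceil$ as in the statement. For the degree parameter I would take
\[
D \;=\; 12C's^3\, f^{1+\delta}\, n^{1/k},
\]
with the constant $C'\ge 2$ fixed later as needed in the global analysis. Three hypotheses must be checked.
\begin{itemize}
\item $D\ge 12C's^3f$ holds trivially.
\item $s\le r/d$ holds because $s\le 2\sqrt{k}/\delta+1$ and $d\le\sqrt{k}$. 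This gives $sd\le 2k/\delta+\sqrt{k}\le 12k/\delta$.
\item $s\ge 2$ holds since $k\ge4$ and $\delta<1/2$, so $\sqrt{k}/\delta\ge 4$ and $s\ge 7$.
\end{itemize}

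Next I would read off the conclusions. Item 2 of \Cref{prop:vft-d-to-r-structure} gives the distance bounds $\dist_{H_i-F_i}(x_0,x_\ell)\le \ell s$ and $\dist_{H_i-F_i}(x_q,x_d)\le (d-q)s$. Since $s=2\lceil\sqrt{k}/\delta\rceil-1$, these are exactly the bounds required here. The same item gives the spread path collections of length $R$ ending in $S$ with parameter $C'sf$, again verbatim (rename $q$ to $r$).

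For item 1, $|\mathcal L|=O(nDs)=O(C's^4f^{1+\delta}n^{1+1/k})$. Since $s=O(\sqrt{k}/\delta)$, we have $s^4=O(k^2/\delta^4)$, which yields $O(\delta^{-4}k^2f^{1+\delta}n^{1+1/k})$.

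The only step that needs care is the size of $S$. We have
\[
|S|=\tilde O\!\left(\left(\frac{10C'sf}{D}\right)^{R} n\right)=\tilde O\!\left(\left(\frac{10}{12s^2 f^{\delta}n^{1/k}}\right)^{R} n\right).
\]
Since $s\ge 2$, the ratio $10/(12s^2)<1$, so the factor $(10/(12s^2))^R$ can be dropped. This leaves $\tilde O(f^{-\delta R}n^{1-R/k})$, as claimed. I expect this bookkeeping to be the main place where one might slip: in particular, one must confirm that the $s^3$ in $D$ absorbs the $s$ in the numerator with room to spare. It does, since the ratio is $10C's/(12C's^3)=10/(12s^2)$, which is below $1$ for every $s\ge 2$.
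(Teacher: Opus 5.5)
Your proposal is correct and is exactly the paper's proof: the paper substitutes $s = 2\lceil \sqrt{k}/\delta\rceil-1$ and $D = 12C's^3f^{1+\delta}n^{1/k}$ into \Cref{prop:vft-d-to-r-structure}, which is the same instantiation you use. Your checks of $s\le r/d$, $s\ge 2$, $R=\lceil s/2\rceil$, and the simplification of the $|S|$ and $|\mathcal L|$ bounds fill in details that the paper leaves implicit.
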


\begin{proof}
    We substitute $s = 2\lceil \sqrt{k}/\delta\rceil-1$ and $D = 12C's^3f^{1+\delta}n^{\frac{1}{k}}$ into \Cref{prop:vft-d-to-r-structure}.
\end{proof}
Armed with \Cref{prop:sqrt-to-ok-vft}, our task is to bound the number of paths \emph{not in} $\mathcal L$, in other words, paths in $\mathcal{G}$. 
We shall need the following notation:
\begin{definition}
    For a collection of oriented paths $\pp$, we denote by $S(\pp)$ the set of last vertices of the paths in $\pp$. For instance, in the notation of the previous proposition, $S(\prepp_{x_\ell}), S(\prepp_{x_r}) \subseteq S$.
\end{definition}

We start with the following analog of \Cref{obs:glob-greedy-change}:

\begin{observation}\label{obs:virtual-fault-proc}
    Using the notation of \Cref{prop:sqrt-to-ok-vft}, for any $P_i \in \mathcal{G}$ and any $s_\ell\in S(\prepp_{x_\ell}-F_i)$, $s_r\in S(\prepp_{x_r} - F_i)$, the following hold: 
    \begin{enumerate}
        \item $\dist_{H_{i}-F_i}(s_\ell,s_r) > \frac{8k}{\delta}$.

        \item $\dist_{H_{i+1}-F_i}(s_\ell,s_r)\leq 2 \lceil\sqrt{k}/\delta \rceil + \lfloor \sqrt{k}\rfloor \leq 3\frac{\sqrt{k}}{\delta}$. 
    \end{enumerate}
\end{observation}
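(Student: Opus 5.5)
The plan is to mirror the proof of \Cref{obs:glob-greedy-change}, replacing distances in $H_i$ by distances in $H_i-F_i$ and using that the chosen center paths survive $F_i$. Fix $P_i\in\mathcal{G}$, and fix $s_\ell\in S(\prepp_{x_\ell}-F_i)$ and $s_r\in S(\prepp_{x_r}-F_i)$. By definition there are paths $Q_\ell\in\prepp_{x_\ell}-F_i$ from $x_\ell$ to $s_\ell$ and $Q_r\in\prepp_{x_r}-F_i$ from $x_r$ to $s_r$. Each has length $R=\lceil\sqrt{k}/\delta\rceil$, lies in $H_i$ by \Cref{prop:sqrt-to-ok-vft}, and avoids $F_i$. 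Hence $\dist_{H_i-F_i}(x_\ell,s_\ell)\le R$ and $\dist_{H_i-F_i}(x_r,s_r)\le R$, and the same bounds hold in $H_{i+1}-F_i\supseteq H_i-F_i$.

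\emph{Item 2.} Since $F_i\cap V(P_i)=\emptyset$ by the greedy rule, the subpath of $P_i$ from $x_\ell$ to $x_r$ lies in $H_{i+1}-F_i$ and has length at most $\lfloor\sqrt{k}\rfloor$. The triangle inequality then gives $\dist_{H_{i+1}-F_i}(s_\ell,s_r)\le 2R+\lfloor\sqrt{k}\rfloor$. To get the final bound I would check that $2\lceil\sqrt{k}/\delta\rceil+\lfloor\sqrt{k}\rfloor\le 3\sqrt{k}/\delta$. Using $\lceil x\rceil\le x+1$, this reduces to $2+\sqrt{k}\le\sqrt{k}/\delta$, which holds since $\delta<1/2$ and $k\ge4$.

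\emph{Item 1.} Suppose for contradiction that $\dist_{H_i-F_i}(s_\ell,s_r)\le 8k/\delta$. I would chain the five pieces in $H_i-F_i$:
\begin{itemize}
\item $x_0\to x_\ell$, of length at most $\ell(2R-1)$ by \Cref{prop:sqrt-to-ok-vft};
\item $x_\ell\to s_\ell$ along $Q_\ell$, of length $R$;
\item $s_\ell\to s_r$, of length at most $8k/\delta$ by assumption;
\item $s_r\to x_r$ along $Q_r$, of length $R$;
\item $x_r\to x_{\lfloor\sqrt{k}\rfloor}$, of length at most $(\lfloor\sqrt{k}\rfloor-r)(2R-1)$ by \Cref{prop:sqrt-to-ok-vft}.
\end{itemize}
Since $\ell+(\lfloor\sqrt{k}\rfloor-r)\le\lfloor\sqrt{k}\rfloor-1$, the total is at most $(\sqrt{k}-1)(2\sqrt{k}/\delta+1)+2(\sqrt{k}/\delta+1)+8k/\delta$, which is at most $2k/\delta+\sqrt{k}+2+8k/\delta$. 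The greedy algorithm adds $P_i$ only if $\dist_{H_i-F_i}(x_0,x_{\lfloor\sqrt{k}\rfloor})>r$, where $r$ here is the stretch threshold $12k/\delta$. It therefore suffices to check that $\sqrt{k}+2\le 2k/\delta$, which is immediate since $k\ge4$ and $\delta<1/2$.

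The main obstacle is only the arithmetic with ceilings. In particular, the endpoint legs must use $2\lceil\sqrt{k}/\delta\rceil-1$ per edge rather than $2\sqrt{k}/\delta$, and one must absorb the $O(\sqrt{k})$ slack into the gap between $10k/\delta$ and $12k/\delta$. If constants turn out tight, I would instead bound $\ell+\lfloor\sqrt{k}\rfloor-r\le\sqrt{k}-1$ to save a full $2R$ term.
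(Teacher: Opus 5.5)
Your proposal is correct and follows the paper's proof. Item 1 is the same contradiction with the greedy condition $\dist_{H_i-F_i}(x_0,x_{\lfloor\sqrt{k}\rfloor})>12k/\delta$, via the five-piece triangle inequality through $Q_\ell$, the assumed short $s_\ell$--$s_r$ path, and $Q_r$; item 2 is the same concatenation of $Q_\ell$, the subpath of $P_i$ from $x_\ell$ to $x_r$, and $Q_r$ inside $H_{i+1}-F_i$. Your explicit ceiling arithmetic, using $k\ge 4$ and $\delta<1/2$, supplies the numerical checks the paper leaves implicit, and your closing fallback remark is unnecessary since you already used $\ell+\lfloor\sqrt{k}\rfloor-r\le\lfloor\sqrt{k}\rfloor-1$.
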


\begin{proof}
    Let $Q_\ell \in \prepp_{x_\ell}-F_i$ and $Q_r \in \prepp_{x_r}-F_i$ be paths (of length $R$) that end in centers $s_\ell \in S$ and $s_r \in S$ respectively. To show the first bullet assume the contrary: $\dist_{H_{i}-F_i}(s_\ell,s_r) \leq \frac{8k}{\delta}$. Then, by the triangle inequality we get: 
    \begin{multline*}
        \dist_{H_i - F_i}(x_0, x_{\lfloor \sqrt{k}\rfloor}) \leq\\\leq  \dist_{H_i - F_i}(x_0, x_\ell) + L(Q_\ell) + \dist_{H_{i}-F_i}(s_\ell,s_r) + L(Q_r) + \dist_{H_i - F_i}(x_r, x_{\lfloor \sqrt{k}\rfloor}) \leq 12\frac{k}{\delta},
    \end{multline*}
    contradiction with $F_i$ being the faults that cause the addition of $P_i$. 

    The second bullet follows from the triangle inequality: 
    $$\dist_{H_{i+1}-F_i}(s_\ell,s_r)\leq L(Q_\ell) + (r - \ell) + L(Q_r).$$
\end{proof}

\paragraph{The Virtual Process}

In \Cref{sec:greedy-sqrtk-no-faults}, the paths not in $\mathcal{L}$ were bounded by coupling paths added by the (non-faulty) greedy, $\sqrt{k}\to O(k)$ algorithm with the standard greedy spanner. Concretely, \Cref{clm:sqrt-k-to-ok} couples each addition of a path $P_i \notin \mathcal{L}$ of the $\sqrt{k}\to O(k)$ greedy algorithm, with an edge added in an execution of the multiplicative greedy spanner, on a virtual graph. 

Our virtual process would have a similar spirit, but we can't simply couple the process with the classical greedy spanner - because we need to take faults into account.
We will have a virtual graph $\mathbf{V}$ with vertex set $S$, which is updated with each path in $\mathcal{G}$. Let $\mathbf{V}_i$ denote the state of the virtual graph immediately before $P_i$ is added to the spanner. Edges in $\mathbf{V}_i$ correspond to paths in $H_i$ of length at most $3 \frac{\sqrt{k}}{\delta}$.

To model \Cref{obs:virtual-fault-proc}, we need to know which virtual edges in $\mathbf{V}_i$ are faulty (that is, whose represented path intersects $F_i$) and which are not. This is crucial, since only edges in $\mathbf{V}_i$ for which \emph{all} vertices on the represented path in $H_i$ are \emph{non-faulty} can be used to form non-faulty low-stretch paths between centers.

To model this, we associate with every edge $e$ in $\mathbf{V}$ a list $L(e)$ of the vertices along the path it represents. Whenever $L(e) \cap F = \emptyset$, we say $e$ is non-faulty under the fault set $F$, and otherwise, $e$ is faulty. We now define the virtual graph $\mathbf{V}$ and its update rule.

Initially, $\mathbf{V}_1=(S,\emptyset)$. If $P_i \in \mathcal{L}$, we set $\mathbf{V}_{i+1}=\mathbf{V}_i$. If $P_i \in \mathcal{G}$, we define a \emph{multigraph} $B_i$ on vertex set $S$ as follows: for every path $Q_{\ell} \in \prepp_{x_\ell}-F_i$ ending at $s_{\ell}$ and every path $Q_{r} \in \prepp_{x_r}-F_i$ ending at $s_{r}$, we add a virtual edge $e$ with endpoints $s_\ell$ and $s_r$. This edge is assigned the list
\[
L(e) = V(Q_{\ell}) \cup V(P'_i) \cup V(Q_{r}),
\]
where $P'_i$ is the subpath of $P_i$ from $x_{\ell}$ to $x_r$. We then set $\mathbf{V}_{i+1}=\mathbf{V}_i \cup B_i$. By $\mathbf{V}_i-F$ we denote the graph on vertex set $S$ whose edge set consists of all non-faulty edges, namely those satisfying $L(e)\cap F=\emptyset$. See \Cref{fig:to_clft} for an illustration of the construction of $B_i$.

\begin{figure}[H]
    \centering
    \includegraphics[width=0.9\linewidth]{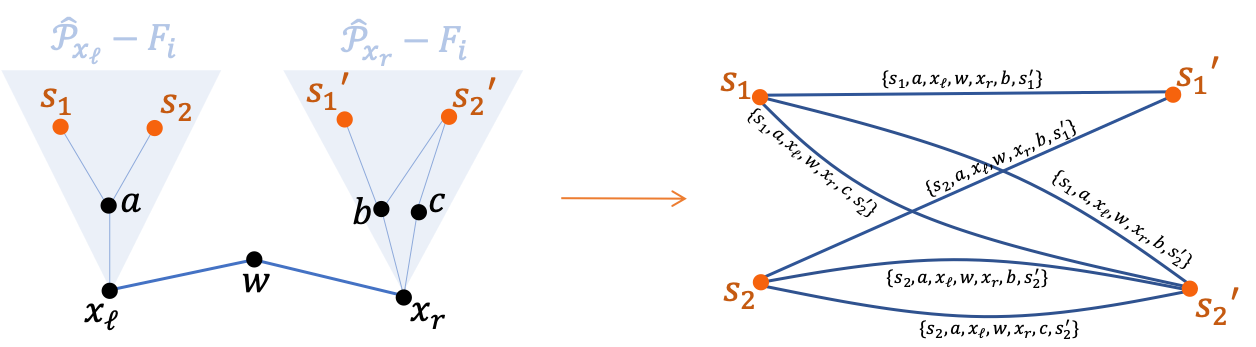}
    \caption{Obtaining $B_i$ from the path in $\mathcal{G}$}
    \label{fig:to_clft}
\end{figure}
\noindent We have the following lemma which summarizes the main properties of the construction:

\begin{lemma}\label{lem:properties-of-virt}
    Let $\mathbf{V}$ be as described above. Fix $i$ such that $P_i \in \mathcal{G}$:
    \begin{enumerate}
        \item The multiset of lists $\{L(e)\}_{e\in B_i}$ is in bijection with the following path collection:
        \[(\prepp_{x_\ell}-F_i)^{rev} \circ P'_i \circ (\prepp_{x_r}-F_i).\]
        \item For every edge $e$, we have $L(e) \cap F_i = \emptyset$.
        \item For every edge $e\in B_i$, there exists a path of length at most $3 \frac{\sqrt{k}}{\delta}$ corresponding to $e$, and this path is contained in $H_{i+1}$, hence in all $H_{j}$ for $j>i$.
        \item For every edge $(s,s')=e \in B_i$, we have $\dist_{\mathbf{V}_i-F_i}(s,s')>2 \lceil\sqrt{k}\rceil -1$.
    \end{enumerate}
\end{lemma}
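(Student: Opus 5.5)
Items 1–3 follow directly from the construction of $B_i$, and item 4 reduces to the first bullet of \Cref{obs:virtual-fault-proc} via the triangle inequality; I take them in order. For item 1, edges of $B_i$ are indexed by pairs $(Q_\ell,Q_r)\in(\prepp_{x_\ell}-F_i)\times(\prepp_{x_r}-F_i)$. Each such pair determines the walk $Q_\ell^{rev}\circ P'_i\circ Q_r$, and its vertex set is exactly $L(e)$. Parallel edges are kept as distinct elements, so the correspondence is a bijection of multisets.

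For item 2, $V(Q_\ell)\cap F_i=V(Q_r)\cap F_i=\emptyset$ because these paths are drawn from the surviving collections. Moreover $V(P'_i)\subseteq V(P_i)$, and $V(P_i)$ is disjoint from $F_i$ by the greedy rule of \Cref{alg:vft-greedy}. Hence $L(e)\cap F_i=\emptyset$.

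For item 3, $Q_\ell,Q_r\subseteq H_i$ by \Cref{prop:sqrt-to-ok-vft}, and $P'_i\subseteq P_i\subseteq H_{i+1}$. The concatenated walk therefore lies in $H_{i+1}\subseteq H_j$ for all $j>i$. Its length is at most $2R+\lfloor\sqrt{k}\rfloor\le 3\sqrt{k}/\delta$, exactly as in the second bullet of \Cref{obs:virtual-fault-proc}. I then shortcut the walk to a path; this does not increase the length, and the vertex set only shrinks, staying inside $L(e)$.

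Item 4 is the main step, and I would argue it by contradiction, mirroring \Cref{clm:sqrt-k-to-ok}. Suppose $\dist_{\mathbf{V}_i-F_i}(s,s')\le 2\lceil\sqrt{k}\rceil-1$. Then there is a virtual path $e_1,\dots,e_t$ with $t\le 2\lceil\sqrt{k}\rceil-1$ and $L(e_j)\cap F_i=\emptyset$ for every $j$. Each $e_j$ was created at some earlier step $i'<i$. By item 3 applied at step $i'$, $e_j$ represents a path in $H_{i'+1}\subseteq H_i$ of length at most $3\sqrt{k}/\delta$, whose vertex set lies in $L(e_j)$ and hence avoids $F_i$. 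Concatenating these paths gives
\[
\dist_{H_i-F_i}(s,s')\le (2\lceil\sqrt{k}\rceil-1)\cdot 3\sqrt{k}/\delta .
\]
It remains to check that this is at most $8k/\delta$. Since $k\ge 4$, we have $2\lceil\sqrt{k}\rceil-1\le 2\sqrt{k}+1$, so the bound is at most $(6k+3\sqrt{k})/\delta\le 8k/\delta$. This contradicts the first bullet of \Cref{obs:virtual-fault-proc}, which applies because $s\in S(\prepp_{x_\ell}-F_i)$ and $s'\in S(\prepp_{x_r}-F_i)$ by the construction of $B_i$.

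The step needing the most care is the justification in item 4 that an edge which is non-faulty in $\mathbf{V}_i-F_i$ really yields an $F_i$-free path in $H_i$. The subtlety is that the list was assigned at an earlier step under a different fault set. The point to verify is that the represented path's vertex set lies in $L(e)$; then non-faultiness of $e$ under $F_i$ transfers to the physical path, regardless of the fault set used when $e$ was created.
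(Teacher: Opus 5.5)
Your proposal is correct and follows the paper's proof essentially step for step. Items 1--3 are read off from the construction of $B_i$: the lists come from the surviving collections, $V(P_i)\cap F_i=\emptyset$ holds by the greedy rule, and the represented walks lie in $H_{i+1}$. Item 4 is the same contradiction argument, concatenating the $F_i$-free paths represented by a short virtual path to get $\dist_{H_i-F_i}(s,s')\le 8k/\delta$, which contradicts the first bullet of \Cref{obs:virtual-fault-proc}. Shortcutting the concatenated walks to paths and checking $(2\lceil\sqrt{k}\rceil-1)\cdot 3\sqrt{k}/\delta\le 8k/\delta$ for $k\ge 4$ just make explicit what the paper leaves implicit.
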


\begin{proof}
    The first claim follows from the definition of $B_i$. 
    Given (1), to prove (2) it suffices to show that every path in $(\prepp_{x_\ell}-F_i)^{rev} \circ P'_i \circ (\prepp_{x_r}-F_i)$ avoids $F_i$. For the subcollections $(\prepp_{x_\ell}-F_i)^{rev}$ and $(\prepp_{x_r}-F_i)$ this is immediate from the definition. Moreover, $V(P_i)\cap F_i=\emptyset$, since $P_i$ was added by the VFT greedy algorithm under the fault set $F_i$. As $P_i'$ is a subpath of $P_i$, it also avoids $F_i$. This proves (2).
    
    To prove (3), by \Cref{prop:sqrt-to-ok-vft} $\prepp_{x_{\ell}}$ consists of paths in $H_i$, and as $\prepp_{x_\ell}-F_i$ is a subset of $\prepp_{x_{\ell}}$, it is contained in $H_i$. Similarly $\prepp_{x_r}-F_i$ is contained in $H_i$. Finally, as $H_{i+1}:=H_i \cup P_i$, we have $P_{i}'$ contained in $H_{i+1}$. Hence all paths corresponding to edges of $B_i$ are concatenations of paths in $H_{i}$ and $H_{i+1}$, hence all paths corresponding to edges of $B_i$ lie in $H_{i+1}$. These paths have length at most $3\sqrt{k}/\delta$ by construction and by \Cref{obs:virtual-fault-proc}.

    To prove the final claim, assume that $(s,s')=e \in B_i$ satisfies $\dist_{\mathbf{V}_i-F_i}(s,s')\leq 2 \lceil\sqrt{k}\rceil -1$. Then there exists a path of virtual edges in $\mathbf{V}_i-F_i$ from $s$ to $s'$ of length $h\leq 2 \lceil\sqrt{k}\rceil -1$. Let $e_1,\ldots,e_h$ be the edges of this path, in order. By definition each $e_j \in \{e_1,\ldots,e_h\}$ was added in some $B_{i'}$ with $i'<i$, hence by (3) each such edge $e_j$ corresponds to a path of length $\leq 3 \frac{\sqrt{k} }{\delta}$ in $H_{i}$, call this path $\Gamma_j$. By assumption $e_1,\ldots,e_h$ are all in $\mathbf{V}_i-F_i$ hence non-faulty, meaning $\Gamma_j$ is a non-faulty path in $H_i$. Concatenating $\Gamma_1,\ldots,\Gamma_h$ gives a path in $H_{i}-F_i$ from $s$ to $s'$ of length $\leq h \cdot 3 \frac{\sqrt{k} }{\delta} \leq \left(2 \lceil\sqrt{k}\rceil -1\right) \cdot 3 \frac{\sqrt{k} }{\delta} \leq \frac{8k}{\delta}$ (since $\sqrt{k} \geq 2$), contradicting \Cref{obs:virtual-fault-proc} (1).
\end{proof}

To bound $\mathcal{G}$, we would like to bound the virtual process defined above. As stated so far, however, this virtual process may add an unbounded number of graphs $B_i$. Indeed, the discussion up to this point is \emph{independent of the structure} of the path collections $\prepp_{x_{\ell}}-F_i$ and $\prepp_{x_{r}}-F_i$ underlying the virtual graph. In particular, if for example $\prepp_{x_{\ell}}-F_i$ or $\prepp_{x_{r}}-F_i$ is empty, then so is $B_i$, and the virtual process may continue forever, yielding no effective bound on $\mathcal{G}$.

To remedy this, we need to encode the \emph{actual} structure of the graphs $B_i$ obtained from \Cref{prop:sqrt-to-ok-vft}. We claim not only that both $\prepp_{x_{\ell}}-F_i$ and $\prepp_{x_{r}}-F_i$ are non-empty, but also that one may view $B_i$ as creating \emph{many robust connections} between centers all at once. Once we incorporate this extra structure into the definition of each $B_i$, we can effectively bound $\mathcal{G}$.

The rest of the section is organized as follows. In \Cref{sec:robust-sunflower}, we show that the extra property satisfied by our $B_i$ is that their lists form a \emph{robust sunflower}. In \Cref{sec:CLFT}, we give tight bounds on the virtual process under this assumption. Finally, in \Cref{sec:concluding-sqrt-ok}, we use the bounds obtained in \Cref{sec:CLFT} to conclude the proof of \Cref{thm:main}.

We now turn to quantify the structure of connection between centers each $P_i \in \mathcal{G}$ introduces. 
    
\subsection{Modeling the Center Connections Formed by $\mathcal{G}$}\label{sec:robust-sunflower}

We now describe how paths in $\mathcal{G}$ connect different centers. For a hypergraph $\mathcal{H}=(\mathcal{U},E)$\footnote{We allow $\mathcal{H}$ to have multiple edges} and $J\subseteq \mathcal{U}$, define the \emph{$J$-link} by
\[
\mathcal{H}[J] := \{e\in E(\mathcal{H}) \mid J\subseteq e\}.
\]

We will need:

\begin{definition}[$q$-spread hypergraph]
A hypergraph $\mathcal{H}=(\mathcal{U},E)$ is \emph{$q$-spread} if for every $J\subseteq \mathcal{U}$,
\[
|\mathcal{H}[J]| \ \le\ q^{-|J|}\,|\mathcal{H}|.
\]
\end{definition}

We will use the celebrated spread lemma, originally by \cite{LovettSunflowers}. The following formulation follows from \cite{raoSunflowers}, Lemma 4, by setting $\varepsilon=1/4$, which suffices for our needs.
\begin{lemma}[Spread Lemma]\label{lem-spread-lemma}
Let $\mathcal{H}$ be a hypergraph on a vertex set $U$ such that all $e\in E(\mathcal{H})$ satisfy $|e|\le w$
\footnote{Usually $\mathcal{H}$ is assumed to be $w$-uniform but by adding dummy elements to each set the same conclusion applies to $\leq w$-uniform hypergraphs.} , and assume $\mathcal{H}$ is $q$-spread.
There exists a universal $C>0$ such that if $U'$ is a random subset of $U$ in which each element is
present with probability $p= \frac{C\log(w+2)}{q}$, then
\[
\Pr\bigl[\exists\, e\in \mathcal{H}\text{ such that } e\subseteq U'\bigr]\ \ge\ \frac{3}{4}.
\]
\end{lemma}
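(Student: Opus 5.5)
We plan to follow the iterative-sampling argument of Alweiss--Lovett--Wu--Zhang in the streamlined form of Rao, rather than re-deriving anything from the rest of the paper. The lemma does not depend on earlier results in this paper. First we reduce to the $w$-uniform case by padding every hyperedge with fresh dummy elements, distinct for each edge. Dummies are sampled with probability $p$ like every other element. The padding loses only a constant factor in spreadness: a link $\mathcal{H}[J]$ with $J$ containing a dummy has at most one edge. Applying spreadness to $J=e$ gives $|\mathcal{H}|\ge q^{|e|}$, so the required bound $1\le q^{-|J|}|\mathcal{H}|$ holds for all $|J|\le w$ once the original edges are large enough. Small edges can be handled directly, or the constant $C$ can be adjusted to absorb this case.

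Next we write the random set $U'$ as a union $W_1\cup\cdots\cup W_m$ of $m=\lceil\log_2 w\rceil+1$ independent random sets. Each $W_j$ has density $\gamma=C_0/q$, so that $1-(1-\gamma)^m\le p$ for $C$ large. We track a hypergraph $\mathcal{H}_j$ of \emph{residual} edges, with $\mathcal{H}_0=\mathcal{H}$. Given $\mathcal{H}_{j-1}$, which is $w_{j-1}$-uniform with $w_{j-1}\le w2^{-(j-1)}$ and $q$-spread up to constants, we define $\mathcal{H}_j$ as follows. For each $e\in\mathcal{H}_{j-1}$ we choose an edge $e'\in\mathcal{H}_{j-1}$ with $e'\subseteq e\cup W_j$ and $|e'\setminus W_j|$ minimal, and we keep $e'\setminus W_j$ when this has size at most $w_{j-1}/2$. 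Once the residual size drops to $0$, some original edge is contained in $U'$.

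The heart of the proof is the one-round lemma. For $W$ of density $\gamma=C_0/q$, the expected fraction of edges $e$ whose minimal residual $e'\setminus W$ has size more than $w_{j-1}/2$ is at most $2^{-\Omega(w_{j-1})}$, or at most a small constant in the last rounds. We would prove it by Rao's encoding argument. Fix $|W|$ and count bad pairs $(W,e)$. Such a pair is determined by $W\cup e$, which is a set of size $|W|+t$ with few choices when $|W|$ is near its mean, together with the residual $T=e\setminus W$ of size $t>w/2$. Minimality forces every edge inside $W\cup e$ to have a large intersection with $T$. Spreadness then bounds the number of edges $e$ containing a given $T$ by $q^{-|T|}|\mathcal{H}|$. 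Comparing $\binom{n}{|W|+t}$ against $\binom{n}{|W|}\gamma^{-t}$ gives a factor like $(C_0')^{-t}$, which beats the $2^{w}$ choices of $T$ inside $e$. This is the step we expect to be the main obstacle. It needs the bookkeeping that keeps $\mathcal{H}_j$ spread after restricting to residuals, which holds because links of residuals are contained in links of the original edges. It also needs care in the final rounds, where $w_j$ is $O(1)$ and the failure probability is only a constant rather than exponentially small.

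Finally we combine the rounds. We use Markov's inequality to ensure that in every round a $1-o(1)$ fraction of edges survive with halved residual, so the next hypergraph is nonempty. We apply a union bound over the $m$ rounds, with the per-round failure probabilities summing to at most $1/4$. This yields $\Pr[\exists e\subseteq U']\ge 3/4$ with $p=C\log(w+2)/q$, where the $\log(w+2)$ comes from the $m=O(\log w)$ rounds. Equivalently, and more briefly, we may simply invoke Lemma 4 of Rao with $\varepsilon=1/4$ after the padding reduction.
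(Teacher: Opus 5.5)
Your closing sentence is exactly what the paper does. The paper gives no proof of this lemma beyond citing Rao's Lemma~4 with $\varepsilon=1/4$, plus a footnote saying that padding with dummy elements handles edges smaller than $w$. However, your justification of the padding is wrong, and this matters for the citation route too. Padding every edge up to size $w$ can destroy spreadness by far more than a constant factor. A set $J$ meeting the dummies of one padded edge has a link of size $1$ and can have $|J|=w$, so you need $|\mathcal{H}|\ge q^{w}$. Spreadness only gives $|\mathcal{H}|\ge q^{w_{\max}}$ with $w_{\max}=\max_e|e|$. For instance, $q$ singletons form a $q$-spread family, but after padding to size $w$ a full padded edge violates spreadness by a factor $q^{w-1}$, and no choice of $C$ absorbs this. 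The fix is to pad only up to $w_{\max}$. Then every $J$ meeting a dummy satisfies $|\mathcal{H}'[J]|\le 1\le q^{-w_{\max}}|\mathcal{H}|\le q^{-|J|}|\mathcal{H}|$ (for $q\ge 1$; if $q<1$ then $p\ge 1$ and the claim is trivial), so $q$-spreadness is preserved exactly. Since $\log(w_{\max}+2)\le\log(w+2)$ and the event is monotone in $U'$, the stated $p$ suffices.

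Your self-contained sketch also has a genuine gap in the one-round encoding. Encoding a bad pair by $W\cup e$ costs a factor $\gamma^{-|e\setminus W|}$, which is typically about $\gamma^{-w}$. From $W\cup e$ the decoder can cheaply learn only $e''\setminus W$ for a canonical edge $e''\subseteq W\cup e$: a subset of $e$ of size $>w/2$, specified with $2^w$ bits, not $T=e\setminus W$. Specifying $T$ as one of $2^w$ subsets of $e$ is circular, since $e$ is what is being decoded. Spreadness then returns only $q^{-w/2}$, giving a bad fraction of order $4^w\gamma^{-w}q^{-w/2}$. This forces $\gamma\gtrsim q^{-1/2}$, a square-root loss, rather than $\gamma=C_0/q$.

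The encoding that works (Rao's) uses $Y=W\cup\chi$, where $\chi=e'\setminus W$ is the \emph{minimal} residual. Because $\chi\subseteq e$, we have $Y\subseteq W\cup e$. So by minimality every $e''\in\mathcal{H}$ with $e''\subseteq Y$ satisfies $e''\setminus W=\chi$ exactly, i.e.\ it contains $\chi$, not merely meets it substantially. The decoder therefore picks a canonical such $e''$ from $Y$, reads off $\chi$ with $2^w$ bits, sets $W=Y\setminus\chi$, and locates $e$ among the at most $q^{-|\chi|}|\mathcal{H}|$ edges of $\mathcal{H}[\chi]$. The cost of $Y$ is now only $\gamma^{-|\chi|}$, so the bad fraction is at most about $\sum_{t>w/2}2^w(\gamma q)^{-t}\lesssim(4/C_0)^{w/2}$, which is the one-round bound your iteration needs. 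With this and the corrected padding, the rest of your outline is the standard argument: halving rounds, links of residuals bounded by links of the original edges, and summing the per-round failure probabilities.
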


We also need the following.
\begin{observation}\label{obs:spread-delete}
Let $\mathcal{H}$ be $q$-spread and let $A\subseteq \mathcal{U}$ with $q>|A|$.
Then the subhypergraph
\[
\mathcal{H}-A := \{e\in \mathcal{H} \mid e\cap A=\emptyset\}
\]
is $(q-|A|)$-spread.
\end{observation}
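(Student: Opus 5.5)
We bound the link of every set $J\subseteq\mathcal{U}$ in $\mathcal{H}-A$ directly, by comparing $|(\mathcal{H}-A)[J]|$ with $|\mathcal{H}-A|$. The numerator is easy: $(\mathcal{H}-A)[J]\subseteq \mathcal{H}[J]$, so $q$-spreadness gives $|(\mathcal{H}-A)[J]|\le q^{-|J|}|\mathcal{H}|$. If $J\cap A\neq\emptyset$ the link of $J$ in $\mathcal{H}-A$ is empty, so we may assume $J\cap A=\emptyset$. Everything then reduces to a lower bound on $|\mathcal{H}-A|$ in terms of $|\mathcal{H}|$.

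For the denominator we use a union bound over the elements of $A$. Every hyperedge that meets $A$ lies in $\mathcal{H}[\{a\}]$ for some $a\in A$, and $|\mathcal{H}[\{a\}]|\le |\mathcal{H}|/q$. Hence
\[
|\mathcal{H}-A|\ \ge\ |\mathcal{H}|\Bigl(1-\frac{|A|}{q}\Bigr)=\frac{q-|A|}{q}\,|\mathcal{H}|,
\]
which is positive because $q>|A|$. Combining the two bounds,
\[
|(\mathcal{H}-A)[J]|\ \le\ q^{-|J|}\cdot\frac{q}{q-|A|}\,|\mathcal{H}-A|.
\]

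It remains to show $q^{-|J|}\cdot \frac{q}{q-|A|}\le (q-|A|)^{-|J|}$, i.e. $(q-|A|)^{|J|-1}\le q^{|J|-1}$. This holds for every $|J|\ge 1$ since $q-|A|\le q$. For $J=\emptyset$ the spread condition $|(\mathcal{H}-A)[\emptyset]|\le|\mathcal{H}-A|$ is trivial, so it needs no separate argument. The only delicate point is this $|J|=1$ versus $|J|\ge 1$ bookkeeping. The inequality is tight at $|J|=1$ and only gets better for larger $J$, so there is no real obstacle.
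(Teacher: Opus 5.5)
Your proposal is correct and follows essentially the same route as the paper's proof: bound the link by $|\mathcal{H}[J]|\le q^{-|J|}|\mathcal{H}|$, lower-bound $|\mathcal{H}-A|$ by a union bound over the singleton links $\mathcal{H}[\{a\}]$, $a\in A$, and finish with $(q-|A|)^{|J|-1}\le q^{|J|-1}$ for $|J|\ge 1$. Writing the combined bound multiplicatively, as $|(\mathcal{H}-A)[J]|\le q^{-|J|}\frac{q}{q-|A|}|\mathcal{H}-A|$, is slightly cleaner than the paper's ratio form, because it never divides by $|\mathcal{H}-A|$ and so also covers the degenerate case $\mathcal{H}=\emptyset$.
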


\begin{proof}
Fix nonempty $J\subseteq \mathcal{U}$. Set $q':=q-|A|>0$. Then
\[
\frac{|(\mathcal{H}-A)[J]|}{|\mathcal{H}-A|}
\le
\frac{|\mathcal{H}[J]|}{|\mathcal{H}-A|}
\le
\frac{|\mathcal{H}[J]|}{|\mathcal{H}|-\sum_{a\in A}|\mathcal{H}[\{a\}]|}
\le
\frac{|\mathcal{H}[J]|}{|\mathcal{H}|-\frac{|A|}{q}|\mathcal{H}|}
\le
q^{-|J|}\cdot\frac{1}{1-|A|/q}
\le
(q')^{-|J|}.
\]
Hence $\mathcal{H}-A$ is $q'$-spread.
\end{proof}
Finally we recall the following definition:

\begin{definition}[Robust sunflower, \cite{Rossman14}]
    Let $\mathcal F$ be a family of sets over a universe $U$. Let $K = \bigcap_{S \in \mathcal F} S$ be their common intersection. We say that $\mathcal F$ is a $(p, \varepsilon)$-robust sunflower (with kernel $K$) if 

    $$\Pr_{W \sim Bin(U, p)}[\exists S \in \mathcal F , S \subseteq K \cup W]\geq 1 - \varepsilon.$$
\end{definition}

\begin{lemma}\label{lem:path-collections-form-sunflower}
    For $P_i \in \mathcal{G}$, let $x_{\ell},x_r$ be as in \Cref{prop:sqrt-to-ok-vft}, and let $P'_i$ be the sub-path of $P_i$ from $x_{\ell}$ to $x_r$. Assume $C'\geq \max(2,10C)$ where $C$ is the constant guaranteed in \Cref{lem-spread-lemma}. Then:
    
      \[\left(V, (\prepp_{x_\ell}-F_i)^{rev} \circ P'_i \circ (\prepp_{x_r}-F_i)\right)\] is an $(1/2f,1/2)$-robust sunflower with kernel $V(P'_i)$\footnote{The hypergraph may contain parallel edges.}.
\end{lemma}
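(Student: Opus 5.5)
The plan is to decouple the two sides, show that each side alone survives a random $W\sim Bin(V,1/2f)$ with probability at least $3/4$, and then take a union bound. Write $\mathcal{A}=\{V(Q)\setminus\{x_\ell\} \mid Q\in \prepp_{x_\ell}-F_i\}$ and $\mathcal{B}=\{V(Q)\setminus\{x_r\} \mid Q\in\prepp_{x_r}-F_i\}$. Every hyperedge of the hypergraph in the statement has the form $V(Q_\ell)\cup V(P'_i)\cup V(Q_r)$, so it contains $V(P'_i)$; hence the kernel contains $V(P'_i)$. If we prove that some hyperedge is contained in $V(P'_i)\cup W$ with probability at least $1/2$, we are done. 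The true kernel $K$ may be larger than $V(P'_i)$, but $V(P'_i)\subseteq K$, so the event only becomes likelier with $K\cup W$. It then suffices to show that, with probability at least $3/4$ each, some $A\in\mathcal{A}$ satisfies $A\subseteq W$ and some $B\in\mathcal{B}$ satisfies $B\subseteq W$. A union bound gives probability at least $1/2$ that both hold simultaneously. Since the hyperedges range over all pairs, the corresponding $(Q_\ell,Q_r)$ gives a hyperedge contained in $\{x_\ell,x_r\}\cup W\cup V(P'_i)=V(P'_i)\cup W$.

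For the side events, first establish the spreadness of $\mathcal{A}$. By \Cref{prop:sqrt-to-ok-vft}, the collection $\{V(Q)\setminus\{x_\ell\}\mid Q\in\prepp_{x_\ell}\}$ is $(C'sf)$-spread; the sets $U\ni x_\ell$ have empty or trivial links, and one should check this case separately, or simply note that the hypothesis is stated for all $U\subseteq V-\{x_\ell\}$, which is all we need since the sets exclude $x_\ell$. Removing the paths that hit $F_i$ is exactly $\mathcal{H}-F_i$ with $|F_i|\le f$. Since $C'sf>f$, \Cref{obs:spread-delete} shows that $\mathcal{A}$ is $(C'sf-f)$-spread, hence at least $(C'sf/2)$-spread because $C'\ge 2$ and $s\ge1$. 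In particular, $\mathcal{A}$ is nonempty: $\prepp_{x_\ell}$ is nonempty, and the computation in \Cref{obs:spread-delete} gives $|\mathcal{H}-F_i|\ge(1-1/(C's))|\mathcal{H}|>0$.

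Next, apply \Cref{lem-spread-lemma} with uniformity $w=R$, since the paths have length $R$ and removing $x_\ell$ leaves at most $R$ vertices, and $q=C'sf/2$. The required probability is $p=C\log(R+2)\cdot 2/(C'sf)$, and we need $p\le 1/2f$, i.e.\ $4C\log(R+2)\le C's$. Here $s=2R-1\ge R$, and $\log(R+2)\le R$ up to a constant; with $C'\ge 10C$ this gives $4C\log(R+2)\le 10C\cdot s\le C's$ for all $R\ge1$, perhaps after a mild adjustment. By monotonicity, sampling with the larger probability $1/2f\ge p$ only increases the success probability, which is therefore at least $3/4$. The same argument applies to $\mathcal{B}$.

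The main obstacle I expect is bookkeeping rather than depth. First, the spreadness is stated relative to $|\prepp_{x_\ell}|$ rather than for the deleted family, which is handled by \Cref{obs:spread-delete}. Second, the constant check $p\le 1/2f$ must be verified, including the small-$R$ case where $\log(R+2)$ versus $s$ matters. If that check fails for tiny $R$, I would enlarge $C'$ accordingly, which the statement permits via the condition $C'\ge\max(2,10C)$ and by reusing the slack from $C'sf-f\ge C'sf/2$.
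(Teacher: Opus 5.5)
The part of your argument proving the robustness probability is the paper's argument. You treat the $x_\ell$- and $x_r$-sides separately, use \Cref{obs:spread-delete} to get $(C'sf-f)\ge C'sf/2$-spreadness after deleting $F_i$, apply \Cref{lem-spread-lemma} (with monotonicity in the sampling rate) to get probability at least $3/4$ per side, and finish with a union bound. Your check $4C\log(R+2)\le C's$ holds for every $R\ge 1$ with no adjustment, since $\log(R+2)\le R+1\le 2s$ and $C'\ge 10C$. That matters, because $C'$ is fixed in \Cref{prop:sqrt-to-ok-vft} before this lemma, so your fallback of enlarging it here is not actually available.

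The gap is the kernel. The lemma asserts that the kernel, i.e.\ the common intersection of all hyperedges, is \emph{exactly} $V(P'_i)$, and you explicitly leave open that it could be larger. This is not cosmetic. The robust CLFT process of \Cref{sec:CLFT} requires $|L_i|\le w$ for $L_i=\bigcap_{e\in E_i}L(e)$, and the paper applies it with $w=\lfloor\sqrt{k}\rfloor+1$, which is justified only if the kernel is $V(P'_i)$. The missing step is short and is the argument in the paper's footnote. Your family $\mathcal{A}$ is $q$-spread with $q=C'sf/2\ge sf>1$. Hence for any $v\notin V(P'_i)$ (so in particular $v\neq x_\ell$) you get $|\mathcal{A}[\{v\}]|\le|\mathcal{A}|/q<|\mathcal{A}|$, giving some $A\in\mathcal{A}$ with $v\notin A$. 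Likewise there is some $B\in\mathcal{B}$ with $v\notin B$. Since every pair $(Q_\ell,Q_r)$ yields a hyperedge, the one built from these two paths avoids $v$, so $v$ is not in the kernel.
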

\begin{proof}
    We denote $\overline{\mathcal{P}}_{x_{\ell}}=\{V(Q_\ell)\setminus x_{\ell}\}_{Q_\ell \in \prepp_{x_\ell}}$, and $\overline{\mathcal{P}}_{x_{\ell}}-F_i=\{V(Q_\ell)\setminus x_{\ell}\}_{Q_\ell \in \prepp_{x_\ell}-F_i}$. Similarly define $\overline{\mathcal{P}}_{x_{r}}$, and $\overline{\mathcal{P}}_{x_{r}}-F_i$.

    By, \Cref{prop:sqrt-to-ok-vft}  $\overline{\mathcal{P}}_{x_{\ell}}$ and $\overline{\mathcal{P}}_{x_{r}}$ are $C'sf$-spread. By applying \Cref{obs:spread-delete} with $|F_i| \leq f$, we have the $\overline{\mathcal{P}}_{x_\ell}-F_i$ is $(C'sf-f)\geq \frac{1}{2}C'sf$-spread as $C' \geq 2$. Similarly  $\overline{\mathcal{P}}_{x_r}-F_i$ is $\frac{1}{2}C'sf$-spread.

    \noindent We now apply the spread lemma\footnote{$\overline{\mathcal{P}}_{x_\ell}-F_i$ is $q = \frac{C'sf}{2}$-spread and $\frac{C\log(w+2)}{q} \leq  \frac{2\log(\sqrt{k}+3)}{5f\sqrt{k}/\delta} \leq \frac{1}{2f}$ for $\sqrt{k} \ge 2$ and $\delta \leq \frac12$.
    },
    
    \Cref{lem-spread-lemma} to get $$\Pr_{W\sim Bin(V(G), \frac{1}{2f})}[\exists Q'_\ell \in \overline{\mathcal{P}}_{x_{\ell}}-F_i : V(Q'_\ell)\subseteq W] \geq \frac{3}{4},$$ and similarly,  $$\Pr_{W\sim Bin(V(G), \frac{1}{2f})}[\exists Q'_r \in \overline{\mathcal{P}}_{x_{r}}-F_i : V(Q'_r)\subseteq W] \geq \frac{3}{4},$$
    Hence, by a union bound on the complements we have:
    \[\Pr_{W\sim Bin(V(G), \frac{1}{2f})}[\exists Q'_\ell: V(Q'_\ell) \subseteq W  \text{ and  } \exists Q'_r : V(Q'_r)\subseteq W]\geq \frac{1}{2}\]
    But letting $Q_\ell$ be the unique path in $\prepp_{x_\ell}-F_i$ extending $Q_{\ell}'$ and similarly, $Q_r$ we get that:
    \[\Pr_{W\sim Bin(V(G), \frac{1}{2f})}[\exists Q_\ell\in \prepp_{x_\ell}-F_i, Q_r\in \prepp_{x_r}-F_i: V\left((Q_\ell)^{rev}\circ P_i' \circ Q_r \right) \subseteq V(P_i')\cup W]\geq \frac{1}{2}\]
    As needed\footnote{Notice that the above analysis show only that the kernel contains $V(P_i')$, but it is easy to see that it is exactly $V(P_i')$. Indeed, recall that $\overline{\mathcal{P}}_{x_{r}}-F_i$ and $\overline{\mathcal{P}}_{x_{\ell}}-F_i$ are each $q = \frac{C'sf}{2}$-spread, and by our choice of parameters $q>1$. It is easy to see that any $q$-spread hypergraph with $q>1$ has empty intersection, hence for any vertex $v \in V - V(P_i')$, we can find paths $Q_r \in \overline{\mathcal{P}}_{x_{r}}-F_i$, and $Q_\ell \in \overline{\mathcal{P}}_{x_{\ell}}-F_i$ not containing $v$, meaning $v$ is not in the kernel of the concatenated hypergraph from the lemma. Consequently, the kernel is precisely $V(P_i')$.}.    
\end{proof}

We now incorporate the robust sunflower conditions in the analysis of the previously mentioned virtual process.

\subsection{The Robust CLFT Process}\label{sec:CLFT}

We now describe an abstract process capturing the virtual process introduced in \Cref{sec:VFT-Faults-Describing-Virt}. Let $G$ be a graph, let $\mathcal{C}$ be a set of colors.  We say that $G$ is \emph{list colored} if every edge $e$ is assigned a list $L(e) \subseteq \mathcal{C}$. In the CLFT model introduced in \cite{PetruschkaST24_ITCS}, a set $F \subseteq \mathcal{C}$ of at most $f$ colors fails. We denote by $G-F$ the graph with vertex set $V(G)$ and edge set
\[
\{ e \mid L(e) \cap F = \emptyset \}.
\]

\noindent An \emph{$f$-CLFT $(2k-1)$-spanner} $H$ of $G$ is a subgraph $H$ such that
\[
\forall F \subseteq \mathcal{C},\ |F|\leq f,\quad \dist_{H-F}(u,v) \leq (2k-1)\dist_{G-F}(u,v).
\]

In \cite{PetruschkaST24_ITCS} it was proved that every list colored graph $G$, with all lists of size $\leq w$ admits an $f$-CLFT $(2k-1)$-spanner $H$ of size $O(f^{w}n^{1+\frac{1}{k}})$, which they proved is tight assuming the girth conjecture. The goal of this section is a robust, parallel version of this result.

We study an iterative procedure that maintains a growing subgraph $H_i$ of $G$. We start with $H_1=(V,\emptyset)$ on $n$ vertices and two integers $w$ and $t$. 
At each step $i\geq 1$, we are given the current graph $H_i$, and a list-colored multi-graph $B_i$; the rule decides whether to add all edges of $B_i$ to $H_i$.

\paragraph{Inputs at step $i$.}
At step $i\ge 1$ we are given:
\begin{enumerate}
    \item The current graph $H_i$.

    \item A \emph{list colored} multi-graph $B_i=(V,E_i)$. 

    \item\label{robustness} (Robustness) The collection $\{L(e)\}_{e \in E_i}$ is a $(\frac{1}{2f}, \frac{1}{2})$-robust sunflower, whose kernel $L_i:=\cap_{e\in B_i} L(e)$ is of size $\leq w$.
    
\end{enumerate}

\paragraph{Decision rule.}
If there exists a set $F_i\subseteq \mathcal{C}$ with $F_i \cap \bigcup_{e\in E_i}L(e)=\emptyset$ and $|F_i|\leq f$ such that for every $\{u,v\}=e \in E_i$ we have $\dist_{H_{i}-F_i}(u,v)>t$, then set $H_{i+1}=H_i\cup B_i$. Otherwise, set $H_{i+1}=H_i$.

\paragraph{Interpretation.}
Think of each edge in $B_i$ as a virtual edge between centers in $S$ that are being added by a path addition in the greedy algorithm. By \Cref{lem:properties-of-virt} the graphs $B_i$ described in \Cref{sec:VFT-Faults-Describing-Virt}, will all be added by the CLFT process with $w = \lfloor\sqrt{k}\rfloor +1$ and $t = 2\lceil \sqrt{k}\rceil - 1$. For paths in $\mathcal{G}$, and $B_i$ as we defined all decisions would be to add $B_i$, yet we find it natural to model the above process with a decision rule.  

\paragraph{Main bound.}
Our main result is a tight upper bound on the number of steps in which $B_i$ is added.

\begin{theorem}\label{thm:robust-clft-process}
    The number of steps in which the robust CLFT process added $B_i$ is
    \[
    O\!\left(w f^{w}  n^{1+1/\lceil t/2 \rceil }\right).
    \]
\end{theorem}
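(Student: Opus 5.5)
We follow the blocking-set method of \cite{BodwinP19}, in the form used by \cite{PetruschkaST24_ITCS} for CLFT spanners, adapted so that a whole robust sunflower $B_i$ plays the role of a single edge. Let $H$ be the final graph and let $I$ be the set of steps in which $B_i$ was added; we aim to show $|I| = O(w f^{w} n^{1+1/\lceil t/2\rceil})$.

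\textbf{Step 1: blocking sets.} For each added step $i\in I$, the decision rule gives a witness $F_i$ with $|F_i|\le f$, disjoint from all lists of $B_i$, such that every edge of $B_i$ has endpoints at distance $>t$ in $H_i-F_i$. To each $i$ attach the \emph{blocking set} $\mathcal{B}_i=\{(i,c): c\in F_i\}$. We claim the set system $\{(i,F_i)\}$ blocks all short cycles in the following robust sense. Let $\Gamma$ be a cycle of length $\le t+1$ in $H$ whose edges come from distinct steps. Let $i$ be the latest of these steps, and let $e\in B_i$ be its edge on $\Gamma$. Then the remaining edges of $\Gamma$ form a $\le t$ path in $H_i$ between the endpoints of $e$. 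Hence some earlier edge $e'$ of $\Gamma$ has $L(e')\cap F_i\neq\emptyset$.

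\textbf{Step 2: random subsample.} Choose a random color set $W\sim Bin(\mathcal{C},1/2f)$, and independently keep each step $i\in I$ with probability $p$. Then keep step $i$ only if $F_i\cap W=\emptyset$. Among the surviving steps, retain one edge $e_i\in B_i$ with $L(e_i)\subseteq L_i\cup W$, conditioned further on $L_i\subseteq W$.

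\begin{itemize}
\item The event $F_i\cap W=\emptyset$ has probability $(1-1/2f)^f\ge 1/2$.
\item By the robustness of $B_i$, conditioned on $L_i\subseteq W$, such an edge $e_i$ exists with probability $\ge 1/2$. There is a subtlety: the robust sunflower is over a binomial set independent of $F_i$, so we condition carefully. Since $F_i\cap\bigcup L(e)=\emptyset$, the event $F_i\cap W=\emptyset$ is independent of which list survives.
\item The event $L_i\subseteq W$ has probability $(2f)^{-|L_i|}\ge (2f)^{-w}$.
\end{itemize}

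Overall, each step yields a retained edge with probability $\Omega(p (2f)^{-w})$.

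For any retained edges $e_i,e_j$ with $j<i$, we have $L(e_j)\subseteq W$ while $F_i\cap W=\emptyset$. So $F_i$ cannot hit $L(e_j)$, and by Step 1 the retained graph has girth $>t+1$.

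One issue is that two retained edges from the same step are impossible, since we retain one per step. Another is multi-edges from different steps: these form 2-cycles, which are excluded since $t\ge 1$.

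\textbf{Step 3: counting.} A graph of girth $>t+1\ge 2\lceil t/2\rceil$ on $n$ vertices has $O(n^{1+1/\lceil t/2\rceil})$ edges. Therefore
\[
p(2f)^{-w}|I| \lesssim \mathbb{E}[\#\text{retained}] = O\!\left(n^{1+1/\lceil t/2\rceil}\right).
\]
With $p=1$ this yields $|I|=O\!\left((2f)^{w}n^{1+1/\lceil t/2\rceil}\right)$.

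\textbf{Main obstacle.} To get $f^w$ instead of $(2f)^w$, the sampling must be sharper. Following \cite{BodwinP19}, we would first argue that most steps have $|L_i|$ small relative to blocking. Alternatively, we would sample $W$ with rate $\approx 1/2f$ only outside $L_i$ while including the kernel with rate $1-1/w$. Kernel inclusion then costs $(1-1/w)^{w}\cdot$ (scaled) and the fault-avoidance costs $(1/w)^{f}$... which reverses the roles. The intended balance, losing an $O(w)$ factor, comes from choosing the kernel-inclusion rate $\approx 1/f$ per color and handling the $2^w$ by the girth-counting slack.

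I expect this constant-base optimization, together with correctly coupling the robust-sunflower probability (stated for rate $1/2f$) with a possibly different sampling rate, to be the delicate part. The first attempt would be to sample at rate exactly $1/2f$ and accept base $2f$, then refine via monotonicity of robust sunflowers in $p$.
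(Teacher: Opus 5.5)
\textbf{Gap: your argument proves $O((2f)^w\,n^{1+1/\lceil t/2\rceil})$, not the stated $O(w f^w\,n^{1+1/\lceil t/2\rceil})$.}

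Your argument is sound as far as it goes. The events $F_i\cap W=\emptyset$, $L_i\subseteq W$, and $\exists e\in B_i:\ L(e)\subseteq L_i\cup W$ depend on pairwise disjoint sets of colors, because $F_i$ avoids every list. So they are independent, and robustness gives the third event probability $\ge \frac12$. Retained earlier edges have lists inside $W$ while $F_i\cap W=\emptyset$, so the retained graph before step $i$ lies in $H_i-F_i$. It therefore has girth $\ge t+2$ and hence $O(n^{1+1/\lceil t/2\rceil})$ edges.

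This falls short of the theorem by a factor of order $2^w/w$, which is exponential in $w$, so the statement is not established. Your ``main obstacle'' paragraph does not close the gap. The sample $W$ must be a single set shared by all steps, since the girth argument compares $L(e_j)\subseteq W$ with $F_i\cap W=\emptyset$ across different steps. So you cannot include each kernel $L_i$ at its own rate. The Moore bound also has no slack that could absorb a factor $2^w$.

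\textbf{The missing step is only the choice of the uniform sampling rate.} Take it close to $1/f$ rather than $1/(2f)$: let $W\sim Bin(\mathcal{C},q)$ with $q=\gamma/f$ and $\gamma=\frac{w}{w+1}$.
\begin{itemize}
\item Since $|L_i|\le w$, we get $\Pr[L_i\subseteq W]\ge q^{w}=\gamma^w f^{-w}\ge \frac13 f^{-w}$.
\item By Bernoulli's inequality, $\Pr[F_i\cap W=\emptyset]\ge (1-q)^f\ge 1-fq=\frac{1}{w+1}$. The paper instead bounds the joint ``unblocked'' event by $q^{|L_i|}-q^{|L_i|+1}f$ via a union bound over the blocks $(L_i,c)$, $c\in F_i$, which gives the same result.
\item The robustness hypothesis is stated at rate $\frac{1}{2f}$, but $\gamma\ge\frac12$ gives $q\ge\frac1{2f}$. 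The event $\exists e:\ L(e)\subseteq L_i\cup W$ is monotone increasing in $W$, so it still has conditional probability $\ge\frac12$ at rate $q$.
\end{itemize}
Each step is then retained with probability $\ge \frac{1}{6(w+1)}f^{-w}$. Your girth count then yields $|I|=O(w f^w n^{1+1/\lceil t/2\rceil})$. With this change your argument is essentially the paper's proof, which phrases the final count as the greedy $1\to t$ spanner bound instead of the girth bound directly.
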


\paragraph{Relation to the color list failure.}
Our upper bound on the number of added steps matches the size upper bound for spanners against color-list faults (up to a multiplicative $w$ factor) and significantly generalizes that setting. In \cite{PetruschkaST24_ITCS}, the authors analyze a CLFT greedy spanner with stretch $t$ that considers iteratively a graph $B_i$ consisting of a \emph{single} edge $e_i$, with a single list of $\leq w$ colors $L_i$, with $B_i$ added whenever there exists an $F_i$ of size at most $f$ with $F_i \cap L_i=\emptyset$ such that the endpoints of $e_i$ are at distance $>t$ in $H_i-F_i$. This case follows from our proof, as a single list $L_i$ is a robust sunflower (with itself as the kernel). 
In this special case, our bound recovers the same size upper bound for the color-list fault-tolerant greedy spanners of \cite{PetruschkaST24_ITCS} up to a $w$ factor which is negligible. We also highlight that \cite{PetruschkaST24_ITCS} provided a matching lower bound for the CLFT greedy algorithm, which in particular implies that our upper bound is essentially tight.

\subsection*{Proof of \Cref{thm:robust-clft-process}}
We restrict attention to the steps in which the process adds $B_i$. Namely, assume that in steps $i=1,2,\ldots,m$ we decided to keep $B_i$ and add it to $H_i$. Our goal is to prove
\[
m= O\!\left(w f^w \cdot n^{1+1/\lceil t/2 \rceil}\right).
\]

For each such step $i$, by the decision rule there exists a fault set $F_i$ satisfying $\bigcup_{e\in E_i}L(e) \cap F_i=\emptyset$, $|F_i|\leq f$, and $\dist_{H_i-F_i}(u_i,v_i)>t$ for all $(u_i,v_i)\in E_i$.

We combine the blocking-set method with the spread lemma. Define the set of blocks
\[
\mathcal{B}=\{(L_i,c) \mid c \in F_i \}.
\]
Sample a random subset $\mathcal{C}' \subseteq \mathcal{C}$ by including each color independently with probability $\frac{\gamma}{f}$, where $\gamma=\frac{w}{w+1}$. Recall that for all $w\geq 1$, $\gamma^w\geq \frac{1}{3}$. We say that $L_i$ is \emph{blocked} if there exists a block $b=(L_i,c)$ with $L_i \cup \{c\} \subseteq \mathcal{C}'$. We say that the $i^{\text{th}}$ iteration is unblocked if $L_i$ is not blocked but survives, meaning that $L_i \subseteq \mathcal{C}'$. As $|L_i|\leq w$, and every $L_i$ participates in $\leq f$ blocks (corresponding to $F_i$), we have:

\begin{align*}
\Pr[\text{The } i^{\text{th}} \text{ iteration is unblocked}]
&\ge \left(\frac{\gamma}{f}\right)^{|L_i|}
   - \left(\frac{\gamma}{f}\right)^{|L_i|+1} f \\[4pt]
&\ge \left(\frac{\gamma}{f}\right)^w
   - \left(\frac{\gamma}{f}\right)^{w+1} f \\[4pt]
&= \gamma^w (1 - \gamma)\frac{1}{f^w} \\[4pt]
&\ge \frac{1}{3(w+1)} f^{-w}.
\end{align*}

We call the $i^{\text{th}}$ round \emph{successful} if (1) $L_i$ is unblocked and (2) there exists an edge $e\in E_i$ such that $L(e)\subseteq \mathcal{C}'$.

\begin{claim}\label{clm:clft-spread-round}
    \[
    \Pr\!\left[
      \text{The } i^{\text{th}} \text{ round is successful }
      \,\middle|\,
      \text{The } i^{\text{th}} \text{ iteration is unblocked}
    \right] \geq \frac{1}{2}.
    \]
\end{claim}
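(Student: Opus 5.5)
The plan is to reduce the claim directly to the robustness assumption (item~\ref{robustness} of the process inputs), using independence of the color sample and a monotone coupling. First, unpack the conditioning event. The $i^{\text{th}}$ iteration is unblocked exactly when $L_i\subseteq\mathcal{C}'$ and no block $(L_i,c)$ with $c\in F_i$ is fully sampled. Given $L_i\subseteq\mathcal{C}'$, the second condition says $c\notin\mathcal{C}'$ for every $c\in F_i$. So the event is
\[
\mathcal{E}_i=\{L_i\subseteq \mathcal{C}'\}\cap\{F_i\cap\mathcal{C}'=\emptyset\}.
\]
Note that $L_i\cap F_i=\emptyset$, since $L_i\subseteq L(e)$ for every $e\in E_i$ and the decision rule gives $F_i\cap\bigcup_e L(e)=\emptyset$ (I assume $E_i\neq\emptyset$; otherwise the kernel is not meaningful and the robustness assumption already fails). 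Thus $\mathcal{E}_i$ depends only on the coordinates of $\mathcal{C}'$ in $L_i\cup F_i$. Since colors are sampled independently, conditioned on $\mathcal{E}_i$ the set $\mathcal{C}'\cap(\mathcal{C}\setminus(L_i\cup F_i))$ is still distributed as $Bin(\mathcal{C}\setminus(L_i\cup F_i),\gamma/f)$.

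Next, rewrite the success event under this conditioning. On $\mathcal{E}_i$, for any $e\in E_i$ we have $L(e)\subseteq\mathcal{C}'$ iff $L(e)\setminus L_i\subseteq \mathcal{C}'\cap(\mathcal{C}\setminus(L_i\cup F_i))$. This uses two facts: $L_i\subseteq \mathcal{C}'$ already holds, and $L(e)\cap F_i=\emptyset$. Hence the conditional probability of success equals
\[
\Pr_{X\sim Bin(\mathcal{C}\setminus(L_i\cup F_i),\gamma/f)}\bigl[\exists e\in E_i:\ L(e)\setminus L_i\subseteq X\bigr].
\]

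Now compare with the robust sunflower guarantee. Let $W\sim Bin(\mathcal{C},\tfrac1{2f})$. By the robustness assumption, $\Pr[\exists e:\ L(e)\subseteq L_i\cup W]\ge\tfrac12$. The event $L(e)\subseteq L_i\cup W$ is equivalent to $L(e)\setminus L_i\subseteq W\setminus(L_i\cup F_i)$, again because $L(e)$ avoids $F_i$. So this event depends only on $W$ restricted to $\mathcal{C}\setminus(L_i\cup F_i)$, which is $Bin(\cdot,\tfrac1{2f})$ there.

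Finally, since $\gamma=\frac{w}{w+1}\ge\frac12$, we have $\gamma/f\ge\frac1{2f}$. The standard monotone coupling therefore gives $W\setminus(L_i\cup F_i)\subseteq X$, with each coordinate of $X$ included whenever the corresponding coordinate of $W$ is, plus extra independent coins. The event in question is monotone increasing, so its probability under $X$ is at least its probability under $W$, which is at least $\tfrac12$.

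The only delicate point is the conditioning step: one must check that the event $\mathcal{E}_i$ touches no coordinate relevant to the lists outside the kernel. This holds precisely because $F_i$ is disjoint from every $L(e)$, so conditioning on $F_i\cap\mathcal{C}'=\emptyset$ costs nothing, and conditioning on $L_i\subseteq\mathcal{C}'$ only helps. Everything else is direct.
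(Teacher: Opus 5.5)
Your proposal is correct and follows essentially the same route as the paper: you condition on the colors in $L_i\cup F_i$, use $F_i\cap\bigcup_{e\in E_i}L(e)=\emptyset$ to make the success event depend only on colors outside $L_i\cup F_i$, and conclude from the $(\frac{1}{2f},\frac12)$-robust sunflower property together with $\gamma/f\ge \frac{1}{2f}$. You also spell out two points the paper leaves implicit: the monotone coupling between $Bin(\cdot,\frac{1}{2f})$ and $Bin(\cdot,\frac{\gamma}{f})$, and the nonemptiness of $E_i$, which is needed to get $L_i\cap F_i=\emptyset$.
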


\begin{proof}
    The event that the $i^{\text{th}}$ iteration is unblocked is exactly $L_i \subseteq \mathcal{C}'$ and $F_i \cap \mathcal{C}'=\emptyset$. We condition on this event by first exposing membership in $\mathcal{C}'$ for colors in $L_i \cup F_i$, and then sampling the remaining colors independently. 
    Accordingly, write $\mathcal{C}'=L_i \cup \mathcal{C}''$, where $\mathcal{C}'' \subseteq \mathcal{C}-L_i-F_i$ is formed by including each color independently with probability $\frac{\gamma}{f}$.

    Under this conditioning, it suffices to lower bound the probability that some edge $e \in B_i$ satisfies $L(e)\subseteq \mathcal{C}'$.

    But since $F_i \cap \bigcup_{e\in E_i}L(e) = \emptyset$, we have 
    $$\Pr_{\mathcal{C}''\sim Bin(\mathcal{C}-L_i - F_i, \frac{\gamma}{f})}\big[\exists\, e:\ L(e)\subseteq L_i\cup \mathcal{C}''\big] = \Pr_{\mathcal{C}'\sim Bin(\mathcal{C}, \frac{\gamma}{f})}\big[\exists\, e:\ L(e)\subseteq \mathcal{C}'\cup L_i\big] \geq \frac{1}{2},$$
    where the last inequality follows because $\{L(e)\}_{e\in E_i}$ being a $(\frac{1}{2f}, \frac12)$-robust sunflower and $\gamma \geq \frac12$ for all $w \geq 1$. 

\end{proof}

By the law of total expectation and linearity,
\begin{align*}
\mathbb{E}[\text{Number of successful rounds}]
&\ge \sum_{i=1}^m 
   \Pr\big[
      \text{The } i^{\text{th}} \text{ round is successful} 
      \,\big|\, 
      \text{The } i^{\text{th}} \text{ iteration is unblocked}
   \big]  \\
&\hspace{2cm} \cdot\,
   \Pr\big[\text{The } i^{\text{th}} \text{ iteration is unblocked}\big] \\[4pt]
&\ge m \cdot \frac{1}{2} \cdot \frac{1}{3(w+1)f^w}.
\end{align*}

Fix a choice of $\mathcal{C}'$ for which the number of successful rounds is
\[
r \ge \frac{m}{6(w+1)f^w}.
\]
Let $e_1^*,\ldots,e_r^*$ be a choice of an edge from the successful rounds in order. By success, we have:
\begin{itemize}
    \item (Safe edge) $L(e_i^*) \subseteq \mathcal{C}'$.
    \item (Unblocked iteration) $F_i \cap \mathcal{C}'=\emptyset$.
\end{itemize}

Define $G_i=(V,\{e_1^*,\ldots,e_{i-1}^*\})$. Then $G_i \subseteq H_i-F_i$, since $G_i \subseteq H_i[\mathcal{C}']$\footnote{ $H_i[\mathcal{C}']$ denotes all edges in $H_i$, such that their color lists are contained in $\mathcal{C}'$.} and the $i^{\text{th}}$ iteration is unblocked. Then for every $e_i^*=\{u_i,v_i\}$ we have by the decision rule that
\[
\dist_{G_i}(u_i,v_i) \geq \dist_{H_i-F_i}(u_i,v_i) > t.
\]

Thus, when scanning the edges of $G_i$ in order using the greedy $1\to t$ spanner algorithm, each $e_i^*$ is added. Since the greedy $1\to t$ algorithm adds at most $O(n^{1+1/\lceil t/2 \rceil})$ edges, we have $r=O(n^{1+1/\lceil t/2 \rceil})$. Combining this with $r \ge \frac{m}{6(w+1)f^w}$ yields
\[
m = O\!\left( (w+1)f^w n^{1+1/\lceil t/2 \rceil}\right).
\]

\subsection{Proof of \Cref{thm:main}}\label{sec:concluding-sqrt-ok}

By \Cref{lem:properties-of-virt} and \Cref{lem:path-collections-form-sunflower} the virtual construction process of the graph $\mathbf{V}$ defined earlier in the section satisfies the definition of the robust CLFT process with $w = \lfloor \sqrt{k} \rfloor +1$. Moreover, specifically by \Cref{lem:properties-of-virt}(4) this CLFT process does add all the $B_i$ for $t = 2\lceil \sqrt{k}\rceil - 1$. Since $B_i$ is then added every time when $P_i \in \mathcal{G}$, \Cref{thm:robust-clft-process} implies that 
$$|\mathcal{G}| = O(\sqrt{k}f^{\lfloor\sqrt{k}\rfloor+1}(|S|)^{1 + \frac{1}{\lceil \sqrt{k}\rceil}}) = \tilde{O}(\sqrt{k}f^{\lfloor\sqrt{k}\rfloor+1}(f^{-\sqrt{k}}n^{1 - \frac{1}{\delta \sqrt{k}}})^{1 + \frac{1}{\lceil \sqrt{k}\rceil}}) = \tilde{O}(fn).\footnote{In the technical overview (\Cref{sec:sqrtk-technical-overview-greedy}) the bound obtained for $\mathcal{G}$ was $\tilde{O}(n)$ instead of the bound $\tilde{O}(fn)$ appearing here. This is due to integrality issues which were not covered there for simplicity of exposition, and are insignificant.}$$
Summing this with the bound on $|\mathcal{L}|$ from \Cref{prop:sqrt-to-ok-vft}, we get the desired bound.

%% file: Section2-2k.tex
\section{Analysis of Vertex Fault Tolerant $2 \to 2k$-spanners}\label{sec:vft2to2k}

In this section we prove:

\begin{theorem}\label{thm:2-2k-ft-greedy}
    The greedy VFT $2\to 2k$ algorithm outputs a spanner with $O(k^3f n^{1+\frac{1}{k}})$ edges.
\end{theorem}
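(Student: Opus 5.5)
We run \Cref{alg:greedy-path-system} on the sequence $(P_1,F_1),\ldots,(P_m,F_m)$ produced by the greedy VFT $2\to 2k$ algorithm, with $s=k$ (allowed since $s\le r/d=2k/2$) and
\[
D=\Theta(k^2 f n^{1/k}) \;\ge\; 4s^2 f .
\]
Each added path has two edges, so it suffices to show $m=O(k^3fn^{1/k}\cdot n)$. By \Cref{thm:greedy-path-system-bounds}, $|\mathcal L|=O(nDs)=O(k^3fn^{1+1/k})$. The whole task is therefore to bound $|\mathcal G|$, and I would aim to show $|\mathcal G|=O(k^3fn^{1+1/k})$ as well, or even $\mathcal G=\emptyset$, mimicking \Cref{thm:st-mult-vft}.

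\textbf{Structure of a path in $\mathcal G$.} Let $P_i=(x,y,z)\in\mathcal G$. At least one of its edges, say $\{x,y\}$, is $s$-critical. Both edges cannot be non-critical, since then $\dist_{H_i-F_i}(x,z)\le 2s=2k$. By \Cref{cl:obvious-prop-g}, $|\pp_y^{i,R}|\ge D^R$ and $|\pp_x^{i,R}|\ge D^R$, where $R=\lceil k/2\rceil$.

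In the multiplicative proof, fullness at radius $k$ contradicted the park bound $|\pp^{i,k}_{v,u}|\le 10^k(2f)^{k-1}$ summed over $n$ endpoints. Here we only reach radius $R\approx k/2$, so that counting gives merely $10^R(2f)^{R-1}n$, which is not below $D^R$. Hence $\mathcal G$ need not be empty, and a genuine global argument is required.

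\textbf{Global argument.} I would combine the pairwise-park invariant \textbf{I3} with the exchange idea of \Cref{lem:ft-neighborhood-exchange-lemma-gen}, applied across both edges of $P_i$. The plan is as follows.
\begin{enumerate}
\item Fix the middle vertex $y$. Since $|\pp_y^{i,R}|\ge D^R$ while each $\pp_{y,w}^{i,R}$ is a park of size at most $10^R(2f)^{R-1}$, the paths of $\pp_y^{i,R}$ reach at least $D^R/(10^R(2f)^{R-1})\approx (k^2n^{1/k}/5)^R f$ endpoints $w$. The same holds for $x$.
\item If a surviving (under $F_i$) radius-$R$ path from $x$ and one from $y$ met at a common endpoint $w$, then $\dist_{H_i-F_i}(x,y)\le 2R\le k+1$. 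This nearly contradicts criticality; the off-by-one will be fixed by choosing $s=k+1$ or by using radius $R$ on one side and $R-1$ on the other.
\item Consequently, the surviving ball structures of $x$ and of $y$ are endpoint-disjoint. Each has size about $(k^2n^{1/k})^R f$ after removing faults via \textbf{I4}. For $k$ even this already exceeds $n$, giving a contradiction and $\mathcal G=\emptyset$.
\end{enumerate}

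\textbf{Main obstacle.} The difficulty is that the number of distinct endpoints, as opposed to the number of paths, may be small: a park of size $(2f)^{R-1}$ to a single $w$ is allowed. So the bound from step 1 must be made robust to faults. I would first try a blocking-set and sampling argument in the style of \Cref{thm:robust-clft-process}, charging each $P_i\in\mathcal G$ to the critical pair $(x,y)$ together with $F_i$, and using the fact that both endpoints are ``saturated'' to certify that $x,y$ are far in $H_i-F_i$ yet their neighbourhoods are large. Sampling vertices with probability $\Theta(1/f)$ would reduce this to the non-faulty fact that two disjoint radius-$R$ balls, each of size $>n^{1/2}$, cannot coexist for too many critical pairs. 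The expected cost is one factor of $f$, matching the target $O(k^3fn^{1+1/k})$. Handling the parity of $k$ and the case where the critical edge is $\{y,z\}$ (symmetric) are routine.
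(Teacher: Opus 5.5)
Your treatment of $\mathcal L$ via \Cref{thm:greedy-path-system-bounds} is fine and matches the paper, but the argument for $\mathcal G$ has a genuine gap. Step 3 rests on a miscount. With $D=\Theta(k^2fn^{1/k})$ and $R=k/2$, the number of endpoints you certify is $D^R/(10^R(2f)^{R-1})=\Theta_k(f\,n^{R/k})=\Theta_k(f\sqrt n)$. For fixed $k$ this is far below $n$, and two endpoint-disjoint sets of that size coexist without contradiction. So $\mathcal G=\emptyset$ does not follow. Indeed, the paper notes that no choice of $D,s$ both keeps $|\mathcal L|$ small and makes $\mathcal G$ empty. The odd-$k$ patch $s=k+1$ is also unavailable, since a critical edge is only guaranteed when $s\le r/d=k$.

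The deeper problem is that you only use that $\{x,y\}$ is $k$-critical, plus the saturation that \Cref{cl:obvious-prop-g} gives for free. You never use $\dist_{H_i-F_i}(x,z)>2k$. Your reasoning therefore applies verbatim to the VFT $1\to k$ greedy, where every added edge is $k$-critical. Take $k=2$ and $G=K_{n/2,n/2}$: the graph is triangle-free, so that greedy keeps all $n^2/4$ edges, while $|\mathcal L|=O(nDs)=O(fn^{3/2})$. Hence $\mathcal G$ has $\Omega(n^2)$ paths. Any argument built only on the disjointness of the two half-radius structures around a critical edge can therefore give at best the stretch-$k$ multiplicative bound $n^{1+1/\lceil k/2\rceil}$, not $n^{1+1/k}$. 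This includes your blocking-set fallback, which reduces to the same ``disjoint radius-$R$ balls'' fact.

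The missing idea is to exploit the $2k$-condition between the two \emph{outer} vertices, and to count \emph{products} of half-length paths rather than endpoints of a single half-ball. In the paper, $P_i=(x_i,m_i,y_i)$ with $\{x_i,m_i\}$ critical. The far vertex $y_i$ need not be saturated in $\pp$, so an auxiliary collection $\Ll$ of length-$R$ paths is grown around it. Each step of $\mathcal G$ then does one of three things:
\begin{itemize}
\item adds the concatenation $(\pp^{i,R-I_{odd}-1}_{x_i}-F_i)^{rev}\circ(x_i,m_i,y_i)\circ(\Ll^i_{y_i}-F_i)$ to a collection $\Gg$ of length-$(k+1)$ paths;
\item adds at least $\frac14D^{R-1}$ paths to $\Ll_{y_i}$;
\item adds an analogous length-$k$ concatenation to a collection $\Ggh$.
\end{itemize}
The exchange argument, now driven by $\dist_{H_i-F_i}(x_i,y_i)>2k$, keeps every $\Gg_{u,v}$ and $\Ggh_{u,v}$ a park. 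Hence $|\Gg|,|\Ggh|\le n^2\cdot O_k(f^k)$, while each charge adds $\Omega(D^{k-1})$ paths. With $D=128k^2fn^{1/k}$ this allows only $O(fn^{1+1/k})$ such charges, and $|\Ll_v|\le 2D^R$ caps the $\Ll$-charges at $O(Dn)$. The exponent $1+1/k$ comes from comparing $\Omega(D^{k-1})$ new length-$\approx k$ paths per charge against a global budget of $n^2\cdot O_k(f^k)$. Half-balls of size about $n^{1/2}$ alone cannot produce it.
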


Throughout this section we denote $R := \lceil\frac{k}{2}\rceil$ and define $I_{odd} = \mathbf{1}[k \text{ is odd}]$, so that $2R - I_{odd} = k$ for all $k$. Also, as in previous section, we denote $\pp^{rev}$ the collection of all paths in $\pp$ with directions reversed.

We apply \Cref{alg:greedy-path-system} with $s = k$ and some $D$ to be chosen later. This procedure provides us with the partition of paths into $\mathcal{L}$ and $\mathcal{G}$ and the sequence of path collections $\pp^i$ that satisfy invariants \textbf{I1-I7}. Denote, as usual, $P_i$ the $i^{\text{th}}$ path added by the greedy and let $P_i = (x_i,m_i,y_i)$. We can without loss of generality assume that $\{x_i, m_i\}$ is a $k$-critical edge.

Note also that we can (and will, from now on) assume that both edges $\{x_i, m_i\}$ and $\{m_i, y_i\}$ are not in the current spanner $H_i$. Indeed, consider all the added $2$-paths $P_{i_j}$ for which only one edge $e_{i_j}$ is actually new. Then it is easy to see that the edges $e_{i_j}$ can be sequentially added by the usual multiplicative greedy VFT spanner algorithm with the stretch $2k - 1$. See the analogous statement for the EFT case in \cite{PopovaT26} for details. Thus the number of such paths is $O(f^{1-\frac{1}{k}}n^{1+\frac{1}{k}})$, which is less than the desired bound.

We will use \Cref{thm:greedy-path-system-bounds} to bound $|\mathcal{L}|$, and so it remains to bound $|\mathcal{G}|$. If $P_i \in \mathcal G$ then by \Cref{cl:obvious-prop-g} we have $|\pp_{x_i}^{i,\ell}|, |\pp_{m_i}^{i,\ell}| \geq D^\ell$ for all $0\leq \ell \leq R$. We will sometimes say $i$ is a boosting step instead of $P_i\in \mathcal L$ and $i$ is a non-boosting step instead of $P_i\in \mathcal G$.

\subsection{The Analysis}

As the analysis in this section was not covered in \Cref{sec:greedy-tech-overview}, we include a short proof overview. The approach is to maintain extra path collections \emph{besides} $\pp$, to bound $\mathcal{G}$ - Just as $\pp$ is used to bound $\mathcal{L}$ in \Cref{thm:greedy-path-system-bounds}. 

\paragraph{High-level strategy: a charging argument.}
The analysis tracks the greedy algorithm's progress in the \emph{non-boosting steps} via three evolving path collections $\Gg^i$,$\Ggh^i$, and $\Ll^i$. The path collections $\Gg^i$,$\Ggh^i$, and $\Ll^i$, will have paths of length $k+1,k,R$, respectively. Initially  $\Gg^i$,$\Ggh^i$, and $\Ll^i$ are all empty. The crux of the proof is that at each non-boosting step, we \emph{charge} one of these collections by growing it by a \emph{substantial} amount. To complement this, we also show that the path collections we construct satisfy a "pairwise a park" condition, similar to those appearing in \Cref{sec:greedyFTclustering}, which will force them to be small, see e.g. \Cref{thm:st-mult-vft}, on how the park property forces path collections to be small. 

In particular, $\Gg$ and $\Ggh$ satisfy: for any $u,v$, the subcollection $\Gg_{u,v}$ (resp.\ $\Ggh_{u,v}$) is a park. This means that every pair of vertices has at most $O_k(f^{\ell-1})$ paths of length $\ell$ between them, where $\ell=k$ is relevant for $\Ggh$, and $\ell=k+1$ is relevant for $\Gg$. By applying \Cref{obs:size-bounds-park-links}, we get $|\Gg| \leq n^2 \cdot O_k(f^k)$. We will show that each time we charge $P_i \in \mathcal{G}$ for increasing $\Gg$ substantially, then $\Gg$ will increase by at least $\Omega(D^{k-1}) = \Omega_k(f^{k-1} n^{(k-1)/k})$, so $\Gg$ can be charged at most
\[
\frac{n^2 \cdot O_k(f^k)}{\Omega_k(f^{k-1} n^{(k-1)/k})} = O_k(f \cdot n^{1+1/k})
\]
times. Which is the desired bound. The same applies to $\Ggh$. 

It turns out that we may not always charge either $\Gg$ or $\Ggh$, hence we use $\Ll$ as well.

The collection $\Ll^i$ consists of $R$-long paths, and it plays a role analogous to the paths in $\pp$ constructed in \Cref{alg:greedy-path-system}. It is useful to think of $\mathcal{R}$ as having similar properties to $\pp$. The key difference is that $\mathcal{R}$ will grow at each step around the vertex $y_i$ \emph{that isn't necessarily adjacent to a $k$-critical edge}, hence for this reason, we need to devise a different way to augment the path collection which is more elaborate, in contrast to the simple way in \Cref{alg:greedy-path-system}.

\paragraph{The update rule.}

We initialize $\Gg^1 = \emptyset$, $\Ggh^1 = \emptyset$ and $\Ll^1 = \emptyset$.

If $P_i \in \mathcal L$ we do nothing. Otherwise, we behave differently depending on whether $|\Ll^i_{y_i}| \geq D^R$.

    \begin{itemize}
        \item \textbf{Case 1: $|\Ll^i_{y_i}| \geq D^R$ (charge $\Gg$).} We update $\Gg^{i+1} = \Gg^i\cup ((\pp^{i, R-I_{odd}-1}_{x_i} - F_i)^{rev}\circ (x_i, m_i, y_i)\circ (\Ll^i_{y_i} - F_i))$ (and $\Ll^{i+1} = \Ll^i$ and $\Ggh^{i+1} = \Ggh^i$).
        \item \textbf{Case 2: $|\Ll^i_{y_i}| < D^R$ (partition paths from $m_i$).} Take two path collections $\pp\mathcal{R}^i$ and $\pp\mathcal{K}^i$ initially empty. Take a path $P\in \pp_{m_i}^{i, R-1}-F
        _i$ that ends in a vertex $v$. If there exists some $J \subseteq V(P) \cup \{y_i\}$ such that the park $\Ll^i_{y_i, v}$ is $J$-full (w.r.t. $\lsco^R$), then put $P$ into $\pp\Ggh^i$, otherwise put it into $\pp\Ll^i$.
        \begin{itemize}
            \item \textbf{Case 2a: $|\pp\Ll^i| \geq \frac{1}{4}D^{R-1}$ (charge $\Ll$).} Update $\Ll^{i+1} = \Ll^i\cup ((y_i,m_i)\circ \pp\Ll^i)$ (and $\Gg^{i+1} = \Gg^i$ and $\Ggh^{i+1} = \Ggh^i$).
            \item \textbf{Case 2b: $|\pp\Ll^i| < \frac{1}{4}D^{R-1}$ (charge $\Ggh$).} Update $\Ggh^{i+1} = \Ggh^i\cup((\pp^{i, R-I_{odd}}_{x_i} - F_i)^{rev}\circ(x_i, m_i)\circ \pp\Ggh^i)$ (and $\Ll^{i+1} = \Ll^i$ and $\Gg^{i+1} = \Gg^i$).
        \end{itemize}
    \end{itemize}

We now state the main technical lemmas. 

\begin{lemma}[Structural properties]\label{lem:path-collections-structural}
    Suppose that $D\geq \max\{48kf, 4k^2f\}$. Then the three sequences of path collections $\Gg^i$, $\Ggh^i$ and $\Ll^i$ defined above satisfy:
    \begin{enumerate}
        \item $E(\Gg^i), E(\Ggh^i), E(\Ll^i)\subseteq E(H_i)$;
        \item $\Gg^{i} \subseteq \Gg^{i+1}$, $\Ggh^{i} \subseteq \Ggh^{i+1}$ and $\Ll^{i} \subseteq \Ll^{i+1}$;
        \item All paths in $\Gg^i$ have length $k+1$;
        \item All paths in $\Ggh^i$ have length $k$; 
        \item All paths in $\Ll^i$ have length $R$;
        \item If the $i$-th step is boosting, $\Gg^{i+1} = \Gg^i$, $\Ggh^{i+1} = \Ggh^i$ and $\Ll^{i+1} = \Ll^i$.
    \end{enumerate}
\end{lemma}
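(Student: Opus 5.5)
The plan is a single induction on $i$ that proves all six items at once. The base case $i=1$ is immediate because $\Gg^1=\Ggh^1=\Ll^1=\emptyset$. In the inductive step we split into the boosting case and Cases 1, 2a, 2b of the update rule. Item 6 holds by definition, since the update rule does nothing when $P_i\in\mathcal L$. Item 2 also holds by definition, since each branch either leaves a collection unchanged or replaces it by a union with the old one. So the substance is items 1 and 3--5 for the newly added paths. The hypothesis $D\ge\max\{48kf,4k^2f\}$ is needed only so that \Cref{lem:pp-invariants} applies with $s=k$ (this requires $D\ge 4s^2f$), together with any counting used elsewhere. I expect the structural items to use mainly \textbf{I1} and the length bookkeeping.

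For item 1 I will check each branch. Invariant \textbf{I1} gives that every path of $\pp^i$ lies in $H_i$, and the induction hypothesis gives the same for $\Ll^i$. In Case 1 the new paths are $(\pp^{i,R-I_{odd}-1}_{x_i}-F_i)^{rev}\circ(x_i,m_i,y_i)\circ(\Ll^i_{y_i}-F_i)$. The outer pieces lie in $H_i\subseteq H_{i+1}$, and the middle piece is $P_i\subseteq H_{i+1}$. Case 2a prepends $(y_i,m_i)\in P_i$ to paths of $\pp^i_{m_i}$, and Case 2b concatenates pieces of $\pp^i$ with the edge $(x_i,m_i)\in P_i$. One indexing point needs attention. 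Item 1 is stated as $E(\cdot^i)\subseteq E(H_i)$, so the paths added at step $i$ must lie in $H_{i+1}$, and that holds because $H_{i+1}=H_i\cup P_i$.

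Items 3--5 are length arithmetic, using that every path in $\pp^{i,\ell}$ has length exactly $\ell$. In Case 1 the length is $(R-I_{odd}-1)+2+R=2R-I_{odd}+1=k+1$. In Case 2a it is $1+(R-1)=R$. In Case 2b it is $(R-I_{odd})+1+(R-1)=k$. We also need $R-I_{odd}-1\ge0$ and $R-1\ge 0$, which hold for $k\ge2$, so the collections $\pp^{i,R-I_{odd}-1}$ and $\pp^{i,R-1}$ are well defined.

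The main obstacle is that a concatenation of paths need not be a simple path, so it may not be a genuine path of the stated length. Unless $\Gg,\Ggh,\Ll$ are treated as collections of walks, I will have to show that the pieces are internally disjoint. The tool I would try first is criticality. The edge $\{x_i,m_i\}$ is $k$-critical, so $\dist_{H_i-F_i}(x_i,m_i)>k$. If $Q\in\pp^{i}_{x_i}-F_i$ and $Q'\in\pp^i_{m_i}-F_i$ shared a vertex, this would give an $x_i$--$m_i$ path in $H_i-F_i$ of length at most $(R-I_{odd})+(R-1)<k$, a contradiction. This mirrors \Cref{lem:ft-neighborhood-exchange-lemma-gen}. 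For the pieces attached at $y_i$ in Cases 1 and 2a I do not yet see an analogous distance argument, since $\{m_i,y_i\}$ need not be critical. If an overlap could occur there, I would fall back to the paper's evident convention that length means the number of edges of the oriented sequence, in which case items 3--5 follow from the arithmetic above.
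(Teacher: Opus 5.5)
Your proposal is correct and follows the paper, which simply asserts that items (1)--(6) are evident from the construction within the joint induction. Your branch-by-branch check via \textbf{I1}, the inductive hypothesis for $\Ll^i$, $H_{i+1}=H_i\cup P_i$, and the length arithmetic is exactly what that assertion amounts to. The simplicity worry is moot: the paper defines a path as any sequence $(x_0,\ldots,x_\ell)$ of consecutively adjacent vertices with $L(P)=\ell$, so repeated vertices are allowed. Since nothing forces $\dist_{H_i-F_i}(m_i,y_i)$ to be large, overlaps at $y_i$ in Cases 1 and 2a are not excluded, and your ``fallback'' convention is the intended reading rather than a workaround.
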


\begin{lemma}[Growth guarantee]\label{lem:path-collections-growth}
    Under the same hypotheses, if $P_i \in \mathcal G$, then \emph{at least one} of the following holds:
    \begin{enumerate}[label=(\alph*)]
        \item $|\Gg^{i+1}| - |\Gg^i| \geq \frac14 D^{k-1}$,
        \item $|\Ggh^{i+1}| - |\Ggh^i| \geq \frac18 D^{k-1}$,
        \item $|\Ll^{i+1}| - |\Ll^{i}| \geq \frac{1}{4}D^{R-1}$.
    \end{enumerate}
\end{lemma}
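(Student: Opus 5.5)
The plan is a case analysis following the three branches of the update rule. In each branch we lower bound the number of new paths by the size of the surviving factor collections, using \textbf{I4} (weak collective FT) together with \Cref{cl:obvious-prop-g}. Fix a non-boosting step $i$, so $P_i=(x_i,m_i,y_i)\in\mathcal{G}$ and $\{x_i,m_i\}$ is $k$-critical. By \Cref{cl:obvious-prop-g}, $|\pp^{i,\ell}_{x_i}|,|\pp^{i,\ell}_{m_i}|\ge D^\ell$ for all $\ell\le R$. As in Subclaim B, for any $\ell\le R$ and $v\in\{x_i,m_i\}$,
\[
|\pp^{i,\ell}_v-F_i|\ \ge\ D^\ell-3\ell f D^{\ell-1}\ \ge\ \tfrac12 D^\ell,
\]
since $D\ge 48kf$. (Here $F_i$ avoids $V(P_i)$, so $v\notin F_i$.)

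Case 2a is immediate. The added collection $(y_i,m_i)\circ\pp\Ll^i$ has size $|\pp\Ll^i|\ge \frac14 D^{R-1}$, which gives (c). The only thing to check is that these paths are genuinely new, i.e.\ not already in $\Ll^i$. This holds because the edge $\{m_i,y_i\}\notin H_i$, while every path of $\Ll^i$ lies in $H_i$ by \Cref{lem:path-collections-structural}(1).

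Case 1 uses the concatenation
\[
(\pp^{i,R-I_{odd}-1}_{x_i}-F_i)^{rev}\circ(x_i,m_i,y_i)\circ(\Ll^i_{y_i}-F_i).
\]
It has size at least
\[
\tfrac12 D^{R-I_{odd}-1}\cdot |\Ll^i_{y_i}-F_i|.
\]
To get $|\Ll^i_{y_i}-F_i|\ge\frac12 D^R$, we need a weak collective FT bound $|\Ll^i_{y_i}[\{w\}]|=O(R D^{R-1})$ for $\Ll$, analogous to \textbf{I4}. This should be proved inductively from the Case 2a update, which prepends an edge to subcollections of $\pp^{i,R-1}_{m_i}$ and so inherits \textbf{I4}. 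If such a bound is not available, one must instead rely only on $|\Ll^i_{y_i}|\ge D^R$. The product is then at least
\[
\tfrac14 D^{2R-I_{odd}-1}=\tfrac14 D^{k-1},
\]
which gives (a). Distinct pairs give distinct concatenated walks, and all of them use the new edges $x_im_i$ and $m_iy_i$, so none lies in $\Gg^i$. We must still argue that these walks are simple paths, or else count them as walks, which suffices for the charging argument.

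Case 2b: here $|\pp\Ll^i|<\frac14 D^{R-1}$. Since
\[
|\pp^{i,R-1}_{m_i}-F_i|\ge\tfrac12 D^{R-1},
\]
we get $|\pp\Ggh^i|\ge \frac14 D^{R-1}$. The added collection therefore has size at least
\[
\tfrac12 D^{R-I_{odd}}\cdot\tfrac14 D^{R-1}=\tfrac18 D^{k-1},
\]
which gives (b). Newness again follows since $\{x_i,m_i\}\notin H_i$.

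The main obstacle is Case 1: securing the lower bound $|\Ll^i_{y_i}-F_i|=\Omega(D^R)$. This requires establishing a collective-FT invariant for $\Ll$ in parallel with the growth statement, much as in \Cref{lem:pp-invariants}. I would first try to prove
\[
|\Ll^i_{y}[\{w\}]|\le 3R\,D^{R-1}+O(D^{R-1})
\]
by the same two-type decomposition used in Subclaim D. This needs a bound on the number of Case 2a charges at each $y$. That count is controlled because $|\Ll^i_{y_i}|<D^R$ whenever Case 2a is applied, and each charge adds at least $\frac14 D^{R-1}$ paths, so at most $4D$ charges occur, mirroring \textbf{I7}.
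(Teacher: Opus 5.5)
Your case analysis is the same as the paper's. The invariant you single out as the main obstacle is exactly \Cref{lem:L-properties}(2), $|\Ll^i_v[u]|\le 24R\,D^{R-1}$, which the paper proves in the same joint induction using your two-type decomposition, with the number of Case 2a charges at $y_i$ bounded by $8D$ via $|\Ll^i_{y_i}|\le 2D^R$. Two minor corrections: your count of about $4D$ charges yields roughly $12R\,D^{R-1}$ rather than $3R\,D^{R-1}+O(D^{R-1})$, which still suffices since $D\ge 48kf$; and the ``fallback'' of relying only on $|\Ll^i_{y_i}|\ge D^R$ gives no lower bound on $|\Ll^i_{y_i}-F_i|$, so the invariant is genuinely required.
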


\begin{lemma}[$\Gg$ is a park]\label{lem:G-is-park}
    For any $u, v \in V(G)$, the collection $\Gg^i_{u,v}$ is a park w.r.t.\ $\lsco = \sco^{\beta}$ with $\beta = 10^{-(k-1)}2^{-(2k + 5)}$.
\end{lemma}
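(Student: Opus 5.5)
We prove the lemma by induction on $i$, following the proof of Subclaim E (\textbf{I3}) in \Cref{lem:pp-invariants}. Initially $\Gg^1=\emptyset$. By \Cref{lem:path-collections-structural}(6), boosting steps and Cases 2a/2b leave $\Gg$ unchanged. So it suffices to treat a non-boosting step $i$ in Case 1, where
\[
\Gg^{i+1}=\Gg^i\cup\bigl((\pp^{i,R-I_{odd}-1}_{x_i}-F_i)^{rev}\circ(x_i,m_i,y_i)\circ(\Ll^i_{y_i}-F_i)\bigr).
\]
A newly added path goes from $u$ to $v$ exactly when it is $Q_1^{rev}\circ(x_i,m_i,y_i)\circ Q_2$ with $Q_1\in\pp^{i,R-I_{odd}-1}_{x_i,u}-F_i$ and $Q_2\in\Ll^i_{y_i,v}-F_i$. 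Its length is $(R-I_{odd}-1)+2+R=k+1$. Fix $u,v$ and $J\subseteq V$, and write $\mathcal{N}$ for these new $u\to v$ paths. As in Subclaim E, there are two cases. If $\mathcal{N}[J]=\emptyset$, the $J$-score does not change and the induction hypothesis gives $\lsco_J(\Gg^{i+1}_{u,v})\le 1$. Otherwise we show that both $\lsco_J(\Gg^i_{u,v})\le\frac12$ and $\lsco_J(\mathcal{N})\le\frac12$.

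\emph{Old part (exchange argument).} Pick a path $P\in\mathcal{N}[J]$. Then $P$ avoids $F_i$, since $V(P_i)\cap F_i=\emptyset$ and both pieces were taken from collections minus $F_i$. Hence $J\cap F_i=\emptyset$. Suppose $\Gg^i_{u,v}$ were $J$-full. By the induction hypothesis it is a park, so \Cref{lemma:union_bound_scores} yields a path $P'\in\Gg^i_{u,v}-F_i$. Note that $P'$ lies in $H_i$ by \Cref{lem:path-collections-structural}(1). Consider the walk that follows $Q_1$ from $x_i$ to $u$, then $P'$ from $u$ to $v$, then $Q_2^{rev}$ from $v$ to $y_i$. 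It lies in $H_i-F_i$ and has length
\[
(R-I_{odd}-1)+(k+1)+R=2k .
\]
This gives $\dist_{H_i-F_i}(x_i,y_i)\le 2k$, contradicting that $P_i$ was added because of $F_i$. Hence $\lsco_J(\Gg^i_{u,v})\le\frac12$.

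\emph{New part (product bound).} This is the main computational step. For a concatenation $Q_1^{rev}\circ(x_i,m_i,y_i)\circ Q_2$ containing $J$, choose a decomposition $J\subseteq J_1\cup J_M\cup J_2$ with $J_1\subseteq V(Q_1)$, $J_2\subseteq V(Q_2)$ and $J_M\subseteq\{m_i\}$. The vertices $x_i,y_i$ are absorbed into $J_1,J_2$ when they lie in $J$. Since $|J|\le |J_1|+|J_2|+|J_M|$, the exponent of $2f$ in $\sco^\beta_J$ is dominated. This gives
\[
\sco^\beta_J(Q_1^{rev}\circ P_i\circ Q_2)\ \le\ \frac{\beta}{\beta_{R-I_{odd}-1}\,\beta'_R}\sum_{(J_1,J_2)}\sco^{\beta_{R-I_{odd}-1}}_{J_1}(Q_1)\cdot \sco^{\beta'_R}_{J_2}(Q_2)\cdot (2f)^{c},
\]
with $c\le 0$. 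Here $\beta'_R$ is the score parameter with respect to which $\Ll^i_{y_i,v}$ is a pairwise park. I would prove this pairwise-park property for $\Ll$ alongside the induction, mirroring \textbf{I3}, or invoke the corresponding lemma for $\Ll$; I expect it to be stated with $\beta'_R=10^{-R}$. Summing over $Q_1,Q_2$ and using the park property of $\pp^{i,R-I_{odd}-1}_{x_i,u}$ (\textbf{I3}) and of $\Ll^i_{y_i,v}$, each pair $(J_1,J_2)$ contributes at most $1\cdot 1$. The number of decompositions is at most $2^{|J|+O(1)}\le 2^{2k+4}$, because $|J|\le k+2$. Finally,
\[
\frac{\beta}{\beta_{R-I_{odd}-1}\beta'_R}=\frac{10^{-(k-1)}2^{-(2k+5)}}{10^{-(k-1)}}=2^{-(2k+5)},
\]
so the total is at most $2^{2k+4}\cdot 2^{-(2k+5)}=\frac12$.

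Adding the two parts gives $\lsco_J(\Gg^{i+1}_{u,v})\le 1$, which completes the induction. I expect the main obstacle to be the bookkeeping in the product bound. Three points need care: that repeated vertices shared by $Q_1$, $Q_2$ and $P_i$ only help (the walk may overlap itself); that the $(2f)$-exponents line up exactly, given that the middle segment contributes two edges; and that the count of decompositions really fits inside $2^{2k+4}$ for the stated $\beta$.
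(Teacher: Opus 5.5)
Your proposal is correct and follows the paper's proof essentially step for step. It uses the same two-case split on whether the new $J$-link is empty, the same exchange argument producing a walk $x_i\to u\to v\to y_i$ of length $2k$ in $H_i-F_i$, and the same product bound via \textbf{I3} for $\pp$ and the park property of $\Ll$, with $\beta_{R-I_{odd}-1}\beta_R=10^{-(k-1)}$ and at most $2^{2k+4}$ subset pairs. The only difference is cosmetic: the paper first shows $V(P_x)\cap V(P_y)=\emptyset$ (otherwise $\dist_{H_i-F_i}(x_i,y_i)\le k-1$) to get an exact partition of $J$, which also makes the concatenations genuine paths, whereas you use a cover and note that overlaps only lower the exponent of $2f$, which is equally valid for the park bound.
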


\begin{lemma}[$\Ggh$ is a park]\label{lem:Gtilde-is-park}
    For any $u, v \in V(G)$, the collection $\Ggh^i_{u,v}$ is a park w.r.t.\ $\hat\lsco = \sco^{\hat\beta}$ with $\hat\beta = 10^{-k+1}\cdot 2^{-2k-3}$. 
\end{lemma}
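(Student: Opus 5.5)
The plan is induction on $i$, in the spirit of Subclaim E in the proof of \Cref{lem:pp-invariants}. Initially $\Ggh^1=\emptyset$, and by the update rule $\Ggh$ changes only in Case 2b. There the added paths are
\[
(\pp^{i,R-I_{odd}}_{x_i}-F_i)^{rev}\circ(x_i,m_i)\circ\pp\Ggh^i .
\]
Each such path has length $(R-I_{odd})+1+(R-1)=k$, which matches \Cref{lem:path-collections-structural}. Fix endpoints $a,b$ and a set $J$. The new part of $\Ggh^{i+1}_{a,b}$ is $(\pp^{i,R-I_{odd}}_{x_i,a}-F_i)^{rev}\circ(x_i,m_i)\circ\pp\Ggh^i_{m_i,b}$, where $\pp\Ggh^i_{m_i,b}\subseteq \pp^{i,R-1}_{m_i,b}-F_i$. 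The goal is the same two-part estimate as in Subclaim E: the added $J$-link has score at most $\tfrac12$, and whenever it is nonempty the old collection $\Ggh^i_{a,b}$ is not $J$-full.

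For the added part I would expand the score of a concatenation, iterating \Cref{cl:concat} or computing directly. For a path $Q_1^{rev}\circ(x_i,m_i)\circ Q_2$, the $\hat\beta$-score at $J$ is bounded by a sum over splittings of $J$. In each splitting, part $J_1$ is assigned to $Q_1$, part $J_2$ to $Q_2$, and the vertices $x_i,m_i$ are absorbed by whichever side contains them. Each term is a product of level scores, with the following exponent bookkeeping:
\begin{itemize}
\item $Q_1$ has length $L_1=R-I_{odd}$ and $Q_2$ has length $L_2=R-1$, with $L_1+L_2=k-1$.
\item The product $(2f)^{-(L_1-|J_1|)-1}(2f)^{-(L_2-|J_2|)-1}$ equals $(2f)^{-(k-|J|)-1}$ when $J_1\sqcup J_2=J$.
\item So there is no loss in $f$; only the constants $\hat\beta/(\beta_{L_1}\beta_{L_2})=\hat\beta\cdot 10^{k-1}$ appear.
\end{itemize}
Summing over the pairs $(Q_1,Q_2)$ factorizes into $\lsco^{L_1}_{J_1}(\pp^{i,L_1}_{x_i,a})\cdot\lsco^{L_2}_{J_2}(\pp^{i,L_2}_{m_i,b})\le 1$, by \textbf{I3}. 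There are at most $2^{|J|}\le 2^{k+1}$ splittings, and $\hat\beta=10^{-k+1}2^{-2k-3}$. Hence the added score is at most $2^{-k-2}\le\tfrac12$.

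For the old part, suppose the added $J$-link is nonempty. Then $J\cap F_i=\emptyset$, because all new paths avoid $F_i$. If $\Ggh^i_{a,b}$ were $J$-full, \Cref{lemma:union_bound_scores} would give a surviving path of $\Ggh^i_{a,b}$ in $H_i-F_i$ through $J$ (using \Cref{lem:path-collections-structural}(1)). The aim is to derive a contradiction with the $k$-criticality $\dist_{H_i-F_i}(x_i,m_i)>k$, in the manner of \Cref{lem:ft-neighborhood-exchange-lemma-gen}. This step is the main obstacle. Naively gluing the surviving $a$--$b$ path to the new half-paths gives an $x_i$--$m_i$ walk of length about $2k-1$, which is too long.

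To fix this, I would not run the exchange at the level of whole $a$--$b$ paths. Instead, I would split $\Ggh^i_{a,b}$ and the new paths at their middle edge and exchange only one half at a time. The relevant sets are $J$ together with the vertices of the $x_i$-side half. Then a surviving half of length $R-1$ or $R-I_{odd}$, combined with the new half on the other side, gives an $x_i$--$m_i$ walk of length at most $2R-I_{odd}=k$ in $H_i-F_i$, which is the desired contradiction. The defining property of $\pp\Ggh^i$ should be what rules out the remaining configurations, namely the cases where the relevant fullness sits on the $m_i$/$b$ side. That property says $\Ll^i_{y_i,v}$ is $J'$-full for some $J'\subseteq V(P)\cup\{y_i\}$. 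This yields a surviving $R$-path from $y_i$ toward $b$, hence short connections through $y_i$ and $m_i$. The factor $2^{-2k-3}$ in $\hat\beta$ leaves room to lose a further $2^{k}$ in this case analysis.
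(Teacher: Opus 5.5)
Your bound on the score of the newly added part is fine and matches the paper's computation. The paper sums over independent pairs $J',J''\subseteq J$, which is where the factor $2^{2(k+1)}$ in $\hat\beta$ comes from; your count over partitions, $2^{|J|}$, is even tighter.

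The gap is in the non-fullness step. You aim for a contradiction with the $k$-criticality $\dist_{H_i-F_i}(x_i,m_i)>k$. As you note yourself, gluing $P_x$, a surviving $P'\in\Ggh^i_{a,b}[J]$ and $P_m$ only yields an $x_i$--$m_i$ walk of length $2k-1$, which contradicts nothing. Your proposed repair by splitting at middle edges does not work. The paths of $\Ggh^i_{a,b}$ were created at earlier steps $i'$ around edges $(x_{i'},m_{i'})$ that are unrelated to $(x_i,m_i)$. Moreover, $J$-fullness of $\Ggh^i_{a,b}$ says nothing about fullness at enlarged sets such as $J\cup V(P_x)$. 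For $J=\emptyset$, for instance, the surviving $P'$ is an arbitrary $a$--$b$ path of length $k$, and no exchange against the $x_i$--$m_i$ criticality is possible.

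The missing idea is that the contradiction must be with the greedy condition for the whole $2$-path, $\dist_{H_i-F_i}(x_i,y_i)>2k$. The defining property of $\pp\Ggh^i$ supplies the missing leg directly:
\begin{enumerate}
\item Suppose the new $J$-link contains $P_x^{rev}\circ(x_i,m_i)\circ P_m$ with $P_x\in\pp^{i,R-I_{odd}}_{x_i,a}-F_i$ and $P_m\in\pp\Ggh^i_{m_i,b}$, and that $\Ggh^i_{a,b}$ is $J$-full.
\item Then $J\cap F_i=\emptyset$, so \Cref{lemma:union_bound_scores} gives a path $P'\in\Ggh^i_{a,b}[J]$ of length $k$ avoiding $F_i$.
\item Since $P_m\in\pp\Ggh^i$, the collection $\Ll^i_{y_i,b}$ is $J'$-full for some $J'\subseteq V(P_m)\cup\{y_i\}$. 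Here $J'\cap F_i=\emptyset$, because $P_m$ avoids $F_i$ and $y_i\in V(P_i)$.
\item $\Ll^i_{y_i,b}$ is a park by the induction hypothesis (\Cref{lem:L-properties}(3)). So \Cref{lemma:union_bound_scores} again gives $P''\in\Ll^i_{y_i,b}[J']$ of length $R$ avoiding $F_i$.
\item Following $P_x$, then $P'$, then $(P'')^{rev}$ gives an $x_i$--$y_i$ walk in $H_i-F_i$ of length $(R-I_{odd})+k+R=2k$. This contradicts the choice of $F_i$.
\end{enumerate}
This step is purely qualitative, so no case analysis and no extra slack in $\hat\beta$ is needed.
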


\begin{lemma}[Properties of $\Ll$]\label{lem:L-properties}
    The collection $\Ll^i$ satisfies:
    \begin{enumerate}
        \item $|\Ll^i_v| \leq 2D^{R}$ for any $v\in V(G)$;
        \item $|\Ll_v^i[u]| \leq 24R\cdot D^{R-1}$ for any $v \neq u \in V(G)$;
        \item $\Ll_{u,v}^i$ is a park w.r.t.\ $\lsco^R = \sco^{\beta_R}$, $\beta_R = 10^{-R}$.
    \end{enumerate}
\end{lemma}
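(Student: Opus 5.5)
The proof is a simultaneous induction on $i$, in the style of the proof of \Cref{lem:pp-invariants}. For the base case $i=1$ the collection $\Ll^1$ is empty, so all three items hold trivially. For the inductive step, $\Ll$ changes only in Case 2a, and then only the block $\Ll_{y_i}$ changes, by adding $(y_i,m_i)\circ\pp\Ll^i$. So it suffices to verify the three properties for the block rooted at $y_i$, and for the pairwise collections $\Ll_{y_i,v}$.

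\textbf{Item 1 (size).} Case 2a happens only when $|\Ll^i_{y_i}|<D^R$. The added collection has size at most $|\pp^{i,R-1}_{m_i}|$, which by \textbf{I5} is at most $\frac{D^{R}-1}{D-1}\le 2D^{R-1}\le D^R$. Hence after the update $|\Ll^{i+1}_{y_i}|<2D^R$.

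\textbf{Item 3 (pairwise park).} Fix $v$ and $J$, and write
\[
\Ll^{i+1}_{y_i,v}=\Ll^i_{y_i,v}\cup (y_i,m_i)\circ(\pp\Ll^i)_{m_i,v}.
\]
If some added path $Q=(y_i,m_i)\circ P$ lies in $\Ll^{i+1}[J]$, then $J\subseteq V(P)\cup\{y_i\}$. By the definition of $\pp\Ll^i$, the park $\Ll^i_{y_i,v}$ is not $J$-full, so $\lsco^R_J(\Ll^i_{y_i,v})\le \frac12$. Otherwise the added part contributes nothing, and the old park bound applies.

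It remains to bound the score of the added part. By \Cref{cl:concat} and the scaling $\beta_R=\beta_{R-1}/10$,
\[
\lsco^R_J\bigl((y_i,m_i)\circ \pp^{i,R-1}_{m_i,v}\bigr)\le \tfrac1{10}\Bigl(\lsco^{R-1}_J(\pp^{i,R-1}_{m_i,v})+\lsco^{R-1}_{J\setminus\{y_i\}}(\pp^{i,R-1}_{m_i,v})\Bigr)\le \tfrac15,
\]
using \textbf{I3}. Both parts together give at most $1$.

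Unlike in \Cref{lem:ft-neighborhood-exchange-lemma-gen}, no criticality of $\{m_i,y_i\}$ is needed, because the partition into $\pp\Ll^i$ and $\pp\Ggh^i$ already encodes non-fullness. This is exactly the purpose of the rule.

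\textbf{Item 2 (weak collective FT).} This is the step I expect to be the main obstacle, since $y_i$ can be boosted in this way many times. Fix $u\neq y_i$. Following Subclaim D, split the paths of $\Ll^{i+1}_{y_i}[u]$ according to whether the second vertex equals $u$.

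\emph{Paths with second vertex $u$.} These come from boosts along the edge $(y_i,u)$. Each such boost contributes at most $|\pp^{R-1}_u|\le 2D^{R-1}$ paths, by \textbf{I5}. There are few such boosts, since each edge $\{m_i,y_i\}$ was new when added; it is used at most once per $2$-path, and this bound still needs justifying. Alternatively, use the park bound on pairs.

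\emph{Paths with second vertex $m\neq u$.} These contribute $|\pp^{R-1}_{m}[u]|\le 3(R-1)D^{R-2}$ each, by \textbf{I4}.

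To count the boosts, note that each Case 2a step adds at least $\frac14 D^{R-1}$ paths to $\Ll_{y_i}$, while $|\Ll_{y_i}|\le 2D^R$ by Item 1. Hence $y_i$ undergoes at most $8D$ Case 2a updates. This gives
\[
|\Ll^{i+1}_{y_i}[u]|\le 8D\cdot 3R D^{R-2}+(\text{first-vertex-}u\text{ term}).
\]

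For the first-vertex-$u$ term I would first try to show that each neighbour $m$ is used at most a bounded number of times, say $O(1)$, in Case 2a for the same $y$. If that fails, I would bound this term by $D^{R-1}$ times the number of such boosts, using \textbf{I4}/\textbf{I5} together with the fact that these boosts still fit under the total cap. Either way, since $D\ge 48kf$, the total is at most $24RD^{R-1}$. The constant $24$ leaves room for a factor-$8$ count times $3R$, which matches this accounting.
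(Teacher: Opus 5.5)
Your Items 1 and 3 follow the paper's argument. Item 1 uses the cap $|\Ll^i_{y_i}|<D^R$ together with \textbf{I5}. Item 3 uses the fact that membership in $\pp\Ll^i$ certifies that $\Ll^i_{y_i,v}$ is not $J$-full, while \Cref{cl:concat} and \textbf{I3} bound the added part's score by $\frac{1}{10}(1+1)$.

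Item 2, however, is left unproven. You never actually bound the paths of $\Ll^{i+1}_{y_i}[u]$ whose second vertex is $u$, and neither fallback works. ``Fitting under the total cap'' only gives $|\Ll_{y_i}|\le 2D^R$, which is a factor $D$ too large. The pairwise park bound controls paths \emph{ending} at a fixed vertex, not paths passing through $u$.

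The missing observation is that $\pp$ only grows. Every boost of $y_i$ along the edge $(y_i,u)$, at any step $i_j\le i$, adds paths from $(y_i,u)\circ\pp^{i_j,R-1}_u\subseteq(y_i,u)\circ\pp^{i,R-1}_u$. So all such paths, however many boosts produced them, lie in a single collection of size at most $\frac{D^R-1}{D-1}\le 2D^{R-1}$ by \textbf{I5}. No counting of boosts through $u$ is needed.

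Your first instinct can also be made rigorous, but you flagged it as unjustified and did not use it. The section's standing assumption is that both edges of each $P_i$ are absent from $H_i$. Hence the edge $\{u,y_i\}$ can serve as $\{m_j,y_j\}$ with $y_j=y_i$ at most once. The same freshness of $\{m_i,y_i\}$ is what guarantees the added paths are genuinely new. That is needed for your count of at most $8D$ Case 2a updates of $y_i$.

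Your accounting also needs repair. At level $R-1$, \textbf{I4} gives $3(R-1)D^{R-2}$, not $3RD^{R-2}$. With $3R$, the term $8D\cdot 3RD^{R-2}=24RD^{R-1}$ alone already uses up the entire budget, leaving no room for the second-vertex-$u$ term. With the correct factor the total is $2D^{R-1}+24(R-1)D^{R-1}\le 24RD^{R-1}$. The hypothesis $D\ge 48kf$ plays no role in this item, since the bound does not depend on $f$.
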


We prove all lemmas together by induction, as properties stated in one lemma rely on one another. 

\begin{proof}[Proof of \Cref{lem:path-collections-structural,lem:path-collections-growth,lem:G-is-park,lem:Gtilde-is-park,lem:L-properties}]
    We verify all properties simultaneously by induction.

    As the collections are initially empty, the base is trivial. We now consider the inductive step.
    Properties (1)--(6) of \Cref{lem:path-collections-structural} are evident from the construction. We now verify the remaining properties case by case.

    \paragraph{Case 1: $\Ll^i_{y_i}$ is already big enough ($|\Ll^i_{y_i}| \geq D^{R}$).}
    In this case only $\Gg$ changes, so we must verify \Cref{lem:path-collections-growth}(a) and \Cref{lem:G-is-park}.

    \emph{Verifying the growth guarantee.} We bound the size of each surviving collection separately.
    For $\Ll^i_{y_i} - F_i$:
    \begin{align}
        |\Ll^{i}_{y_i} - F_i|
        &\geq |\Ll^i_{y_i}| - \sum_{v\in F_i}|\Ll^i_{y_i}[v]|
            && \text{(union bound on faulty paths)} \notag\\
        &\geq D^{R} - f \cdot 24R \cdot D^{R-1}
            && \text{(by \Cref{lem:L-properties}(b) with $|F_i| \le f$)} \notag\\
        &= D^{R}\left(1 - \frac{24Rf}{D}\right) \notag\\
        &\geq \frac{1}{2}D^R
            && \text{(since $D \geq 48Rf$).}
    \end{align}

    For $\pp^{i,R-I_{odd}-1}_{x_i} - F_i$, using that the step is non-boosting (and thus $|\pp^{i,R-I_{odd}-1}_{x_i}|\geq D^{R-I_{odd}-1}$) and weak collective fault tolerance:
    \begin{align}
        |\pp^{i,R-I_{odd}-1}_{x_i} - F_i|
        &\geq |\pp_{x_i}^{i,R-I_{odd}-1}| - \sum_{v\in F_i}|\pp_{x_i}^{i,R-I_{odd}-1}[v]|
            && \text{(union bound)} \notag\\
        &\geq D^{R-I_{odd}-1} - f \cdot 3(R-I_{odd}-1) \cdot D^{R-I_{odd}-2}
            && \text{(weak collective FT)} \notag\\
        &\geq \frac{1}{2}D^{R-I_{odd}-1}
            && \text{(since $D \geq 6Rf$).}
    \end{align}

    Multiplying these bounds:
    \[
        |\Gg^{i+1}| - |\Gg^i|
        = |\pp^{i,R-I_{odd}-1}_{x_i} - F_i| \cdot |\Ll^{i}_{y_i} - F_i|
        \geq \frac{1}{2}D^{R-I_{odd}-1} \cdot \frac{1}{2}D^R
        = \frac{1}{4}D^{k-1}.
    \]

    To prove \Cref{lem:G-is-park} take arbitrary $J\subseteq V(G)$. Suppose that for some vertices $v$ and $u$ the $J$-link of $((\pp^{i, R-I_{odd}-1}_{x_i} - F_i)^{rev}\circ (x_i, m_i, y_i)\circ (\Ll^i_{y_i} - F_i))_{u,v}$ is non-empty, which means that there are some non-faulty paths $P_x\in \pp^{i, R-I_{odd}-1}_{x_i} - F_i$ from $x_i$ to $u$ and $P_y \in \Ll^i_{y_i} - F_i$ from $y_i$ to $v$, such that $J\subseteq V(P_x)\cup V(P_y)\cup \{m_i\}$. Observe that this implies $J\cap F_i = \emptyset$. Suppose now that $\Gg_{u,v}^i$ is $J$-full. Then by \Cref{lemma:union_bound_scores} we have a non-faulty path $P\in \Gg_{u,v}^i[J]$. This implies by the triangle inequality that $\dist_{H_i-F_i}(x_i, y_i) \leq L(P_x) + L(P)+L(P_y) = R-1-I_{odd} + k+1 + R = 2k$, contradiction with the definition of $F_i$. Thus we get that for any $u,v \in V(G)$ either $((\pp^{i, R-I_{odd}-1}_{x_i} - F_i)^{rev}\circ (x_i, m_i, y_i)\circ (\Ll^i_{y_i} - F_i))_{u,v}[J] = \emptyset$ or $\Gg^i_{u,v}$ is not $J$-full. In the first case clearly $\lsco_J(\Gg^{i+1}_{u,v}) = \lsco_J(\Gg_{u,v}^i) \leq 1$ by the induction hypothesis. In the second case $\lsco_J(\Gg_{u,v}^i)\leq \frac12$ as it is not $J$-full. It remains to upper bound the score of $(\pp^{i, R-I_{odd}-1}_{x_i} - F_i)^{rev}\circ (x_i, m_i, y_i)\circ (\Ll^i_{y_i} - F_i)$. We have the following
    
    \begin{claim}
        For each $J\subseteq V(G)$ 
        $$\lsco_J^{}(((\pp^{i, R-I_{odd}-1}_{x_i} - F_i)^{rev}\circ (x_i, m_i, y_i)\circ (\Ll^i_{y_i} - F_i))_{u,v}) \leq \frac{1}{2}.$$
    \end{claim}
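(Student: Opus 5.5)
The plan is a direct computation that bounds the $J$-score of the new product collection by splitting $J$ between its two factors and then applying the park size bounds of \Cref{obs:size-bounds-park-links} to each factor. Fix $u,v$ and $J$, and write $\mathcal{N}_{u,v}$ for the collection in the claim. Removing faulty paths only lowers scores, so I will drop the ``$-F_i$'' and bound the score of
\[
(\pp^{i,R-I_{odd}-1}_{x_i,u})^{rev}\circ(x_i,m_i,y_i)\circ \Ll^i_{y_i,v}.
\]
Every path in this collection is determined by a pair $(P_x,P_y)$ with $P_x\in\pp^{i,\ell}_{x_i,u}$ and $P_y\in\Ll^i_{y_i,v}$, where $\ell:=R-I_{odd}-1$. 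Its length is $\ell+2+R=k+1$, so its $J$-score is $\beta(2f)^{-(k+2-|J|)}$ if $J\subseteq V(P_x)\cup V(P_y)\cup\{m_i\}$, and $0$ otherwise.

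Next I charge each contributing pair to a pair of subsets. Whenever the pair contributes, I set $J_1:=J\cap V(P_x)$ and $J_2:=J\cap V(P_y)$; then $J\setminus(J_1\cup J_2)\subseteq\{m_i\}$. Summing over all possible pairs $(J_1,J_2)$ of subsets of $J$ overcounts, which is harmless for an upper bound, and gives
\[
\lsco_J(\mathcal{N}_{u,v})\ \le\ \sum_{J_1,J_2\subseteq J}|\pp^{i,\ell}_{x_i,u}[J_1]|\cdot|\Ll^i_{y_i,v}[J_2]|\cdot \beta(2f)^{-(k+2-|J|)},
\]
where the sum is restricted to pairs with $|J\setminus(J_1\cup J_2)|\le 1$.

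Now I bound each factor with the park property. By \textbf{I3}, $\pp^{i,\ell}_{x_i,u}$ is a park w.r.t.\ $\sco^{10^{-\ell}}$, and by the induction hypothesis \Cref{lem:L-properties}(3), $\Ll^i_{y_i,v}$ is a park w.r.t.\ $\sco^{10^{-R}}$. \Cref{obs:size-bounds-park-links} therefore gives
\[
|\pp^{i,\ell}_{x_i,u}[J_1]|\le 10^{\ell}(2f)^{\ell-|J_1|+1},\qquad |\Ll^i_{y_i,v}[J_2]|\le 10^{R}(2f)^{R-|J_2|+1}.
\]
Using $\ell+R=k-1$, each term of the sum is at most
\[
10^{k-1}\beta\,(2f)^{\,|J|-|J_1|-|J_2|-1}.
\]
Since $J\subseteq J_1\cup J_2\cup\{m_i\}$, we have $|J|-|J_1|-|J_2|\le 1$, so the power of $2f$ is at most $(2f)^0=1$.

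Finally I count the terms. Every contributing $J$ lies inside a path with $k+2$ vertices, so $|J|\le k+2$, and the number of pairs $(J_1,J_2)$ is at most $4^{|J|}\le 2^{2k+4}$. With $\beta=10^{-(k-1)}2^{-(2k+5)}$ the total is at most $2^{2k+4}\cdot 10^{k-1}\beta=\tfrac12$, which proves the claim.

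The main obstacle is the bookkeeping in the splitting step, not any single inequality. One must check that overlaps between $V(P_x)$, $V(P_y)$ and $m_i$ only cause overcounting and never undercounting. One must also check that the exponent bound $|J|-|J_1|-|J_2|\le1$ holds in every case, including when $m_i\in J$ and when $J$ meets both factors at shared vertices such as $x_i$ or $y_i$.
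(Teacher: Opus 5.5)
Your proof is correct and follows essentially the paper's argument: split $J$ between the two factors, bound each factor by its park property (your use of \Cref{obs:size-bounds-park-links} is equivalent to the paper's direct bounds $\lsco_{J'}\le 1$), and sum over at most $4^{|J|}\le 2^{2k+4}$ pairs. The one minor difference is that the paper first shows $V(P_x)\cap V(P_y)=\emptyset$ from $\dist_{H_i-F_i}(x_i,y_i)>2k$ to get an exact partition of $J$, whereas you note that overlaps only lower the exponent $|J|-|J_1|-|J_2|-1$, so you do not need disjointness.
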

    \begin{proof}

    Note that since all paths in the collection have length $k+1$, the claim holds trivially for any $J$ with $|J| > k+2$. Thus we can assume $|J|\leq k+2$. 
    
    Take arbitrary paths $P_x\in \pp^{i, R-I_{odd}-1}_{x_i,u} - F_i$ and $P_y \in \Ll^i_{y_i,v} - F_i$. Note that $V(P_x) \cap V(P_y) = \emptyset$, because otherwise $\dist_{H_i - F_i}(x_i, y_i) \leq L(P_x) + L(P_y) = k-1$. Thus for each $J$ either  $\lsco_J(P_x^{rev}\circ (x_i, m_i, y_i) \circ P_y) = 0$ or there exists a partition $J = J_1\sqcup J_2$ with $J_1\subseteq V(P_x)$ and $J_2\subseteq V(P_y)$ or a partition $J = J_1\sqcup J_2\sqcup \{m_i\}$ with $J_1\subseteq V(P_x)$ and $J_2\subseteq V(P_y)$. In the second case, we have 

    \begin{multline*}
        \lsco^{}_J(P_x^{rev}\circ (x_i, m_i, y_i) \circ P_y) = \beta(2 f)^{-(k+1-|J|)-1} = \beta(2 f)^{-(L(P_x) + L(P_y) + 2 -|J_1|-|J_2|)-1} = \\=\frac{\beta}{\beta_{R-I_{odd}-1}\beta_R}\frac{1}{2 f}\lsco^{R-I_{odd}-1}_{J_1}(P_x)\lsco^R_{J_2}(P_y) \leq \beta\cdot 10^{k-1}\sum_{J'\subseteq J}\lsco^{R-I_{odd}-1}_{J'}(P_x) \cdot \sum_{J''\subseteq J}\lsco^R_{J''}(P_y).
    \end{multline*}
    
   In the third case 
   
   \begin{multline*}
       \lsco^{}_J(P_x^{rev}\circ (x_i, m_i, y_i) \circ P_y) = \beta(2 f)^{-(k+1-|J|)-1} = \beta(2 f)^{-(L(P_x) + L(P_y) + 2 -|J_1|-|J_2|-1)-1} = \\=\frac{\beta}{\beta_{R-I_{odd}-1}\beta_R}\lsco^{R-I_{odd}-1}_{J_1}(P_x)\lsco^R_{J_2}(P_y) \leq \beta \cdot 10^{k-1}\sum_{J'\subseteq J}\lsco^{R-I_{odd}-1}_{J'}(P_x) \cdot \sum_{J''\subseteq J}\lsco^R_{J''}(P_y).
   \end{multline*}
    Thus anyway 
    \begin{equation*}
        \lsco_J(P_x^{rev}\circ (x_i, m_i, y_i) \circ P_y) \leq \beta\cdot 10^{k-1}\sum_{J'\subseteq J}\lsco^{R-I_{odd}-1}_{J'}(P_x) \cdot \sum_{J''\subseteq J}\lsco^R_{J''}(P_y).
    \end{equation*}
    
    Summing over $P_x$ and $P_y$ we get 
    \begin{multline*}
        \lsco_J^{}(((\pp^{i, R-I_{odd}-1}_{x_i} - F_i)^{rev}\circ (x_i, m_i, y_i)\circ (\Ll^i_{y_i} - F_i))_{u,v}) \leq\\\leq \beta \cdot 10^{k-1}\sum_{P_x, P_y}\sum_{J'\subseteq J}\lsco^{R-I_{odd}-1}_{J'}(P_x) \cdot \sum_{J''\subseteq J}\lsco^R_{J''}(P_y) =\\= \beta \cdot 10^{k-1}\sum_{J', J''\subseteq J}\lsco^{R-I_{odd}-1}_{J'}(\pp^{i, R-I_{odd}-1}_{x_i,u} - F_i)\lsco^{R}_{J''}(\Ll^i_{y_i,v}-F_i) \leq\\ \leq \beta \cdot 10^{k-1}\sum_{J', J''\subseteq J}\lsco^{R-I_{odd}-1}_{J'}(\pp^{i, R-I_{odd}-1}_{x_i,u})\cdot\lsco^{R}_{J''}(\Ll^i_{y_i,v}) \leq \beta \cdot 10^{k-1}2^{2(k+2)} = \frac{1}{2},
    \end{multline*}
    where the last inequality holds since $\pp^{i, R-I_{odd}-1}_{x_i,u}$ and $\Ll^i_{y_i,v}$ are parks w.r.t. $\lsco^{R-I_{odd}-1}$ and $\lsco^R$ (by the pairwise park property \textbf{I3} of $\pp$ and the induction hypothesis) and the assumption $|J|\leq k+2$.
    \end{proof}

    Having the preceding claim then $\Gg^i_{u,v}$ is not $J$-full we obtain that $\lsco_J(\Gg^{i+1}_{u,v}) = \lsco_J(\Gg^{i}_{u,v}) + \lsco_J^{}(((\pp^{i, R-I_{odd}-1}_{x_i} - F_i)^{rev}\circ (x_i, m_i, y_i)\circ (\Ll^i_{y_i} - F_i))_{u,v}) \leq \frac12+ \frac12 = 1$. \Cref{lem:G-is-park} follows. 

    \paragraph{Case 2: $\Ll^i_{y_i}$ is not big enough, but grows substantially ($|\Ll^i_{y_i}| < D^{R}$ and $|\pp\Ll^i|\geq \frac{1}{4}D^{R-1}$).}
    In this case only $\Ll$ changes, so we must verify \Cref{lem:path-collections-growth}(c) and \Cref{lem:L-properties}.

    The growth guarantee is immediate: $|\Ll^{i+1}| - |\Ll^{i}| = |\pp\Ll^i| \geq \frac{1}{4}D^{R-1}$. Note that all the paths that we have added to $\Ll$ were previously not in $\Ll$ since they contain a new edge $\{m_i, y_i\}$. 

    For \Cref{lem:L-properties} first notice that we only modify $\Ll_{y_i}$. We have $|\Ll^{i+1}_{y_i}| = |\Ll^{i}_{y_i}| + |\pp\Ll^i| \leq |\Ll^{i}_{y_i}| + |\pp^{i, R-1}_{m_i}| \leq D^{R} + \frac{D^{R}-1}{D-1} = \frac{D^{R+1}-1}{D-1} \leq 2D^R$, where the second inequality is from \textbf{I5} for $\pp$. This proves \Cref{lem:L-properties}(1).

    For \Cref{lem:L-properties}(2) notice that by construction $\Ll^{i+1}_{y_i} = \bigcup_{j=1}^{p}(y_i, m_{i_j})\circ \pp\Ll^{i_j}$, where $i_1 < i_2 < ... < i_p = i$. The construction also gives that $|\pp\Ll^{i_j}| \geq \frac{1}{4}D^{R-1}$. Consequently, using \Cref{lem:L-properties}(1), $p \leq \frac{2D^R}{\frac{1}{4}D^{R-1}} = 8D$. On the other hand, since $\pp\Ll^j \subseteq \pp^{j,R-1}_{m_j}$, for each $u\neq y_i$ we have $\Ll^{i+1}_{y_i}[u] \subseteq ((y_i,u)\circ \pp^{i, R-1}_{u})\cup \bigcup_{m_{i_j} \neq u}((y_i,m_{i_j})\circ (\pp^{i_j, R-1}_{m_{i_j}}[u]))$. So, by the upper bound on the sizes of $\pp^{i, \ell}_v$ and the weak collective fault tolerance of $\pp$, $|\Ll^{i+1}_{y_i}[u]| \leq \frac{D^R-1}{D-1} + p\cdot 3(R-1)D^{R-2} \leq 24R\cdot D^{R-1}$.

    Now we turn to prove \Cref{lem:L-properties}(3). We claim that for each $J\subseteq V(G)$ and each $v \in V(G)$ either $((y_i, m_i)\circ \pp\Ll^{i})_{y_i, v}[J]$ is empty or $\Ll^{i}_{y_i,v}$ is not $J$-full w.r.t. $\lsco^R$. Indeed, assume both these conditions fail. Then there is a path $P \in \pp\Ll^{i}$ ending in $v$ and $J\subseteq V(P) \cup \{y_i\}$ such that $\Ll^{i}_{y_i,v}$ is $J$-full. But since $\Ll^{i}_{y_i,v}$ is $J$-full, $P$ should have been put in $\pp\Ggh^i$, and not $\pp\Ll^i$, contradiction. If the first condition holds, we have $\lsco_J^R(\Ll^{i+1}_{y_i,v}) = \lsco_J^R(\Ll^{i}_{y_i,v}) \leq 1$ by induction hypothesis. In the second case 

    \begin{multline*}
        \lsco_J^R(\Ll^{i+1}_{y_i,v}) = \lsco_J^R(\Ll^{i}_{y_i,v}) + \lsco^R_J((y_i, m_i)\circ \pp\Ll^i_{m_i, v}) \leq \frac{1}{2} + \frac{1}{10}(\lsco^{R-1}_J(\pp\Ll^i_{m_i, v}) + \lsco^{R-1}_{J-y_i}(\pp\Ll^i_{m_i, v})) \leq\\\leq \frac{1}{2} + \frac{1}{10}(\lsco^{R-1}_J(\pp^{i,R-1}_{m_i, v}) + \lsco^{R-1}_{J-y_i}(\pp^{i,R-1}_{m_i, v})) \leq \frac12 + \frac{2}{10} \leq 1,
    \end{multline*}
    where the first inequality follows from \Cref{cl:concat} and the third one from the pairwise park property \textbf{I3} of $\pp$. This proves \Cref{lem:L-properties}(3).

    \paragraph{Case 3: the FT connections witnessed by $\pp \Ggh^i$, force $\Ggh$ to grow.}
    In this case only $\Ggh$ changes, so we must verify \Cref{lem:path-collections-growth}(b) and \Cref{lem:Gtilde-is-park}.

    \emph{Verifying the growth guarantee.} First, we show $|\pp\Ggh^i| \geq \frac14 D^{R-1}$. Since $\pp\Ll^i$ and $\pp\Ggh^i$ partition the surviving paths from $\pp^{i, R-1}_{m_i}$:
    \begin{align}
        |\pp\Ll^{i}| + |\pp\Ggh^i|
        &= |\pp^{i,R-1}_{m_i} - F_i| \notag\\
        &\geq |\pp^{i,R-1}_{m_i}| - \sum_{v\in F_i}|\pp^{i,R-1}_{m_i}[v]|
            && \text{(union bound)} \notag\\
        &\geq D^{R-1} - f \cdot 3(R-1) \cdot D^{R-2}
            && \text{(weak collective FT)} \notag\\
        &\geq \frac{1}{2}D^{R-1}
            && \text{(since $D \geq 6Rf$).}
    \end{align}
    Since we're not in Case 2, we have $|\pp\Ll^{i}| < \frac{1}{4}D^{R-1}$, hence $|\pp\Ggh^i| \geq \frac{1}{4}D^{R-1}$.

    Now we bound the growth of $\Ggh$:
    \begin{align}
        |\Ggh^{i+1}| - |\Ggh^i|
        &= |\pp^{i, R-I_{odd}}_{x_i} - F_i| \cdot |\pp\Ggh^i| \notag\\
        &\geq \left(D^{R-I_{odd}} - 3f(R-I_{odd})D^{R-I_{odd}-1}\right) \cdot \frac{1}{4}D^{R-1} \notag\\
        &\geq \frac{1}{2}D^{R-I_{odd}} \cdot \frac{1}{4}D^{R-1}
            && \text{(since $D \geq 6Rf$)} \notag\\
        &= \frac{1}{8} D^{k-1}.
    \end{align}

    Now we shall prove \Cref{lem:Gtilde-is-park}. First we prove the following 
    \begin{claim}
        For any $u,v\in V(G)$ and $J\subseteq V(G)$ either $((\pp^{i, R-I_{odd}}_{x_i} - F_i)^{rev}\circ(x_i, m_i)\circ \pp\Ggh^i)_{u,v}[J] = \emptyset$ or $\Ggh^i_{u,v}$ is not $J$-full.
    \end{claim}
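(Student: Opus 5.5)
We argue by contradiction, following the template of Case 1 (for $\Gg$) and of \Cref{lem:ft-neighborhood-exchange-lemma-gen}. So assume that for some $u,v,J$ both conditions fail. That is, $\Ggh^i_{u,v}$ is $J$-full, and the $J$-link of the new collection is nonempty. The second condition gives paths $P_x\in \pp^{i,R-I_{odd}}_{x_i,u}-F_i$ and $P\in \pp\Ggh^i$ ending at $v$ with
\[
J\subseteq V(P_x)\cup V(P).
\]
The edge $\{x_i,m_i\}$ contributes no vertices beyond $x_i\in V(P_x)$ and $m_i\in V(P)$. Both paths avoid $F_i$: $P_x$ by definition, and $P$ because $\pp\Ggh^i\subseteq \pp^{i,R-1}_{m_i}-F_i$. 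Hence $J\cap F_i=\emptyset$.

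Next we exploit the fullness of $\Ggh^i_{u,v}$. It is a park by the induction hypothesis (\Cref{lem:Gtilde-is-park} at step $i$) and is $J$-full with $J\cap F_i=\emptyset$. So \Cref{lemma:union_bound_scores} yields a path $Q\in\Ggh^i_{u,v}[J]$ avoiding $F_i$, of length $k$, lying in $H_i$ by \Cref{lem:path-collections-structural}(1).

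We now need a short $x_i$--$y_i$ path in $H_i-F_i$. Concatenating $P_x^{rev}$ and $Q$ only reaches $v$, not $y_i$, and has length $R-I_{odd}+k$, which is too long. This is the main obstacle. We must use why $P$ was placed in $\pp\Ggh^i$: there is $J_P\subseteq V(P)\cup\{y_i\}$ such that $\Ll^i_{y_i,v}$ is $J_P$-full w.r.t.\ $\lsco^R$. Since $J_P$ avoids $F_i$ ($P$ is non-faulty and $y_i\notin F_i$, as $y_i\in P_i$), \Cref{lemma:union_bound_scores} gives a path $Q'\in\Ll^i_{y_i,v}-F_i$ of length $R$.

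The natural route is then
\[
x_i\to m_i\ (\text{edge, not in }H_i)\ \dots
\]
which fails because the edge is absent. Instead we use $P_x$ and $Q$ to reach $v$ from $x_i$ through $u$, and $Q'$ to reach $y_i$ from $v$. This route has length $(R-I_{odd})+k+R=2k$. That is exactly within the forbidden bound $\dist_{H_i-F_i}(x_i,y_i)>2k$, giving the contradiction. Indeed $2R-I_{odd}=k$, so the total is $k+k=2k$, matching the Case 1 computation $(R-1-I_{odd})+(k+1)+R=2k$.

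Thus the key steps are:
\begin{enumerate}
\item Deduce $J\cap F_i=\emptyset$ from the witnessing paths.
\item Extract $Q$ via the park fault-tolerance lemma.
\item Extract $Q'$ from the $J_P$-fullness defining membership in $\pp\Ggh^i$.
\item Concatenate $P_x^{rev}\circ Q\circ Q'^{rev}$ into an $x_i$--$y_i$ walk in $H_i-F_i$ of length $\le 2k$, contradicting the choice of $F_i$.
\end{enumerate}
The delicate point is checking the orientation and endpoints. $Q$ must go from $u$ to $v$, since $\Ggh$ paths are oriented as $(\text{reverse of }x\text{-side})\circ(x_i,m_i)\circ(m_i\text{-side})$, so they start at $u$ and end at $v$. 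The endpoints of $P_x$ and $Q'$ then line up. Also, $J_P$-fullness of $\Ll^i_{y_i,v}$ is with respect to the very endpoint $v$ of $P$, which is how the third leg is guaranteed to close the route.
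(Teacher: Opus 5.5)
Your proposal is correct and matches the paper's proof step for step. Both start from the witness paths $P_x$ and $P$, deduce $J\cap F_i=\emptyset$, and use the park fault-tolerance lemma twice: once on $\Ggh^i_{u,v}$ to get a surviving $Q$, and once, via the $J_P$-fullness that placed $P$ in $\pp\Ggh^i$, on $\Ll^i_{y_i,v}$ to get a surviving $Q'$; the total length $(R-I_{odd})+k+R=2k$ then contradicts the choice of $F_i$. The only slip is notational and harmless since the graph is undirected: the $x_i$--$y_i$ walk should be written $P_x\circ Q\circ Q'^{rev}$ rather than $P_x^{rev}\circ Q\circ Q'^{rev}$.
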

    \begin{proof}
        Suppose that both options fail for some $u,v$ and $J$. It means that there are paths $P_x \in (\pp^{i, R-I_{odd}}_{x_i} - F_i)_{x_i,u}$ and $P_m \in \pp\Ggh^i_{m_i, v}$ such that $J\subseteq V(P_x)\cup V(P_m)$, and so $J\cap F_i = \emptyset$. Since $\Ggh^i_{u,v}$ is $J$-full, by \Cref{lemma:union_bound_scores}, there is a non-faulty path $P' \in \Ggh^i_{u,v}[J]$. Since $P_m \in \pp\Ggh^i$, there is some $J' \subseteq V(P_m) \cup \{y_i\}$ (and so $J'\cap F_i=\emptyset$) such that $\Ll^i_{y_i,v}$ is $J'$-full. Thus, again by \Cref{lemma:union_bound_scores}, there is a non-faulty path $P''\in \Ll^i_{y_i,v}[J']$. Overall, we obtain that, by triangle inequality, $\dist_{H_i - F_i}(x_i, y_i) \leq L(P_x) + L(P') + L(P'') = R - I_{odd} + k + R = 2k$,  contradiction.
    \end{proof}

    If for vertices $u,v$ and a vertex subset $J$ the first option in the above claim is in place, we have $\hat{\lsco}_J(\Ggh^{i+1}_{u,v}) = \hat{\lsco}_J(\Ggh^{i}_{u,v}) \leq 1$ by the induction hypothesis. In the second case we have $\hat{\lsco}_J(\Ggh^{i+1}_{u,v}) = \hat{\lsco}_J(\Ggh^{i}_{u,v}) + \hat\lsco_J(((\pp^{i, R-I_{odd}}_{x_i} - F_i)^{rev}\circ(x_i, m_i)\circ \pp\Ggh^i)_{u,v}) \leq \frac{1}{2} + \hat\lsco_J(((\pp^{i, R-I_{odd}}_{x_i} - F_i)^{rev}\circ(x_i, m_i)\circ \pp\Ggh^i)_{u,v})$. Consequently, it remains to prove 

    \begin{claim}
        $$\hat\lsco_J(((\pp^{i, R-I_{odd}}_{x_i} - F_i)^{rev}\circ(x_i, m_i)\circ \pp\Ggh^i)_{u,v})\leq \frac{1}{2}.$$
    \end{claim}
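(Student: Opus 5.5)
The plan is to mirror the score computation in the analogous claim of Case 1. Paths in the added collection have length $R-I_{odd}+1+(R-1)=k$, so the claim is trivial unless $|J|\le k+1$; assume this. Fix $P_x\in(\pp^{i,R-I_{odd}}_{x_i}-F_i)_{x_i,u}$ and $P_m\in\pp\Ggh^i_{m_i,v}\subseteq(\pp^{i,R-1}_{m_i}-F_i)_{m_i,v}$.

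First, $V(P_x)\cap V(P_m)=\emptyset$. Otherwise $\dist_{H_i-F_i}(x_i,m_i)\le L(P_x)+L(P_m)=k-1$, contradicting that $\{x_i,m_i\}$ is $k$-critical. Hence any $J$ with $\hat\lsco_J(P_x^{rev}\circ(x_i,m_i)\circ P_m)>0$ splits uniquely as $J=J_1\sqcup J_2$ with $J_1\subseteq V(P_x)$ and $J_2\subseteq V(P_m)$. Unlike Case 1, there is no middle vertex, since the connecting edge joins the endpoints $x_i\in P_x$ and $m_i\in P_m$. Then
\[
\hat\lsco_J(P_x^{rev}\circ(x_i,m_i)\circ P_m)=\hat\beta(2f)^{-(k-|J|)-1}=\hat\beta(2f)^{-(L(P_x)-|J_1|)-(L(P_m)-|J_2|)-1}.
\]
We then express this as $\frac{\hat\beta\cdot 2f}{\beta_{R-I_{odd}}\beta_{R-1}}\lsco^{R-I_{odd}}_{J_1}(P_x)\lsco^{R-1}_{J_2}(P_m)$.

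The main obstacle is the extra factor $2f$: concatenation by a plain edge does not lose a factor $2f$ relative to the product of the two scores. Before summing, I will check this exponent bookkeeping carefully. If the factor $2f$ genuinely appears, it must be absorbed differently. For instance, one can exploit that for fixed $u$ the collection $\pp^{i,R-I_{odd}}_{x_i,u}$ is a park, so $\sum\lsco_{J_1}\le 1$ per $J_1$; or one can redistribute by viewing $(x_i,m_i)\circ P_m$ as one path and applying \Cref{cl:concat}, which costs only a factor of $2$ in the subsets. I expect that using \Cref{cl:concat} on $(x_i,m_i)\circ P_m$, with $\lsco^{R-1}\to\lsco^{R}$ contributing the factor $1/10$, gives the correct normalization with no $2f$ loss. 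I will commit to that route. Write $(x_i,m_i)\circ P_m$ as a length-$R$ path from $x_i$, and bound $\lsco^{R}_{J''}((x_i,m_i)\circ P_m)\le\frac1{10}(\lsco^{R-1}_{J''}(P_m)+\lsco^{R-1}_{J''\setminus x_i}(P_m))$. Then concatenate two paths sharing the endpoint $x_i$ (reversed $P_x$ glued at $x_i$), where the shared vertex exactly compensates the length accounting.

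In either formulation, bound the product by a sum over all $J',J''\subseteq J$ to obtain
\[
\hat\lsco_J(\cdot)\le \hat\beta\cdot 10^{k-1}\cdot O(1)\sum_{J',J''\subseteq J}\lsco^{R-I_{odd}}_{J'}(P_x)\,\lsco^{R-1}_{J''}(P_m).
\]
Sum over $P_x,P_m$, drop the $-F_i$ restrictions and the restriction to $\pp\Ggh^i$ (scores are nonnegative), and use \textbf{I3}: both $\pp^{i,R-I_{odd}}_{x_i,u}$ and $\pp^{i,R-1}_{m_i,v}$ are parks, so each inner sum is at most $1$. This leaves at most $2^{|J|}\cdot2^{|J|}\le 2^{2k+2}$ terms, giving the bound $\hat\beta\cdot10^{k-1}2^{2k+2}\cdot O(1)$. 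With $\hat\beta=10^{-k+1}2^{-2k-3}$ this is at most $\frac12$, with the constants working out exactly as in Case 1. The concluding step is the same as before: if $\Ggh^i_{u,v}$ is not $J$-full, its score plus this $\frac12$ is at most $1$.
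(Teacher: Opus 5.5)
\textbf{Verdict: there is a gap.} Your setup matches the paper: the reduction to $|J|\le k+1$, the disjointness of $V(P_x)$ and $V(P_m)$ via $k$-criticality of $\{x_i,m_i\}$, and the unique split $J=J_1\sqcup J_2$. The problem is that your exponent identity is off by one, and this slip alone creates the ``obstacle.''

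\textbf{The slip.} Since $k=L(P_x)+1+L(P_m)$,
\[
-(k-|J|)-1=-(L(P_x)-|J_1|)-(L(P_m)-|J_2|)-2 .
\]
The connecting edge adds $1$ to the length, and this is paid for exactly by the $-1$ carried by each of the two factor scores. Hence
\[
\hat\lsco_J\bigl(P_x^{rev}\circ(x_i,m_i)\circ P_m\bigr)=\frac{\hat\beta}{\beta_{R-I_{odd}}\,\beta_{R-1}}\,\lsco^{R-I_{odd}}_{J_1}(P_x)\,\lsco^{R-1}_{J_2}(P_m)=\hat\beta\cdot 10^{k-1}\,\lsco^{R-I_{odd}}_{J_1}(P_x)\,\lsco^{R-1}_{J_2}(P_m),
\]
with no factor $2f$ at all. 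This identity is the paper's whole proof. You then sum over $J_1,J_2\subseteq J$ and apply \textbf{I3} to $\pp^{i,R-I_{odd}}_{x_i,u}$ and $\pp^{i,R-1}_{m_i,v}$. This gives $\hat\beta\cdot 10^{k-1}\cdot 2^{2(k+1)}=\frac12$ exactly.

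\textbf{The detour does not prove the claim.} Because $\hat\beta=10^{-k+1}2^{-2k-3}$ is fixed, there is no slack: the constant multiplying $\hat\beta\,10^{k-1}2^{2k+2}$ must be at most $1$. So your concluding step ``$\cdot O(1)\le\frac12$'' does not follow. You yourself note that \Cref{cl:concat} ``costs a factor of $2$.'' The detour also loses more than that. Gluing $P_x^{rev}$ to $Q=(x_i,m_i)\circ P_m$ at $x_i$ matches exponents only if $|J'|+|J''|=|J|+1$.
\begin{itemize}
\item If $x_i\notin J$ and you keep $J',J''\subseteq J$ as you propose, the gluing step picks up a factor $2f$. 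In addition, the two-term bound of \Cref{cl:concat} gives $\lsco^{R}_{J''}(Q)\le\frac{2}{10}\lsco^{R-1}_{J''}(P_m)$ instead of the true value $\frac{1}{10}\cdot\frac{1}{2f}\lsco^{R-1}_{J''}(P_m)$. The final bound is then $2f$, not $\frac12$.
\item If instead you put $x_i$ into $J'$ to make the gluing exact, then $J'$ ranges over subsets of $J\cup\{x_i\}$. The two-term bound still costs a factor $2$, and you get $\hat\beta\,10^{k-1}2^{2k+4}=2$.
\end{itemize}
The route can be made exact only by using the case analysis inside the proof of \Cref{cl:concat} rather than its statement, and then it collapses to the direct identity above. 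The fix is to correct the exponent and use that identity.
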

    
    \begin{proof}
        Since all paths in the collection are of length $k$, we may assume $|J| \leq k+1$, otherwise the claim holds trivially.
        
        Take arbitrary $P_x \in \pp^{i, R-I_{odd}}_{x_i,u} - F_i$ and $P_m \in \pp\Ggh^i_{m_i,v}$. Note that $V(P_x)\cap V(P_m) = \emptyset$, since otherwise $\dist_{H_i-F_i}(x_i, m_i) \leq L(P_x) + L(P_m) = R-I_{odd} + R-1 = k-1$. Thus either $J\not\subseteq V(P_x)\cup V(P_m)$ and $\hat\lsco_J(P_x^{rev}\circ (x_i, m_i)\circ P_m) = 0$, or there is a partition $J = J_1 \sqcup J_2$ with $J_1 \subseteq V(P_x)$ and $J_2 \subseteq V(P_m)$. In the latter case 

        \begin{multline*}
            \hat\lsco_J(P_x^{rev}\circ (x_i, m_i)\circ P_m) = \hat\beta(2f)^{-(L(P_x) + L(P_m) + 1 - |J_1| - |J_2|)-1} =\\= \hat{\beta}\cdot 10^{k-1}\lsco_{J_1}^{R-I_{odd}}(P_x)\lsco_{J_2}^{R-1}(P_m) \leq \hat{\beta}\cdot 10^{k-1}\sum_{J'\subseteq J}\lsco_{J'}^{R-I_{odd}}(P_x)\sum_{J''\subseteq J}\lsco_{J''}^{R-1}(P_m).
        \end{multline*}
    
        In any case, $\hat\lsco_J(P_x^{rev}\circ (x_i, m_i)\circ P_m)\leq \hat{\beta}\cdot 10^{k-1}\sum_{J'\subseteq J}\lsco_{J'}^{R-I_{odd}}(P_x)\sum_{J''\subseteq J}\lsco_{J''}^{R-1}(P_m)$. Summing over $P_x$ and $P_m$ we get
        \begin{multline*}
            \hat\lsco_J(((\pp^{i, R-I_{odd}}_{x_i} - F_i)^{rev}\circ(x_i, m_i)\circ \pp\Ggh^i)_{u,v})\leq \hat{\beta}\cdot 10^{k-1}\sum_{P_x,P_m, J',J''}\lsco_{J'}^{R-I_{odd}}(P_x)\lsco_{J''}^{R-1}(P_m) =\\= \hat{\beta}\cdot 10^{k-1}\sum_{J',J''}\lsco_{J'}^{R-I_{odd}}((\pp^{i, R-I_{odd}}_{x_i} - F_i)_{x_i,u}) \lsco_{J''}^{R-1}(\pp\Ggh^i_{m_i, v}) \leq\\\leq \hat{\beta}\cdot 10^{k-1}\sum_{J',J''\subseteq J}\lsco_{J'}^{R-I_{odd}}(\pp^{i, R-I_{odd}}_{x_i,u}) \lsco_{J''}^{R-1}(\pp^{i,R-1}{m_i, v}) \leq \hat{\beta}\cdot 10^{k-1}\cdot 2^{2(k+1)} = \frac{1}{2},
        \end{multline*}
        where the last inequality holds because of the pairwise park property \textbf{I3} of $\pp$ and the assumption $|J| \leq k+1$. 
    \end{proof}
\end{proof}

\begin{proof}[Proof of \Cref{thm:2-2k-ft-greedy}]
    Set $D = 128k^2fn^{\frac{1}{k}}$.

    By \Cref{lem:G-is-park} and \Cref{obs:size-bounds-park-links}  we have $|\Gg^{m+1}_{u,v}| = |\Gg^{m+1}_{u,v}[\{u,v\}]| \leq \frac{1}{\beta}(2f)^{k + 1 - 2 + 1} \leq 2^{7k + 1}f^{k}$. Consequently, $|\Gg^{m+1}|\leq n^2\cdot 2^{7k + 1}f^{k}$.
    Thus, only $\frac{2^{7k + 3}f^{k}n^2}{D^{k-1}} = O(f n^{1+\frac{1}{k}})$ steps may satisfy \Cref{lem:path-collections-growth}(a). 

    By \Cref{lem:Gtilde-is-park} and \Cref{obs:size-bounds-park-links} 
    we have $|\Ggh^{m+1}_{u,v}| = |\Ggh^{m+1}_{u,v}[\{u,v\}]| \leq \frac{1}{\hat\beta}(2f)^{k - 2 + 1} \leq 2^{7k -2}f^{k-1}$
    and thus $|\Ggh^{m+1}|\leq 2^{7k -2}f^{k-1}n^2$. Consequently, only $\frac{2^{7k + 1}f^{k-1}n^2}{D^{k-1}} = O(n^{1+\frac{1}{k}})$ steps may satisfy \Cref{lem:path-collections-growth}(b).

    By \Cref{lem:L-properties}(1), $|\Ll^{m+1}| \leq 2D^Rn$, and so only $8Dn = O(k^2f n^{1+\frac{1}{k}})$ steps may satisfy \Cref{lem:path-collections-growth}(c).

    Summing up we get $|\mathcal{G}| = O(k^2f n^{1+\frac{1}{k}})$, and by \Cref{thm:greedy-path-system-bounds} $|\mathcal{L}| = O(kDn) = O(k^3f n^{1+\frac{1}{k}})$, which concludes the proof.
\end{proof}

\subsection{A Tight Bound for $k=2$}

The theorem proven above gives an additional $f^{1/k}$ factor, compared to the lower bound of \cite{BodwinDPW18}. The following claim shows that for $k = 2$ this lower bound is actually achievable, and suggests there may be room to improve \Cref{thm:2-2k-ft-greedy} further.

\begin{claim}
    The VFT greedy $(2\to 4)$ algorithm with fault parameter $f$, produces a spanner with $O(f^{\frac{1}{2}}n^{\frac{3}{2}})$ edges.
\end{claim}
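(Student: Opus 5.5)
The plan is to rerun the argument of \Cref{thm:2-2k-ft-greedy} with $k=2$, so $R=1$ and $I_{odd}=0$, but to tune the parameters so that the $k^3$ and $f^{1/k}$ losses disappear. For $k=2$ every path collection involved has length at most $2$ or $3$, so the invariants of \Cref{lem:pp-invariants} can be checked by hand with better constants. The paths $P_i=(x_i,m_i,y_i)$ in which only one edge is new are handled as before: they are added by the multiplicative VFT greedy $3$-spanner, which has $O(f^{1/2}n^{3/2})$ edges by \cite{BodwinP19}. From now on we assume both edges of $P_i$ are new.

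We apply \Cref{alg:greedy-path-system} with $s=2$. Then $r=1$, and $\pp_v^1$ is simply a set of neighbors of $v$ in $H_i$. We take $D=c\sqrt{fn}$ instead of $D=\Theta(k^2 f n^{1/2})$. The hypothesis $D\ge 4s^2f$ holds whenever $n\ge f$, which we may assume. \Cref{thm:greedy-path-system-bounds} then gives $|\mathcal L|=O(nD)=O(f^{1/2}n^{3/2})$ directly.

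Each $P_i\in\mathcal G$ is charged to one of three cases of the update rule.
\begin{itemize}
\item Case 1 charges $\Gg$. These are paths of length $3$ of the form $x_i,m_i,y_i,z$, where $z$ lies in $\Ll_{y_i}$, the neighbor-sets of $y_i$. Each charge adds $\Omega(D)$ paths. \Cref{lem:G-is-park} bounds the number of such paths between any pair $u,v$ by $O(f^2)$, so $|\Gg|=O(f^2n^2)$ and the number of Case 1 steps is at most $O(f^2n^2/D)$.
\item Case 2b charges $\Ggh$. These are paths of length $2$, and each charge adds $\Omega(D)$ of them. By \Cref{lem:Gtilde-is-park} there are $O(f)$ such paths per pair, so $|\Ggh|=O(fn^2)$ and there are $O(fn^2/D)=O(f^{1/2}n^{3/2})$ such steps.
\item Case 2a charges $\Ll$. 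Each charge adds $\Omega(1)$ edges, and $|\Ll_v|\le 2D$, so there are $O(nD)$ such steps.
\end{itemize}

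The obstacle is Case 1. The count $f^2n^2/D=f^{3/2}n^{3/2}$ is too large by a factor $f$. This comes from $\Gg$ counting length-$3$ paths, whose park bound is $f^{2}$, while each charge contributes only $D$ new paths instead of $D^{2}$. To fix this I would remove Case 1 altogether and exploit $R-I_{odd}-1=0$. When $|\Ll_{y_i}|\ge D$, the $\Omega(D)$ neighbors $z\in\Ll_{y_i}-F_i$ are all at distance at least $4$ from $x_i$ in $H_i-F_i$. So the edges $\{y_i,z\}$ form a collection of length-$1$ paths around $y_i$ that can replace $\Ll$ as a one-level path system. I would then charge the step to the pair $(x_i,\cdot)$ through a collection of length-$2$ paths $x_i,m_i,y_i$ from $x_i$ that is kept pairwise a park. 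This collection has size $O(fn^2)$ and gains $\Omega(1)$ per step, which is not yet enough.

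The version I would actually try is a degree-threshold split. Set $D=\sqrt{n/f}\cdot c$ for the $\Ll$ layer, and keep $D'=\sqrt{fn}$ for $\pp$. Then
\begin{itemize}
\item Case 2a costs $O(nD')$;
\item the $\Ggh$ cost becomes $fn^2/D'=f^{1/2}n^{3/2}$;
\item the $\Gg$ cost becomes $f^2n^2/(D'D)$. Since $D'D=\Theta(n)$, this is $O(f^2 n)$, which is at most $O(f^{1/2}n^{3/2})$ because $f\le n$.
\end{itemize}
The delicate point is whether \Cref{lem:L-properties} still holds with the smaller threshold $D$ for $\Ll$. In particular, its weak FT bound needs $D\gtrsim f$, which is fine for $f\le n^{1/3}$. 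For larger $f$ I would fall back to the trivial bound $O(n^2)\le O(f^{1/2}n^{3/2})$ when $f\ge n$, and interpolate by a direct Case 1 count in between.
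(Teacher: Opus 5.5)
Your bounds for $\mathcal L$, Case~2a and Case~2b are fine, but your repair of Case~1 rests on a miscount.

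\textbf{Why the threshold split fails.} For $k=2$ you have $R=1$ and $I_{odd}=0$, so the $\pp$-factor in Case~1 is $\pp^{i,R-I_{odd}-1}_{x_i}-F_i=\{(x_i)\}$, a single trivial path. The new $\Gg$-paths are just $(x_i,m_i,y_i,z)$ with $\{y_i,z\}\in\Ll^i_{y_i}-F_i$. The increment of $\Gg$ is therefore at most $|\Ll^i_{y_i}|\le 2D$, where $D$ is the $\Ll$-threshold. The $\pp$-threshold $D'$ never enters, as your own Case~1 bullet already said. So with $D=c\sqrt{n/f}$ the Case~1 count becomes $O(f^2n^2/D)=O(f^{5/2}n^{3/2})$, not $O(f^2n)$.

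More to the point, no choice of thresholds rescues this framework at $k=2$. The park property caps $|\Gg|$ at $O(f^2n^2)$ and each increment is at most $2D$, so Case~1 is certified only up to $O(f^2n^2/D)$. Each Case~2a step adds exactly one edge $(y_i,m_i)$ to $\Ll_{y_i}$ and $|\Ll_v|\le 2D$, so Case~2a is certified only up to $O(nD)$. Balancing forces $D=\Theta(fn^{1/2})$ and a total of $O(fn^{3/2})$, which is exactly the $f^{1/k}$ loss of \Cref{thm:2-2k-ft-greedy}. The range $n^{1/3}\lesssim f\lesssim n$, which you leave to an unspecified ``interpolation'', is also not covered.

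\textbf{The missing idea.} Charge every step to $2$-paths, whose park bound is $O(f)$ per pair, instead of $3$-paths. Your Case~2b already does this successfully. The paper does it with a direct potential and no path systems.

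\emph{Low-degree steps.} If $\deg_{H_i}(m_i)<4\sqrt{fn}$, charge the step to $m_i$. Every added path has a new edge at $m_i$, so its degree grows by at least one. This gives $O(\sqrt{f}\,n^{3/2})$ such steps.

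\emph{High-degree steps.} Otherwise let $X_i=\sum_{u\neq v}\min\{\hat X_i(u,v),f+1\}\le (f+1)n^2$, where $\hat X_i(u,v)$ is the number of $2$-paths between $u$ and $v$ in $H_i$. Distinct $2$-paths between a pair are internally disjoint, so a pair at the cap has a $2$-path surviving $F_i$. Fix a neighbor $w\notin F_i\cup\{x_i,y_i\}$ of $m_i$. If the term for $\{x_i,w\}$ does not increase, then either $(x_i,m_i,w)$ was already in $H_i$ or the pair was at the cap. In both cases $\dist_{H_i-F_i}(x_i,w)\le 2$. The same holds for $y_i$. Both cannot hold at once, since that would give $\dist_{H_i-F_i}(x_i,y_i)\le 4$. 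Hence $X$ grows by at least $4\sqrt{fn}-f-2\ge\sqrt{fn}$ per high-degree step, giving $O(\sqrt{f}\,n^{3/2})$ such steps.

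The key change is to use the whole neighborhood of the middle vertex, symmetrically for $x_i$ and $y_i$, instead of a collection $\Ll_{y_i}$ at one endpoint. This is what removes the extra factor of $f$.
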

\begin{proof}
    First note that every addition of a path increases the degree of its middle vertex in $H$  by at least 1. Consequently, there can be only $O(f^{\frac{1}{2}}n^{\frac{3}{2}})$ added paths whose middle vertex had degree $< 4f^{\frac{1}{2}}n^{\frac{1}{2}}$ in $H$ at the moment just before addition. Thus we need to bound only the number of paths for which the converse holds. Let, as usual, $H_i$ be the spanner before addition of the $i$-th path. For any two distinct vertices $u$ and $v$ let $\hat X^i(u,v)$ be the number of distinct $2$-paths connecting $u$ and $v$ in $H^i$. Note that distinct $2$-paths have no common vertices except the ends. Let $X_i(u, v) = \min\{\hat X_i(u,v), f+1\}$. Denote $X_i := \sum_{u\neq v \in V(G)} X_i(u,v)$. Obviously, $X_{i+1} \geq X_i$ for all $i$. Clearly, $X_i \leq (f+1)n^2$ for all $i$, so it is enough to prove the following claim.
    \begin{claim}
        If for the $i$-th added path $P_i = (x_i, m_i, y_i)$ it holds that $\deg_{H_i}(m_i) \geq 4f^{\frac{1}{2}}n^{\frac{1}{2}}$, then $$X_{i+1} - X_i \geq f^{\frac{1}{2}}n^{\frac{1}{2}}.$$
    \end{claim}
    \begin{proof}
        Let $F_i$ be the set of faults that justifies the addition of $P_i$. Let $w \in N_{H_i}(m_i)\setminus (F_i\cup \{x_i, y_i\})$, where $N_{H_i}(m_i)$ is the standard notation for the set of neighbors of $m_i$ in $H_i$. Note that $(x_i, m_i, w)$ is a path in $H_{i+1}$ from $x_i$ to $w$. Thus, $X_i(x_i, w) = X_{i+1}(x_i, w)$ could have happened only if one of the following holds:
        \begin{itemize}
            \item $(x_i, m_i, w)$ is not a new path, i.e. $(x_i, m_i) \in H_i$;
            \item $X_i(x_i, w) = f+1$, meaning that there are at least $f+1$ distinct $2$-paths from $x_i$ to $w$ in $H_i$.
        \end{itemize}
        In either of these cases, we have a non-faulty $2$-path from $x_i$ to $w$ in $H_i$, i.e. $\dist_{H_i-F_i}(x_i, w) \leq 2$. By exactly the same argument, $X_i(y_i, w) = X_{i+1}(y_i, w)$ could have happened only if $\dist_{H_i-F_i}(y_i, w) \leq 2$. But $\dist_{H_i-F_i}(x_i, w) \leq 2$ and $\dist_{H_i-F_i}(y_i, w) \leq 2$ could not be true simultaneously, as otherwise $\dist_{H_i - F_i}(x_i,y_i) \leq 4$ by the triangle inequality and so $F_i$ doesn't justify the addition of $P_i$. Hence for every $w$ either $X_{i+1}(x_i, w) \geq X_{i}(x_i, w) + 1$ or $X_{i+1}(y_i, w) \geq  X_{i}(y_i, w) + 1$. 
        We conclude by noticing that $|N_{H_i}(m_i)\setminus (F_i\cup \{x_i, y_i\})| \geq 4f^{\frac{1}{2}}n^{\frac{1}{2}} - f - 2 \geq f^{\frac{1}{2}}n^{\frac{1}{2}}$, where we use trivial inequalities $1\leq f \leq n$.
    \end{proof}

    This concludes the proof of the claim.
\end{proof}

\paragraph{Disclosure of AI}
We used ChatGPT 5.4 to assist in correcting typos, checking computations and internal consistency, suggesting revisions to improve exposition, and reorganizing mathematical expressions to make proofs easier to read. The tool played no role in the research methodology, analysis, or results presented in this paper. ChatGPT 6 was used for proof auditing for the final version.

%% file: Sampling-lemma.tex
\section{Sampling Lemma and Proof of \Cref{prop:vft-d-to-r-structure}}\label{sec:sampling-appendix}

\begin{lemma}\label{lem-sampling}
Let $C'\ge 2$ be a constant, $r := \lceil \frac{s}{2}\rceil$ and assume $D \ge 12C's^3 f$.
Let $S\subseteq V$ be formed by including each vertex independently with probability
\[
p \;=\; \min\bigg\{11\Bigl(\frac{10\cdot C'sf}{D}\Bigr)^r \ln\!\bigl(100\cdot 2^{r+2}D^rn\bigr), 1\bigg\},
\]
and call vertices in $S$ \emph{centers}.
Let $\pp_v$ be a collection of paths of length $r$ beginning at $v$ that satisfy \textbf{I2-I5} for this $v$ and $\ell = r$ and of size $\geq D^r$.
Let $\prepp_v\subseteq \pp_v$ be the subcollection of $\pp_v$ consisting of paths whose endpoint is a center.
Then with probability at least $1-\frac{1}{100n}$:
\begin{itemize}
    \item $|\prepp_v|\ge \frac{pD^r}{2}$;
    \item for every $J\subseteq V\setminus\{v\}$,
    \[
    |\prepp_v[J]| \;\le\; (C'sf)^{-|J|}\,|\prepp_v|.
    \]
\end{itemize}
\end{lemma}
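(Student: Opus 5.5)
Throughout, $|\prepp_v[J]|$ is a sum over paths $P\in\pp_v[J]$ of the indicator that the endpoint of $P$ lands in $S$. The plan is a Chernoff bound for each relevant set $J$, combined with a union bound over all such $J$. Only sets $J\subseteq V\setminus\{v\}$ with $|J|\le r$ matter, since otherwise $\pp_v[J]=\emptyset$.

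\textbf{Size of $\prepp_v$.} Write $N:=|\pp_v|\ge D^r$. Different paths may share an endpoint, so the indicators are not independent. We therefore group the paths by endpoint $t$. Then $|\prepp_v|=\sum_t |\pp_{v,t}|\cdot\mathbf 1[t\in S]$ is a sum of independent variables, each bounded by $\max_t|\pp_{v,t}|$. Since $\pp_{v,t}=\pp_v[\{t\}]$, \textbf{I4} (or \textbf{I2} with $|W|=1$) gives
\[
\max_t|\pp_{v,t}|\le 3r D^{r-1}.
\]
The expectation is $pN\ge pD^r$. A Chernoff bound for weighted sums with weights at most $M=3rD^{r-1}$ gives
\[
\Pr\bigl[|\prepp_v|<pN/2\bigr]\le \exp\bigl(-pN/(8M)\bigr)\le \exp\bigl(-pD/(24r)\bigr).
\]
The choice of $p$ should make this $\ll 1/n$ once $pD/r$ is at least a large multiple of the logarithmic factor. (If $p=1$ the statement is trivial.) I need to verify this inequality, and I expect it to be the delicate point, since $p\approx (10C'sf/D)^r$ can be tiny. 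If $pD/r$ is not large, I would instead use \textbf{I2} for pairs more carefully, or I would note that one needs $(C'sf)^r\cdot\log/D^{r-1}$ large, which may fail. In that case the natural fix is to view the randomness as the endpoint hitting $S$ and to use the second-moment or Janson structure. I would first try the straightforward route: if the variance bound fails, use Chebyshev and boost the success probability by restricting to the case of large $N$.

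\textbf{Spread bound.} Fix $J$ with $1\le |J|=j\le r$. Let $t$ range over endpoints; we have $\pp_v[J]=\bigsqcup_t \pp_{v,t}[J]$. By \textbf{I2},
\[
\mathbb E|\prepp_v[J]|=p|\pp_v[J]|\le p\cdot 3r s^{2(j-1)}D^{r-j}.
\]
The target is $(C'sf)^{-j}\cdot pD^r/2$. The ratio of target to mean is at least $D^j/(6r s^{2j-2}(C'sf)^j)$, which is at least $2^j$ by $D\ge 12C's^3f$. So the mean sits well below the target. Each summand is bounded by $M_J:=\max_t|\pp_{v,t}[J]|\le |\pp_v[J\cup\{t\}]|\le 3r s^{2j}D^{r-j-1}$ (using \textbf{I2} again, or the park bound via \textbf{I3} and \Cref{obs:size-bounds-park-links}, which gives $|\pp_{v,t}[J]|\le 10^r(2f)^{r-j}$). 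An upper-tail Chernoff bound for exceeding the target $T_J$, with weights at most $M_J$, gives a failure probability of $\exp(-\Omega(T_J/M_J))$. Here $T_J/M_J\gtrsim pD^r(C'sf)^{-j}/(10^r(2f)^{r-j})$. With the park bound and $p\ge 11(10C'sf/D)^r\ln(\cdot)$, this becomes at least $11\cdot(10C'sf)^{r}(C'sf)^{-j}(2f)^{-(r-j)}10^{-r}\ln(\cdot)\ge 11\ln(100\cdot 2^{r+2}D^r n)$. This is exactly where the form of $p$ comes from, and it also settles the first bullet: for $J=\emptyset$ the same park-based weight bound $M\le 10^r(2f)^{r-1}$ gives $pD^r/M\ge 11\ln(\cdot)$.

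\textbf{Union bound.} The number of relevant $J$ is at most the number of subsets of vertex sets of paths in $\pp_v$, i.e.\ at most $2^{r+1}|\pp_v|\le 2^{r+2}D^r$ by \textbf{I5}. Each bad event has probability at most $(100\cdot 2^{r+2}D^r n)^{-c}$ with $c>1$. A union bound over these events together with the lower-tail event gives total failure probability at most $1/(100n)$. Combining $|\prepp_v[J]|\le T_J=(C'sf)^{-j}pD^r/2$ with $|\prepp_v|\ge pD^r/2$ yields the spread bound. The main obstacle is getting the weight bounds tight enough that the exponent reaches the logarithm encoded in $p$; the park-based bound from \textbf{I3} is what makes the exponent work.
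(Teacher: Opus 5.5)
Your proposal is correct and follows essentially the same route as the paper. You write $|\prepp_v[J]|$ as a sum of independent terms indexed by endpoints, bound each term by the park bound $|\pp_{v,t}[J]|\le 10^r(2f)^{r-|J|}$ from \textbf{I3}, control the mean via \textbf{I2}, apply a concentration inequality (the paper uses Bernstein where you use a weighted Chernoff bound), and union bound over the at most $2^{r+2}D^r$ relevant sets $J$.

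Your first \textbf{I4}-based weight bound for the lower tail is indeed too weak. The park bound $|\pp_{v,t}|\le 10^r(2f)^{r-1}$ you switch to is exactly what makes it work, so the Chebyshev/Janson fallbacks are unnecessary. Do track the constant in the spread step: $T_J/M_J=\tfrac{11}{2}(C's/2)^{r-|J|}\ln(\cdot)$, not $11\ln(\cdot)$. The paper's Bernstein step still yields exponent $\tfrac{33}{32}\ln(\cdot)$, which is enough for the union bound.
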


\begin{proof}[Proof of \Cref{lem-sampling}]

If $p = 1$, $\prepp_v=\pp_v $ deterministically and thus satisfies the desired properties (even stronger properties of $\pp_v$). Thus from now on we assume the other value for $p$.

Fix $J\subseteq V\setminus\{v\}$ that is contained in some path of $\pp_v$.
Write
\[
|\prepp_v[J]| \;=\; \sum_{w\neq v} I(w\in S)\cdot |\pp_{v,w}[J]|.
\]
Define centered summands
\[
X_w^J \;:=\; \bigl(I(w\in S)-p\bigr)\cdot |\pp_{v,w}[J]|.
\]
Then $\mathbb{E}[X_w^J]=0$ and $\sum_{w\neq v}X_w^J = |\prepp_v[J]|-p|\pp_v[J]|$.

\medskip
\noindent\textbf{Step 1: Bernstein parameters.}
By \textbf{I3}, for every $w$,
\[
|X_w^J|\;\le\; |\pp_{v,w}[J]| = |\pp_{v,w}[J\sqcup \{v\}]|\;\le\; \beta_r^{-1}(2f)^{\,r-|J|}\,.
\]
Moreover,
\begin{align*}
\sum_{w\neq v}\mathbb{E}\bigl[(X_w^J)^2\bigr]
&= \sum_{w\neq v} p(1-p)\,|\pp_{v,w}[J]|^2  = \sum_{w\neq v} p(1-p)\,|\pp_{v,w}[J]|\cdot |\pp_{v,w}[J\sqcup \{v\}]|\\
&\le p\sum_{w\neq v} |\pp_{v,w}[J]|\cdot \beta_r^{-1}(2f)^{\,r-|J|}
= p|\pp_v[J]|\cdot \beta_r^{-1}(2f)^{\,r-|J|}.
\end{align*}
Applying Bernstein's inequality \cite{VershyninHigh} yields, for every $t\ge 0$,
\begin{equation}\label{eq:bernstein-sampling}
\Pr\Bigl[\Bigl|\sum_{w\neq v} X_w^J\Bigr|\ge t\Bigr]
\le
2\exp\!\left(
-\frac{t^2}{2\cdot 10^r(2f)^{\,r-|J|}\bigl(p|\pp_v[J]|+t/3\bigr)}
\right).
\end{equation}

\medskip
\noindent\textbf{Step 2: nonempty $J$ (small links).}
Assume $1\le |J|\le r$ (for $|J|>r$ the link is empty).
Set
\[
t_J \;:=\; \frac{pD^r}{4(C'sf)^{|J|}}.
\]
By strong collective fault tolerance of $\pp_v$, for nonempty $J$,
\[
p|\pp_v[J]|\;\le\; 3p\,s^{2|J|}D^{r-|J|}\;\le\; t_J,
\]
where the last inequality follows from the choice of $p$ and $D\ge 12C's^3f$.
Plugging $t=t_J$ into \eqref{eq:bernstein-sampling} and using $p|\pp_v[J]|\le t_J$ gives
\[
\Pr\Bigl[\Bigl|\sum_{w\neq v} X_w^J\Bigr| \ge t_J\Bigr]
\le
2\exp\!\left(
-\frac{3t_J}{8\cdot 10^r(2f)^{\,r-|J|}}
\right)
=
2\exp\!\left(
-\frac{3pD^r}{32\cdot 10^r(2f)^{r}(C's/2)^{|J|}}
\right)
\le \frac{1}{100\cdot 2^{r+1}D^r n}.
\]
Consequently,
\[
\Pr\bigl[\,|\prepp_v[J]|\ge p|\pp_v[J]|+t_J\,\bigr]
\le \frac{1}{100\cdot 2^{r+1}D^r n},
\]
and since $p|\pp_v[J]|\le t_J$ we obtain
\begin{equation}\label{eq:Jlink-bound}
\Pr\bigl[\,|\prepp_v[J]|\ge 2t_J\,\bigr]
\le \frac{1}{100\cdot 2^{r+1}D^r n}.
\end{equation}

\medskip
\noindent\textbf{Step 3: $J=\emptyset$ (lower bound on $|\prepp_v|$).}
Let $t_{\emptyset}:=\frac{pD^r}{2}$.
Since $|\pp_v|\le 2D^r$ by \textbf{I5}, we have $p|\pp_v|\le 2pD^r=4t_{\emptyset}$.
Applying \eqref{eq:bernstein-sampling} with $J=\emptyset$ and $t=t_{\emptyset}$ yields
\[
\Pr\Bigl[\Bigl|\sum_{w\neq v} X_w^\emptyset\Bigr|\ge t_{\emptyset}\Bigr]
\le
2\exp\!\left(
-\frac{t_{\emptyset}^2}{2\cdot 10^r(2f)^{r}\bigl(p|\pp_v|+t_{\emptyset}/3\bigr)}
\right)
\le
2\exp\!\left(
-\frac{3pD^r}{26\cdot 10^r(2f)^{r}}
\right)
\le \frac{1}{100\cdot 2^{r+1}D^r n}.
\]
Using $|\pp_v|\ge D^r$, we conclude
\begin{equation}\label{eq:size-lb}
\Pr\Bigl[\,|\prepp_v|\le \frac{pD^r}{2}\Bigr]
\le \frac{1}{100\cdot 2^{r+1}D^r n}.
\end{equation}

\medskip
\noindent\textbf{Step 4: union bound and conclusion.}
The collection $\pp_v$ has at most $2D^r$ paths, hence the number of distinct nonempty subsets $J\subseteq V(\pp_v)\setminus\{v\}$
with $|J|\le r$ is at most $2^{r+1}D^r-1$.
By \eqref{eq:Jlink-bound} and a union bound over all such $J$, together with \eqref{eq:size-lb}, with probability at least
$1-\frac{1}{100n}$ we have simultaneously:
\[
|\prepp_v|\ge \frac{pD^r}{2}
\qquad\text{and}\qquad
\forall\, J\subseteq V\setminus\{v\},\ 1\le |J|\le r:\
|\prepp_v[J]|\le 2t_J=\frac{pD^r}{2(C'sf)^{|J|}}.
\]
On this event,
\[
|\prepp_v[J]|
\le
\frac{pD^r}{2(C'sf)^{|J|}}
\le
(C'sf)^{-|J|}\,|\prepp_v|,
\]
so $\{V(P)\setminus\{v\}\}_{P\in\prepp_v}$ satisfies the second part of the lemma.
\end{proof}

\begin{proof}[Proof of \Cref{prop:vft-d-to-r-structure}]
    \Cref{alg:greedy-path-system} provides the partition of the added paths into $\mathcal L$ and $\mathcal G$, as well as the sequence of path collections $\pp^i$. The first point then follows directly from \Cref{thm:greedy-path-system-bounds}.

    Towards the second point, by applying \Cref{cl:obvious-prop-g} for  $P_i \in \mathcal{G}$, every $s$-critical edge $e = \{u,v\}$ on $P_i$ satisfies $|\pp_u^{i, R}| \geq D^R$ and $|\pp_v^{i, R}| \geq D^R$. Recall that $P_i=(x_0,\ldots,x_d)$.

    We set $\ell := \min\{j: |\pp_{x_j}^{i, R}| \geq D^R\}$ and $q := \max\{j: |\pp_{x_j}^{i, R}| \geq D^R\}$. Since $P_i$ has at least one critical edge, $\ell$ and $q$ are well defined and distinct. 

    By definition, there can be no critical edges on $P_i$ between $x_0$ and $x_\ell$ and between $x_q$ and $x_{d}$. We can immediately conclude that $\dist_{H_i-F_i}(x_0,x_\ell) \leq \ell \cdot s$ and $\dist_{H_i-F_i}(x_q, x_{d}) \leq (d-q)\cdot s$. 

    For each vertex $v\in V(G)$ let $\tilde{\pp}_v := \pp^{i_v,R}_v$, where $i_v = \min\{j : |\pp^{j,R}_v|\geq D^R\}$. If $i_v$ does not exist, we will never need this definition, and for concreteness we set $\tilde{\pp}_v := \emptyset$. Let $S$ be a $p$-random subset of $V(G)$ with
\[
p = 11\Bigl(\frac{10\cdot C'sf}{D}\Bigr)^R \ln\!\bigl(100\cdot 2^{R+2}D^Rn\bigr).
\]
    For every $v$ let $\prepp_v$ be the paths of $\tilde\pp_v$ that end in the points of $S$. By \Cref{lem-sampling} and the union bound, with probability at least $\frac{99}{100}$ the collection $\prepp_v$ is non-empty and has the property from the second bullet of the second point for all $v\in V(G)$ except those for which $\tilde{\pp}_v = \emptyset$. Also $|S| \leq 2pn$ with probability $\geq \frac{1}{2}$ by Markov's inequality. Thus we can fix one $S$ that satisfies both conditions.
    
    Since $|\pp_{x_\ell}^{i, R}| \geq D^R$ and $|\pp_{x_q}^{i, R}| \geq D^R$, we have $\tilde \pp_{x_\ell} \subseteq \pp_{x_\ell}^{i, R}$ and $\tilde \pp_{x_q} \subseteq \pp_{x_q}^{i, R}$, and thus $\tilde \pp_{x_\ell}$ and $\tilde \pp_{x_q}$ lie in $H_i$. Then, clearly, $\prepp_{x_\ell}$ and $\prepp_{x_q}$ lie in $H_i$ too.
\end{proof}

%% file: Equivalence-d-to-r-alpha-beta.tex
\section{Equivalence of $d\to r$ and $(\alpha,\beta)$-spanners}\label{sec:equiv-d-to-r-alpha-beta}

We show that obtaining a $d\to r$ spanner with good size bounds implies an $(r/d,r)$-spanner with essentially the same bounds. Assume $d\leq r$.

\begin{claim}\label{clm:reduction-spanners-one}
    Suppose that a function $s:\mathbb N \to \mathbb N$  is such that every graph on $n$ vertices has a $d \to r$ spanner with at most $s(n)$ edges. Then for every $\ell \in [d-1]$, every graph on $n$ vertices also has a $(d - \ell) \to (r-\ell)$ spanner with at most $s((\ell+1)n)$ edges.
\end{claim}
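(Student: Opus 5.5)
The plan is a graph transformation: given $G$ on $n$ vertices, build a graph $G'$ on $(\ell+1)n$ vertices in which distance-$(d-\ell)$ pairs of $G$ become distance-$d$ pairs. Apply the assumed $d\to r$ spanner to $G'$, then project back to a subgraph of $G$. Concretely, I would take $\ell+1$ copies $v^{(0)},\dots,v^{(\ell)}$ of each vertex $v$. The edges of $G'$ are all edges of $G$ placed on the layer-$0$ copies, together with a pendant path $v^{(0)}-v^{(1)}-\cdots-v^{(\ell)}$ hanging off each $v$. The vertex count is exactly $(\ell+1)n$, so the assumption gives a $d\to r$ spanner $H'$ of $G'$ with at most $s((\ell+1)n)$ edges. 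The output is $H:=H'\cap G$, the set of layer-$0$ edges of $H'$ read as edges of $G$, so $|E(H)|\le s((\ell+1)n)$.

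For correctness, fix $u,v$ with $\dist_G(u,v)=d-\ell$. I consider the pair $(u^{(\ell)},v^{(0)})$ in $G'$. Every path from $u^{(\ell)}$ must first walk down its pendant path to $u^{(0)}$, which costs exactly $\ell$ steps, and then continue inside layer $0$, which is an isometric copy of $G$. A pendant detour at any other vertex is useless for reaching $v^{(0)}$. Hence $\dist_{G'}(u^{(\ell)},v^{(0)})=\ell+\dist_G(u,v)=d$, and the spanner guarantee gives $\dist_{H'}(u^{(\ell)},v^{(0)})\le r$.

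In $H'$, the vertex $u^{(\ell)}$ has degree one in $G'$, so a shortest $H'$-path from it must use the whole pendant path, whose edges therefore lie in $H'$. The remainder is an $H'$-path from $u^{(0)}$ to $v^{(0)}$ of length at most $r-\ell$. Any portion of that remainder entering a pendant path must return along the same path, so it can be shortcut. This leaves a path that uses only layer-$0$ edges, that is, edges of $H$. Therefore $\dist_H(u,v)\le r-\ell$, as required.

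The only delicate point is this last step, namely verifying that shortest paths in $H'$ never profit from the pendant gadgets. It follows from the fact that each pendant path is a tree hanging off a single vertex, so entering it forces a return through the same vertex. A minor check is that $\ell\le d-1$ keeps $d-\ell\ge1$, so the statement concerns genuine pairs.
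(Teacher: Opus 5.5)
Your proposal is correct and is essentially the paper's proof. The paper also attaches a disjoint pendant path of length $\ell$ to every vertex, applies the $d\to r$ guarantee to the resulting $(\ell+1)n$-vertex graph, keeps $E(G)\cap E(H')$, and uses that a shortest path from the far end of a pendant path must traverse it and then stay inside $G$. The only differences are cosmetic: you use the pendant endpoint at $u$ instead of at $v$, and you spell out the shortcutting step that the paper states in one line.
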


\begin{proof}
    Let $G$ be an arbitrary graph on $n$ vertices. Let $G'$ be a graph obtained by attaching to each vertex of $G$ a path of length $\ell$ with all these paths disjoint. For a vertex $v\in V(G)$, denote by $v'$ the last vertex of the path attached to $v$. Graph $G'$ has $(\ell+1) n$ vertices, and so it has a $d\to r$ spanner $H'$ of size $\leq s((\ell+1) n)$. Consider the subgraph $H$ of $G$ with $V(H) = V(G)$ and $E(H) = E(G) \cap E(H')$. Clearly, $H$ has at most $|E(H')| \leq s((\ell+1) n)$ edges. It remains to prove that it is the desired $(d - \ell) \to (r - \ell)$ spanner of $G$. Let $u, v \in V(G)$ be such that $\dist_{G}(u,v) = d - \ell$. Then, evidently, $\dist_{G'}(u,v') = (d - \ell) +\ell = d$, and so, since $H'$ is the $d\to r$ spanner of $G'$, we have that $\dist_{H'}(u,v') \leq r$. Clearly, the shortest path from $u$ to $v'$ consists of some path $P$ in $G$ from $u$ to $v$ and the added $\ell$-path from $v$ to $v'$. Then $\dist_{H'}(u,v') = L(P) + \ell \leq r$, and so $\dist_{H}(u,v) \leq L(P)\leq r - \ell$, as desired. 
\end{proof}

\begin{claim}\label{clm:reduction-spanners-two}
    The union over $\ell \in \{0, 1, ...,d-1\}$ of the $(d - \ell) \to (r - \ell)$ spanners of a given graph $G$ is an $(\frac{r}{d}, (r-d)(1 - \frac{1}{d}))$ spanner of $G$.
\end{claim}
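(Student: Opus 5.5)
Let $H:=\bigcup_{\ell=0}^{d-1}H_\ell$, where $H_\ell$ is a $(d-\ell)\to(r-\ell)$ spanner of $G$. Fix $u,v$ with $D:=\dist_G(u,v)$ and a shortest $u$--$v$ path $P$ in $G$. The plan is to cut $P$ into segments of length exactly $d$ plus one leftover segment, and to bound $\dist_H$ on each segment by the appropriate $H_\ell$.

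First, write $D=qd+\rho$ with $q\ge 0$ and $0\le\rho<d$. Mark the vertices $u=z_0,z_1,\ldots,z_q$ on $P$ at distances $0,d,\ldots,qd$ from $u$. Since $P$ is a shortest path, every subpath of $P$ is a shortest path. Hence $\dist_G(z_j,z_{j+1})=d$, and $H_0$ gives $\dist_H(z_j,z_{j+1})\le r$. Summing over $j$ bounds the main part by $qr$.

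Next, handle the leftover segment from $z_q$ to $v$, which has $G$-distance exactly $\rho$. If $\rho=0$ there is nothing to do. Otherwise $\rho=d-\ell$ with $\ell=d-\rho\in[d-1]$, so $H_\ell$ gives $\dist_H(z_q,v)\le r-\ell=r-d+\rho$.

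By the triangle inequality,
\[
\dist_H(u,v)\le qr+(r-d+\rho)=\tfrac{r}{d}(D-\rho)+r-d+\rho=\tfrac{r}{d}D+(r-d)-\rho\left(\tfrac{r}{d}-1\right).
\]
It remains to compare this with $\tfrac rd D+(r-d)(1-\tfrac1d)$. Since $\tfrac rd-1=\tfrac{r-d}{d}$, the additive term equals $(r-d)\bigl(1-\tfrac{\rho}{d}\bigr)$. When $1\le\rho\le d-1$ this is at most $(r-d)\bigl(1-\tfrac1d\bigr)$, using $r\ge d$. When $\rho=0$ the leftover step was not needed, so the bound is simply $\tfrac rd D$, which is even better.

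The main point to check carefully is this last case split on $\rho$: the naive formula at $\rho=0$ gives additive $r-d$, not $(r-d)(1-\tfrac1d)$. This is avoided because no leftover segment is used when $\rho=0$. Distances $D<d$ are covered by the same argument with $q=0$.
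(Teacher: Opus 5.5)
Your proposal is correct and follows essentially the same route as the paper: you cut a shortest path into length-$d$ segments handled by $H_0$ and a leftover of length $\rho$ handled by $H_{d-\rho}$, apply the triangle inequality, and treat $\rho=0$ separately to avoid the additive $r-d$ term. Your rewriting of the additive term as $(r-d)\bigl(1-\tfrac{\rho}{d}\bigr)$ is just a slightly more explicit version of the paper's final inequality, using $r\ge d$.
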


\begin{proof}
    Let $H_\ell$ be a $(d - \ell) \to (r - \ell)$ spanner of $G$ for each $\ell\in\{0,\ldots,d-1\}$, and let $H = \bigcup_{\ell = 0}^{d-1}H_\ell$. Let $u\neq v\in V(G)$ be arbitrary. Then there are integers $q \in \mathbb Z_{\geq 0}$ and $b \in \{0, ..., d-1\}$ such that $\dist_G(u,v) = q\cdot d + b$. Let $P$ be the shortest path in $G$ connecting $u$ and $v$ and let $u = x_0, x_1, ..., x_q$ be the vertices on $P$ such that for all $1\leq i \leq q$ we have $\dist_{G}(x_{i-1}, x_{i}) = d$ and $\dist_G(x_q, v) = b$. Then we immediately have that $\dist_{H_0}(x_{i-1}, x_{i}) \leq r$ and $\dist_{H_{d-b}}(x_q, v) \leq r - (d - b)$ if $b > 0$ (otherwise $x_q = v$). Consequently, by the triangle inequality, $\dist_{H}(u,v)\leq q\cdot r + r - (d - b) = \frac{r}{d}\cdot \dist_G(u,v) + r - d - b(\frac{r}{d} - 1) \leq \frac{r}{d}\cdot \dist_G(u,v) + (r-d)(1 - \frac{1}{d})$ if $b > 0$ and $\dist_{H}(u,v)\leq q\cdot r = \frac{r}{d}\cdot \dist_G(u,v)$ if $b = 0$.
\end{proof}

As $(r-d)(1 - \frac{1}{d}) \leq r$ we have the following corollary:

\begin{corollary}
    Suppose that an increasing function $s:\mathbb N \to \mathbb N$ is such that every graph on $n$ vertices has a $d \to r$ spanner with at most $s(n)$ edges. Then every graph has an $(\frac{r}{d}, r)$-spanner with at most $ \sum_{\ell = 0}^{d-1}s((\ell+1) n) \leq d\cdot s(dn)$ edges. 
\end{corollary}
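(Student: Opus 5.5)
The plan is to assemble the corollary directly from \Cref{clm:reduction-spanners-one} and \Cref{clm:reduction-spanners-two}, with no new construction. Fix a graph $G$ on $n$ vertices and, for each $\ell\in\{0,1,\ldots,d-1\}$, produce a $(d-\ell)\to(r-\ell)$ spanner $H_\ell$ of $G$.
\begin{itemize}
\item For $\ell=0$, the hypothesis gives $H_0$, a $d\to r$ spanner with at most $s(n)$ edges.
\item For $\ell\in[d-1]$, \Cref{clm:reduction-spanners-one} gives $H_\ell$ with at most $s((\ell+1)n)$ edges. It applies because $d-\ell\ge 1$, and $r-\ell\ge d-\ell$ by the standing assumption $d\le r$.
\end{itemize}
This covers every $\ell\in\{0,\ldots,d-1\}$, and the edge bound $s((\ell+1)n)$ holds uniformly, with $\ell=0$ reading $s(n)$.

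Next, set $H=\bigcup_{\ell=0}^{d-1}H_\ell$. By \Cref{clm:reduction-spanners-two}, $H$ is an $\bigl(\frac{r}{d},(r-d)(1-\frac1d)\bigr)$-spanner of $G$. Since $0\le r-d\le r$ and $0\le 1-\frac1d\le 1$, we have $(r-d)(1-\frac1d)\le r$. Hence $H$ is also an $(\frac rd, r)$-spanner, because weakening the additive term preserves the spanner property.

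For the size, a union bound gives $|E(H)|\le\sum_{\ell=0}^{d-1}|E(H_\ell)|\le\sum_{\ell=0}^{d-1}s((\ell+1)n)$. Since $s$ is increasing and $(\ell+1)n\le dn$, each summand is at most $s(dn)$, so the total is at most $d\cdot s(dn)$.

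There is no real obstacle. The only point to check is that the hypothesis on $s$ is used for graphs on $(\ell+1)n$ vertices, namely the padded graph $G'$ from \Cref{clm:reduction-spanners-one}. This is legitimate because the hypothesis is stated for every graph on every number of vertices, which is also why monotonicity of $s$ is needed for the final bound.
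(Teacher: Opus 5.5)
Your proposal is correct and follows the paper's route exactly. You obtain the $(d-\ell)\to(r-\ell)$ spanners from \Cref{clm:reduction-spanners-one} (with $\ell=0$ coming directly from the hypothesis), take their union, apply \Cref{clm:reduction-spanners-two}, weaken the additive term using $(r-d)(1-\frac1d)\le r$, and bound the size by summing the edge counts and using that $s$ is increasing, just as the paper does.
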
